\definecolor{mBlue}{HTML}{002FA7}
\definecolor{mGreen}{HTML}{009B55}
\definecolor{mOrange}{HTML}{FF4F00}
\definecolor{mBlack}{HTML}{004242}
\pgfplotsset{compat=newest}
\pgfplotsset{soldot/.style={color=blue,only marks,mark=*}}
\titleformat{\section}[hang]{\normalfont\scshape\large}{\thesection.}{1em}{\centering}
\titleformat{\subsection}[hang]{\normalfont\itshape\bf}{\thesubsection.}{1em}{\centering}
\titleformat{\subsubsection}[hang]{\normalfont\itshape}{\thesubsubsection.}{1em}{\centering}
\let\originalparagraph\paragraph
\renewcommand{\paragraph}[2][.]{\originalparagraph{#2#1}} 
\newtheoremstyle{mystyle}
  {}
  {}
  {\itshape}
  {}
  {\bfseries}
  {.}
  { }
  {\thmname{#1}\thmnumber{ #2}\thmnote{ (#3)}}
\theoremstyle{mystyle}
\newtheorem{lemma}{Lemma}
\newtheorem{proposition}{Proposition}
\newtheorem{theorem}{Theorem}
\newtheorem{corollary}{Corollary}
\newtheorem{assumption}{Assumption}
\newcounter{parenttheorem}
\newtheoremstyle{mydefstyle}
  {}
  {}
  {}
  {}
  {\bfseries}
  {.}
  { }
  {\thmname{#1}\thmnumber{ #2}\thmnote{ (#3)}}
\theoremstyle{mydefstyle}
\newtheorem{definition}{Definition}
\newtheorem{example}{Example}
\theoremstyle{remark}
\newtheorem{remark}{Remark}
\newcommand{\ubar}[1]{\underaccent{\bar}{#1}}
\newcommand{\longbar}[1]{\mkern 1.5mu\overline{\mkern-1.5mu#1\mkern-1.5mu}\mkern 1.5mu}
\def\mathcolor#1#{\@mathcolor{#1}}
\def\@mathcolor#1#2#3{%
  \protect\leavevmode
  \begingroup
    \color#1{#2}#3%
  \endgroup
}
\DeclareMathOperator{\supp}{supp}
\DeclareMathOperator{\Int}{int}
\DeclareMathOperator{\ex}{ex}
\DeclareMathOperator{\cav}{cav}
\DeclareMathOperator{\vex}{vex}
\DeclareMathOperator{\argmax}{argmax}
\DeclareMathOperator{\Id}{\mathrm{id}}
\DeclareMathOperator{\Ind}{\mathbf{1}}
\newcommand{\diff}{\mathrm{d}}
\title{\textbf{The Economics of Convex Function Intervals\thanks{We are indebted to Andreas Kleiner for numerous insightful discussions and comments. We are also particularly grateful to Sarah Auster and Eduardo Perez-Richet for their guidance and support. We would also like to thank Thomas Brzustowski, Gregorio Curello, Francesc Dilmé, Piotr Dworczak, Pia Ennuschat, Atulya Jain, Stephan Lauermann, Qianjun Lyu, Franz Ostrizek, Philipp Strack and Yimeng Zhang for helpful feedback at various stages of the project. We also thank audiences at Bonn, Mannheim, the CRC TR 224 EPoS Young Researchers Workshop, the 2025 IMD Days in Warsaw, the 2025 SAET conference in Ischia, and the 8\textsuperscript{th} Lindau Nobel
Meeting in Economic Sciences. Support by the German Research Foundation (DFG) through CRC TR 224 EPoS (Projects B01 (Augias) and B03 (Uhe)) is gratefully acknowledged. All remaining errors are ours.}}}
\author{Victor Augias\thanks{University of Bonn, Department of Economics\textemdash e-mail: \href{mailto:vaugias@uni-bonn.de}{\texttt{vaugias@uni-bonn.de}}} \and Lina Uhe\thanks{University of Bonn, Department of Economics, Bonn Graduate School of Economics\textemdash e-mail: \href{mailto:lina.uhe@uni-bonn.de}{\texttt{lina.uhe@uni-bonn.de}}}}
\date{\today}
\begin{document}

\maketitle

\begin{abstract}
    We introduce \emph{convex function intervals} (CFIs): families of convex functions satisfying given level and slope constraints. CFIs naturally arise as constraint sets in economic design, including problems with \emph{type‑dependent} participation constraints and \emph{two‑sided (weak) majorization} constraints. Our main results include: (i) a geometric characterization of the \emph{extreme points} of CFIs; (ii) sufficient optimality conditions for \emph{linear programs} over CFIs; and (iii) methods for nested optimization on their \emph{lower level boundary} that can be applied, e.g., to the optimal design of outside options. We apply these results to four settings:  screening and delegation problems with type-dependent outside options, contest design with limited disposal, and mean-based persuasion with informativeness constraints. We draw several novel economic implications using our tools. For instance, we show that better outside options lead to larger delegation sets, and that posted price mechanisms can be suboptimal in the canonical monopolistic screening problem with nontrivial, type-dependent participation constraints.
\end{abstract}
\noindent\emph{JEL classification codes}: C61, D82, D83, D86.\\
\noindent\emph{Keywords}: Convex functions; extreme points; linear programming; mechanism design; information design.

\clearpage

\section{Introduction}

This paper studies \emph{convex function intervals} (CFIs)---sets of continuous convex functions subject to prescribed level and slope boundary constraints---and their applications to mechanism and information design. We show that various problems in economic design admit CFIs as their feasible sets. Our goal is to offer a systematic analysis of the mathematical structure of CFIs, and to provide new techniques for solving \emph{linear} optimization problems defined on them. We apply our framework to derive feasible and optimal mechanisms in adverse selection problems subject to type-dependent participation constraints, large contest design problems with allocative constraints, and the Bayesian persuasion problem with informativeness restrictions. Our techniques are potentially applicable to a varied array of other economic design problems.

Specifically, we define CFIs as sets of convex functions defined on a compact interval of the real line, which slopes lie within a prescribed interval, and whose values are sandwiched between two convex boundary functions. We show that CFIs are \emph{compact} and \emph{convex} sets. Therefore, CFIs are fully described by their \emph{extreme points}. Our first main result (\cref{thm:ext_pt}) provides a geometric characterization of these extreme points. It shows, in particular, that each extreme point is characterized by a countable collection of intervals. Outside these intervals an extreme point must coincide with one of the two boundary functions. On each interval the extreme point is an affine segment. Each of these affine segments must, in addition, satisfy at least one of the following conditions: (i) it extends linearly from a tangency point with the upper level boundary; (ii) it is a chord connecting two points of its own graph that are not confounded with the upper level boundary; or (iii) it has either the minimal or the maximal permitted subgradient. Moreover, when condition (ii) or (iii) is satisfied, each endpoint of the affine segment must either coincide with the lower level boundary of the CFI, or be adjacent to a segment satisfying condition (i). In our applications, we highlight how these affine segments relate to bunching and ironing in mechanism design, as well as to the mean-preserving spreads and contractions operations.

In applications, the representation of the feasible set as a CFI is usually not explicit but stems from the underlying economic and incentive constraints. In quasi-linear adverse selection environments, for instance, physical and incentive constraints translate into CFIs of agents' \emph{indirect utility functions}, following the characterization of \citet{Rochet1987}. Similarly, sets of monotone functions ordered by the \emph{majorization} relation---shown by \citet*{Kleiner2021} to play a central role in capturing such constraints---can also be expressed as CFIs. Our \cref{thm:ext_pt} therefore recovers their characterization of the extreme points majorization sets as a special case, and extends it to the case of \emph{two-sided} (weak) majorization constraints, that is, to the sets of monotone functions that are both (weak) mean-preserving spreads of one fixed monotone function and (weak) mean-preserving contractions of another.

Beyond characterizing the entire feasible set, extreme points also provide insights about the structure of optimal solutions. Indeed, maximizers of linear programs on CFIs can always be found among the extreme points. In economic applications, however, one often needs to check whether a particular class of extreme points is optimal for an economic design problem. Our second main result, \cref{thm:charac_opt}, is a \emph{verification theorem}: it gives sufficient conditions on the objective that guarantee the optimality of a prescribed extreme point. Intuitively, these conditions ensure that the extreme point is \emph{locally} optimal on each cell of a partition of its domain. This partition is built from the collection of intervals that characterize the extreme point (\cref{thm:ext_pt}). To be more precise, in the economic design problems we consider, the designer's objective functional can be written as the expectation of some element of a CFI with respect to a \emph{signed measure}. This measure captures the designer's \emph{marginal incentives} and is typically derived from the model's primitives. For instance, in screening and delegation problems, it corresponds to a transformation of the principal's virtual value. The optimality conditions in \cref{thm:charac_opt} require this measure to satisfy particular \emph{convex dominance} properties on each cell of the partition.\footnote{In doing so, we follow a similar approach to \cite{Rochet1998}, \cite{Daskalakis2017}, \cite{Kleiner2019}, and \cite{Kleiner2022}, who derive analogous optimality conditions in the multiple goods monopolist and multidimensional delegation problems.} 

Together, \cref{thm:ext_pt,thm:charac_opt} thus offer a methodology for addressing economic design problems that are regarded as challenging. In these problems, since the signed measure is derived from the economic primitives, it also inherits their \emph{regularity}. Under mild assumptions, the measure therefore admits a (signed) \emph{density function}. Since \cref{thm:ext_pt} restricts the possible shapes of candidates for optimality, one can often make an educated \emph{guess} about the form of optimal extreme points simply by inspecting the sign of the density on the domain of the CFI. \cref{thm:charac_opt} then allows one to \emph{verify} this guess. In particular, when the measure is sufficiently regular, the convex dominance conditions stated in \cref{thm:charac_opt} become easy to check, as they only bear on its density function.

We apply our theoretical results to derive concrete implications for \emph{constrained} versions of several canonical problems in economic design. We derive the feasible, extreme, and optimal mechanisms in the monopolistic screening problem, the optimal delegation problem, optimal contest design with a large number of agents, and Bayesian persuasion when the sender's payoff depends only on the posterior mean. Relative to their classical formulations, representing feasible mechanisms as CFIs in these applications makes it possible to incorporate constraints that are intractable with existing methods. Next, we apply \cref{thm:ext_pt} to characterize the extremal mechanisms. We also impose specialized conditions on the primitives---which determine the corresponding signed measure---and use \cref{thm:charac_opt} to identify features of optimal extreme points. 

In the screening and delegation problems, we enrich the classical frameworks by allowing agents to select from a rich menu of outside options, which induces \emph{type-dependent} participation constraints \citep{Jullien2000}. Building on \citet{Rochet1987} and \citet{Kleiner2022}, we represent the feasible mechanisms in both problems as CFIs of the induced indirect utilities, with the participation constraints encoded in the lower level boundary. Following our \cref{thm:ext_pt}, extremal screening and delegation mechanisms feature randomization and exclusion. Using \cref{thm:charac_opt}, we further show that posted prices need not be optimal when the menu of outside options is sufficiently rich. We also show a novel comparative statics result: in the canonical delegation environment with log-concave type density and constant bias \citep{Martimort2006}, improving the agent's outside options leads the principal to grant him more discretion under the optimal mechanism.

We also study the role of \emph{limited disposal} in large contest design. Large contest models usually retain the assumption that either the principal must assign all available prizes, or that the principal can dispose of them freely. Our framework covers all the intermediate cases in which the principal faces a lower bound on the average quality he must allocate. Under this constraint, we show that the set of feasible prize assignments forms a \emph{two-sided weak-majorization} CFI whose elements correspond to cumulative expected quantile assignment rules. Using our \cref{thm:ext_pt}, we show that the structure of extremal assignments can depend on whether the disposal constraint binds. We also show that, when the type distribution is sufficiently regular, effort-maximizing contests exclude types below some cutoff and implement the positive assortative assignment for types above the cutoff. In particular, we show that the exclusion cutoff decreases when disposal becomes more limited. Economically, this means that setting more stringent minimal average quality standards forces any effort-maximizing designer to reward more types.

Finally, we study the impact of \emph{informativeness} constraints in optimal (mean-based) Bayesian persuasion. Imposing bounds on informativeness implies that the feasible set of posterior-mean distributions is given by a \emph{two-sided majorization} CFI. As a corollary of \cref{thm:ext_pt}, we characterize all extremal posterior-mean distributions that satisfy some informativeness bounds. We further show a new comparative statics result: in the canonical setting with an S-shaped value function for the sender \citep{Kolotilin2022}, increasing the minimal required informativeness shrinks the optimal censorship region (equivalently, enlarge the full-revelation region).

In addition to \cref{thm:ext_pt,thm:charac_opt}, we provide a further theoretical contribution. Consider a simplified version of the problem introduced by \cite{Dworczak2024}: in a linear monopolistic screening environment, a welfare-maximizing planner \emph{designs} a menu of outside option for the agents taking into account that, downstream, the agents will face a selling mechanism designed by a revenue-maximizing principal. In this setting, the lowest implementable indirect utility for the principal corresponds to the one induced by the planner's chosen menu. Therefore, \citeauthor{Dworczak2024}'s problem can equivalently be represented in our framework as the planner optimizing the \emph{lower level boundary} of the CFI of indirect utilities subject to the principal's revenue-maximization constraint. We analyze such \emph{nested design} problems in an abstract form. Our \cref{thm:opt_lower_bd} shows that, for CFIs (i) which upper level boundary is affine and either has the minimal or maximal admissible slope, and (ii) which lower level boundary coincides with the upper level boundary at one endpoint of the domain, the planner's value is \emph{linear} in the lower level boundary. This class of CFIs includes the screening environment considered by  \cite{Dworczak2024}. We use \cref{thm:opt_lower_bd} to derive welfare-optimal menus of outside options in the monopolistic screening problem, and show, consistent with \citet{Dworczak2024}, that they take the form of menus including an ``\emph{option-to-own}'', i.e., allowing agents either to opt out or to purchase the good at a fixed price.

\subsection{Discussion of the Literature}

\paragraph{Extreme points in mechanism design}

We follow a recent and active strand of literature that derives insights in different economic design problems sharing an overarching mathematical structure. Many of these structures are \emph{convex}, which naturally leads to the study of their \emph{extreme points}. The contribution of \citet*{Kleiner2021} provides such a characterization for the sets of non-decreasing functions that \emph{majorize}, or are \emph{majorized} by, another non-decreasing function.\footnote{They extend the characterization of \citet{Ryff1967}, who do not impose monotonicity. See also \citet*{Arieli2023}, who obtain an identical characterization of the extreme points of mean-preserving contractions of a given distribution.} They apply this characterization to several economic problems, including multi-unit auctions, large contests, mean-based persuasion, and optimal delegation. Subsequent work further develops this approach. \citet{Nikzad2022} characterize the extreme points of the sets studied by \citet{Kleiner2021} under finitely many additional linear constraints. \citet{Kleiner2025} identify the (Lipschitz-)\emph{exposed}\footnote{Exposed points are extreme points that maximize uniquely some continuous linear function.} points of \emph{fusions}, the multidimensional analogue of mean-preserving contractions. In parallel, \citet{Yang2024} characterize the extreme points of \emph{monotone function intervals}---sets of non-decreasing functions bounded above and below by two given functions---and apply this framework to gerrymandering, quantile-based persuasion, apparent misconfidence, and security design. \citet{Yang2025} extend the analysis to higher dimensions by characterizing the extreme points of multidimensional monotone functions. They apply this characterization to optimal mechanism design with correlated values and information, and optimal information design subject to privacy constraints. Relatedly, \citet{Lahr2024} characterize the extremal elements of the set of incentive-compatible and individually-rational mechanisms in multidimensional linear adverse selection environments. They exploit the theory of \emph{indecomposable convex bodies}\footnote{Convex bodies are compact convex subsets of Euclidean space with nonempty interior.} to identify extremal mechanisms in menu form. They extend an earlier contribution of \citet{Manelli2007}, who identify extreme points and faces of the set of implementable (hence convex) indirect utility functions in the multidimensional monopolistic screening problem. In contrast, we study one-dimensional adverse selection problems but allow type-dependent participation constraints, as we discuss now.

\paragraph{Participation constraints in adverse selection models}

Standard treatments of adverse selection assume that agents' outside option is fixed and \emph{independent} of type. In richer screening environments this assumption typically fails \citep{Lewis1989,Maggi1995}. Type-dependent participation constraints, however, are difficult to handle even in one dimension.\footnote{To the best of our knowledge, solving multidimensional screening problems with type-dependent participation constraints remains out of reach. A notable exception is \citet{Rochet1998}, who consider default menus with a single non-null outside option in the monopolistic screening problem.} The usual approach, following \citet{Jullien2000} and \citet{Amador2013}, consists of putting Lagrange multipliers on the continuum of participation constraints.\footnote{See also \citet{Martimort2022}, who extend \citeauthor{Jullien2000}'s (\citeyear{Jullien2000}) framework to possibly discontinuous objectives.} These multipliers can be seen as a measure supported on the set of types for which the participation constraint binds.  Solving the model then requires (i) conjecturing the form of the (infinite-dimensional) Lagrange multipliers, (ii) solving the relaxed problem given the conjecture, and (iii) verifying global optimality via a sufficiency theorem. \citet{Dworczak2024} propose an alternative for \emph{linear} environments, extending \citeauthor{Myerson1981}'s (\citeyear{Myerson1981}) ironing to incorporate participation constraints. We propose a complementary methodology for \emph{linear} adverse selection with type-dependent outside options: rather than Lagrangian or ironing arguments, we characterize optimal solutions directly as extreme points of the set of \emph{indirect utilities} satisfying participation constraints.

\paragraph{Extremal convex functions}

Our paper also connects to early contributions in mathematics that characterize the \emph{extreme rays} of the convex cone of all convex functions.\footnote{A subset $R$ of a cone $K$ is called a ray if $R = \{\lambda x \; \vert \; \lambda\ge 0\}$ for some $x\in K\smallsetminus\{0\}$. A ray $R$ is called \emph{extreme} if the following is satisfied: for any $x,y\in K$, if $x+y\in R$ then $x,y\in R$. By contrast, an \emph{extreme point} of a convex set $C$ is $x\in C$ such that $x=\lambda y+(1-\lambda)z$ with $y,z\in C$ and $\lambda\in(0,1)$ implies $x=y=z$. Extreme rays play the same role for closed convex pointed cones than extreme points for compact convex sets: any such cone is generated as the conical hull of its extreme rays.} In one dimension, \cite{Blaschke1916} show that all extreme rays are generated by functions of the form $a\vee b$ where $a$ and $b$ are affine functions. \cite{Johansen1974} extends this analysis to convex functions on $\mathbb{R}^{2}$, and \cite{Bronshtein1978} generalizes it to $\mathbb{R}^{d}$. Both \citeauthor{Johansen1974} and \citeauthor{Bronshtein1978} prove that the set of extremal convex functions is \emph{dense} in the cone.\footnote{\cite{Lahr2024} stress the same challenge in multidimensional screening problems: in one dimension, extremal mechanisms are tractable, but in multiple dimensions the set of extremal mechanisms is essentially as large as the full set of incentive-compatible mechanisms.}  Additionally, \citeauthor{Bronshtein1978}'s Theorems 5.1 and 5.2 characterize the extreme points of certain compact convex subsets of one-dimensional bounded convex functions, which our \cref{thm:ext_pt} recovers as special cases. Recently, \cite{Baillo2022} also characterize the extreme points of the (compact and convex) set of convex functions $f\colon [0,1] \to [0,1]$ that satisfy $f(0)=0$, $f(1)=1$ and have a fixed area, such as families of Lorenz curves with a fixed Gini index.

\section{The Structure of Convex Function Intervals}\label{sec:structure_CFI}

\subsection{Notations and Definitions}

\subsubsection{Mathematical Preliminaries}

For any compact interval $X\subset\mathbb{R}$, we let $\mathcal{C}(X)$ be the set of real-valued continuous functions on $X$, and $\mathcal{K}(X)$ be the set of real-valued continuous convex functions on $X$.

For any $u\in\mathcal{K}(X)$ and $x\in X$, the subdifferential of $u$ at $x$ is defined as
\begin{equation*}
    \partial u(x) = \Bigl\{s\in \mathbb{R} \; \big\vert \; \forall y\in X, \; u(y) \geq u(x) + s(y-x)\Bigr\}.
\end{equation*}

Any $s\in \partial u(x)$ is called a subgradient of $u$ at $x$. We let $\partial u(X) \coloneqq \bigcup_{x\in \Int(X)} \partial u(x)$. For any convex function $u$, its left and right derivatives $\partial_{-}u$ and $\partial_{+}u$ exist everywhere on $X$, are non-decreasing functions, and satisfy $\partial_{-}u \leq \partial_{+}u$ \cite[][Theorems 0.6.3 and 0.6.4]{Hiriart2001}. Furthermore,  $\partial u(x) = [\partial_{-}u(x), \partial_{+}u(x)]$ for all $x\in \Int(X)$, the set $\bigl\{x\in X \; \vert \; \partial_{-}u(x)< \partial_{+}u(x)\bigr\}$ is countable and, hence, $u$ is differentiable almost everywhere on $X$, and $\partial u(x)=\{u'(x)\}$ wherever $u$ is differentiable \citep[][Theorem B.4.2.3]{Hiriart2001}.

For any function $u\colon \mathbb{R} \to \mathbb{R}$, and any $A\subset \mathbb{R}$, we denote the restriction of $u$ to $A$ as $u\vert_{A}$. For any set $A\subset \mathbb{R}$, $\Id_{A}$ denotes the identity function on $A$, and $\Ind_{A}$ denotes the indicator function of $A$. When $A=[0,1]$ we omit the explicit dependence on $A$.

For any $x\in X$, we denote as $\delta_{x}$ the Dirac measure at $x$, and the Lebesgue measure as $\lambda$. For any (signed) measure $\mu$ on $(\mathbb{R}, \mathcal{B}(\mathbb{R}))$ and any (Borel) measurable set $A\in\mathcal{B}(\mathbb{R})$, we let $\mu\vert_{A}$ be the conditional measure of $\mu$ given $A$.\footnote{We refer to \cite{Bogachev2007}, Chapter 10, for a formal definition of conditional measures.}

\subsubsection{Convex Function Intervals}

Having introduced all the necessary preliminaries, we now turn to our main object of study.

\begin{definition}[Convex function intervals]\label{def:CFI}
The set $\mathcal{U}$ is called a \emph{convex function interval} (CFI for short) if there exist two compact intervals $X\coloneqq [a,b]\subset\mathbb{R}$ and $S\coloneqq [\ubar{s},\bar{s}]\subset\mathbb{R}$ and two functions $\ubar{u},\bar{u}\in \mathcal{K}(X)$ such that $\ubar{u}\leq\bar{u}$ and $\partial \ubar{u}(X),\partial \bar{u}(X)\subseteq S$, and
\begin{equation*}
  \mathcal{U} = \Bigl\{ u\in \mathcal{K}(X) \; \big\vert \; \ubar{u} \leq u \leq \bar{u}, \, \partial u(X) \subseteq S \Bigr\}.
\end{equation*}
\end{definition}

A convex function interval, $\mathcal{U}$, is therefore the set of real-valued continuous and convex functions that are defined on some compact interval of real numbers $X$, which slopes lie between the two prescribed bounds $\ubar{s}$ and $\bar{s}$, and which graphs are sandwiched between two continuous, convex boundary functions $\ubar{u}$ and $\bar{u}$ (assumed to satisfy the slope constraints themselves). Note that any convex function must be continuous in the interior of its domain \citep[][Theorem 6.2]{Hiriart2001}. Hence, imposing continuity on the whole $X$ is in fact equivalent to only requiring that any $u$ in $\mathcal{U}$ is continuous at the endpoints of $X$.

We also maintain the following assumption throughout the paper for tractability.
\begin{assumption}\label{assu:diff}
    The upper level boundary $\bar{u}$ is differentiable on $X$. Hence, its derivative $\bar{u}'$ exists and is a continuous function on $X$.\footnote{A convex function is differentiable if and only if it continuously differentiable \citep[see][Remark 6.2.6]{Hiriart2001}.}
\end{assumption}

In most of \cref{sec:structure_CFI,sec:optimization_CFI} and of the proofs in the Appendix, we let the domain $X$ be the unit interval $[0,1]$. This assumption is a normalization and is without loss of generality. We also adopt the shorthand notations $\mathcal{C}\coloneqq \mathcal{C}\bigl([0,1]\bigr)$ and $\mathcal{K}\coloneqq \mathcal{K}\bigl([0,1]\bigr)$.

\begin{remark}\label{remark:assumptions}
    Several assumptions in our setup are made for expositional clarity and unity, and can be relaxed without fundamentally altering our results. Our characterization of extreme points (\cref{thm:ext_pt}) can be extended to cases where $\bar{u}$ is not differentiable. Similarly, \cref{thm:ext_pt} extends to ``open intervals'' of convex functions, where one of the level boundaries takes an infinite value. However, allowing infinite boundaries breaks the compactness of CFIs, which is needed if one wants to apply Choquet's theorem (claim (ii) of \cref{prop:representation} below). Finally, all our results apply verbatim to concave function intervals by reversing the signs.
\end{remark}

\subsection{Monopolistic Screening with (Endogenous) Type-Dependent Participation Constraints}\label{sec:screening_run_example}

In the remainder of \cref{sec:structure_CFI} and in the upcoming \cref{sec:optimization_CFI}, we regularly use the following running example to illustrate how our abstract results apply in a workhorse mechanism design setting. In \cref{sec:applications}, we cover in detail additional economic applications (and implications) of our framework.

\subsubsection{Model}

\paragraph{Primitives}

A principal has a unit mass of homogeneous goods that he can allocate to a population of agents, modeled as a continuum of unit mass. Each agent has a privately known type $\theta\in \Theta=[0,1]$ representing her willingness to pay for the good. An agent of type $\theta$ receiving a good with probability $x\in[0,1]$ at a monetary cost $t\in\mathbb{R}_{+}$ hence obtains a surplus equal to $\theta x - t$.\footnote{One can also always interpret $x$ as a quantity of goods, or the good's quality, instead of an allocation probability, where the maximal quantity, or quality, has been normalized to one.} Agents' types are identically and independently distributed according to the cumulative distribution function $F$, which admits a differentiable and strictly positive density function $f$.

\paragraph{Selling mechanisms}

By the revelation principle \citep{Myerson1981}, it is without loss of generality to focus on direct selling mechanisms $(x,t)\colon\Theta\to[0,1]\times \mathbb{R}_{+}$, which consist of an allocation rule $x\colon \Theta\to [0,1]$ and a transfer rule $t\colon\Theta\to\mathbb{R}_{+}$ that specify, respectively, the probability $x(\theta)$ with which an agent receives a good, and the transfer $t(\theta)$ the agent must pay to the principal when reporting type $\theta$.

\paragraph{Incentive constraints}

A mechanism satisfies \emph{incentive-compatibility} if all the agents have an incentive to report their types truthfully. Formally,
\begin{equation}\label{eqn:IC_screening}
    \forall \theta,\theta'\in\Theta, \quad \theta x(\theta) - t(\theta) \geq \theta x(\theta') - t(\theta'). \tag{IC}
\end{equation}

We assume that the agents can always flexibly choose their preferred option from a (compact) \emph{default menu} of outside options $M_{0}$ such that
\begin{equation*}
    \bigl\{(0,0)\bigr\}\subseteq M_{0}\subset \bigl\{(x,t) \; \vert \;  x \in[0,1], \; t\in \mathbb{R}_{+} \bigr\},
\end{equation*}
rather than participating in the mechanism proposed by the principal. A mechanism satisfies \emph{individual-rationality} if it guarantees each type at least as much surplus as its favorite option in $M_{0}$. Formally,
\begin{equation}\label{eqn:IR_screening}
    \forall\theta\in\Theta, \quad \theta x(\theta) - t(\theta) \geq u_{0}(\theta) \coloneqq \max_{(x,t)\in M_{0}} \; \theta x-t. \tag{IR}
\end{equation}

\paragraph{Indirect utility functions}\label{sec:screening_run_example_CFI}
 
The agents' \emph{indirect utility function} induced by some mechanism $(x,t)$ is defined as
\begin{equation*}
  \forall \theta\in\Theta, \quad u(\theta) = \max_{\theta'\in \Theta} \; \theta x(\theta') - t(\theta').
\end{equation*}

An indirect utility function $u\colon\Theta\to \mathbb{R}$ is \emph{implementable} if there exists a mechanism $(x,t)$ which satisfies \labelcref{eqn:IC_screening,eqn:IR_screening} such that $u(\theta) = \theta x(\theta) - t(\theta)$ for every $\theta\in\Theta$.

\subsubsection{Feasible mechanisms as a CFI}

\paragraph{The screening interval}

We follow the approach of \cite{Rochet1987} by characterizing mechanisms satisfying \labelcref{eqn:IC_screening,eqn:IR_screening} by their induced indirect utility functions. Specifically, we show that, for any menu $M_{0}$, the set of implementable indirect utility functions is a CFI, that we call the \emph{screening interval}.

\begin{lemma}\label{lem:screening_interval}
    An indirect utility function $u$ is \emph{implementable} if and only if $u\in \mathcal{U}_{\mathrm{S}}$, where
    \begin{equation*}
        \mathcal{U}_{\mathrm{S}} \coloneqq \Bigl\{u\in \mathcal{K}(\Theta) \; \big\vert \; u_{0} \leq u \leq \Id_{\Theta}, \, \partial u(\Theta) \subseteq [0,1] \Bigr\}.
    \end{equation*}
    
    Moreover, if a mechanism $(x,t)$ implements $u\in\mathcal{U}_{\mathrm{S}}$, then $x(\theta)\in\partial u(\theta)$ and $t(\theta)= \theta x(\theta) - u(\theta)$ for all $\theta\in\Theta$.
\end{lemma}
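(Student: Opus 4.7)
The plan is to prove the two directions of the equivalence separately, extracting the identification statement as a byproduct of the necessity argument.

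For \emph{necessity}, I would start from any mechanism $(x,t)$ satisfying \labelcref{eqn:IC_screening,eqn:IR_screening} and implementing $u$. Writing $u(\theta) = \max_{\theta'\in\Theta} [\theta x(\theta') - t(\theta')]$ as a pointwise supremum of affine functions immediately gives $u\in\mathcal{K}(\Theta)$. Pairwise \labelcref{eqn:IC_screening} between arbitrary $\theta$ and $\theta'$ rearranges to $u(\theta) \geq u(\theta') + x(\theta')(\theta-\theta')$, which simultaneously shows $x(\theta')\in\partial u(\theta')$ and, since $x(\theta')\in[0,1]$, yields $\partial u(\Theta)\subseteq [0,1]$. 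The lower bound $u\geq u_0$ is \labelcref{eqn:IR_screening} restated, and the upper bound $u\leq\Id_\Theta$ follows from $u(\theta) = \theta x(\theta) - t(\theta) \leq \theta$ using $x(\theta)\leq 1$ and $t(\theta)\geq 0$. Setting $\theta' = \theta$ in the \labelcref{eqn:IC_screening} inequality (combined with its converse direction) delivers $t(\theta) = \theta x(\theta) - u(\theta)$, which is precisely the identification claim at the end of the lemma.

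For \emph{sufficiency}, given $u\in\mathcal{U}_{\mathrm{S}}$, I would select any measurable subgradient, for instance $x(\theta) = \partial_{+}u(\theta)$ on $[0,1)$ and $x(1)=\partial_{-}u(1)$, and define $t(\theta) = \theta x(\theta) - u(\theta)$. The slope constraint $\partial u(\Theta)\subseteq[0,1]$ extends to the endpoints by monotonicity of the one-sided derivatives, so $x(\theta)\in[0,1]$. To show $t\geq 0$, I would first pin down $u(0)=0$: since $(0,0)\in M_0$ we have $u_0(0)=0$, while $u\leq\Id_\Theta$ forces $u(0)\leq 0$, so the sandwich $u_0\leq u\leq\Id_\Theta$ gives $u(0)=0$. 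Convexity and $x(\theta)\in\partial u(\theta)$ then yield $u(0)\geq u(\theta) - \theta x(\theta)$, i.e., $t(\theta)\geq -u(0) = 0$. Incentive-compatibility is immediate from the subgradient inequality: $u(\theta)\geq u(\theta') + x(\theta')(\theta - \theta') = \theta x(\theta') - t(\theta')$, with equality at $\theta'=\theta$. Individual-rationality is $u\geq u_0$ by assumption.

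The argument is essentially the classical \citet{Rochet1987} characterization; the only substantive twist is that \labelcref{eqn:IR_screening} is encoded by the type-dependent floor $u_0$ rather than the constant zero floor, which is invisible to the convex-analytic construction. The one place where mild care is needed is the boundary: establishing $u(0)=0$ from the interaction of the two level bounds, and verifying the slope condition at $\theta=0$ and $\theta=1$ from the interior subdifferential bound via monotonicity of $\partial_{\pm}u$. Neither step is deep, so I do not expect a real obstacle.
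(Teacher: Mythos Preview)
Your proposal is correct and is precisely the standard \citet{Rochet1987} argument that the paper invokes (the paper omits the proof entirely, referring to Rochet's Proposition~2). The only place worth tightening slightly is the step ``since $x(\theta')\in[0,1]$, yields $\partial u(\Theta)\subseteq[0,1]$'': knowing that each $\partial u(\theta')$ \emph{contains} a point of $[0,1]$ does not by itself bound the whole subdifferential, but the two-sided IC inequalities bound every difference quotient of $u$ between $x(\theta')$ and $x(\theta)$, which is what you need.
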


\cref{lem:screening_interval} follows from standard arguments that we omit for brevity \citep[see][Proposition 2]{Rochet1987}. Note that the upper bound on implementable utilities, $\Id_{\Theta}$, is achieved by the mechanism that gives away the good for free to all agents.

\begin{remark}
    By \labelcref{eqn:IR_screening}, $u_{0}$ is convex since it is defined as the supremum of a family of affine functions. According to \cref{lem:screening_interval}, the principal  can therefore implement $u_{0}$ by some \emph{default mechanism} $(x_{0},t_{0})$ that satisfies \labelcref{eqn:IC_screening}.\footnote{Moreover, by the \emph{Taxation Principle}, $(x_{0},t_{0})$ equivalently represents the agents' optimal choice function from the menu $M_{0}$.} The underlying assumption that $(x_{0},t_{0})$ satisfies \labelcref{eqn:IC_screening} is what \cite{Jullien2000} calls \emph{homogeneity}. This property is crucial for $\mathcal{U}_{\mathrm{S}}$ to be a CFI. Intuitively, homogeneity ensures that the \labelcref{eqn:IC_screening} and \labelcref{eqn:IR_screening} constraints never conflict with each other.
\end{remark}

\subsection{The Extreme Points of Convex Function Intervals}\label{subsec:extremepoints}

We now examine the structural properties of CFIs.  As a preliminary, we establish in \cref{sec:vexpact} that CFIs are compact and convex subsets of the space of continuous functions and are hence representable as the closed convex hull of their extreme points (\cref{prop:representation}). In \cref{sec:extpt_charac}, we state our first main result (\cref{thm:ext_pt}) which is a geometric characterization of CFIs' extreme points. We describe this characterization intuitively in \cref{sec:interpretation_ext_pt}. We then sketch the argument for the proof of \cref{thm:ext_pt} in \cref{sec:proof_sketch_ext_pt}. We conclude in \cref{sec:screening_run_example_ext_pt} by illustrating this characterization in the example introduced in \cref{sec:screening_run_example}.

\subsubsection{Representation of CFIs}\label{sec:vexpact}

An \emph{extreme point} of a convex set $C$ is an element $x$ of $C$ such that for all $\lambda\in (0,1)$ and $y,z\in C$, the equality $x = \lambda y + (1-\lambda) z$ implies that $x=y$ or $x=z$.

\begin{proposition}\label{prop:representation}
  Let $\mathcal{U}$ be a CFI. Then, the two following claims are true:
  \begin{enumerate}[(i)]
      \item The set $\mathcal{U}$ is a non-empty, convex, and compact subset of $\mathcal{C}$ endowed with the supremum norm $\lVert\cdot\rVert_{\infty}$. Therefore, $\mathcal{U}$ admits a non-empty set of extreme points, denoted as $\ex(\mathcal{U})$.
      \item For any $u \in \mathcal{U}$, there exists a probability measure $\mu$ such that $\supp(\mu)\subseteq \ex(\mathcal{U})$ and
      \begin{equation*}
          u=\int_{\ex(\mathcal{U})} u^{\star} \, \diff \mu(u^{\star}).
      \end{equation*}
      
      Hence, $\mathcal{U}=\overline{\mathrm{co}}\bigl(\ex(\mathcal{U})\bigr)$.
  \end{enumerate}
\end{proposition}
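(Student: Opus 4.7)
The plan is to prove (i) by a combination of elementary estimates, the Arzel\`a-Ascoli theorem, and the Krein-Milman theorem, and then to deduce (ii) from Choquet's theorem.

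For part (i), non-emptiness is immediate since $\bar{u}\in\mathcal{U}$ by the hypotheses of \cref{def:CFI}. Convexity of $\mathcal{U}$ is standard: a pointwise convex combination of convex functions is convex, the level constraints $\ubar{u}\leq u\leq \bar{u}$ are linear in $u$, and for $\lambda\in[0,1]$ and $u_{1},u_{2}\in\mathcal{U}$ one has $\partial(\lambda u_{1}+(1-\lambda)u_{2})(x)\subseteq \lambda\,\partial u_{1}(x)+(1-\lambda)\,\partial u_{2}(x)\subseteq S$, using convexity of the interval $S$. Compactness reduces to Arzel\`a-Ascoli: uniform boundedness follows from continuity of $\ubar{u},\bar{u}$ on the compact set $X$ together with $\ubar{u}\leq u\leq \bar{u}$, while equicontinuity follows from the slope constraint, which makes every $u\in\mathcal{U}$ Lipschitz with the common constant $L\coloneqq \max\{|\ubar{s}|,|\bar{s}|\}$.

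The one nontrivial step is closedness of $\mathcal{U}$ in $(\mathcal{C},\lVert\cdot\rVert_{\infty})$. Let $u_{n}\in\mathcal{U}$ with $u_{n}\to u$ uniformly. Uniform convergence preserves convexity and the pointwise level bounds. For the slope constraint, for any $x<y$ in $X$ the secant slope $(u_{n}(y)-u_{n}(x))/(y-x)$ lies in $[\ubar{s},\bar{s}]$, since it is sandwiched between left- and right-derivatives of $u_{n}$ at $x$ and $y$, all of which lie in $S$; this inequality passes to the limit because $S$ is closed. Any subgradient of $u$ at an interior point can in turn be bounded above and below by such secant slopes, whence $\partial u(X)\subseteq S$. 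Existence of extreme points then follows from the Krein-Milman theorem applied to the nonempty compact convex set $\mathcal{U}$ in the locally convex Hausdorff space $(\mathcal{C},\lVert\cdot\rVert_{\infty})$.

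For part (ii), I would invoke the classical Choquet theorem. The ambient space $(\mathcal{C},\lVert\cdot\rVert_{\infty})$ is a separable Banach space, so the compact set $\mathcal{U}$ is metrizable and $\ex(\mathcal{U})$ is a Borel (in fact $G_{\delta}$) subset of $\mathcal{U}$. Choquet's theorem then yields, for every $u\in\mathcal{U}$, a Borel probability measure $\mu$ on $\mathcal{U}$ with $\supp(\mu)\subseteq\ex(\mathcal{U})$ such that $\phi(u)=\int \phi(u^{\star})\,\diff\mu(u^{\star})$ for every continuous linear functional $\phi$ on $\mathcal{C}$; applying this to the point-evaluation functionals $\phi_{x}\colon v\mapsto v(x)$ gives the pointwise integral representation $u=\int u^{\star}\,\diff\mu(u^{\star})$. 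The identity $\mathcal{U}=\overline{\mathrm{co}}\bigl(\ex(\mathcal{U})\bigr)$ then follows either directly from this representation or, equivalently, from Krein-Milman. The main obstacle is the closedness verification in part (i); once that is in place, the rest is a routine packaging of standard abstract results.
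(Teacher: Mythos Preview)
Your proposal is correct and follows essentially the same route as the paper's own proof: non-emptiness and convexity are verified directly, compactness via Arzel\`a--Ascoli using uniform boundedness from the level constraints and equi-Lipschitz continuity from the slope constraint, and part (ii) via Choquet's theorem (with metrizability coming from $(\mathcal{C},\lVert\cdot\rVert_{\infty})$). If anything, your treatment is slightly more careful than the paper's, since you explicitly verify closedness of $\mathcal{U}$---a step the paper leaves implicit when invoking Arzel\`a--Ascoli.
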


The proof of \cref{prop:representation} can be found in \cref{secap:proof_representation}. Claim (i) follows from the Arzelà-Ascoli and Krein-Milman theorems. Claim (ii) follows from Choquet's theorem (a strengthening of Krein-Milman Theorem), and states that any element of a CFI, $\mathcal{U}$, can be obtained as a (potentially infinite) convex combination of some extreme points of $\mathcal{U}$. 

\subsubsection{Characterization of Extreme Points}\label{sec:extpt_charac}

Although identifying extreme points in functional spaces is typically a challenging endeavor, we derive a relatively simple geometric characterization of CFIs' extreme points:

\begin{theorem}[Extreme points]\label{thm:ext_pt}
  Let $\mathcal{U}$ be a CFI. A function $u\in\mathcal{U}$ belongs to $\ex(\mathcal{U})$ if and only if there exists a (possibly empty) countable collection $\mathcal{X}=\{X_{n}\}_{n\in\mathbb{N}}$ of maximal and non-degenerate\footnote{By non-degenerate we mean that, for each $n\in \mathbb{N}$, $a_{n}<b_{n}$. Maximality refers to the property that for each $X_{n}=[a_n, b_n] \in \mathcal{X}$, for all $\varepsilon>0$, either $u\vert_{[a_n-\varepsilon, b_n]}$ or $u\vert_{[a_n, b_n + \varepsilon]}$ do not satisfy the statement in condition \labelcref{cond:not_level_satur}.} intervals $X_{n}=[a_{n},b_{n}]\subseteq X$ such that:
  \begin{enumerate}
      \item For all $x\notin \bigcup_{n\in\mathbb{N}} X_{n}$, $u(x)\in\bigl\{\ubar{u}(x), \bar{u}(x)\bigr\}$. \label{cond:level_satur}
      \item For each $n\in\mathbb{N}$, $u\vert_{X_{n}}$ is affine, $\ubar{u}<u\vert_{\Int(X_{n})}<\bar{u}$, and at least one of the following conditions holds: \label{cond:not_level_satur}
      \begin{enumerate}
        \item There exists $y\in\{a_{n},b_{n}\}$ such that, for all $x\in X_{n}$, $u(x)=\bar{u}(y)+s(x-y)$ with
        \begin{equation*}
            s \left\{
            \begin{array}{ll}
                = \bar{u}'(y)& \text{if $y \in (0,1)$} \\[3pt]
                \in \bigl\{\ubar{s}, \partial_{+}\bar{u}(y) \bigr\} & \text{if $y=0$} \\[3pt]
                \in \bigl\{\bar{s}, \partial_{-}\bar{u}(y) \bigr\} & \text{if $y=1$}.
            \end{array}
            \right.
        \end{equation*}
        \label{cond:tangent_bounds}
        \item For each $x \in \{a_n, b_n\}$, either there exists $m\in\mathbb{N}$ such that $b_m=a_n$ or $b_n=a_m$ and $u\vert_{X_{m}}$ satisfies condition \labelcref{cond:tangent_bounds}, or $u(x)=\ubar{u}(x)$.\label{cond:chord_satur}
        \item Either $a_{n}=0$, $u'\vert_{X_{n}}=\ubar{s}$, and either there exists $m\in\mathbb{N}$ such that $a_m=b_n$ and $u\vert_{X_{m}}$ satisfies condition \labelcref{cond:tangent_bounds} or $u(b_{n})=\ubar{u}(b_{n})$.
        
        Or, symmetrically, $b_{n}=1$, $u'\vert_{X_{n}}=\bar{s}$, and either there exists $m\in\mathbb{N}$ such that $a_n=b_m$ and $u\vert_{X_{m}}$ satisfies condition \labelcref{cond:tangent_bounds} or $u(a_{n})=\ubar{u}(a_{n})$. \label{cond:matching_grad}
        \item Either $a_n=0$, $u(a_n) \in \{\ubar{u}(a_n), \bar{u}(a_n)\}$ and either there exists $m\in\mathbb{N}$ such that $a_m=b_n$ and $u\vert_{X_{m}}$ satisfies condition \labelcref{cond:tangent_bounds} or $u(b_{n})=\ubar{u}(b_{n})$.
        
        Or, symmetrically, $b_n=1$, $u(b_n)\in \{\ubar{u}(b_n), \bar{u}(b_n)\}$ and either there exists $m\in\mathbb{N}$ such that $a_n=b_m$ and $u\vert_{X_{m}}$ satisfies condition \labelcref{cond:tangent_bounds} or $u(a_{n})=\ubar{u}(a_{n})$. \label{cond:boundary_cases}
      \end{enumerate}
  \end{enumerate}
\end{theorem}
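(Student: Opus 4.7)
For any $u \in \mathcal{U}$, I consider the open set $F_u = \{x \in X : \ubar{u}(x) < u(x) < \bar{u}(x)\}$, which decomposes into countably many maximal open intervals. The collection $\mathcal{X}$ of their closures (possibly extended to adjacent affine pieces that touch a boundary) is my candidate satisfying condition 1. It is useful to exploit the one-to-one correspondence between a convex $u$ and the pair $\bigl(u(0),\, \partial_{+} u\bigr)$, which identifies $\mathcal{U}$ with the set of pairs where $\partial_{+}u$ is non-decreasing and valued in $[\ubar{s}, \bar{s}]$, subject to integrated level constraints. Under this correspondence, affine segments of $u$ correspond exactly to sub-intervals where $\partial_{+}u$ is constant, which lets me translate convexity-preserving perturbations into order-preserving perturbations of $\partial_{+}u$.

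\textbf{Sufficiency.} Assume $u$ has the claimed structure and write $u = (v+w)/2$ with $v, w \in \mathcal{U}$. On any $x \notin \bigcup_n X_n$, condition 1 forces $u(x) \in \{\ubar{u}(x), \bar{u}(x)\}$, so the level sandwich $\ubar{u} \le v, w \le \bar{u}$ gives $v(x) = w(x) = u(x)$. On each $X_n$ I argue $v|_{X_n} = w|_{X_n} = u|_{X_n}$ by case analysis on the sub-condition satisfied. If (a) holds, the upper-boundary contact at $y$ forces $v(y) = w(y) = \bar{u}(y)$, and differentiability of $\bar{u}$ at $y$ (or, at $y\in\{0,1\}$, the extremal-slope structure on $S$) forces the relevant one-sided derivatives of $v, w$ at $y$ to equal $s$; convexity then yields $v, w \ge u$ on $X_n$, and averaging forces equality. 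If (b) holds, both endpoints of $X_n$ are pinned---either directly on $\ubar{u}$ or, inductively, through an adjacent (a)-segment already handled---so $v, w$ are convex functions agreeing with $u$ at $a_n$ and $b_n$; since $u|_{X_n}$ is the chord connecting these pinned points, convexity gives $v, w \le u$ on $X_n$, again forcing equality upon averaging. The sub-cases (c) and (d) combine the extremal-slope constraint on $S$ with a single pinned adjacent endpoint to reach the same conclusion.

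\textbf{Necessity.} I proceed by contrapositive: if $u$ fails the claimed structure, I construct a perturbation $h \not\equiv 0$ with $u \pm h \in \mathcal{U}$. First, $u$ must be affine on every connected component $I$ of $F_u$: if not, $\partial_{+}u$ is non-constant on some sub-interval $J \subseteq I$ on which all level constraints are strictly slack and $\partial_{+}u$ takes values strictly inside $(\ubar{s}, \bar{s})$; a small mean-preserving spread/contraction of $\partial_{+}u\vert_{J}$ (implemented as a balanced two-point rearrangement or a small ``ironing''/``anti-ironing'' bump) lifts to an admissible perturbation of $u$. Second, for each resulting affine segment $X_n$, I argue that at least one of (a)-(d) must hold. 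Assuming by contradiction that none does, each endpoint of $X_n$ escapes all the listed pinning mechanisms, so either the level sandwich is strictly slack at that endpoint, or the slope $u'\vert_{X_n}$ lies strictly inside $(\ubar{s}, \bar{s})$ while the endpoint is interior to $X$; in either case an explicit level shift or slope tilt of $u\vert_{X_n}$, combined with compensating adjustments on any adjacent affine segments to preserve convexity at shared endpoints, produces an admissible $h$.

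\textbf{Main obstacle.} The hard part is the necessity direction, specifically the combinatorial case analysis verifying that the failure of all four sub-conditions genuinely leaves room for an admissible perturbation. Neighboring segments interact through shared endpoints, so perturbations must preserve convexity \emph{globally}---not just within a single segment---and one must carefully track which of the three constraint types (level boundaries $\ubar{u}, \bar{u}$; domain endpoints $0, 1$; slope extremes $\ubar{s}, \bar{s}$) pins each endpoint of each $X_n$. The asymmetric form of (a), (c), and (d) at the domain endpoints reflects the one-sided derivative structure there and requires separate treatment, and the maximality/non-degeneracy requirement on $\mathcal{X}$ must be used to rule out degenerate limiting collections that would spuriously ``satisfy'' the conditions.
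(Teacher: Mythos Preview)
Your sufficiency argument is essentially the paper's (recast from the perturbation framing $u\pm h$ into the averaging framing $u=(v+w)/2$), and it is correct.

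The necessity direction, however, has a real gap. You assert that an extreme $u$ must be \emph{affine} on every connected component $I$ of $F_u=\{\ubar u<u<\bar u\}$, and propose to enforce this by a symmetric ``mean-preserving spread/contraction'' of $\partial_+u$ on a subinterval $J\subseteq I$. This is false as stated. Consider a strictly convex $\bar u$ with $u$ touching $\bar u$ only at the two endpoints $y_1<y_2$ of a component, and $u$ on $(y_1,y_2)$ given by \emph{three} affine pieces: the tangent line to $\bar u$ at $y_1$, the tangent line at $y_2$, and a chord connecting them. This $u$ satisfies the theorem's conditions (piece~1 and piece~3 via \ref{cond:tangent_bounds}, piece~2 via \ref{cond:chord_satur}) and \emph{is} an extreme point, yet $u$ is not affine on the component $(y_1,y_2)$. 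Your perturbation argument cannot work here: any $g$ with $\partial_+u\pm g$ both non-decreasing and $g$ supported in $(y_1,y_2)$ must be affine (since $\partial_+u$ is piecewise constant, $g$ must be simultaneously non-decreasing and non-increasing on each piece), and the tangency at $y_1,y_2$ forces the integral constraints $h(y_1)=h(y_2)=0$, leaving only $h\equiv 0$. More generally, ``anti-ironing'' a non-decreasing step function destroys monotonicity, so the spread/contraction pair is not symmetric in the way you need.

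The paper confronts exactly this obstruction with a non-obvious construction: for $[a,b]\subseteq F_u$ it sets $g_{a,b}(x)=u(x)-\bigl[t_u(x;a^+)\vee t_u(x;b^-)\bigr]$ and takes $h_{a,b}=\mathrm{vex}[u+g_{a,b}]-u$. The key lemma is that $h_{a,b}\equiv 0$ \emph{if and only if} $u|_{[a,b]}$ is the maximum of at most three affine functions (i.e., piecewise affine with at most two kinks). So Step~1 of the paper's necessity proof concludes only that $u$ is piecewise affine with $\leq 2$ kinks on each component of $F_u$---not that it is affine---and the maximal affine pieces become the $X_n$'s. Your case analysis in Step~2 is then in the right spirit (and close to the paper's), but it must be run on these affine pieces, allowing up to three of them per component, with the adjacent-piece interactions that conditions \ref{cond:chord_satur}--\ref{cond:boundary_cases} encode.
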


The complete proof of \cref{thm:ext_pt} can be found in \cref{secap:ext_pt_proof}, and a sketch of the argument can be found in \cref{sec:proof_sketch_ext_pt}. We now explain in more detail the conditions stated in \cref{thm:ext_pt}.

\subsubsection{Interpretation of the Conditions in \cref{thm:ext_pt}}\label{sec:interpretation_ext_pt}

The elements of CFIs must satisfy three kinds of constraints: \emph{convexity}; the \emph{level} boundaries $\ubar{u}$ and $\bar{u}$; and the \emph{slope} boundaries $\ubar{s}$ and $\bar{s}$. A heuristic principle is that extreme points of any compact and convex set must saturate at least one of the constraints that define this set. The extreme points of CFIs are no exception to this rule. As \cref{thm:ext_pt} shows, this leads to a set of simple geometric conditions that characterize the extreme points of $\mathcal{U}$. In the following paragraphs, we explain each of those conditions. In doing so, we also provide a terminology that will be used throughout the rest of the paper. 

Let $u\in \ex(\mathcal{U})$, and $\mathcal{X}=\{X_{n}\}_{n\in\mathbb{N}}$ be its corresponding collection of intervals from \cref{thm:ext_pt}. 

\paragraph{Level saturation}

As illustrated in \cref{fig:extreme_point}, at any point $x\in X$, $u$ either make one of the level boundary constraints bind if $x\notin \bigcup_{n\in\mathbb{N}} X_{n}$ (condition \labelcref{cond:level_satur}), or leaves them slack if there exists $n\in\mathbb{N}$ such that $x\in \Int(X_{n})$ (condition \labelcref{cond:not_level_satur}). 
\begin{figure}
    \centering
    \begin{subfigure}[t]{0.495\linewidth}
        \centering
        \includegraphics{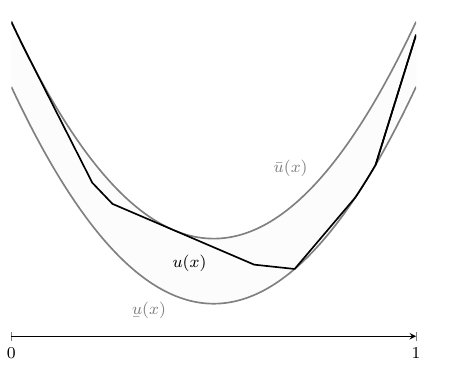}
        \caption{An extreme point $u$ of $\mathcal{U}$.}
        \label{fig:extreme_point}
    \end{subfigure}
    \begin{subfigure}[t]{0.495\linewidth}
        \centering
        \includegraphics{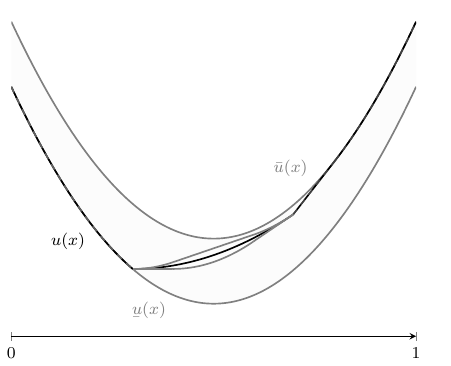}
        \caption{A non-extreme point $u$ of $\mathcal{U}$.}
        \label{fig:not_extreme_point}
    \end{subfigure}
    \label{fig:ext_non_ext}
    \caption{Extreme points of CFIs. The function on the left panel is an extreme point. In particular, its rightmost affine segment is assumed to have a slope equal to $\bar{s}$. The function on the right panel is not an extreme point: there exists $h\in\mathcal{C}$ such that $u\pm h\in\mathcal{U}$.}
\end{figure}
We thus say that $u$ satisfies \emph{level saturation} at $x$ if $x\notin \bigcup_{n\in\mathbb{N}} X_{n}$. Note that since $\mathcal{X}$ can be empty, $\ubar{u}$ and $\bar{u}$ are themselves extreme points of $\mathcal{U}$.

\paragraph{Convexity saturation}

We say that $u$ satisfies \emph{convexity saturation} on some subinterval of $X$ if it is \emph{affine} on this subinterval (condition \labelcref{cond:not_level_satur}).\footnote{Note that, according to condition \labelcref{cond:not_level_satur}, if $u$ is not level-saturated at some $x$, then $u$ must saturate the convexity constraint on an interval around $x$. Nevertheless, $u$ can also saturate \emph{both} the level and convexity constraint if $u$ matches $\ubar{u}$ (resp.~$\bar{u}$) on an interval where $\ubar{u}$ (resp.~$\bar{u}$) is affine.} As will be detailed in \cref{sec:applications}, convexity saturation is crucial in economic applications as it relates closely to the concepts of \emph{bunching}, \emph{ironing}, and \emph{mean-preserving spreads and contractions}. We further divide convexity saturation into three subconditions: \emph{tangential saturation}, \emph{slope saturation} and \emph{chordal saturation}.
    
\paragraph{Tangential saturation}

For any $n\in\mathbb{N}$, we say that $u$ satisfies \emph{tangential saturation} on $X_{n}$ if $u\vert_{X_{n}}$ satisfies condition \labelcref{cond:tangent_bounds}. These segments correspond to a special case of convexity saturation, whereby $u$ expands linearly either to the right or to the left from a point of tangency with $\bar{u}$. Tangential saturation determines both the slope and the level of $u$ on a specific interval. This is a consequence of $\bar{u}$ being differentiable on $X$ (\cref{assu:diff}), which forces the slope of $u$ to match the unique subgradient of $\bar{u}$ at the point of tangency.

\paragraph{Slope saturation}

For any $n\in\mathbb{N}$, we say that $u$ satisfies \emph{slope saturation} on $X_{n}$ if $u\vert_{X_{n}}$ satisfies condition \labelcref{cond:matching_grad}. These segments correspond to intervals where the derivative of $u$ meets one of the slope boundaries $\ubar{s}$ or $\bar{s}$. The convexity of $u$ implies that $u'\vert_{X_{n}}$ can only be equal to $\ubar{s}$ (resp.~$\bar{s}$) on an interval that contains $x=0$ (resp.~$x=1$). This is a consequence of the subdifferential of convex functions being monotone.\footnote{See Proposition D.6.1.1 in \cite{Hiriart2001}.} Moreover, $u'\vert_{X_{n}}$ reaching one the boundary values  $\ubar{s}$ or $\bar{s}$ is not sufficient for $u$ to be an extreme point, because the level of $u$ must also be disciplined. This is why condition \labelcref{cond:matching_grad} also imposes that, at the endpoints of $I$ that are not endpoints of $X$, $u\vert_{I}$ must either touch the lower bound $\ubar{u}$, or must be followed or preceded by an affine segment satisfying the condition \labelcref{cond:tangent_bounds}.

\paragraph{Chordal saturation}

For any $n\in\mathbb{N}$, we say that $u$ satisfies \emph{chordal saturation} on $X_{n}$ if $u\vert_{X_{n}}$ satisfies condition \labelcref{cond:chord_satur}. These segments correspond to the intervals where $u$ is affine but satisfies neither tangential saturation nor slope saturation. In other words, $u$ is given by a \emph{chord} segment linking two points on the graph of $u$ that are not confounded with $\bar{u}$ with slope that is neither $\ubar{s}$ nor $\bar{s}$. Note that segments of chordal saturation must satisfy additional boundary conditions. If $u\vert_{X_{n}}$ does not coincide with $\ubar{u}$ at one of the endpoints $a_{n}$ or $b_{n}$, then $u$ needs to be preceded and followed by intervals of tangential saturation to discipline its level.

\subsubsection{Extreme Points of the Screening Interval}\label{sec:screening_run_example_ext_pt}

Let us return to the screening example from \cref{sec:screening_run_example}. Applying our \cref{thm:ext_pt} to $\mathcal{U}_{\mathrm{S}}$ and recalling from \cref{lem:screening_interval} that any mechanism implementing $u\in \mathcal{U}_{\mathrm{S}}$ must satisfy $x(\theta)\in \partial u(\theta)$ for all $\theta\in \Theta$, we find that extremal indirect utilities are characterized by three types of intervals: \emph{non-participation} intervals where the principal offers agents their preferred outside option within $M_{0}$; \emph{bunching} intervals where the principal's mechanism offers the good stochastically with a constant allocation probability; and an ultimate interval where the good is allocated \emph{deterministically}. On each bunching interval, the allocation probability is determined by \emph{ironing} the \emph{default allocation rule}. We illustrate this structure in \cref{fig:extremal_u_and_x_screen}, and formalize it in \cref{cor:imp_ext_pt_screening}.

\begin{figure}[t]
    \begin{subfigure}[t]{0.49\linewidth}
    \centering
        \includegraphics{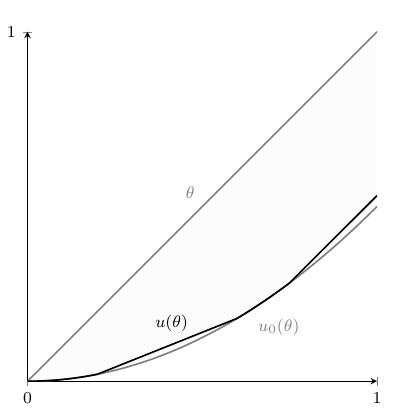}
        \caption{An extreme point $u$ of $\mathcal{U}_{\mathrm{S}}$.}
        \label{fig:extreme_point_screening}
    \end{subfigure}
    \begin{subfigure}[t]{0.49\linewidth}
    \centering
        \includegraphics{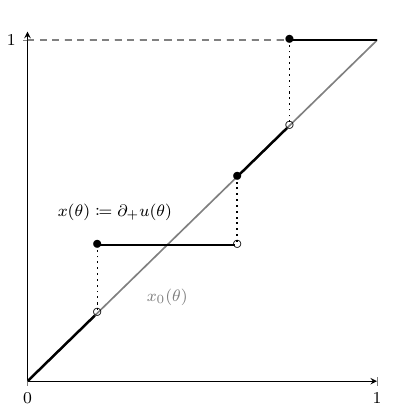}
        \caption{An allocation rule $x$ implementing $u$.}
        \label{fig:ext_alloc}
    \end{subfigure}
    \caption{An extremal indirect utility and its  implementing allocation rule. In this picture, we take $x_{0}(\theta)=\theta$ for all $\theta\in\Theta=[0,1]$.}
    \label{fig:extremal_u_and_x_screen}
\end{figure}

\begin{corollary}[Extremal allocation rules] \label{cor:imp_ext_pt_screening}
  Let $x_{0}(\theta)\in \partial u_{0}(\theta)$ for all $\theta\in \Theta$. Any $u\in \ex(\mathcal{U}_{\mathrm{S}})$ can be implemented by an allocation rule $x$ described by a collection $\{\Theta_{n}\}_{n\in\mathbb{N}}$ of maximal non-degenerate intervals $\Theta_{n}=[\ubar{\theta}_{n},\bar{\theta}_{n}]\subseteq\Theta$ such that:
  \begin{enumerate}[(i)]
    \item For all $\theta\notin\bigcup_{n\in\mathbb{N}} \Theta_{n}$, $x(\theta)=x_{0}(\theta)$.
    \item For each $n\in\mathbb{N}$, $x\vert_{\Theta_{n}}$ is constant and
    \begin{enumerate}
      \item either $\displaystyle x(\theta)=\frac{\int_{\ubar{\theta}_{n}}^{\bar{\theta}_{n}} x_{0}(s) \, \diff s}{\bar{\theta}_{n}-\ubar{\theta}_{n}}$ for all $\theta\in \Theta_{n}$; or,
      \item $\bar{\theta}_{n}=1$ and $x\vert_{\Theta_{n}}=1$.
    \end{enumerate}
  \end{enumerate}
  We call allocation rules satisfying these conditions \emph{extremal}.
\end{corollary}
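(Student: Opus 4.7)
The plan is to apply \cref{thm:ext_pt} to the screening interval $\mathcal{U}_{\mathrm{S}}$ (with upper boundary $\bar{u} = \Id_{\Theta}$, lower boundary $\ubar{u} = u_{0}$, and slope interval $[\ubar{s},\bar{s}] = [0,1]$), translate the resulting geometric characterization of its extreme points into properties of an implementing allocation rule, and verify via \cref{lem:screening_interval} that this rule implements $u$.

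The first step is to rule out condition \labelcref{cond:tangent_bounds} (tangential saturation) on any non-degenerate interval $X_{n}$. Since $\bar{u}$ is affine with $\bar{u}' \equiv 1$, the tangent-extension construction in \labelcref{cond:tangent_bounds} with $s = \bar{u}'(y) = 1$ would force $u$ to coincide with $\bar{u}$ on $X_{n}$, contradicting $u<\bar{u}$ on $\Int(X_{n})$. The endpoint cases $y \in \{0,1\}$ either force $u \equiv 0$ on $X_{n}$ (incompatible with $u > u_{0} \geq 0$ on the interior) or again collapse onto $\bar{u}$. Hence every $X_{n}$ must satisfy one of \labelcref{cond:chord_satur}, \labelcref{cond:matching_grad}, \labelcref{cond:boundary_cases}.

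Next I would show that each such $X_{n}$ matches one of the two types appearing in \cref{cor:imp_ext_pt_screening}. Under \labelcref{cond:chord_satur}, both endpoints of $X_{n}$ lie on $u_{0}$, so $u\vert_{X_{n}}$ is affine with slope $(u_{0}(b_{n})-u_{0}(a_{n}))/(b_{n}-a_{n})$; rewriting $u_{0}(b_{n}) - u_{0}(a_{n}) = \int_{a_{n}}^{b_{n}} x_{0}(s)\,\diff s$ via the fundamental theorem of calculus yields case (ii)(a). Under \labelcref{cond:matching_grad}, the sub-case $a_{n}=0$, $u'\vert_{X_{n}}=0$ collapses: $u$ is constant on $[0,b_{n}]$ with $u(b_{n}) = u_{0}(b_{n})$, and $0 = \bar{u}(0) \geq u(0) \geq u_{0}(0) \geq 0$ forces this constant to be $0$, so $u_{0} \equiv 0$ on $[0,b_{n}]$ (by non-negativity and monotonicity of $u_{0}$), contradicting $u > u_{0}$ on the interior; the remaining sub-case $b_{n}=1$, $u'\vert_{X_{n}}=1$, $u(a_{n}) = u_{0}(a_{n})$ yields case (ii)(b). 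Under \labelcref{cond:boundary_cases}, the sub-case $a_{n}=0$ reduces to \labelcref{cond:chord_satur} because $u(0) = \bar{u}(0) = 0$ forces $u_{0}(0)=0$; likewise, the sub-case $b_{n}=1$, $u(1) = \bar{u}(1) = 1$ with $u(a_{n}) = u_{0}(a_{n})$ requires slope $(1-u_{0}(a_{n}))/(1-a_{n}) \leq \bar{s}=1$, i.e., $u_{0}(a_{n}) = a_{n}$, which is degenerate ($u \equiv \bar{u}$ on $X_{n}$), while the remaining sub-case $u(1) = u_{0}(1)$ reduces to \labelcref{cond:chord_satur}.

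Finally, I would construct the implementing allocation by setting $x(\theta)$ equal to the slope of $u\vert_{X_{n}}$ on each $X_{n}$; equal to $x_{0}(\theta)$ on $\{\theta \notin \bigcup_{n} X_{n} : u(\theta) = u_{0}(\theta)\}$ (valid because $\partial u_{0}(\theta) \subseteq \partial u(\theta)$ wherever $u$ touches $u_{0}$, by convexity and $u \geq u_{0}$); and equal to $1$ on the ``top set'' $T = \{\theta : u(\theta) = \bar{u}(\theta)\}$, which by convexity and the slope bound is either empty or an interval $[\theta^{\dagger}, 1]$ on which $u' \equiv 1$. The collection $\{\Theta_{n}\}$ in the corollary is then obtained by taking each chordal $X_{n}$ as a $\Theta_{n}$ of case (ii)(a), and by merging any slope-saturated $X_{n}$ ending at $\theta=1$ together with $T$ into a single $\Theta_{n}$ of case (ii)(b). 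The main obstacle I anticipate is the case analysis under \labelcref{cond:boundary_cases} and verifying that whenever both a slope-saturated $X_{n}$ at the right end and a non-empty top set arise, they describe the same interval $[a,1]$ with $x \equiv 1$, so that the resulting $\Theta_{n}$ is both non-degenerate and maximal.
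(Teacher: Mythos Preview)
Your approach is correct and matches the paper's intended (but unwritten) proof: specialize \cref{thm:ext_pt} to $\mathcal{U}_{\mathrm{S}}$, observe that tangential saturation is vacuous because $\bar u=\Id_\Theta$ has constant slope $\bar s$, and then read off the allocation $x\in\partial u$ via \cref{lem:screening_interval}. One small correction: the top set $T=\{\theta:u(\theta)=\bar u(\theta)\}$ always contains $0$ (since $u(0)=u_0(0)=\bar u(0)=0$), and a non-degenerate upper-slope-saturated $X_n=[a_n,1]$ forces $u(1)<1$, so a non-trivial $T\supsetneq\{0\}$ and such an $X_n$ are mutually exclusive rather than something to ``merge''---either one alone yields the single $\Theta_n$ of type (ii)(b).
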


\cref{cor:imp_ext_pt_screening} is consistent with the findings of \cite{Dworczak2024}, who consider the same environment and obtain extremal allocations as optimal solutions for a general class of linear objective functions over allocation rules (see their Lemma 2).\footnote{The only difference is that \citeauthor{Dworczak2024} do not impose bounds on transfers, which yields an additional bottom region where $\ubar{\theta}_{n}=0$ and $x\vert_{\Theta_{n}}=0$. Our result would incorporate such a region if we allowed $t(\theta) \geq \ubar{t}$ with $\ubar{t}<0$ or unbounded transfers (which could be accommodated as discussed in \cref{remark:assumptions}).} We discuss the connection to their generalized ironing approach in more detail in \cref{remark:link_DM2024} below.

\subsubsection{Sketch of the Proof of \cref{thm:ext_pt}}\label{sec:proof_sketch_ext_pt}

The proof adopts a perturbation approach. An equivalent definition of extreme points is the following: $u\in\ex(\mathcal{U})$ if and only if there exists no $h \in \mathcal{C}$, $h\neq 0$, such that both $u \pm h \in \mathcal{U}$. Geometrically, this means that at an extreme point, it is impossible to remain within the feasible set $\mathcal{U}$ when moving in two opposite directions.

We prove the sufficiency part by contradiction. Consider some function $u$ that satisfies the properties described in \cref{thm:ext_pt}. Towards a contradiction, we assume that there exists $h\in \mathcal{C}$, $h\neq 0$, such that $u\pm h \in \mathcal{U}$. First, observe that if there exists $x_0\notin \bigcup_{n\in\mathbb{N}} X_{n}$ such that $h(x_0)>0$, then $(u-h)(x_0)<\ubar{u}(x_0)$ or $(u+h)(x_0)>\bar{u}(x_0)$ which is impossible by the definition of the CFI. Next, we assume that there exists $x_0\in I \in \mathcal{X}$ such that $h(x_0)>0$. We then look at what happens to $u\pm h$ on the interval $I=[a, b]$ and adjacent intervals and find contradictions to $u\pm h$ belonging to $\mathcal{U}$. If $u$ satisfies condition \labelcref{cond:tangent_bounds} on the interval $I$, then $u-h$ cannot be convex as there must exist $\hat{x}\in I$, $\hat{x}>a$, such that $(u-h)'(\hat{x})<\bar{u}'(a)=(u-h)'(a)$.\footnote{Note that, since $h=\hat{u}-u$ for some $\hat{u}\in\mathcal{U}$, its derivative $h'$ must exist except maybe at a countable number of points in $X$ (as the difference between two convex functions).} If $u$ satisfies condition \labelcref{cond:chord_satur} on that interval, then the previous steps imply that $h(a)=h(b)=0$. Moreover, remark $u\vert_{I}$ is the chord segment that links $u(a)$ and $u(b)$. As such, $u\vert_{I}$ is the pointwise highest continuous and convex function on $I$ that links $u(a)$ and $u(b)$. Hence, $(u+h)\vert_{I}$ cannot be convex. If $u$ satisfies condition \labelcref{cond:matching_grad} on the interval containing $x_{0}$, then $u\pm h$ has to break the slope constraints.

We prove necessity by contraposition. Consider some function $u\in\mathcal{U}$ that violates the conditions in \cref{thm:ext_pt}. The proof unfolds in two steps. 

Assume first that, for some $x\in X$ and some interval $I\subseteq X$ containing $x$, the function $u$ lies strictly in between $\ubar{u}$ and $\bar{u}$, has a subdifferential contained in the interior of $S$, but is not affine on $I$. As illustrated in \cref{fig:not_extreme_point}, we then construct a continuous perturbation $h$, such that $u - h$ and $u + h$ are both convex, bounded between $\ubar{u}$ and $\bar{u}$, and have subdifferentials contained in the interior of $S$, contradicting $u$ being an extreme point. Specifically, we construct $h$ using the Bregman divergence of $u$ at the endpoints of $I$ and show that $h$ is not identically zero if and only if $u\vert_{I}$ cannot be written as a piecewise affine function with at most two kinks.\footnote{All the details about the construction of this perturbation can be found in \cref{secap:Bregman_perturb} and its properties are proven in \cref{lem:perturbation}.} Therefore, if $u$ is an extreme point, for any $x\in X$, either $u(x)$ is confounded with $\ubar{u}(x)$ or $\bar{u}(x)$, or there exists a maximal, non-degenerate interval $[\ubar{x}, \bar{x}]$ containing $x$ with the following properties: (i) $u\vert_{[\ubar{x}, \bar{x}]}$ is piecewise affine with at most two kinks; (ii) $u$ is confounded with $\ubar{u}$ or $\bar{u}$ at $\ubar{x}$ and $\bar{x}$; and, (iii) $u$ lies strictly between $\ubar{u}$ and $\bar{u}$ on $(\ubar{x}, \bar{x})$. This implies the existence of a countable collection $\mathcal{X}=\{X_{n}\}_{n\in\mathbb{N}}$ of maximal non-degenerate intervals $X_{n}\subseteq X$ such that, for each $n\in\mathbb{N}$, $\ubar{u}<u\vert_{\Int(X_{n})}<\bar{u}$ and $u\vert_{X_{n}}$ is affine.

The second step of the proof thus addresses the case where $u$ is affine on an interval $\tilde{X} \in \mathcal{X}$, yet none of the conditions \labelcref{cond:tangent_bounds}, \labelcref{cond:matching_grad}, \labelcref{cond:chord_satur}, or \labelcref{cond:boundary_cases} is satisfied. In that case, we show that one can always construct a suitable non-zero, continuous, and piecewise affine perturbation, $h$, such that both $u\pm h\in\mathcal{U}$, again implying that $u$ is not an extreme point.

\section{Optimization on Convex Function Intervals}\label{sec:optimization_CFI}

We start by studying linear programming on CFIs in \cref{sec:lin_prog}. Our second main result (\cref{thm:charac_opt}), stated in \cref{sec:opt_ext_pt}, provides \emph{sufficient} optimality conditions, that we describe intuitively in \cref{sec:interpretation_opt_cond}. In \cref{sec:screening_optimalallocations}, we then use \cref{thm:charac_opt} to identify optimal extreme points in the screening interval. We turn to optimization problems defined directly on the \emph{lower bound} in \cref{sec:lower_bound}. We state our third main result (\cref{thm:opt_lower_bd}) in \cref{sec:theorem_opt_lower_bd}, and illustrate how it can be used to derive optimal menus of outside options in the monopolistic screening problems in \cref{sec:opt_default_screening}.

\subsection{Linear Programming on CFIs}\label{sec:lin_prog}

We consider the class \emph{linear programming} problems over CFIs:
\begin{equation}\label{eqn:Lin_Pb}
    \max_{u\in \mathcal{U}} \int_{X} u \, \diff \mu,\tag{$\mathrm{LP}_{\mathcal{U},\mu}$}
\end{equation}
where $\mu$ is a finite \emph{signed} Radon measure on $(X,\mathcal{B}(X))$.\footnote{By the \emph{Riesz-Markov} representation theorem \citep[][Theorems 19.54 and 19.55]{Yeh2014}, a functional $\ell$ on $(\mathcal{C}(X),\lVert\cdot\rVert_{\infty})$ is continuous and linear if and only if there exists a unique $\mu\in \mathcal{M}(X)$ such that, for every $u\in\mathcal{C}(X)$, $L(u)=\int_{X} u \, \diff \mu$.} We denote the set of all such measures as $\mathcal{M}(X)$ and, for any $\mu\in \mathcal{M}(X)$, denote $\mu^{+}$ and $\mu^{-}$ as the positive and negative parts of $\mu$, respectively. As before, we omit the dependence on the domain when it equals the unit interval, and therefore let $\mathcal{M} \coloneqq \mathcal{M}\bigl([0,1]\bigr)$.

As we shall see in \cref{sec:applications}, in mechanism and information design applications, the signed measure $\mu$ emerges from the problem's primitives. Intuitively, $\mu$ determines the marginal value of raising the function $u$ pointwise in problem \labelcref{eqn:Lin_Pb}. In regions where $\mu$ assigns positive mass, increasing the values of $u$ increases the objective value, and vice-versa.

\subsubsection{Optimal Extreme Points}\label{sec:opt_ext_pt}

Determining which particular extreme point $u^{\star}\in \ex(\mathcal{U})$ is optimal for \labelcref{eqn:Lin_Pb} depends on fine properties of $\mathcal{U}$ and $\mu$. Fix an extreme point $u^{\star} \in \ex(\mathcal{U})$ and let $\mathcal{X}= \{X_{n}\}_{ n \in \mathbb{N} }$ be the corresponding collection of intervals where $u^{\star}$ is affine, as described in \cref{thm:ext_pt}. 

\paragraph{Partitioning the domain}

To verify optimality, we first partition the domain $X$ according to how the candidate extreme point $u^{\star}$ \emph{saturates} the different constraints defining $\mathcal{U}$. Specifically, we let $\mathcal{Y}$ be the coarsest partition of $X$ that is the union of the following collections:
\begin{itemize}
	\item $\mathcal{Y}_{0} \coloneqq \Bigl\{ \{ x \}\subset X \; \big\vert \; \ubar{u}(x)=\bar{u}(x) \Bigr\} $.
	\item $\mathcal{Y}_{1} \coloneqq \Bigl\{ \{ x \}\subset X \setminus \bigcup_{\{x\} \in \mathcal{Y}_0} \{x\} \; \big\vert \; u^{\star}(x)=\bar{u}(x) \Bigl\} \; \cup \; \mathcal{T}$, where $\mathcal{T}$ is the coarsest partition of the set $\bigcup_{n\in\mathbb{N}} \Bigl\{ X_{n} \in \mathcal{X} \; \big\vert \; \text{$u^{\star}\vert_{X_{n}}$ satisfies tangential saturation} \Bigr\}$ such that, for each $Y\in \mathcal{T}$, $u^{\star}\vert_{Y}$ is affine.\footnote{The difference between $\mathcal{T}$ and $\bigcup_{n\in\mathbb{N}} \{ X_{n} \in \mathcal{X} \; \vert \; \text{$u^{\star}\vert_{X_{n}}$ satisfies tangential saturation}\} $ is that if there exists $n,m\in\mathbb{N}$ such that $b_{n}=a_{m}$ and both $u^{\star}\vert_{X_{n}}$ and $u^{\star}\vert_{X_{m}}$ satisfy tangential saturation, then $\mathcal{T}$ contains $X_{n}\cup X_{m}$.}
	\item $\mathcal{Y}_{2} \coloneqq \Bigl\{ \{ x \}\subset X \setminus \bigcup_{\{x\} \in \mathcal{Y}_0} \{x\}  \; \big\vert \; u^{\star}(x)=\ubar{u}(x) \Bigr\}$.
	\item $\mathcal{Y}_{3} \coloneqq \Bigl\{ X_{n} \in \mathcal{X}  \; \big\vert \; \text{$u^{\star}\vert_{X_{n}}$ satisfies upper slope saturation}\Bigr\} $.
	\item $\mathcal{Y}_{4} \coloneqq \Bigl\{ X_{n} \in \mathcal{X}  \; \big\vert \; \text{$u^{\star}\vert_{X_{n}}$ satisfies lower slope saturation}\Bigr\} $.
	\item $\mathcal{Y}_{5} \coloneqq \Bigl\{ X_{n} \in \mathcal{X} \; \big\vert \; \text{$u^{\star}\vert_{X_{n}}$ satisfies chordal saturation} \Bigr\}$.
\end{itemize}

\paragraph{Convex orders of measures}

Our second main result, stated as \cref{thm:charac_opt} below, provides sufficient conditions on $\mu$ that can be used to verify the optimality of $u^{\star}$ for \labelcref{eqn:Lin_Pb}. These conditions require that $\mu$ satisfies certain \emph{convex dominance} conditions on each element of the partition $\mathcal{Y}$.

\begin{definition}
    A measure $\mu\in \mathcal{M}(X)$ is dominated by $\nu\in \mathcal{M}(X)$ in the \emph{convex order}, denoted $\mu \leq_{\mathrm{cx}} \nu$, if
    \begin{equation}\label{eqn:cvx_order}
         \int_{X} u \, \diff \mu \leq \int_{X} u \, \diff \nu
    \end{equation}
    for every $u\in\mathcal{K}(X)$ such that the integrals in inequality \labelcref{eqn:cvx_order} both exist.
    
    The \emph{increasing convex order} $\leq_{\mathrm{icx}}$ and \emph{decreasing convex order} $\leq_{\mathrm{dcx}}$ are defined analogously, with \labelcref{eqn:cvx_order} required to hold for all increasing convex and decreasing convex functions, respectively.
\end{definition}

\paragraph{Verification theorem}

We can now state our sufficient optimality conditions:
\begin{theorem}\label{thm:charac_opt}
Take any CFI $\mathcal{U}$. Let $u^{\star}\in \ex(\mathcal{U})$ and $\mathcal{Y}$ be its corresponding partition, and let $\mu\in \mathcal{M}(X)$. If all the following conditions hold then $u^{\star}$ is optimal for \labelcref{eqn:Lin_Pb}:
\begin{enumerate}[(i)]
	\item For every $Y \in \mathcal{Y}_{1}$, $\mu\vert_{Y}(Y) \geq 0$ and $\mu\vert_{Y} \leq_{\mathrm{cx}}\delta_{Y}$, where $\delta_{Y}$ is a point measure of mass $\mu\vert_{Y}(Y)$ at the unique point where $u^{\star}\vert_{Y}$ touches $\bar{u}$. \label{cond:opti1}
    
	\item For every $Y \in \mathcal{Y}_{2}$, $\mu\vert_{Y}(Y) \leq 0$. \label{cond:opti2}
	
	\item For every $Y \in \mathcal{Y}_{3}$, $\mu\vert_{Y}^{+} \leq_{\mathrm{dcx}}\mu\vert_{Y}^{-}$. \label{cond:opti3}
	
	\item For every $Y \in \mathcal{Y}_{4}$, $\mu\vert_{Y}^{+} \leq_{\mathrm{icx}}\mu\vert_{Y}^{-}$. \label{cond:opti4}
	
	\item For every $Y \in \mathcal{Y}_{5}$, $\mu\vert_{Y} \leq_{\mathrm{cx}}\mathbf{0}$, where $\mathbf{0}$ denotes the null measure. \label{cond:opti5}
\end{enumerate}
\end{theorem}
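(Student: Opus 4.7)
The plan is to establish optimality of $u^{\star}$ by showing that $\int_{X} (u - u^{\star}) \, \diff \mu \leq 0$ for every $u \in \mathcal{U}$, which by linearity of the objective is equivalent to the optimality of $u^{\star}$ for \labelcref{eqn:Lin_Pb}. Since the cells of $\mathcal{Y}$ partition $X$, the signed measure $\mu$ decomposes as $\sum_{Y \in \mathcal{Y}} \mu\vert_{Y}$, and it suffices to prove that for every $Y \in \mathcal{Y}$,
\begin{equation*}
\int_{Y} (u - u^{\star}) \, \diff \mu\vert_{Y} \leq 0.
\end{equation*}
Writing $v \coloneqq u - u^{\star}$, I would exploit the key structural observation that on every cell of $\mathcal{Y}_{1} \cup \mathcal{Y}_{3} \cup \mathcal{Y}_{4} \cup \mathcal{Y}_{5}$ the candidate $u^{\star}$ is affine---by \cref{thm:ext_pt} and the construction of $\mathcal{T}$---so $v$ is convex on every such $Y$ as the difference of a convex and an affine function. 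This convexity is precisely what activates the convex-order hypotheses.

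The cells in $\mathcal{Y}_{0}$ and $\mathcal{Y}_{2}$ are disposed of immediately. On $Y \in \mathcal{Y}_{0}$, the constraint $\ubar{u}(x) = \bar{u}(x)$ forces $u(x) = u^{\star}(x)$, so $v \equiv 0$ there. On $Y \in \mathcal{Y}_{2}$, $v \geq 0$ since $u \geq \ubar{u} = u^{\star}$, so condition \labelcref{cond:opti2}, $\mu\vert_{Y}(Y) \leq 0$, delivers the desired inequality.

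For the remaining cells I would proceed type by type. On $Y \in \mathcal{Y}_{1}$, tangential saturation (or the singleton case $u^{\star}(x) = \bar{u}(x)$) yields a point $y \in Y$ at which $u^{\star}(y) = \bar{u}(y)$, so $v(y) = u(y) - \bar{u}(y) \leq 0$; applying condition \labelcref{cond:opti1}, $\mu\vert_{Y} \leq_{\mathrm{cx}} \delta_{Y}$, to the convex function $v$ gives
\begin{equation*}
\int_{Y} v \, \diff \mu\vert_{Y} \leq v(y) \, \mu\vert_{Y}(Y) \leq 0,
\end{equation*}
where the last inequality uses $v(y) \leq 0$ and $\mu\vert_{Y}(Y) \geq 0$. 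On $Y \in \mathcal{Y}_{3}$, $u^{\star}$ has slope $\bar{s}$ on $Y$ and every $u \in \mathcal{U}$ satisfies $u' \leq \bar{s}$, so $v$ is weakly decreasing; combined with convexity, $v$ is decreasing convex, and condition \labelcref{cond:opti3} yields $\int_{Y} v \, \diff \mu\vert_{Y} \leq 0$ (noting that the $\leq_{\mathrm{dcx}}$ condition is equivalent to $\int_{Y} g \, \diff \mu\vert_{Y} \leq 0$ for every decreasing convex $g$). The case $Y \in \mathcal{Y}_{4}$ is symmetric. Finally, on $Y \in \mathcal{Y}_{5}$, $v$ is merely convex, so condition \labelcref{cond:opti5}, $\mu\vert_{Y} \leq_{\mathrm{cx}} \mathbf{0}$, gives $\int_{Y} v \, \diff \mu\vert_{Y} \leq 0$ directly.

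The main obstacle I anticipate is not the integration steps themselves---which are brief applications of the respective convex orders---but the structural verification on tangential-saturation cells. Specifically, when $\mathcal{T}$ merges two adjacent tangential intervals into a single cell $Y$, one must check that the merged affine segment is tangent to $\bar{u}$ at a single, well-defined point $y$, so that the point mass $\delta_{Y}$ in condition \labelcref{cond:opti1} is unambiguously defined. Uniqueness follows from the convexity of $\bar{u}$: if the affine extension of $u^{\star}\vert_{Y}$ touched $\bar{u}$ at two distinct points of $Y$, then $\bar{u}$ would coincide with this affine function on the segment joining them, and this segment would lie in cells of $\mathcal{Y}_{0}$, contradicting the definition of $\mathcal{T}$. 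A secondary bookkeeping matter is that the countable-sum decomposition converges absolutely, which follows from $\mu$ being a finite signed Radon measure and $v$ being uniformly bounded on the compact $X$ by \cref{prop:representation}.
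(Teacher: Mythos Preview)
Your direct cell-by-cell argument is correct and coincides almost verbatim with what the paper itself presents in \cref{sec:interpretation_opt_cond} as the ``interpretation'' of the optimality conditions: the paper writes out exactly your inequalities $\int_Y u\,\diff\mu\vert_Y \le \int_Y u^\star\,\diff\mu\vert_Y$ for each cell type, using the same observations (that $v=u-u^\star$ is convex on each non-singleton cell because $u^\star$ is affine there, decreasing on $\mathcal{Y}_3$ cells because $u^\star$ has maximal slope, etc.). So your route is not new to the authors; they simply chose to label it as intuition and to give a different formal proof.

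The paper's official proof in \cref{secap:proof_charac_opt} instead formulates the conic-linear dual of \labelcref{eqn:Lin_Pb} and explicitly constructs feasible dual multipliers $(\gamma_1,\gamma_2,\gamma_3,\gamma_4)\in\mathcal{M}_+(X)^4$ from the conditions (i)--(v), using Strassen-type disintegration (Theorems 2.6.1 of M\"uller--Stoyan and 4.A.5 of Shaked--Shanthikumar) to build transition kernels on the $\mathcal{Y}_3,\mathcal{Y}_4$ cells. Your approach is strictly more elementary---it needs nothing beyond the definitions of the convex orders---whereas the duality route is heavier but delivers something extra: explicit Lagrange multipliers certifying optimality, which the authors exploit later (e.g., in \cref{secap:contest_optimality_myerson_proof}) when they need to tweak the multipliers directly.

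One small correction to your uniqueness argument for the tangency point on merged $\mathcal{T}$-cells: if the affine $u^\star\vert_Y$ touched $\bar u$ at two distinct points, the resulting segment where $u^\star=\bar u$ would consist of \emph{level-saturation singletons in $\mathcal{Y}_1$}, not cells of $\mathcal{Y}_0$ (which requires $\ubar u=\bar u$). The contradiction is then with the definition of $\mathcal{T}$, whose constituent intervals satisfy $u^\star<\bar u$ on their interiors. Your conclusion stands; only the label of the contradicting collection was off.
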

The proof of \cref{thm:charac_opt} in \cref{secap:proof_charac_opt} is technical. It hinges on formulating the appropriate dual of \labelcref{eqn:Lin_Pb}. We then prove the sufficiency of conditions (i)–(v) by explicitly constructing dual Lagrange multipliers that certify the optimality of a given $u^{\star}\in \ex(\mathcal{U})$.\footnote{For completeness, we also show in \cref{secap:strongduality} that strong duality holds for problems of the form \labelcref{eqn:Lin_Pb}. Strong duality implies that any optimizer $u^{\star}\in \mathcal{U}$ admits dual multipliers that certify its optimality. Our proof of \cref{thm:charac_opt} does not rely on strong duality, since we explicitly construct multipliers such that the dual attains the same value as the primal.}

\subsubsection{Interpretation of the Optimality Conditions in \cref{thm:charac_opt}}\label{sec:interpretation_opt_cond}

The convex order conditions satisfied by the measure $\mu$ in \cref{thm:charac_opt} can be interpreted as $u^{\star}$ satisfying local optimality on each region of the partition $\mathcal{Y}$. These conditions ensure that, for each region $Y \in\mathcal{Y}$, the value of $\int_{Y} u \, \diff \mu$ is maximized by choosing $u=u^{\star}$, thus proving the global optimality of $u^{\star} $ for \labelcref{eqn:Lin_Pb}.

The case where $Y \in \mathcal{Y}_{0}$ is trivial. For $Y \in \mathcal{Y}_{2}$, $u^{\star}$ coincides with $\ubar{u}$ at $Y$. This can only be locally optimal if $\mu\vert_{Y}$ is weakly negative; otherwise, the objective value on $Y$ could be increased by choosing $u\in\mathcal{U}$ such that $u\vert_{Y}> \ubar{u}\vert_{Y}$. Similarly, if $Y \in \mathcal{Y}_{1}$ and is a singleton, then $u^{\star}$ and $\bar{u}$ coincide, which is optimal only if $\mu\vert_{Y}$ is weakly positive. If $Y \in \mathcal{Y}_{1}$ is not a singleton then, for any $u\in\mathcal{U}$,
\begin{equation*}
    \int_{Y} u \, \diff \mu\vert_{Y} \leq \int_{Y} u \, \diff \delta_{Y}\leq \int_{Y} u^{\star} \, \diff \delta_{Y} = \int_{Y} u^{\star} \, \diff  \mu \vert_{Y},
\end{equation*}
where the first inequality follows from $\mu\vert_{Y}\leq_{\mathrm{cx}} \delta_{Y}$, the second from $u\leq u^{\star}$ and $\delta_{Y}$ having positive mass, and the equality from $u^{\star}\vert_{Y}$ being affine. For $Y \in \mathcal{Y}_{3}$, local optimality of $u^{\star}$ on $Y$ follows from a similar argument: for all $u\in \mathcal{U}$,
\begin{equation*}
    \int_{Y} u \, \diff \mu\vert_{Y} = \int_{Y} (u-u^{\star}) \, \diff \mu\vert_{Y} + \int_{Y} u^{\star} \, \diff  \mu\vert_{Y} \leq \int_{Y} u^{\star} \, \diff  \mu\vert_{Y},
\end{equation*}
where the inequality follows from $(u-u^{\star})\vert_{Y}$ being a decreasing (since $u^{\star}\vert_{Y}$ has the largest possible slope) and convex function (since $u$ is convex and $u^{\star}\vert_{Y}$ is affine), and $\mu\vert_{Y}^{+} \leq_{\mathrm{dcx}}\mu\vert_{Y}^{-}$. The argument is symmetric for $Y \in \mathcal{Y}_{4}$. Finally, if $Y \in \mathcal{Y}_{5}$, then no other element of $\mathcal{U}$ can achieve a higher value than $u^{\star}$ because $\mu\vert_{Y} \leq_{\mathrm{cx}}\mathbf{0}$. Specifically, for any $u\in\mathcal{U}$,
\begin{equation*}
    \int_{Y} u \, \diff \mu\vert_{Y} \leq \int_{Y} u \, \diff \mathbf{0} = 0 = \int_{Y} u^{\star} \, \diff \mu\vert_{Y},
\end{equation*}
where the first inequality follows from $\mu\vert_{Y} \leq_{\mathrm{cx}}\mathbf{0}$, and the last equality follows from $u^{\star}\vert_{Y}$ being affine.

\subsubsection{Optimal Extreme Points in the Screening Interval} \label{sec:screening_optimalallocations}

We return to the running example from \cref{sec:screening_run_example_ext_pt}. We show that well-known screening problems naturally admit the representation
\begin{equation}\label{eqn:LP_screening}
    \max_{u \in \mathcal{U}_{\mathrm{S}}} \int_\Theta u \, \diff \mu, \tag{$\mathrm{LP}_{\mathcal{U}_{\mathrm{S}},\mu}$}
\end{equation}
where the measure $\mu\in \mathcal{M}(\Theta)$ emerges from the economic primitives, and we derive optimal screening mechanisms under specific assumptions on the environment.

\paragraph{Revenue-maximizing mechanisms}

Suppose the principal chooses a mechanism $(x,t)$ satisfying \labelcref{eqn:IC_screening,eqn:IR_screening} to maximize 
\begin{equation}\label{eqn:exp_revenue}
    \int_\Theta t(\theta) \, \diff  F(\theta).
\end{equation}

Let $u\in\mathcal{U}_{\mathrm{S}}$. Following \cref{lem:screening_interval}, $u'(\theta) = x(\theta)$ and $t(\theta)=\theta u'(\theta) - u(\theta)$ almost everywhere on $\Theta$. We thus obtain:
\begin{align*}
    \int_{0}^{1} t(\theta) \, \diff F(\theta) &= \int_{0}^{1} \bigl(\theta u'(\theta) -u(\theta) \bigr) f(\theta) \, \diff \theta \\[5pt]
    &= \bigl[\theta f(\theta) u(\theta)\bigr]_{0}^{1} - \int_{0}^{1} \biggl\{f(\theta) + \frac{\diff}{\diff\theta}\bigl[\theta f(\theta)\bigr] \biggr\}u(\theta)\, \diff \theta 
\end{align*}
where the second equality follows from integration by parts. Hence, we have the following equivalence:
\begin{lemma}\label{lem:rev_max_pb}
Maximizing \labelcref{eqn:exp_revenue} subject to \labelcref{eqn:IC_screening,eqn:IR_screening} can be written as
    \begin{equation}\label{eqn:rev_max_pb}
    \max_{u \in \mathcal{U}_{\mathrm{S}}} \int_\Theta u \, \diff \mu_{\mathrm{R}}, \tag{RevMax}
\end{equation}
where, for any Borel measurable $A\subseteq\Theta$,
\begin{equation}\label{eqn:measure_rev_max}
    \mu_{\mathrm{R}}(A) = \int_{A} \psi_{\mathrm{R}} \, \diff \nu,
\end{equation}
with $\nu\coloneqq\delta_{1} -\delta_{0} + \lambda$ and
\begin{equation}\label{eqn:rev_max_density}
    \psi_{\mathrm{R}}(\theta)=\left\{\begin{array}{ll}
        - \biggl\{f(\theta) + \frac{\diff}{\diff\theta}\bigl[\theta f(\theta)\bigr] \biggr\} & \text{if $\theta\in(0,1)$,} \\[3pt]
        \theta f(\theta) & \text{if $\theta\in \{0,1\}$.}
    \end{array}
    \right.
\end{equation}
\end{lemma}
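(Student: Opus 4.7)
My plan is to reduce the revenue-maximization problem over mechanisms satisfying \labelcref{eqn:IC_screening,eqn:IR_screening} to a linear functional on $\mathcal{U}_{\mathrm{S}}$ by: (i) invoking \cref{lem:screening_interval} to re-parametrize the problem in terms of indirect utilities, (ii) applying integration by parts on the resulting revenue integral, and (iii) matching terms to verify that the outcome coincides with $\int_{\Theta} u \, \diff \mu_{\mathrm{R}}$ for the measure specified in the statement.

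For step (i), \cref{lem:screening_interval} identifies every mechanism satisfying \labelcref{eqn:IC_screening,eqn:IR_screening} with some $u\in\mathcal{U}_{\mathrm{S}}$ via $x(\theta)\in\partial u(\theta)$ and $t(\theta)=\theta x(\theta)-u(\theta)$. Since any $u\in\mathcal{U}_{\mathrm{S}}$ is convex with subdifferential contained in $[0,1]$, $u$ is $1$-Lipschitz, hence absolutely continuous on $[0,1]$; therefore $x(\theta)=u'(\theta)$ for almost every $\theta$, and the expected revenue becomes
\begin{equation*}
    \int_{\Theta} t(\theta) \, \diff F(\theta) = \int_{0}^{1} \theta f(\theta) u'(\theta) \, \diff\theta - \int_{0}^{1} f(\theta) u(\theta) \, \diff\theta.
\end{equation*}
Note that different measurable selections of $x(\theta)\in\partial u(\theta)$ yield the same expected revenue, since $u$ is differentiable away from a countable, hence Lebesgue-null, set.

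For step (ii), the differentiability of $f$ ensures that $\theta\mapsto \theta f(\theta)$ is continuously differentiable on $[0,1]$, so standard integration by parts (valid by the absolute continuity of $u$) gives
\begin{equation*}
    \int_{0}^{1} \theta f(\theta) u'(\theta) \, \diff\theta = f(1)u(1) - \int_{0}^{1} u(\theta) \frac{\diff}{\diff\theta}\bigl[\theta f(\theta)\bigr] \, \diff\theta,
\end{equation*}
where the boundary contribution at $\theta=0$ vanishes because $\theta f(\theta)$ does. Collecting the two integrals yields
\begin{equation*}
    \int_{\Theta} t(\theta) \, \diff F(\theta) = f(1) u(1) - \int_{0}^{1} u(\theta) \left\{ f(\theta) + \frac{\diff}{\diff\theta}\bigl[\theta f(\theta)\bigr] \right\} \diff\theta.
\end{equation*}

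For step (iii), I would decompose $\int_{\Theta} u \, \diff\mu_{\mathrm{R}}$ along the three parts of $\nu=\delta_{1}-\delta_{0}+\lambda$ to obtain $\psi_{\mathrm{R}}(1)u(1)-\psi_{\mathrm{R}}(0)u(0)+\int_{0}^{1}\psi_{\mathrm{R}}(\theta)u(\theta)\, \diff\theta$. Substituting $\psi_{\mathrm{R}}(1)=f(1)$, $\psi_{\mathrm{R}}(0)=0$, and $\psi_{\mathrm{R}}(\theta)=-\{f(\theta)+(\theta f(\theta))'\}$ on $(0,1)$ reproduces exactly the expression above, establishing the equivalence with \labelcref{eqn:rev_max_pb}. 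The only substantive point to justify is the validity of integration by parts up to the endpoints, which is guaranteed by the Lipschitz (hence absolutely continuous) nature of any $u\in \mathcal{U}_{\mathrm{S}}$ on the whole closed interval $[0,1]$; the rest of the argument is a mechanical bookkeeping exercise.
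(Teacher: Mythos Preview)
Your proposal is correct and follows essentially the same approach as the paper: the paper's argument (displayed immediately before the lemma statement) also invokes \cref{lem:screening_interval} to write $t(\theta)=\theta u'(\theta)-u(\theta)$, then integrates by parts to obtain $\bigl[\theta f(\theta) u(\theta)\bigr]_{0}^{1} - \int_{0}^{1}\{f(\theta)+(\theta f(\theta))'\}u(\theta)\,\diff\theta$. Your version is in fact more careful, explicitly justifying the integration by parts via the Lipschitz continuity of $u\in\mathcal{U}_{\mathrm{S}}$ and noting that the choice of selection $x(\theta)\in\partial u(\theta)$ is immaterial on a Lebesgue-null set.
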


The measure $\mu_{\mathrm{R}}$ is determined by the principal's preferences, the agents' incentive constraints, and the type distribution $f$. It captures the incremental variation in the principal's revenue resulting from marginal changes in agents' \emph{information rents}.

\begin{remark}
    \cite{Daskalakis2017} study a multidimensional version of \labelcref{eqn:rev_max_pb}. The measure $\mu_{\mathrm{R}}$ defined through \labelcref{eqn:measure_rev_max,eqn:rev_max_density} corresponds precisely to the one-dimensional version of their \emph{transformed measure} (see their Definition 3). However, \cite{Daskalakis2017} do not consider type-dependent participation constraints, which are explicitly incorporated in our framework through the lower level boundary $u_{0}$ of $\mathcal{U}_{\mathrm{S}}$.
\end{remark}

We now apply \cref{thm:charac_opt} to derive revenue-maximizing mechanisms under type-dependent participation constraints. For each $\theta^{\star}\in \Theta$, define
\begin{equation}\label{eqn:cutoff_utility}
    u_{\theta^{\star}}(\theta) = \left\{
    \begin{array}{ll}
        u_{0}(\theta) & \text{if $\theta\in [0,\theta^{\star})$,} \\[5pt]
        u_{0}(\theta) + (\theta - \theta^{\star}) & \text{if $\theta\in [\theta^{\star},1]$.}
    \end{array}
    \right.
\end{equation}

By \cref{thm:ext_pt}, $u_{\theta^{\star}}$ is an extreme point of $\mathcal{U}_{\mathrm{S}}$ for any $\theta^{\star}\in \Theta$, as it exhibits lower-level saturation on $[0,\theta^{\star})$ and upper-slope saturation on $[\theta^{\star},1]$. Furthermore, \cref{lem:screening_interval} implies that $u_{\theta^{\star}}$ is implemented by a \emph{cutoff} allocation rule, defined by
\begin{equation*}
    x_{\theta^{\star}}(\theta) = x_{0}(\theta) \Ind_{\theta < \theta^{\star}} + \Ind_{\theta \geq \theta^{\star}},
\end{equation*}
for every $\theta\in \Theta$.

In the standard case where $x_{0}(\theta)=0$ for all $\theta \in \Theta$, cutoff rules are known to be optimal \emph{regardless} of the type distribution \citep{Myerson1981,Riley1983}.\footnote{See also \cite{Skreta2006} and \cite{Borgers2015} Chapter 1.} Our next result (\Cref{prop:opt_rev_cutoff}) shows that if the type distribution is regular, cutoff mechanisms continue to maximize revenue even when agents have access to rich menus of outside options. However, we also construct a counterexample (\cref{example:bunch_screen} below) showing that the optimality of cutoff mechanisms is not robust once type-dependent participation constraints are introduced: the structure of revenue-maximizing mechanisms depends crucially on the shape of the \emph{type distribution}.

\begin{proposition}\label{prop:opt_rev_cutoff}
    Suppose that $F$ is \emph{Myerson-regular}. That is, the \emph{virtual value function}, defined by
    \begin{equation*}
        v(\theta) = \theta - \frac{1-F(\theta)}{f(\theta)},
    \end{equation*}
    for all $\theta\in \Theta$, is non-decreasing. Then $u_{\theta^{\star}}$ solves \labelcref{eqn:rev_max_pb}, where $\theta^\star \in (0,1)$ is the unique type such that $v(\theta^\star) = 0$.
\end{proposition}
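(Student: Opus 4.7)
The approach is to invoke \cref{thm:charac_opt} with $\mu = \mu_{\mathrm{R}}$ and the candidate extreme point $u^{\star} = u_{\theta^{\star}}$. The pivotal quantity is the unnormalized virtual surplus $V(\theta) \coloneqq v(\theta) f(\theta) = \theta f(\theta) - (1 - F(\theta))$. Reading off \labelcref{eqn:rev_max_density} one verifies that $\psi_{\mathrm{R}} = -V'$ on $(0,1)$, that $\mu_{\mathrm{R}}(\{0\}) = 0$ and $\mu_{\mathrm{R}}(\{1\}) = f(1) = V(1)$, and that $V(0) = -1$ with $V(\theta^{\star}) = 0$ by the defining property of $\theta^{\star}$. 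Myerson regularity forces $v$ non-decreasing, so $V \leq 0$ on $[0, \theta^{\star}]$ and $V \geq 0$ on $[\theta^{\star}, 1]$.

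The partition $\mathcal{Y}$ induced by $u_{\theta^{\star}}$ is then straightforward to describe. Since $u_{\theta^{\star}} = u_{0} = \ubar{u}$ on $[0, \theta^{\star})$, this region lies in $\mathcal{Y}_{2}$; since $u_{\theta^{\star}}$ is affine with slope $\bar{s} = 1$ on $[\theta^{\star}, 1]$, this interval lies in $\mathcal{Y}_{3}$ (upper slope saturation); and $\{0\} \in \mathcal{Y}_{0}$ because $u_{0}(0) = 0 = \bar{u}(0)$. All other partition classes are empty, so only conditions \labelcref{cond:opti2} and \labelcref{cond:opti3} of \cref{thm:charac_opt} require verification.

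Condition \labelcref{cond:opti2} asks that $\mu_{\mathrm{R}}([0, \theta^{\star})) \leq 0$; since $\mu_{\mathrm{R}}$ is atomless on $[0, \theta^{\star})$ with density $-V'$, this mass equals $V(0) - V(\theta^{\star}) = -1 \leq 0$. Condition \labelcref{cond:opti3} requires $\int_{\theta^{\star}}^{1} g \, \diff \mu_{\mathrm{R}} \leq 0$ for every decreasing convex $g$ on $[\theta^{\star}, 1]$. The key step is an integration by parts that absorbs the atom of $\mu_{\mathrm{R}}$ at $1$:
\begin{align*}
\int_{\theta^{\star}}^{1} g \, \diff \mu_{\mathrm{R}}
&= g(1) f(1) + \int_{\theta^{\star}}^{1} g(\theta)\bigl(-V'(\theta)\bigr) \, \diff \theta \\
&= g(1) f(1) - \bigl[g V\bigr]_{\theta^{\star}}^{1} + \int_{\theta^{\star}}^{1} g'(\theta) V(\theta) \, \diff \theta \\
&= \int_{\theta^{\star}}^{1} g'(\theta) V(\theta) \, \diff \theta,
\end{align*}
where the boundary contribution at $1$ cancels the atomic term because $V(1) = f(1)$, and the one at $\theta^{\star}$ vanishes because $V(\theta^{\star}) = 0$. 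As $g' \leq 0$ while $V \geq 0$ on $[\theta^{\star}, 1]$, the integrand is non-positive.

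The main obstacle is exactly this integration-by-parts step: the sign of $\int_{Y} g \, \diff \mu_{\mathrm{R}}$ is opaque until one exploits the identity $\psi_{\mathrm{R}} = -V'$ together with the pinning-down $V(\theta^{\star}) = 0$ (which annihilates the lower boundary term) and Myerson regularity (which forces $V \geq 0$ above $\theta^{\star}$). Once both optimality conditions are in place, \cref{thm:charac_opt} delivers at once that $u_{\theta^{\star}}$ solves \labelcref{eqn:rev_max_pb}.
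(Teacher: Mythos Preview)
Your verification of condition~\labelcref{cond:opti3} via the integration-by-parts identity
\[
\int_{[\theta^{\star},1]} g \,\diff\mu_{\mathrm{R}} \;=\; \int_{\theta^{\star}}^{1} g'(\theta)\,V(\theta)\,\diff\theta
\]
is correct and in fact cleaner than the paper's route through the \citet{Shaked2007} characterization of $\leq_{\mathrm{dcx}}$. However, your treatment of condition~\labelcref{cond:opti2} contains a genuine gap. The partition class $\mathcal{Y}_{2}$ consists of \emph{singletons} $\{\theta\}$ for $\theta\in(0,\theta^{\star})$, and \cref{thm:charac_opt} requires $\mu_{\mathrm{R}}\vert_{\{\theta\}}(\{\theta\})\leq 0$ for \emph{each} such $\theta$; with $\mu_{\mathrm{R}}$ absolutely continuous on $(0,1)$ this is the pointwise requirement $\psi_{\mathrm{R}}(\theta)\leq 0$ (cf.\ \cref{cor:welf_opt_allocations}(i) and the paper's own proof in \cref{secap:opt_screening_proofs}). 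Your aggregate check $\mu_{\mathrm{R}}\bigl([0,\theta^{\star})\bigr)=-1\leq 0$ is strictly weaker and would be satisfied even if $\psi_{\mathrm{R}}$ were positive on a sub-interval of $(0,\theta^{\star})$, in which case the hypotheses of \cref{thm:charac_opt} would fail.

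The missing pointwise bound does hold under Myerson regularity, but it requires an argument. One short route: on $(0,\theta^{\star})$ we have $v(\theta)\leq 0$, i.e.\ $\theta f(\theta)\leq 1-F(\theta)$; Myerson regularity $v'\geq 0$ is equivalent to $f'(\theta)\geq -2f(\theta)^{2}/\bigl(1-F(\theta)\bigr)$; combining the two inequalities yields $f'(\theta)\geq -2f(\theta)/\theta$, which rearranges to $\psi_{\mathrm{R}}(\theta)=-2f(\theta)-\theta f'(\theta)\leq 0$. Adding this step completes your proof.
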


The proof of \cref{prop:opt_rev_cutoff} is given in \cref{secap:opt_screening_proofs}. 
Here, we sketch the argument to illustrate how the convex dominance conditions from \cref{thm:charac_opt} can be checked in a concrete economic setting, 
assuming that the type density $f$ is log-concave.\footnote{Log-concavity of $f$ is sufficient for $F$ to be Myerson-regular. 
A weaker sufficient condition is that $F$ has a monotone hazard rate.}
 
\begin{figure}[h!]
\begin{subfigure}[b]{0.496\linewidth}
    \centering
    \includegraphics{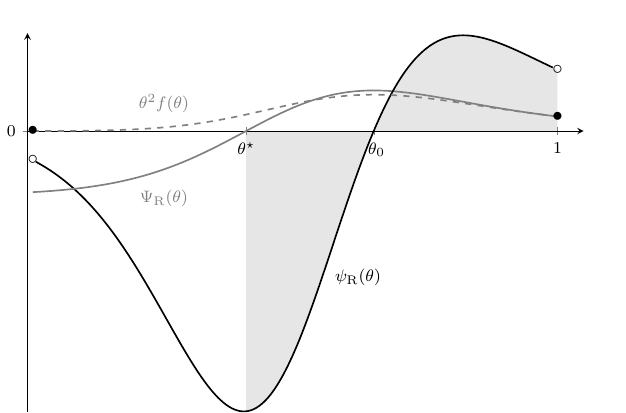}
    \subcaption{Measure $\mu_{\mathrm{R}}$ for log-concave $f$.}
    \label{fig:meas_log_conc_f}
\end{subfigure}
\begin{subfigure}[b]{0.496\linewidth}
    \centering
    \includegraphics{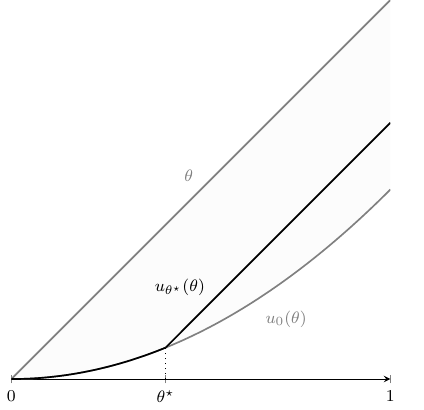}
    \caption{Optimal extreme point $u_{\theta^{\star}}$ for $\mu_{\mathrm{R}}$.}
    \label{fig:opt_extreme_point_screening}
\end{subfigure}
\caption{A revenue-maximizing extreme point for a log-concave type density. In this picture, we let $F \sim \mathrm{Logistic}(\mu,\sigma)$ truncated on $[0,1]$ with $\mu = 0.5$ and $\sigma=0.15$.}
\label{fig:rev_max_cutoff}
\end{figure}

For each $\theta\in \Theta$, define
\begin{align*}
    \Psi_{\mathrm{R}}(\theta) & = \mu_{\mathrm{R}}\bigl([\theta,1]\bigr) \\[5pt]
    &=\psi_{\mathrm{R}}(1) + \int_{\theta}^{1}\psi_{\mathrm{R}}(t) \, \diff t \\[5pt]
    &=\theta f(\theta) - \bigl(1-F(\theta)\bigr) \\[5pt]
    &=f(\theta) v(\theta).
\end{align*}

Under log-concavity of $f$, both $\psi_{\mathrm{R}}$ and $\Psi_{\mathrm{R}}$ are \emph{single-crossing from below} on $\Theta$ (as depicted in \cref{fig:meas_log_conc_f}), and $0<\theta^{\star}<\theta_{0}\leq 1$ where $\psi_{\mathrm{R}}(\theta_{0})=0$ and $\Psi_{\mathrm{R}}(\theta^{\star})=0$.

Consider the partition $\mathcal{Y}$ induced by $u_{\theta^{\star}}$. Inspection of \cref{fig:opt_extreme_point_screening} reveals that $\mathcal{Y}_{0}=\{0\}$, $\cup_{Y\in \mathcal{Y}_{2}} Y=(0,\theta^{\star})$, and $\mathcal{Y}_{3}=[\theta^{\star},1]$, while $\mathcal{Y}_{i}=\varnothing$ for all $i\in\{1,4,5\}$. By \cref{thm:charac_opt}, it thus suffices to show that $\psi_{\mathrm{R}}(\theta)\leq 0$ for all $\theta \in (0,\theta^{\star})$ and that $\mu_{\mathrm{R}}\vert_{[\theta^{\star},1]}^{+} \leq_{\mathrm{dcx}}\mu_{\mathrm{R}}\vert_{[\theta^{\star},1]}^{-}$ to verify the optimality of $u_{\theta^{\star}}$.

Since $\theta^{\star}<\theta_{0}$, it is immediate that $\psi_{\mathrm{R}}(\theta)\leq 0$ for all $\theta \in (0,\theta^{\star})$. Furthermore, we have $\mu_{\mathrm{R}}^{+}\bigl([\theta^{\star},1]\bigr)=\mu_{\mathrm{R}}^{-}\bigl([\theta^{\star},1]\bigr)$ (illustrated by the shaded region in \cref{fig:meas_log_conc_f}). Hence, $\mu_{\mathrm{R}}\vert_{[\theta^{\star},1]}^{+}$ and $\mu_{\mathrm{R}}\vert_{[\theta^{\star},1]}^{-}$ have an equal (non-zero) mass, and can thus be normalized to probability measures. 

By integration by parts  
\begin{equation}
\label{eqn:barycenter_dcx_screen}
    \begin{aligned}
        \int_{[\theta,1]} t \, \diff\mu_{\mathrm{R}}(t)
        &=\psi_{\mathrm{R}}(1) + \int_{\theta}^{1} t \psi_{\mathrm{R}}(t) \, \diff t \\[5pt]
        &= {\theta}^{2} f(\theta) \geq 0,
    \end{aligned}
\end{equation}
for every $\theta\in \Theta$. Note that \labelcref{eqn:barycenter_dcx_screen} implies
\begin{equation*}
    \int_{[\theta^{\star},1]} \theta \, \diff\mu_{\mathrm{R}}\vert^{+}_{[\theta^{\star},1]}(\theta) 
    \;\geq\; \int_{[\theta^{\star},1]} \theta \, \diff\mu_{\mathrm{R}}\vert^{-}_{[\theta^{\star},1]}(\theta).
\end{equation*}

Therefore, by Theorem 4.A.2 in \citet{Shaked2007}, $\mu_{\mathrm{R}}\vert_{[\theta^{\star},1]}^{+} \leq_{\mathrm{dcx}}\mu_{\mathrm{R}}\vert_{[\theta^{\star},1]}^{-}$  
is equivalent to the following \emph{weak majorization} condition:
\begin{align*}
    \forall \theta \in [\theta^{\star},1], &\quad 
    \int_{\theta}^{1} \max\{0,\Psi_{\mathrm{R}}(t)\} \, \diff t
    \;\geq\; \int_{\theta}^{1} \max\{0,-\Psi_{\mathrm{R}}(t)\} \, \diff t , \\[5pt]
    \iff \forall \theta \in [\theta^{\star},1], &\quad 
    \int_{\theta}^{1} \Psi_{\mathrm{R}}(t) \, \diff t \;\geq\; 0,
\end{align*}
which holds because $\Psi_{\mathrm{R}}(\theta) \geq 0$ for all $\theta \in [\theta^{\star},1]$.

\begin{example}\label{example:bunch_screen}
    In \cref{fig:rev_max_cutoff} below, we present a counterexample showing that revenue maximizing mechanisms do not necessarily take the form of simple cutoff rules when there are type-dependent participation constraints and the type distribution is irregular.
    \begin{figure}[h!]
        \begin{subfigure}{0.496\linewidth}
        \centering
        \includegraphics{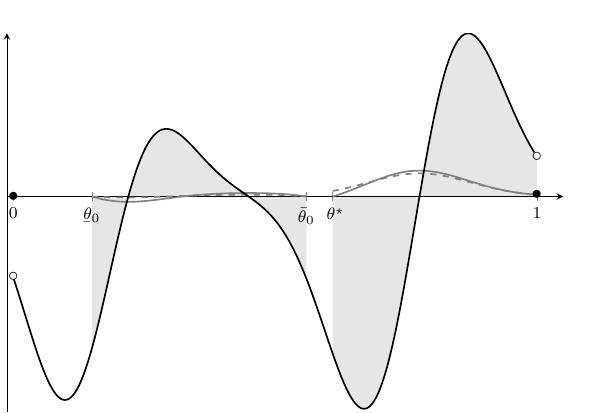}
        \caption{Signed density $\psi_{\mathrm{R}}$.}
        \label{fig:meas_bunch_screen}
        \end{subfigure}
        \begin{subfigure}[b]{0.496\linewidth}
        \centering
        \includegraphics{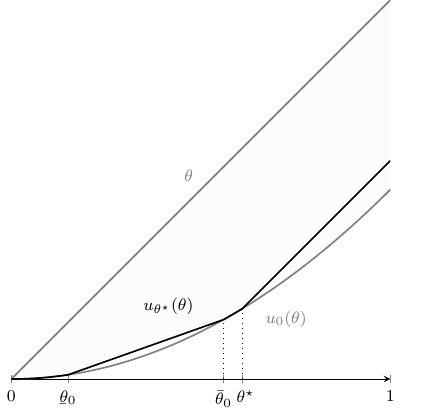}
        \caption{Optimal extreme point $u_{\theta^{\star}}$ for $\mu_{\mathrm{R}}$.}
        \label{fig:opt_extreme_point_bunch_screen}
        \end{subfigure}
    \caption{A revenue-maximizing extreme point for an irregular type distribution. For each $\theta\in \Theta$, we let $F(\theta)=\frac{G(\theta)-G(0)}{G(1)-G(0)}$ where $G(\theta)=\alpha \Phi\bigl(\frac{\theta-m_0}{\sigma_0}\bigr)+(1-\alpha) \Phi\bigl(\frac{\theta-m_1}{\sigma_1}\bigr)$ and $\Phi$ is the standard Gaussian cumulative distribution function. We fix $\alpha=0.7$, $m_{0}\approx 0.128$, $m_{1}=0.75$ and $\sigma_0=\sigma_1=0.1$.}
    \label{fig:rev_max_cutoff}
    \end{figure}

    We take a type distribution $F$ that is a mixture of two Gaussians truncated on $[0,1]$. Its density $f$ is bimodal and $F$ is not Myerson-regular. As a result, the signed density $\psi_{\mathrm{R}}$ changes sign four times.\footnote{The precise computations are available upon request.} It is initially negative and ultimately positive (see \cref{fig:meas_bunch_screen}). Hence, the principal would like to minimize rents for low and intermediate types, and provide rents to intermediate and high types.
    
    Let us first focus on the interval $[\ubar{\theta}_{0},\bar{\theta}_{0}]$, which contains the intermediate interval where $\psi_{\mathrm{R}}\geq 0$. In \cref{fig:meas_bunch_screen}, the solid gray curve tracks the cumulative mass of $\mu_{\mathrm{R}}$ starting from $\ubar{\theta}_{0}$, while the dashed curve tracks its barycenter on $[\ubar{\theta}_{0},\theta]$ for every $\theta\leq\bar{\theta}_{0}$. Since both equal zero at $\theta=\bar{\theta}_{0}$, the total mass and the barycenter of $\mu_{\mathrm{R}}$ both vanish on $[\ubar{\theta}_{0},\bar{\theta}_{0}]$. Hence, by Theorem 3.A.44 (condition 3.A.57) in \cite{Shaked2007}, the positive part of $\mu_{\mathrm{R}}$ on $[\ubar{\theta}_{0},\bar{\theta}_{0}]$ is dominated in convex order by its negative part, so $\mu_{\mathrm{R}}\vert_{[\ubar{\theta}_{0},\bar{\theta}_{0}]}\leq_{\mathrm{cx}}\mathbf{0}$. A similar argument to the proof of \cref{prop:opt_rev_cutoff} implies $\mu_{\mathrm{R}}\vert_{[\theta^{\star},1]}^{+}\leq_{\mathrm{dcx}}\mu_{\mathrm{R}}\vert_{[\theta^{\star},1]}^{-}$. Finally, on the complementary intervals, we have $\psi_{\mathrm{R}}\leq 0$.
    
    Applying our \cref{thm:charac_opt} thus shows that the extreme point in \cref{fig:opt_extreme_point_bunch_screen} is a revenue-maximizing indirect utility. By \cref{cor:imp_ext_pt_screening}, the corresponding mechanism features \emph{bunching} and a \emph{non-convex} exclusion set: an interval of intermediate types is bunched, while both lower and higher types are assigned their favorite outside options in $M_{0}$ (see \cref{fig:opt_extreme_point_bunch_screen}).\footnote{It is well known that with multidimensional heterogeneity, revenue-maximizing mechanisms can exhibit intricate bunching patterns and depend subtly on the distribution of types \citep[e.g.,][]{Manelli2007,Daskalakis2017,Lahr2024}. \cref{example:bunch_screen} shows that, even in one dimension, type-dependent participation constraints alone can generate comparable complexity.} It is noteworthy that bunching is also caused by the richness of $M_{0}$---equivalently, the fact that $u_{0}$ has sufficiently many kink points. For instance, if $u_{0}$ were affine on the interval where the default allocation is ironed out (i.e., $[\ubar{\theta}_{0},\bar{\theta}_{0}]$), the optimal mechanism would merely coincide with the default one on that interval. In the limiting case where $u_{0}$ is affine on all $\Theta$ \citep[as in, e.g.,][]{Rochet1998}---so that $M_{0}$ offers a single outside option preferred by all types to $(0,0)$---\cref{cor:imp_ext_pt_screening} implies that the non-participation region must be an interval $[0,\theta^{\star}]$.
\end{example}

\paragraph{Welfare-maximizing mechanisms}

A designer chooses a mechanism $(x,t)$ subject to \labelcref{eqn:IC_screening,eqn:IR_screening} so as to maximize a combination of weighted-utilitarian welfare and profit:
\begin{equation}\label{eqn:welfare_function}
    \int_\Theta \pi(\theta)\bigl(\theta x(\theta) -t(\theta)\bigr) \, \diff  F(\theta) + \alpha \int_\Theta \bigl(t(\theta)-c x(\theta)\bigr) \, \diff  F(\theta),
\end{equation}
where $\pi(\theta)\geq 0$ is the Pareto weight assigned to the welfare of type-$\theta$ agents, $c\in \mathbb{R}_{+}$ is the marginal cost of providing the good (or the principal's value from retaining it), and $\alpha\in \mathbb{R}_{++}$ is the weight placed on profit relative to agents' welfare.\footnote{This weight can be interpreted either as a preference parameter reflecting the principal's opportunity cost of funds (e.g., when revenues are used for purposes outside the mechanism), or as the Lagrange multiplier on a budget-balance constraint.} Such objectives are central to the \emph{redistributive} market design literature, where $\theta\in \Theta \mapsto \pi(\theta)$ is usually assumed non-increasing.\footnote{For recent work, see \cite{Condorelli2013,Dworczak2021,Akbarpour2024a,Akbarpour2024b,Kang2023,Kang2024a,Kang2024b,Pai2025}.} 

Using similar arguments as for \cref{lem:rev_max_pb}, we obtain:
\begin{lemma}
Maximizing \labelcref{eqn:welfare_function} subject to \labelcref{eqn:IC_screening,eqn:IR_screening} can be written as
\begin{equation}\label{eqn:welf_max_pb}
    \max_{u \in \mathcal{U}_{\mathrm{S}}} \int_\Theta u \, \diff \mu_{\mathrm{W}}, \tag{WelMax}
\end{equation}
where, for any Borel measurable $A\subseteq\Theta$,
\begin{equation}\label{eqn:measure_welf_max}
    \mu_{\mathrm{W}}(A) = \int_{A} \psi_{\mathrm{W}} \, \diff \nu,
\end{equation}
with $\nu\coloneqq\delta_{1}-\delta_{0}+\lambda$ and
\begin{equation}\label{eqn:welf_max_density}
    \psi_{\mathrm{W}}(\theta)=\left\{\begin{array}{ll}
        \alpha (\theta-c) f(\theta) & \text{if $\theta\in \{0,1\}$}\\[5pt]
        \pi(\theta)f(\theta) - \alpha \Bigl\{f(\theta) + \frac{\diff}{\diff\theta}\bigl[(\theta-c) f(\theta)\bigr]\Bigr\} & \text{if $\theta\in(0,1)$} 
    \end{array}
    \right.
\end{equation}
\end{lemma}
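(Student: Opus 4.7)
The plan is to mirror the integration-by-parts derivation that establishes \cref{lem:rev_max_pb}, with the weighted-utilitarian term folded in alongside the profit term. First, I would invoke \cref{lem:screening_interval} to translate every mechanism $(x,t)$ satisfying \labelcref{eqn:IC_screening,eqn:IR_screening} into its induced indirect utility $u\in\mathcal{U}_{\mathrm{S}}$. Since $\partial u(\Theta)\subseteq[0,1]$, the function $u$ is $1$-Lipschitz, hence absolutely continuous on $\Theta$, so $u'=x$ almost everywhere and $t(\theta)=\theta u'(\theta)-u(\theta)$. Substituting $\theta x(\theta)-t(\theta)=u(\theta)$ into the first term of \labelcref{eqn:welfare_function} and $t(\theta)-cx(\theta)=(\theta-c)u'(\theta)-u(\theta)$ into the second reduces the objective to
\begin{equation*}
    \int_{0}^{1}\bigl[\pi(\theta)-\alpha\bigr]f(\theta)u(\theta)\,\diff\theta + \alpha\int_{0}^{1}(\theta-c)f(\theta)u'(\theta)\,\diff\theta.
\end{equation*}

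Second, I would integrate the remaining $u'$-term by parts. The maintained differentiability of $f$ makes $\theta\mapsto(\theta-c)f(\theta)$ continuously differentiable on $\Theta$, and the absolute continuity of $u$ justifies
\begin{equation*}
    \alpha\int_{0}^{1}(\theta-c)f(\theta)u'(\theta)\,\diff\theta = \alpha(1-c)f(1)u(1)+\alpha c f(0)u(0)-\alpha\int_{0}^{1}\frac{\diff}{\diff\theta}\bigl[(\theta-c)f(\theta)\bigr]u(\theta)\,\diff\theta.
\end{equation*}
Collecting the boundary and Lebesgue contributions, the objective becomes $\alpha(1-c)f(1)u(1)+\alpha c f(0)u(0)+\int_{0}^{1}\bigl\{\pi(\theta)f(\theta)-\alpha f(\theta)-\alpha\tfrac{\diff}{\diff\theta}[(\theta-c)f(\theta)]\bigr\}u(\theta)\,\diff\theta$, which term by term matches $\psi_{\mathrm{W}}(1)u(1)-\psi_{\mathrm{W}}(0)u(0)+\int_{0}^{1}\psi_{\mathrm{W}}(\theta)u(\theta)\,\diff\theta$ after reading off \labelcref{eqn:welf_max_density} and using $\psi_{\mathrm{W}}(0)=-\alpha c f(0)$.

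Third, I would recognize this expression as $\int_{\Theta}u\,\diff\mu_{\mathrm{W}}$: by the definition in \labelcref{eqn:measure_welf_max} with $\nu=\delta_{1}-\delta_{0}+\lambda$, one has
\begin{equation*}
    \int_{\Theta}u\,\diff\mu_{\mathrm{W}}=\int_{\Theta}u\,\psi_{\mathrm{W}}\,\diff\nu=\psi_{\mathrm{W}}(1)u(1)-\psi_{\mathrm{W}}(0)u(0)+\int_{0}^{1}\psi_{\mathrm{W}}(\theta)u(\theta)\,\diff\theta.
\end{equation*}
Combined with the bijection between \labelcref{eqn:IC_screening}--\labelcref{eqn:IR_screening} mechanisms and elements of $\mathcal{U}_{\mathrm{S}}$ established in \cref{lem:screening_interval}, this shows that the two programs share identical values and optimizers, yielding the equivalence with \labelcref{eqn:welf_max_pb}.

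The only delicate step is justifying the integration by parts for a general $u\in\mathcal{U}_{\mathrm{S}}$, whose derivative a priori exists only almost everywhere. The key observation is that the slope constraint $\partial u(\Theta)\subseteq[0,1]$ forces $u$ to be Lipschitz, hence absolutely continuous, after which the classical product-rule identity applies and delivers the required boundary formula. Everything else in the derivation is routine algebra that parallels the proof of \cref{lem:rev_max_pb}.
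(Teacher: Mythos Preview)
Your proposal is correct and follows exactly the approach the paper indicates: the paper simply writes ``Using similar arguments as for \cref{lem:rev_max_pb}, we obtain'' without spelling out the computation, and your derivation is precisely that argument carried out in full, with the same substitution $t(\theta)=\theta u'(\theta)-u(\theta)$ from \cref{lem:screening_interval} and the same integration by parts on the $(\theta-c)f(\theta)u'(\theta)$ term.
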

Again, the measure $\mu_{\mathrm{W}}$ is determined by the economic primitives. In this case, it captures the incremental variation in \emph{total welfare} resulting from marginal changes in agents' information rents.

We translate \cref{thm:charac_opt} into general sufficient optimality conditions for problems of the form \labelcref{eqn:welf_max_pb} that only bear on the density function $\psi_{\mathrm{W}}$.\footnote{Note that those conditions also apply to the problem \labelcref{eqn:rev_max_pb} since it corresponds to the particular version of \labelcref{eqn:welf_max_pb} where $\pi(\theta)=0$ for every $\theta \in \Theta$, $\alpha=1$ and $c=0$.} To do so, consider $u^{\star}\in\ex(\mathcal{U}_{\mathrm{S}})$ with its corresponding partition $\mathcal{Y}$ as defined in \cref{sec:opt_ext_pt}. Thanks to \cref{cor:imp_ext_pt_screening} we can deduce the form for $\mathcal{Y}$. Since $u_0(0)=0$, we have $\mathcal{Y}_{0}=\bigl\{\{0\}\bigr\}$. Combined with $\ubar{s}=0$, this implies no affine pieces satisfy lower slope saturation, so $\mathcal{Y}_{4}=\varnothing$. Moreover, since $\bar{u}$ is affine with slope $\bar{s}=1$, the function $u^{\star}$ does not satisfy tangential saturation on $\Theta$, meaning $\mathcal{Y}_{1}=\varnothing$. Recall also from \cref{cor:imp_ext_pt_screening} that $\mathcal{Y}_{3}$ only contains a unique interval of the form $[\theta^{\star},1]$, where $\theta^{\star}$ is the cutoff type above which the allocation rule is deterministic. Therefore, $\mathcal{Y}=\{0\} \cup \mathcal{Y}_{2} \cup [\theta^{\star},1] \cup \mathcal{Y}_{5}$, where $\mathcal{Y}_{2}$ is the collection of all singletons of lower-level saturation and $\mathcal{Y}_{5}$ is the collection of all the chordal saturation intervals (i.e., the intervals where the default allocation is ironed out).

\begin{corollary}\label{cor:welf_opt_allocations}
    Let $u^{\star}\in\ex(\mathcal{U}_{\mathrm{S}})$. If all the following conditions are satisfied, then $u^{\star}$ is optimal for \labelcref{eqn:welf_max_pb}:
    \begin{enumerate}[(i)]
        \item For all $\{\theta\} \in \mathcal{Y}_{2}$, $\psi_{\mathrm{W}}(\theta) \leq 0$.
        \item  $ \psi_{\mathrm{W}}(1)+\int_{\theta^{\star}}^1 \psi_{\mathrm{W}}(\theta) \, \diff \theta = 0$, $\psi_{\mathrm{W}}(1)+ \int_{\theta^{\star}}^{1} \theta \psi_{\mathrm{W}}(\theta) \, \diff \theta \geq 0$, and 
        \begin{equation*}
            \forall \theta \in [\theta^{\star},1], \; \int_{\theta^{\star}}^{\theta}(\theta - t) \psi_{\mathrm{W}}(t) \, \diff t \leq 0.
        \end{equation*}
        \item For all $Y \in \mathcal{Y}_5$ strictly in the interior of $\Theta$, $\int_{Y} \psi_{\mathrm{W}}(\theta) \, \diff \theta = \int_{Y} \theta \psi_{\mathrm{W}}(\theta) \, \diff \theta =0$. Furthermore, 
        \begin{equation*}
            \forall \theta \in Y, \; \int_{\inf Y}^{\theta}(\theta - t) \psi_{\mathrm{W}}(t) \, \diff t \leq 0,
        \end{equation*}
        with equality at $\theta=\sup Y$.
    \end{enumerate}
\end{corollary}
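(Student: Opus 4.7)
The plan is to apply \cref{thm:charac_opt} directly to $\mathcal{U}_{\mathrm{S}}$ with the signed measure $\mu_{\mathrm{W}}$ defined by \labelcref{eqn:measure_welf_max,eqn:welf_max_density}, and to translate each abstract convex-order condition in \cref{thm:charac_opt} into a concrete integral condition on the density $\psi_{\mathrm{W}}$. The partition $\mathcal{Y}$ of $\Theta$ induced by $u^{\star}$ has already been simplified in the paragraph preceding the corollary to $\{0\}\cup\mathcal{Y}_{2}\cup[\theta^{\star},1]\cup\mathcal{Y}_{5}$, with $\mathcal{Y}_{1}=\mathcal{Y}_{4}=\varnothing$. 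Hence only conditions (ii), (iii) and (v) of \cref{thm:charac_opt} yield non-vacuous constraints, corresponding respectively to items (i), (ii) and (iii) of the corollary.

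The first translation concerns each singleton $\{\theta\}\in\mathcal{Y}_{2}$: in this density-based setting, the local version of condition (ii) of \cref{thm:charac_opt} requires the signed density of $\mu_{\mathrm{W}}$ to be non-positive wherever $u^{\star}=u_{0}$, which is exactly $\psi_{\mathrm{W}}(\theta)\leq 0$. The second translation concerns the upper-slope cell $[\theta^{\star},1]\in\mathcal{Y}_{3}$: the condition $\mu_{\mathrm{W}}\vert_{[\theta^{\star},1]}^{+}\leq_{\mathrm{dcx}}\mu_{\mathrm{W}}\vert_{[\theta^{\star},1]}^{-}$ is equivalent, by Theorem 4.A.2 in \citet{Shaked2007}, to (a) the two parts having equal total mass, (b) the barycenter of the positive part being no smaller than that of the negative part, and (c) a tail weak-majorization inequality on the cumulative measure $\Psi_{\mathrm{W}}(\theta)\coloneqq\mu_{\mathrm{W}}\bigl([\theta,1]\bigr)$. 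Accounting for the atom of $\nu$ at $\theta=1$, which contributes $\psi_{\mathrm{W}}(1)$ to both the mass and the first-moment integrals exactly as in \labelcref{eqn:barycenter_dcx_screen}, these three sub-conditions rewrite as the three displays in (ii). For each $Y\in\mathcal{Y}_{5}$ lying strictly inside $\Theta$, the boundary atoms of $\nu$ do not contribute, and condition (v) of \cref{thm:charac_opt}, namely $\mu_{\mathrm{W}}\vert_{Y}\leq_{\mathrm{cx}}\mathbf{0}$, is equivalent by Theorem 3.A.44 (condition 3.A.57) in \citet{Shaked2007} to zero total mass, zero first moment, and the convex-potential inequality $\int_{\inf Y}^{\theta}(\theta-t)\psi_{\mathrm{W}}(t)\,\diff t\leq 0$ for every $\theta\in Y$ with equality at $\theta=\sup Y$, which is exactly (iii).

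The main technical work lies in the careful bookkeeping of the boundary atoms of $\nu=\delta_{1}-\delta_{0}+\lambda$ when unpacking each convex-order inequality into a pointwise statement on $\psi_{\mathrm{W}}$. These atoms introduce the $\psi_{\mathrm{W}}(1)$ terms appearing in (ii) whenever the relevant cell touches $\theta=1$, and vanish for cells $Y\in\mathcal{Y}_{5}$ strictly interior to $\Theta$, which explains their absence in (iii). Once this bookkeeping is handled, the translation is a direct application of the integral characterizations of $\leq_{\mathrm{cx}}$ and $\leq_{\mathrm{dcx}}$ in \citet{Shaked2007}, mirroring the computation already carried out in the proof sketch of \cref{prop:opt_rev_cutoff}.
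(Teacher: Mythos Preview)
Your proposal is correct and follows essentially the same approach as the paper: apply \cref{thm:charac_opt} with the simplified partition $\mathcal{Y}=\{0\}\cup\mathcal{Y}_{2}\cup[\theta^{\star},1]\cup\mathcal{Y}_{5}$ and translate the abstract convex-order conditions into integral conditions on $\psi_{\mathrm{W}}$ via the characterizations in \citet{Shaked2007}. The paper's own proof is a one-liner citing Theorems~3.A.1 and~4.A.2 of \citet{Shaked2007}, whereas you invoke Theorem~3.A.44 (condition~3.A.57) for the $\leq_{\mathrm{cx}}$ part; both give equivalent integral characterizations, and your more detailed bookkeeping of the boundary atom at $\theta=1$ is exactly the computation the paper carries out in the proof sketch of \cref{prop:opt_rev_cutoff} and in \cref{example:bunch_screen}.
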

\begin{proof}
    \cref{cor:welf_opt_allocations} follows from applying Theorems 3.A.1 and 4.A.2 from \cite{Shaked2007} to the (scaled versions of the) positive and negative parts of the conditional measures from \cref{thm:charac_opt}.
\end{proof}

\begin{remark}\label{remark:link_DM2024}
    \cite{Dworczak2024} approach a problem similar to \labelcref{eqn:welf_max_pb} (with the only difference being flexibility in $u(0)$) by a concavification approach.\footnote{See also \cite{Kleiner2021} Proposition 2.} This approach can be used for problems like \labelcref{eqn:welf_max_pb} to obtain a partition $\mathcal{Y}$ of $X$ that satisfies the optimality conditions of \cref{thm:charac_opt}. We make this connection precise in \cref{secap:concavification}.
\end{remark}

\subsection{Optimal Lower Level Boundary} \label{sec:lower_bound}

\subsubsection{Theory}\label{sec:theorem_opt_lower_bd}

Let $\mathcal{U}$ be a CFI with upper level boundary $\bar{u}$ and slope set $S=[\ubar{s},\bar{s}]$. The set of lower boundaries compatible with $\mathcal{U}$ is given by
\begin{equation*}
    \mathcal{K}_{\bar{u},S}\coloneqq \Bigl\{u\in \mathcal{K} \; \big\vert \; u\leq \bar{u}, \, \partial u(X) \subseteq S \Bigr\}.
\end{equation*}

We consider the problem
\begin{align}\label{eqn:optim_low_bd}
    \max_{\ubar{u}\in \mathcal{K}_{\bar{u},S}} \; &\int_{X} u_{\ubar{u}} \, \mathrm{d}\nu \tag{OptLB}\\
    \text{s.t.} \;  & \; u_{\ubar{u}} \in \underset{u \in \mathcal{U}_{\ubar{u}}}{\argmax} \; \int_{X} u \, \diff  \mu, \notag
\end{align}
where $\mu$ and $\nu$ are two finite signed Radon measures, and $\mathcal{U}_{\ubar{u}}$ is the CFI defined by
\begin{equation*}
    \mathcal{U}_{\ubar{u}}=\mathcal{K}_{\bar{u},S}\cap \bigl\{u\in\mathcal{K} \; \vert \; \ubar{u} \leq u \bigr\}.
\end{equation*}

Let $V\colon\mathcal{K}_{\bar{u}, S} \to \mathbb{R}$ be the functional defined by $V(\ubar{u}) = \int_{X} u_{\ubar{u}} \, \mathrm{d}\nu$.

The following theorem shows that the problem of designing an optimal lower level boundary of a CFI is well behaved in two useful cases: when the measures $\mu$ and $\nu$ coincide and when we restrict attention to lower boundaries that lead to affinely bounded CFIs. The CFI $\mathcal{U}_{\ubar{u}}$ is \emph{$\bar{s}$-affinely bounded} (or \emph{upper affinely bounded}) if $\bar{u}'=\bar{s}$ and $\ubar{u} \in \mathcal{K}_{\bar{u}, S}^0 \coloneqq \big\{u \in \mathcal{K}_{\bar{u}, S} \; \vert \; u(0)=\bar{u}(0) \bigr\}$. Likewise, we say that $\mathcal{U}_{\ubar{u}}$ is \emph{$\ubar{s}$-affinely bounded} (or \emph{lower affinely bounded}) if $\bar{u}'=\ubar{s}$ and $\ubar{u} \in \mathcal{K}_{\bar{u}, S}^1 \coloneqq \bigl\{u \in \mathcal{K}_{\bar{u}, S} \; \vert \; u(1)=\bar{u}(1) \bigr\}$.

\begin{theorem}[Optimal lower level boundary]\label{thm:opt_lower_bd}
    Assume that $\mu$ admits a density on $(0,1)$. The following claims are true:
    \begin{enumerate}
        \item If $\bar{u}'=\bar{s}$, the mapping
        \begin{equation*}
            \ubar{u} \in \mathcal{K}_{\bar{u}, S}^0 \mapsto u_{\ubar{u}}
        \end{equation*}
        is linear. In particular, the functional $V\vert_{\mathcal{K}_{\bar{u}, S}^0}$ is linear. 
        \label{cond:opt_lb_afffinebound_0}
        
        \item Symmetrically, if $\bar{u}'=\ubar{s}$, the mapping
        \begin{equation*}
            \ubar{u} \in \mathcal{K}_{\bar{u}, S}^1 \mapsto u_{\ubar{u}}
        \end{equation*}
        is linear. In particular, the functional $V \vert_{\mathcal{K}_{\bar{u}, S}^1}$ is linear.
        \label{cond:opt_lb_afffinebound_1}
        
        \item If $\nu=\mu$, then $V$ is concave. \label{cond:opt_lb_concave}
    \end{enumerate}
\end{theorem}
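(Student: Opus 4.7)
The plan is to start with claim \labelcref{cond:opt_lb_concave}, which falls out of a convex-combination argument, and then to handle claims \labelcref{cond:opt_lb_afffinebound_0} and \labelcref{cond:opt_lb_afffinebound_1} together by exploiting the $x\mapsto 1-x$ symmetry and exhibiting an explicit affine construction of an optimal $u_{\ubar{u}}$. For claim \labelcref{cond:opt_lb_concave}, take $\ubar{u}_{1},\ubar{u}_{2}\in \mathcal{K}_{\bar{u},S}$ and $\alpha\in[0,1]$, and set $\ubar{u}_{3}=\alpha\ubar{u}_{1}+(1-\alpha)\ubar{u}_{2}$, which still lies in $\mathcal{K}_{\bar{u},S}$. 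For any $u_{i}\in \mathcal{U}_{\ubar{u}_{i}}$, the convex combination $\alpha u_{1}+(1-\alpha)u_{2}$ belongs to $\mathcal{U}_{\ubar{u}_{3}}$ (all the defining constraints of a CFI are stable under convex combinations). Choosing $u_{i}=u_{\ubar{u}_{i}}$ and using $\nu=\mu$ then gives $V(\ubar{u}_{3})\geq\alpha V(\ubar{u}_{1})+(1-\alpha)V(\ubar{u}_{2})$.

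For claims \labelcref{cond:opt_lb_afffinebound_0} and \labelcref{cond:opt_lb_afffinebound_1}, the reflection $x\mapsto 1-x$ interchanges the two cases, so it suffices to prove claim \labelcref{cond:opt_lb_afffinebound_0}. Assume $\bar{u}'=\bar{s}$ and $\ubar{u}\in \mathcal{K}_{\bar{u},S}^{0}$. Two preliminary facts simplify matters. First, any $u\in\mathcal{U}_{\ubar{u}}$ has $u(0)=\bar{u}(0)$, since $\bar{u}(0)=\ubar{u}(0)\leq u(0)\leq\bar{u}(0)$. Second, by \cref{thm:ext_pt}, tangential saturation at an interior point $y\in(0,1)$ would force the affine segment to have slope $\bar{u}'(y)=\bar{s}$ and hence to coincide with $\bar{u}$, contradicting $u<\bar{u}$ on $\Int(X_{n})$; extreme points of $\mathcal{U}_{\ubar{u}}$ therefore only combine lower-level saturation regions, chord segments, and slope-saturation segments.

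The central step is to build an explicit optimal $u_{\ubar{u}}^{\star}\in\mathcal{U}_{\ubar{u}}$ that depends affinely on $\ubar{u}$. The idea is to use \cref{thm:charac_opt} to choose, once and for all, a partition $\mathcal{Y}^{\star}$ of $[0,1]$ together with an assignment of a saturation type to each cell, so that the verification conditions \labelcref{cond:opti2}--\labelcref{cond:opti5} hold. The key observation is that these conditions involve only $\mu$ and the saturation types---nothing in them depends on $\ubar{u}$---so $\mathcal{Y}^{\star}$ can be chosen uniformly in $\ubar{u}$. Given $\mathcal{Y}^{\star}$, set $u_{\ubar{u}}^{\star}\vert_{Y}$ equal to $\ubar{u}\vert_{Y}$ on lower-level cells; to the affine chord through $(\inf Y,\ubar{u}(\inf Y))$ and $(\sup Y,\ubar{u}(\sup Y))$ on chord cells; and to the affine segment with slope $\ubar{s}$ or $\bar{s}$ anchored at the appropriate endpoint value of $\ubar{u}$ on slope-saturation cells. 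Each prescription is affine in $\ubar{u}$, hence so is $u_{\ubar{u}}^{\star}$. Feasibility follows from the convexity of $\ubar{u}$ (a chord lies above a convex function on the interval it spans) and the inclusion $\partial\ubar{u}(X)\subseteq S$ (an affine extension with slope $\bar{s}$ from a point on $\ubar{u}$ remains above $\ubar{u}$). The linearity of $V\vert_{\mathcal{K}_{\bar{u},S}^{0}}$ is then immediate.

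The main obstacle lies in rigorously extracting such a partition $\mathcal{Y}^{\star}$ from $\mu$ alone, and in certifying that the resulting $u_{\ubar{u}}^{\star}$ is feasible and extremal for every $\ubar{u}\in\mathcal{K}_{\bar{u},S}^{0}$. This amounts to solving an ironing-type subproblem purely on the density of $\mu$ (its zero-crossings and its convex-order behavior on candidate intervals), which can be delicate when $\mu$ is irregular, as suggested by the partitions appearing in \cref{example:bunch_screen}.
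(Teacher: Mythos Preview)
Your approach is essentially the paper's: the concavity argument for claim \labelcref{cond:opt_lb_concave} is identical, and for claim \labelcref{cond:opt_lb_afffinebound_0} the paper likewise builds a partition $\mathcal{Y}$ depending only on $\mu$, defines $u_{\ubar{u}}$ cell-by-cell as an affine functional of $\ubar{u}$, and invokes \cref{thm:charac_opt} to certify optimality uniformly in $\ubar{u}$.

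The one place you stop short---extracting $\mathcal{Y}^{\star}$ from $\mu$---is exactly what the paper fills in, and it is indeed an ironing construction on the density. Concretely (see \cref{secap:concavification}): set $W(x)=\psi(1)+\int_{x}^{1}\psi(s)\,\diff s$ and $\mathcal{W}(x)=\int_{x}^{1}W(s)\,\diff s$, let $\longbar{\mathcal{W}}=\cav(\mathcal{W})$, define the upper-slope cutoff $\bar{x}^{\star}=\inf\{x:\longbar{\mathcal{W}}'(x)=0\}$, and let $\mathcal{Y}_{5}$ be the collection of maximal intervals in $(0,\bar{x}^{\star})$ on which $\mathcal{W}<\longbar{\mathcal{W}}$. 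Straightforward computations show that this partition satisfies conditions \labelcref{cond:opti2}, \labelcref{cond:opti3}, and \labelcref{cond:opti5} of \cref{thm:charac_opt} for \emph{every} $\ubar{u}\in\mathcal{K}_{\bar{u},S}^{0}$, closing the gap. Your observation that $\mathcal{Y}_{1}=\varnothing$ is correct; the paper also notes $\mathcal{Y}_{4}=\varnothing$ (since $\ubar{u}(0)=\bar{u}(0)$ forces any initial flat segment with slope $\ubar{s}$ to coincide with lower-level saturation there), so only cells of types $\mathcal{Y}_{2}$, $\mathcal{Y}_{3}$, $\mathcal{Y}_{5}$ arise.
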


The formal proof can be found in \cref{proof:thm:opt_lower_bd}. The argument for Condition \labelcref{cond:opt_lb_concave} is straightforward: for every $\ubar{u} \in \mathcal{K}_{ \ubar{s}, \bar{s} }$, there is an extreme point $u^{\star} \in \ex\bigl(\mathcal{U}(\ubar{u}, \bar{u})\bigl)$ which attains $\max_{ u \in \mathcal{U}(\ubar{u}, \bar{u}) } \int_{X} u \, \diff  \mu$. However, a convex combination of two extreme points is in general not an extreme point. This implies that for convex combinations of lower bounds, the convex combination of the respectively optimal extreme points may not achieve the maximum. The intuition behind the connection between affine boundedness of the CFI and of the functional (Conditions \labelcref{cond:opt_lb_afffinebound_0,cond:opt_lb_afffinebound_1}) is somewhat more subtle. Roughly, for affinely bounded CFIs, $\mathcal{Y}_{1}$ is empty for any extreme point. This means that for any $\ubar{u}$ the respective partition $\mathcal{Y}$ of the optimal solution is going to be the same. As a consequence, the optimal solution depends on $\ubar{u}$ in a linear way: the partition $\mathcal{Y}$ only determines the \emph{ironing intervals}\footnote{That is, the intervals where an extreme point satisfies chordal saturation by connecting two points on the lower bound in an affine way.} and the point at which the upper slope constraints starts to be binding. Thus, $\ubar{u} \mapsto u_{\ubar{u}}$ is linear in $\ubar{u}$. In particular, for an affinely bounded CFI, maximization over the lower bound is again going to be a linear problem on a CFI. For the case where affine boundedness fixes the value of $\ubar{u}(0)$ and $\bar{u}'=\bar{s}$, this CFI is given by $\bigl\{u \in \mathcal{K}_{ \ubar{s}, \bar{s} } \; \vert \; \forall x \in X, \, u(x) \geq \bar{u}(0) \bigr\}$. \cite{Dworczak2024} make use of this fact to derive an optimal menu of outside options in a screening problem.

\subsubsection{Optimal Outside Options in the Screening Interval}\label{sec:opt_default_screening}

We now consider the optimal design of the default menu $M_{0}$ itself. We assume that, given any menu, a principal chooses an \labelcref{eqn:IC_screening,eqn:IR_screening} mechanism $(x,t)$ to maximize expected revenue. Additionally, a benevolent planner chooses $M_{0}$ to maximize a generalized welfare objective of the form \labelcref{eqn:welfare_function}, internalizing how the default mechanism affects the choice of the principal's mechanism downstream.

We can write the planner's problem as
\begin{align}\label{eqn:optim_low_bd_screening}
    \max_{u_{0} \in \mathcal{K}_{\mathrm{S}}} \; &\int_{\Theta} u_{u_{0}} \, \mathrm{d}\mu_{\mathrm{W}} \tag{OptLBScreen}\\
    \text{s.t.} \;  & \; u_{u_{0}} \in \underset{u \in \mathcal{U}_{\mathrm{S}}^{u_{0}}}{\argmax} \; \int_{\Theta} u \, \diff  \mu_{\mathrm{R}}, \notag
\end{align}
where $\mu_{\mathrm{W}}$ and $\mu_{\mathrm{R}}$ are defined according to \labelcref{eqn:measure_welf_max,eqn:measure_rev_max}, respectively, and $\mathcal{K}_{\mathrm{S}}\coloneqq \mathcal{K}^{0}_{\Id ,[0,1]}$ and $\mathcal{U}_{\mathrm{S}}^{u_{0}} \coloneqq \mathcal{K}_{\mathrm{S}} \cap \{u\in\mathcal{K} \; \vert \; u_{0} \leq u \}$.

Since $\mathcal{U}_{\mathrm{S}}$ is affinely bounded, this problem is linear in $\ubar{u}$, and we can obtain the following result as a corollary to \cref{thm:opt_lower_bd}.

\begin{corollary}\label{cor:screening_opt_lowerb}
    There always exists an optimal default menu $M^{\star}_{0}=\bigl\{(0,0),(1,p)\bigr\}$ with $p\in\mathbb{R}_{+}$.
\end{corollary}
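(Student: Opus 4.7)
The plan is to recast \labelcref{eqn:optim_low_bd_screening} as a linear program on a CFI and then apply \cref{thm:ext_pt} to pin down the form of its extremal solutions. First, I would verify Condition \labelcref{cond:opt_lb_afffinebound_0} of \cref{thm:opt_lower_bd}: the upper boundary $\bar{u}=\Id_{\Theta}$ is affine with $\bar{u}'=1=\bar{s}$, and every $u_{0}\in\mathcal{K}_{\mathrm{S}}$ satisfies $u_{0}(0)=0=\bar{u}(0)$, so $\mathcal{U}_{\mathrm{S}}^{u_{0}}$ is $\bar{s}$-affinely bounded. Applying \cref{thm:opt_lower_bd} then gives that the value functional $V\colon u_{0}\mapsto\int_{\Theta} u_{u_{0}}\,\diff \mu_{\mathrm{W}}$ is linear on $\mathcal{K}_{\mathrm{S}}$.

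Next, I would observe that $\mathcal{K}_{\mathrm{S}}$ is itself a CFI: it coincides with the CFI defined by upper boundary $\Id_{\Theta}$, lower boundary $\ubar{u}\equiv 0$, and slope set $[0,1]$, since the constraint $u_{0}(0)=0$ follows automatically from $\ubar{u}(0)=\bar{u}(0)=0$. By \cref{prop:representation}, $\mathcal{K}_{\mathrm{S}}$ is compact and convex, and every element admits a Choquet representation as a barycenter of extreme points. Together with linearity of $V$, this ensures that the supremum in \labelcref{eqn:optim_low_bd_screening} is attained at some $u_{0}^{\star}\in\ex(\mathcal{K}_{\mathrm{S}})$.

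The decisive step is to use \cref{thm:ext_pt} to classify the extreme points of $\mathcal{K}_{\mathrm{S}}$. Because $\bar{u}=\Id_{\Theta}$ is affine with slope exactly $\bar{s}=1$, the admissible shapes for an affine piece $X_{n}=[a_{n},b_{n}]$ on whose interior $u_{0}^{\star}$ lies strictly between $0$ and $\Id_{\Theta}$ degenerate: tangential saturation \labelcref{cond:tangent_bounds} would force the segment to coincide with $\bar{u}$; chordal saturation \labelcref{cond:chord_satur} forces both endpoints onto $\ubar{u}=0$ so that the chord is identically zero; lower-slope saturation at the left endpoint collapses the segment onto $\ubar{u}$; and the boundary configurations \labelcref{cond:boundary_cases} reduce to one of the previous cases. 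The only surviving option is upper-slope saturation (\labelcref{cond:matching_grad} with $b_{n}=1$) on an interval $[p,1]$ anchored at $\ubar{u}(p)=0$, yielding the affine piece $\theta\mapsto \theta-p$. Combined with level saturation on the complement, every $u_{0}^{\star}\in \ex(\mathcal{K}_{\mathrm{S}})$ therefore takes the form $u_{0}^{\star}(\theta)=\max\{0,\theta-p\}$ for some $p\in[0,1]$, which is precisely the indirect utility induced by the menu $M_{0}^{\star}=\bigl\{(0,0),(1,p)\bigr\}$.

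The only delicate part is the case-by-case enumeration of admissible affine pieces: because $\bar{u}$ is itself affine and meets the upper slope bound, several of the conditions in \cref{thm:ext_pt} become degenerate and must be ruled out one by one. Once this structural reduction is complete, the conclusion follows immediately.
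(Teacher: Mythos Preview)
Your proposal is correct and follows essentially the same approach as the paper: invoke \cref{thm:opt_lower_bd}\labelcref{cond:opt_lb_afffinebound_0} to reduce \labelcref{eqn:optim_low_bd_screening} to a linear program on the CFI $\mathcal{K}_{\mathrm{S}}$, then apply \cref{thm:ext_pt} to conclude that its extreme points are exactly the one-kink functions $\theta\mapsto\max\{0,\theta-p\}$. Your case enumeration of the admissible affine pieces is more detailed than the paper's one-line description, but the logic is identical.
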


\cref{cor:screening_opt_lowerb} states that the optimal menu of outside options induces an outside indirect utility $u_{0}$ that is an increasing convex and piecewise affine function with a single kink. The intuition behind this is simple: an optimal menu of outside options is an extreme point of $\mathcal{K}_{\mathrm{S}}$ by linearity. Following \cref{thm:ext_pt}, extreme points of $\mathcal{K}_{\mathrm{S}}$ are exactly those functions that coincide with the constant function equal to $\ubar{\theta}$ for some interval $[\ubar{\theta}, \theta^{\star}]$ and then have slope equal to $1$.\footnote{Those correspond to jump functions in the allocation space.}

Economically, the designer's optimal menu takes the form of an \emph{``option-to-own''} the good at a fixed price, corroborating the main result of \citet{Dworczak2024}. As illustrated in \cref{fig:opt_default_menu}, it follows from \cref{thm:ext_pt} that the menu chosen by the designer induces the monopolist either to offer the good at a posted price below the menu price (\cref{fig:posted-price}), or to introduce a lottery at an even lower price (\cref{fig:lottery}). Which of these outcomes arises depends on the specific assumptions on $\pi$, $\alpha$, $c$, and $f$.
\begin{figure}[h!]
    \begin{subfigure}[b]{0.496\linewidth}
        \centering
        \includegraphics{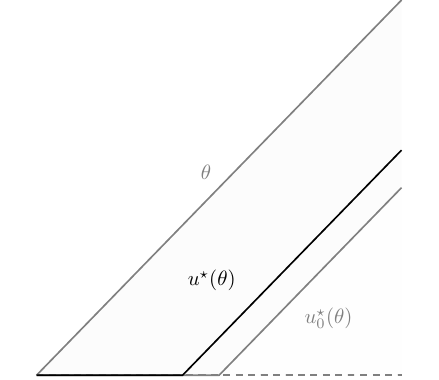}
        \subcaption{Posted price.}
        \label{fig:posted-price}
    \end{subfigure}
    \begin{subfigure}[b]{0.496\linewidth}
        \centering
        \includegraphics{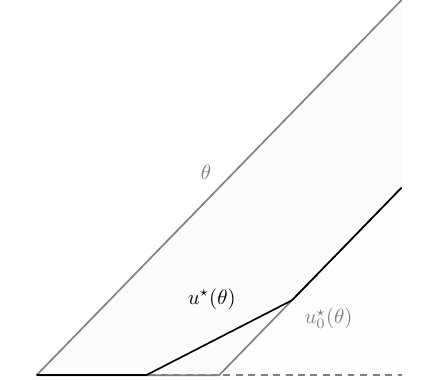}
        \subcaption{Additional lottery.}
        \label{fig:lottery}
    \end{subfigure}
\caption{The designer's optimal menu induces a lower level boundary $u_{0}^{\star}$ with at most one kink. The only possible extreme points (excluding $\Id_{\Theta}$) in the corresponding CFI are depicted in black.}
\label{fig:opt_default_menu}
\end{figure}

\section{Applications}\label{sec:applications}

We begin by showing in \cref{sec:majorization} that we can recover the characterization of extreme points of \citet{Kleiner2021} using our \cref{thm:ext_pt} and extend it to two-sided (weak) majorization constraints. We then apply our results to derive economic implications in three settings: the delegation problem with type-dependent participation constraints (\cref{sec:delegation}), large contest design with allocative constraints (\cref{sec:contest}), and mean-based Bayesian persuasion with informativeness constraints (\cref{sec:persuasion}).

\subsection{Majorization Intervals} \label{sec:majorization}

\subsubsection{Definitions}

We begin by recalling some majorization theoretic concepts from \cite{Kleiner2021}.\footnote{The mathematical theory of majorization originated with the foundational work of \cite{Hardy1929,Hardy1934}. For a comprehensive review of this literature, we refer to \cite{Marshall2011}.} Consider the space $\mathcal{F}=\bigl\{f\in L^1(X) \; \vert \; \text{$f$ non-decreasing}\bigr\}$. For any $f\in \mathcal{F}$, we denote the \emph{mean value} of $f$ on $X$ as $m_{f}\coloneqq \int_{0}^{1} f(x)\, \diff x$. For $f, g\in\mathcal{F}$, $f$ is said to \emph{weakly majorize} $g$, denoted $f\succsim_{\mathrm{w}} g$, if
\begin{equation}\label{eqn:weak-maj-def}
    \int_{0}^{x} f(s) \, \mathrm{d}s -m_f \leq \int_{0}^{x} g(s) \, \mathrm{d}s -m_g,
\end{equation}
for every $x\in X$. When $m_{g}=m_{f}$ holds in addition to \labelcref{eqn:weak-maj-def}, $f$ is said to \emph{majorize} $g$, denoted $f\succsim g$.

\subsubsection{Two-sided majorization constraints as CFIs}

We establish in \cref{lem:two-sided-maj,lem:two-sided-weak-maj} below that two-sided majorization and weak majorization comparisons between monotone functions can always be represented through CFIs. To do so, we simply let
\begin{equation}\label{eqn:integral_maj}
    I_{\varphi}(x)= \int_{0}^{x} \varphi(s) \, \diff s - m_{\varphi},
\end{equation}
for any $\varphi\in\mathcal{F}$ and $x\in X$.

\begin{lemma}[Majorization intervals]\label{lem:two-sided-maj}
    For every $f,g\in\mathcal{F}$ such that $f\succsim g$, the set
    \begin{equation}\label{eqn:maj_CFI}
        \mathcal{I}_{f,g} \coloneqq \Bigl\{ I\in \mathcal{K} \; \big\vert \; I_{f} \leq I \leq I_{g}, \, \partial I(X) \subseteq \bigl[f(0),f(1)\bigr] \Bigr\}, \tag{$\text{Maj}$}
    \end{equation}
    is a CFI. Furthermore, $\varphi\in\mathcal{F}$ satisfies $f\succsim \varphi \succsim g$ if and only if there exists $I\in \mathcal{I}_{f,g}$ such that $I=I_\varphi$. 
\end{lemma}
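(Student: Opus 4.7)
The plan is to verify \cref{def:CFI} using the natural identification $\ubar{u} = I_f$, $\bar{u} = I_g$, $X=[0,1]$, and $S=[f(0),f(1)]$, and then to establish a bijection between elements of $\mathcal{I}_{f,g}$ and those $\varphi\in\mathcal{F}$ satisfying $f\succsim\varphi\succsim g$, given by the mutually inverse maps $\varphi\mapsto I_\varphi$ and $I\mapsto I'$.

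For the first claim, $I_f,I_g\in\mathcal{K}$ follows immediately from $f,g$ being non-decreasing, since the cumulative integral of a non-decreasing function is convex; and $I_f\le I_g$ is just a rewriting of $f\succsim g$ through \cref{eqn:integral_maj}. The slope inclusion is the only point that requires work. Since $f\succsim g$ forces $m_f=m_g$, one computes $I_f(0)=I_g(0)=-m_f$ and $I_f(1)=I_g(1)=0$, so $I_f$ and $I_g$ agree at both endpoints. Combining this with $I_f\le I_g$ and passing to one-sided difference quotients at $x=0$ and $x=1$ yields $f(0)\le g(0)$ and $g(1)\le f(1)$. Because $f$ and $g$ are non-decreasing, this forces $f(x),g(x)\in[f(0),f(1)]$ for every $x\in X$, hence $\partial I_f(X),\partial I_g(X)\subseteq[f(0),f(1)]$.

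For the second claim, direction $(\Rightarrow)$ is direct: if $f\succsim\varphi\succsim g$, then $I_\varphi\in\mathcal{K}$ (by convexity from monotonicity of $\varphi$), the sandwich $I_f\le I_\varphi\le I_g$ translates exactly the majorization inequalities, and the endpoint argument above applied to $f\succsim\varphi$ and $\varphi\succsim g$ gives $\varphi(X)\subseteq[\varphi(0),\varphi(1)]\subseteq[f(0),f(1)]$; thus $I_\varphi\in\mathcal{I}_{f,g}$. Direction $(\Leftarrow)$: given $I\in\mathcal{I}_{f,g}$, take $\varphi$ to be a non-decreasing representative of the a.e.~derivative of $I$ (for instance $\varphi=\partial_{+}I$ on $[0,1)$ and $\varphi(1)=\partial_{-}I(1)$). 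By the fundamental theorem for convex functions, $I(x)=I(0)+\int_{0}^{x}\varphi(s)\,\diff s$. The sandwich $I_f\le I\le I_g$ evaluated at the endpoints forces $I(0)=-m_f$ and $I(1)=0$, which pins down $m_\varphi=m_f=m_g$, so $I=I_\varphi$. The remaining majorization inequalities $f\succsim\varphi\succsim g$ then read verbatim from $I_f\le I_\varphi\le I_g$ together with the equality of means.

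The only genuinely nontrivial step is the slope deduction in the first claim: everywhere else, the correspondence $\varphi\leftrightarrow I_\varphi$ transforms majorization directly into the pointwise level inequalities of the CFI definition, but the slope bounds $[f(0),f(1)]$ do not appear explicitly in the definition of $\succsim$ and must be extracted from the endpoint matching of $I_f$ and $I_g$ combined with $I_f\le I_g$.
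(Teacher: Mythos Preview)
Your proposal is correct and follows essentially the same approach as the paper's own proof: both establish the slope inclusion $\partial I_g(X)\subseteq[f(0),f(1)]$ by exploiting the endpoint coincidence $I_f(0)=I_g(0)$, $I_f(1)=I_g(1)$ together with the sandwich $I_f\le I_g$, and both treat the two directions of the biconditional via the natural correspondence $\varphi\leftrightarrow I_\varphi$. The only cosmetic difference is that the paper phrases the endpoint-limit step via Lebesgue's differentiation theorem applied to the averaged integrals $\tfrac{1}{x}\int_0^x f$, whereas you speak directly of one-sided difference quotients of $I_f,I_g$---these are the same computation. One small imprecision: the difference-quotient argument at $x=0$ yields $f(0^+)\le g(0^+)$ rather than $f(0)\le g(0)$ (the values at the single point $0$ need not be ordered), but since $f(0)\le f(0^+)$ and $\partial I_g(X)\subseteq[g(0^+),g(1^-)]$, the needed inclusion $\partial I_g(X)\subseteq[f(0),f(1)]$ still follows; the paper makes this distinction explicit.
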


A proof of \cref{lem:two-sided-maj} is provided in \cref{secap:two_sided_maj_proof}. For weak majorization intervals, we require one additional (mild) technical condition: the functions being compared must share a uniform lower bound on their range. This condition is needed because weak majorization does not impose equality of means.\footnote{It is typically satisfied when comparing cumulative distribution functions, or quantile functions with a fixed range.} We thus consider the space $\mathcal{F}_{\ubar{s}} = \bigl\{f\in L^1(X) \; \vert \; \text{$f$ non-decreasing}, f(0)\geq \ubar{s}\bigr\}$ and obtain the following result.

\begin{lemma}[Weak majorization intervals]\label{lem:two-sided-weak-maj}
    For every $f,g\in\mathcal{F}_{\ubar{s}}$ such that $f\succsim_w g$, the set
    \begin{equation}\label{eqn:weak_maj_CFI}
        \mathcal{I}^{\mathrm{w}}_{f,g} = \Bigl\{ I\in \mathcal{K} \; \big\vert \; I_{f} \leq I \leq I_{g}, \, \partial I(X) \subseteq \bigl[\ubar{s},f(1)\bigr] \Bigr\} \tag{$\text{wMaj}$}
    \end{equation}
    is a CFI. Furthermore, $\varphi\in\mathcal{F}_{\ubar{s}}$ satisfies $f\succsim_{\mathrm{w}} \varphi \succsim_{\mathrm{w}} g$ if and only if there exists $I\in \mathcal{I}^{\mathrm{w}}_{f,g}$ such that $I=I_\varphi$.
\end{lemma}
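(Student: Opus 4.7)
The argument closely mirrors the proof of \cref{lem:two-sided-maj} (the full-majorization case) and splits into two tasks: (a) verifying that $\mathcal{I}^{\mathrm{w}}_{f,g}$ satisfies \cref{def:CFI}, and (b) showing that a given $\varphi\in\mathcal{F}_{\ubar{s}}$ satisfies $f\succsim_{\mathrm{w}}\varphi\succsim_{\mathrm{w}}g$ if and only if $I_\varphi\in\mathcal{I}^{\mathrm{w}}_{f,g}$. The driving observation is that $\varphi\mapsto I_\varphi$ sends non-decreasing $L^{1}$ functions (modulo null sets) to continuous convex functions on $[0,1]$ vanishing at $x=1$, and converts the integral inequality \labelcref{eqn:weak-maj-def} defining weak majorization into the pointwise inequality $I_f\leq I_\varphi$.

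\textbf{Verifying the CFI structure.} I would set $X=[0,1]$, $S=[\ubar{s},f(1)]$, $\ubar{u}=I_f$, and $\bar{u}=I_g$. Both $I_f$ and $I_g$ are continuous (as indefinite integrals of $L^{1}$ functions) and convex (their weak derivatives $f$ and $g$ are non-decreasing), so both lie in $\mathcal{K}$. The pointwise inequality $I_f\leq I_g$ is exactly \labelcref{eqn:weak-maj-def}. The lower slope bounds come from $f(0),g(0)\geq\ubar{s}$ together with monotonicity. The upper bound on $\partial I_f(X)$ is immediate from $f\leq f(1)$; the only delicate piece is $\partial I_g(X)\subseteq(-\infty,f(1)]$, which reduces to showing $g(1^{-})\leq f(1^{-})$.

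\textbf{The boundary estimate and the bijection.} Since $I_f(1)=\int_0^1 f-m_f=0=I_g(1)$ by construction, the non-negative difference $I_g-I_f$ attains its minimum on $[0,1]$ at $x=1$; its left derivative at $1$ is therefore non-positive, which reads $g(1^{-})-f(1^{-})\leq 0$. This completes the CFI verification. For the bijection, the forward direction takes $\varphi$ with $f\succsim_{\mathrm{w}}\varphi\succsim_{\mathrm{w}}g$, applies the same argument to get $I_\varphi\in\mathcal{K}$, derives $I_f\leq I_\varphi\leq I_g$ from \labelcref{eqn:weak-maj-def}, and combines $\varphi(0)\geq\ubar{s}$ with the boundary estimate applied to the pair $(f,\varphi)$ to obtain $\varphi(1^{-})\leq f(1)$. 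The backward direction is immediate: $I_\varphi\in\mathcal{I}^{\mathrm{w}}_{f,g}$ unpacks to $I_f\leq I_\varphi\leq I_g$, which is precisely $f\succsim_{\mathrm{w}}\varphi\succsim_{\mathrm{w}}g$.

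\textbf{Main obstacle.} The delicate step is the right-endpoint slope estimate. Unlike full majorization, weak majorization does not equate means, so one cannot simply invoke equality of integrals over $[0,1]$ as in \cref{lem:two-sided-maj}. The key observation is that the normalization $I_\psi(x)=\int_0^x\psi-m_\psi$ forces $I_\psi(1)=0$ for \emph{every} admissible $\psi$, so the right endpoint of $[0,1]$ is a canonical common value of all the boundary functions. This is what lets the pointwise inequality $I_g\geq I_f$ constrain the right-endpoint slopes of the two boundaries. Everything else is bookkeeping that translates between the CFI language of \cref{def:CFI} and the weak-majorization inequality.
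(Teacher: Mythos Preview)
Your proposal is correct and follows precisely the route the paper indicates: the paper leaves \cref{lem:two-sided-weak-maj} to the reader, noting that it ``follows the same logic'' as \cref{lem:two-sided-maj}, and your argument is exactly the natural adaptation of that proof. Your right-endpoint slope estimate (observing that $I_g-I_f\geq 0$ with equality at $x=1$ forces $g(1^{-})\leq f(1^{-})$) is a clean rephrasing of the paper's averaging-and-Lebesgue-differentiation step $\tfrac{1}{1-x}\int_x^1 f \geq \tfrac{1}{1-x}\int_x^1 g$ as $x\to 1^{-}$; both exploit the common value $I_f(1)=I_g(1)=0$, which is indeed the key observation you correctly isolate as the main obstacle.
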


The proof of \cref{lem:two-sided-weak-maj} follows the same logic as that of \cref{lem:two-sided-maj} and is therefore left to the reader.

\begin{remark}[One-sided majorization intervals]\label{rmk:one-sided-maj}
    The majorization sets in \cite{Kleiner2021} can be recovered as special cases of \labelcref{eqn:maj_CFI,eqn:weak_maj_CFI}. If $f(x)=\Ind_{x\geq \xi_{g}}$ for all $x\in X$, with $\xi_{g}\coloneqq \frac{g(1)-m_{g}}{g(1)-g(0)}$, then $\mathcal{I}_{f,g}$ corresponds to the set of mean-preserving contractions of $g$. Conversely, when $g(x)=m_f$ for all $x\in X$, then $\mathcal{I}_{f,g}$ corresponds to the set of mean-preserving spreads of $f$. Weak majorization intervals can be obtained as special cases of $\mathcal{I}^{\mathrm{w}}_{f,g}$ by taking $f(x)=\ubar{s}$ or $g(x)=f(1)$ for all $x\in X$. 
    
    We illustrate the one-sided majorization intervals in \cref{fig:maj_intervals} using the case where $F$ is some cumulative distribution function with support $[0,1]$.
    \begin{figure}[h!]
    \centering
        \includegraphics{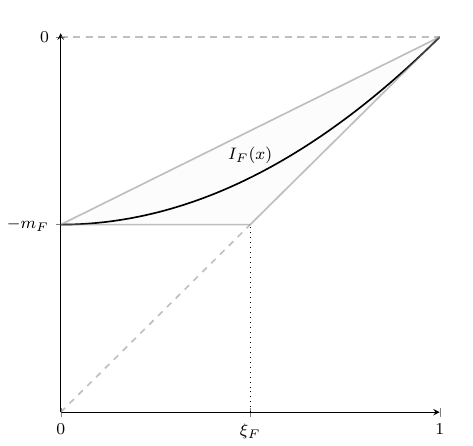}
        \caption{Majorization intervals. We set $F(x)=x$ for all $x\in [0,1]$, so $m_{F}=\xi_{F}=\tfrac{1}{2}$ and $I_{F}(x)=\frac{1}{2}x^{2}-m_{F}$. All convex functions in the shaded area correspond to integrals of functions that share the same mean as $F$. Those lying above $I_{F}$ are mean-preserving spreads of $F$, and those below are mean-preserving contractions of $F$. Convex functions that extend partially outside the shaded area correspond to integrals of functions that have a different mean value than $F$ (given by their value at $x=0$). Those lying below $I_{F}$ and above the increasing dashed line are ``mean-increasing'' contractions of $F$ (i.e., weakly majorize $F$) while those between $I_{F}$ and $0$ are ``mean-decreasing'' spreads of $F$ (i.e., are weakly majorized by $F$).}
        \label{fig:maj_intervals}
    \end{figure}

\end{remark}

\begin{remark}\label{rmk:isomorphism}
    \Cref{lem:two-sided-maj,lem:two-sided-weak-maj} show that (weak) majorization sets can be represented through CFIs. In fact, there exists a stronger one-to-one correspondence between these mathematical structures. We show in \cref{secap:bijection_ext_points} that $\varphi \mapsto I_{\varphi}$ is a continuous linear map that has a continuous linear inverse. Therefore, it preserves \emph{extreme points}. In other words, for every extreme point of $I\in \mathcal{I}_{f,g}$ (resp.~$\mathcal{I}^{\mathrm{w}}_{f,g}$), any selection $\varphi$ of $\partial I$ is an extreme point, and vice-versa.
\end{remark}

\subsubsection{Extreme points of majorization intervals}
 
\begin{corollary}\label{cor:ext_pt_maj}
    Let $f,g\in\mathcal{F}$ with $g$ continuous\footnote{We impose continuity of $g$ so that \cref{assu:diff} holds. \cite{Kleiner2021} impose the same condition (see their Theorem 2).} on $X$ and $f\succsim g$, and let $\mathcal{I}_{f,g}$ be defined according to \labelcref{eqn:maj_CFI}. Then, $I\in \ex(\mathcal{I}_{f,g})$ if and only if there exists a (possibly empty) countable collection $\mathcal{X}=\{X_{n}\}_{n\in \mathbb{N}}$ of non-degenerate and disjoint intervals $X_{n} = [a_n, b_n] \subseteq X$ such that:
    \begin{enumerate}
        \item For all $x\notin \bigcup_{n\in\mathbb{N}} X_{n}$, $I(x)\in \bigl\{I_{f}(x), I_{g}(x)\bigr\}$.
        \item For all $n\in \mathbb{N}$, $I\vert_{X_{n}}$ is affine, $I_{f}< I\vert_{\Int(X_{n})} < I_{g}$, and at least one of the following conditions holds:
        \begin{enumerate}[(a)]
            \item There exists $y\in\{a_n, b_n\}$ such that, for all $x\in X_{n}$, $I(x)=I_{g}(y)+ I_{g}'(y)(x-y)$. \label{cond:sat_tan_maj}
            \item For each $x\in\{a_n, b_n\}$, either there exists $m\in\mathbb{N}$ such that $b_m=a_n$ or $b_n=a_m$ and $I\vert_{X_{m}}$ satisfies condition \labelcref{cond:sat_tan_maj}, or $I(x)=I_{f}(x)$.
        \end{enumerate}
    \end{enumerate}
\end{corollary}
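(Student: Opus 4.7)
The plan is to apply \cref{thm:ext_pt} directly to the CFI $\mathcal{I}_{f,g}$ and show that, in this specialized setting, the four saturation conditions in the theorem collapse into the two stated in the corollary. \cref{lem:two-sided-maj} guarantees that $\mathcal{I}_{f,g}$ is a CFI with lower boundary $I_f$, upper boundary $I_g$, and slope interval $[\ubar{s}, \bar{s}] = [f(0), f(1)]$; continuity of $g$ ensures that $I_g$ is continuously differentiable, so \cref{assu:diff} holds.

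The key preliminary observation is that $f \succsim g$ entails $m_f = m_g$, so $I_f(0) = I_g(0) = -m_f$ and $I_f(1) = I_g(1) = 0$. Hence the two level boundaries coincide at the endpoints of $X$, and every $I \in \mathcal{I}_{f,g}$ is automatically pinned at $I(0) = -m_f$ and $I(1) = 0$. This pinning drives the collapse of the cases below.

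The main step is to rule out \cref{cond:matching_grad} of \cref{thm:ext_pt} (slope saturation), together with the sub-case of \cref{cond:tangent_bounds} in which the slope is $\ubar{s}$ at $y = 0$ or $\bar{s}$ at $y = 1$. Suppose some affine segment $I\vert_{X_n}$ of an extreme point has slope $\ubar{s} = f(0)$. Monotonicity of $\partial I$ and maximality force $X_n = [0, b_n]$, and since $I(0) = -m_f$ the segment reads $I(x) = -m_f + f(0)\,x$. The feasibility constraint $I \geq I_f$, combined with the monotonicity of $f$, yields
\begin{equation*}
    f(0)\,x \;=\; I(x) - I(0) \;\geq\; I_f(x) - I_f(0) \;=\; \int_0^x f(s) \, \diff s \;\geq\; f(0)\,x,
\end{equation*}
for $x \in [0, b_n]$. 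Equality throughout forces $f \equiv f(0)$ on $[0, b_n]$, and therefore $I \equiv I_f$ on $X_n$. But then $X_n$ violates the strict inequality $\ubar{u} < I\vert_{\Int(X_n)}$ required for membership in the collection $\mathcal{X}$ of \cref{thm:ext_pt}, a contradiction. A symmetric argument at $x = 1$ excludes slope $\bar{s} = f(1)$. Consequently, the slope-saturation case disappears, and the tangential-saturation option at $y \in \{0, 1\}$ reduces to slope $I_g'(0) = g(0)$ or $I_g'(1) = g(1)$, matching the corollary's condition 2(a).

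It remains to note that \cref{cond:boundary_cases} of \cref{thm:ext_pt} is subsumed by condition 2(b) of the corollary: at $x \in \{0, 1\}$ the requirement $I(x) \in \{\ubar{u}(x), \bar{u}(x)\}$ is automatic since $I_f(x) = I_g(x)$, so it is equivalent to $I(x) = I_f(x)$; the rest of \cref{cond:boundary_cases} merely restates \cref{cond:chord_satur}. Combining these reductions with conditions \cref{cond:tangent_bounds,cond:chord_satur} of \cref{thm:ext_pt} yields the characterization claimed in the corollary. The main obstacle is the slope-saturation reduction: one must recognize that the pinning of $I$ at the endpoints, together with the monotonicity of $f$, forces any putatively slope-saturated segment to coincide with $I_f$, and therefore to fall under the level-saturation clause rather than the affine-segment collection $\mathcal{X}$.
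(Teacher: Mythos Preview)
Your proposal is correct and follows the intended route: the paper presents the corollary as a direct specialization of \cref{thm:ext_pt} without a separate proof, and your argument supplies precisely the missing details---namely, that the endpoint pinning $I_f(0)=I_g(0)$ and $I_f(1)=I_g(1)$ forced by $m_f=m_g$ eliminates the slope-saturation case \labelcref{cond:matching_grad} and the $\ubar{s}$/$\bar{s}$ sub-cases of \labelcref{cond:tangent_bounds}, while \labelcref{cond:boundary_cases} collapses into \labelcref{cond:chord_satur}. The key reduction (any segment with slope $f(0)$ or $f(1)$ must coincide with $I_f$ by monotonicity of $f$, contradicting the strict inequality on $\Int(X_n)$) is exactly the observation needed.
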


\cref{cor:ext_pt_maj} recovers Theorems 1 and 2 of \cite{Kleiner2021} as special cases and extends their results to the case of functions that are simultaneously mean-preserving spreads of some function $f$ and mean-preserving contractions of some function $g$.\footnote{Following \cref{rmk:isomorphism}, one can obtain the analog of \cref{cor:ext_pt_maj} directly expressed in the space $\Phi_{f,g}$ by taking the (right-)derivatives of $I$, $I_{f}$, and $I_{g}$ in the statement.}

The characterization of extreme points for weak majorization intervals $\mathcal{I}^{\mathrm{w}}_{f,g}$ is analogous but has one notable difference with \cref{cor:ext_pt_maj}. Since $I_{f}(0)$ and $I_{g}(0)$ need not coincide, extreme points of $\mathcal{I}^{\mathrm{w}}_{f,g}$ may also exhibit slope saturation on an interval containing $x=0$, where the derivative equals $\ubar{s}$. Since the formal statement would closely mirror \cref{cor:ext_pt_maj} with this additional case, we omit the formal statement for conciseness and instead illustrate it through the example of large contest problems in \cref{sec:contest}.

\subsection{Delegation with Type-Dependent Participation Constraints} \label{sec:delegation}

Consider the classical question posed by \citet{Holmstrom1977,Holmstrom1984}: how much discretion should be granted to a decision-maker who is better informed but potentially biased? We extend this \emph{optimal delegation} problem by studying how \emph{type-dependent} participation constraints alter the principal's design problem. Such constrains may arise, for example, from from veto rights in bargaining situations \citep{Kartik2021}, or fixed costs in monopoly regulation \citep{Amador2022}. We adopt a general perspective on the problem by considering \emph{stochastic} mechanisms for the principal, and \emph{rich} menus of outside options for the agent.

\subsubsection{Model}

\paragraph{Primitives}

A state of the world $\theta \in \Theta \coloneqq [\ubar{\theta}, \bar{\theta}] \subset \mathbb{R}$ is drawn according to a strictly increasing and absolutely continuous cumulative distribution function $F$, which admits a strictly positive and differentiable density $f$ on $[\ubar{\theta}, \bar{\theta}]$. An agent privately observes the state $\theta$ and makes a decision $a$ that affects herself and an uninformed principal. The set of feasible actions is $A \coloneqq \mathbb{R}$. Following \cite{Amador2013} and \cite{Krahmer2016}, we consider the following payoff specifications for the agent and the principal, respectively:
\begin{align*}
    u_{\mathrm{A}}(a,\theta) &= \theta a + b(a), \\[5pt]
    u_{\mathrm{P}}(a,\theta) &= \bigl(\theta + \beta(\theta)\bigr)a + b(a),
\end{align*}
where $\beta \colon \Theta \to \mathbb{R}$ is continuously differentiable and $b \colon A \to \mathbb{R}$ is strictly concave and differentiable.\footnote{This specification nests the standard quadratic loss formulation. Take $b(a)=-a^2/2$ . Add the (decision-irrelevant) terms $-(\theta+\beta(\theta))^2/2$ to $u_{\mathrm{P}}$ and $-\theta^2/2$ to $u_{\mathrm{A}}$. Then multiply these utilities by $2$.}

For each $\theta \in \Theta$, define
\begin{equation}
    \bar{u}(\theta) \coloneqq \max_{a \in A} \; \theta a + b(a),
\end{equation}
as agent $\theta$'s \emph{full discretion} payoff. We assume that $\lim_{a \to -\infty} b'(a) > - \ubar{\theta}$ and $\lim_{a \to +\infty} b'(a) < - \bar{\theta}$. Under these conditions, the agent's \emph{favorite action} in state $\theta$, denoted $\bar{a}(\theta)$, is uniquely determined by the first-order condition $b'\bigl(\bar{a}(\theta)\bigr) = -\theta$ for all $\theta \in \Theta$, and the mapping $\theta \in \Theta\mapsto \bar{a}(\theta)$ is continuous and strictly increasing. It follows that $\theta \in \Theta \mapsto \bar{u}(\theta)$ is differentiable and strictly convex.

\paragraph{Delegation mechanisms}

The principal may restrict the set of actions available to the agent by committing to a (potentially stochastic) \emph{direct delegation mechanism} $\Gamma \colon \Theta \to \Delta(A)$, which prescribes a lottery over actions $\Gamma(\,\cdot \mid \theta)$ for each possible state $\theta$ reported by the agent.\footnote{With the exception of \cite{Kovac2009} and \cite{Kleiner2022}, stochastic mechanisms are usually ruled-out in the optimal delegation literature.} For any mechanism $\Gamma$, an agent of type $\theta$ who reports $\theta'$ derives expected payoff $\theta a_{\Gamma}(\theta') + b_{\Gamma}(\theta')$ where, for each $\theta \in \Theta$,
\begin{equation*}
    a_{\Gamma}(\theta) \coloneqq \int_{A} a \, \diff \Gamma(a \, \vert \, \theta),
\end{equation*}
denotes the \emph{expected action} induced by $\Gamma$, and
\begin{equation*}
    b_{\Gamma}(\theta) \coloneqq \int_{A} b(a) \, \diff \Gamma(a \, \vert \, \theta).
\end{equation*}

In order to make things well-defined, we restrict attention to mechanisms $\Gamma$ such that $a_{\Gamma}(\theta) < + \infty$ for all $\theta\in \Theta$.

\paragraph{Incentive constraints}

A mechanism $\Gamma$ satisfies \emph{incentive-compatibility} if all the agents have an incentive to report their types truthfully under $\Gamma$. Formally,
\begin{equation}\label{eqn:IC_delegation}
    \forall \theta,\theta'\in\Theta, \quad \theta a_{\Gamma}(\theta) + b_{\Gamma}(\theta) \geq \theta a_{\Gamma}(\theta') + b_{\Gamma}(\theta'). \tag{IC-D}
\end{equation}

We assume that the agents can always flexibly choose their preferred option from a (compact) \emph{default menu} of outside options $M_{0}$ such that
\begin{equation*}
    \bigl\{\delta_{\bar{a}(\ubar{\theta})},\delta_{\bar{a}(\bar{\theta})}\bigr\} \subseteq M_{0} \subseteq \Delta(A),
\end{equation*}
rather than participating in the mechanism proposed by the principal. A mechanism $\Gamma$ satisfies \emph{individual-rationality} if it guarantees each type at least the expected payoff from its favorite option in $M_{0}$. Formally,
\begin{equation}\label{eqn:IR_delegation}
    \forall\theta\in\Theta, \quad \theta a_{\Gamma}(\theta) + b_{\Gamma}(\theta) \geq  u_{0}(\theta) \coloneqq \max_{\alpha \in M_{0}} \int_{A} \bigl(\theta a + b(a) \bigr) \diff\alpha(a). \tag{IR-D}
\end{equation}

\begin{remark}
    The inclusion of $\bigl\{\delta_{\bar{a}(\ubar{\theta})},\delta_{\bar{a}(\bar{\theta})}\bigr\}$ in $M_0$ is without loss of generality in a sense made precise in \cite{Kolotilin2025}: if indirect utilities are defined on an interval that is large enough, these two action will never be chosen. Moreover, since $u_0$ is a supremum over affine functions, $u_0$ is convex and tangent to $\bar{u}$ at $\ubar{\theta}$ and $\bar{\theta}$.
\end{remark}

\paragraph{Indirect utility functions}
 
The agents' \emph{indirect utility function} induced by some mechanism $\Gamma$ is defined as
\begin{equation*}
    \forall \theta\in\Theta, \quad u(\theta) = \max_{\theta'\in \Theta} \; \theta a_{\Gamma}(\theta') + b_{\Gamma}(\theta').
\end{equation*}

An indirect utility utility function $u\colon\Theta\to \mathbb{R}$ is \emph{implementable} if there exists mechanism $\Gamma$ which satisfies \labelcref{eqn:IC_delegation,eqn:IR_delegation} such that $u(\theta) = \theta a_{\Gamma}(\theta) + b_{\Gamma}(\theta)$ for all $\theta\in \Theta$.

\subsubsection{Feasible, Extreme and Optimal Delegation Mechanisms}

\paragraph{Reformulating the principal's problem}

The principal's problem consists in maximizing 
\begin{equation*}
    \int_{\ubar{\theta}}^{\bar{\theta}} \Bigl\{\bigl(\theta + \beta(\theta)\bigr) a_{\Gamma}(\theta)+ b_{\Gamma}(\theta)\Bigr\} \, \diff F(\theta),
\end{equation*}
over mechanisms $\Gamma$ that satisfy \labelcref{eqn:IC_delegation,eqn:IR_delegation}.
We follow the approach of \cite{Krahmer2016} and \cite{Kleiner2022} by characterizing delegation mechanisms which satisfy \labelcref{eqn:IC_screening,eqn:IR_screening} by their induced indirect utility functions. Specifically, building on \cite{Kleiner2022}, we show that for any menu $M_{0}$, the set of implementable indirect utility functions is a CFI $\mathcal{U}_{\mathrm{D}}$. We call it the \emph{delegation interval}. Furthermore, the principal's problem can be written as a linear program on $\mathcal{U}_{\mathrm{D}}$. See \cref{secap:delegation_cfi_lp_proof} for a short proof.

\begin{lemma}\label{lem:delegation_cfi_lp}
    An indirect utility function $u$ is implementable if and only if $u\in \mathcal{U}_{\mathrm{D}}$, where
    \begin{equation*}
        \mathcal{U}_{\mathrm{D}} \coloneqq \Bigl\{u\in \mathcal{K}(\Theta) \; \big\vert \; u_{0} \leq u \leq \bar{u}, \, \partial u(\Theta) \subseteq \bigl[\bar{a}(\ubar{\theta}), \bar{a}(\bar{\theta})\bigr]\Bigr\}.
    \end{equation*}
    
    Moreover, the principal's problem can be written as
    \begin{equation}\label{eqn:delegation_LP}
        \max_{u\in\mathcal{U}_{\mathrm{D}}} \int_{\Theta} u \, \mathrm{d}\mu_{\mathrm{D}},\tag{Del}
    \end{equation}
    where, for all $B\in\mathcal{B}(\Theta)$,
    \begin{equation*}\label{eqn:delegation_measure}
        \mu_{\mathrm{D}}(B) = \int_{B} \psi_{\mathrm{D}} \, \mathrm{d}\nu,
    \end{equation*}
    with $\nu\coloneqq \delta_{\bar{\theta}} - \delta_{\ubar{\theta}} + \lambda$ and
    \begin{equation*}
        \psi_{\mathrm{D}}(\theta) = \left\{\begin{array}{ll}
           \beta(\theta)f(\theta)  & \text{if $\theta\in\{\ubar{\theta},\bar{\theta}\}$,} \\[5pt]
            f(\theta) - \frac{\diff}{\diff\theta}\bigl[\beta(\theta) f(\theta)\bigr] & \text{if $\theta\in (\ubar{\theta},\bar{\theta})$.}
        \end{array}
        \right.
    \end{equation*}
\end{lemma}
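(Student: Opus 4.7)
The plan is to prove the lemma in two parts: first, characterize the set of implementable indirect utilities as $\mathcal{U}_{\mathrm{D}}$; second, rewrite the principal's objective as the claimed linear functional. Both parts adapt the classical approach of \citet{Rochet1987} and the stochastic delegation framework of \citet{Kleiner2022} to the present setting with the type-dependent outside option $u_0$.

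For the necessity direction of the first part, given any mechanism $\Gamma$ satisfying \eqref{eqn:IC_delegation}--\eqref{eqn:IR_delegation} with induced $u$, I would observe that $u$ is convex as the upper envelope of the affine family $\{\theta \mapsto \theta a_\Gamma(\theta') + b_\Gamma(\theta')\}_{\theta' \in \Theta}$, that $u \geq u_0$ is immediate from \eqref{eqn:IR_delegation}, and that Jensen's inequality applied to the concave $b$ yields $b_\Gamma(\theta) \leq b(a_\Gamma(\theta))$, whence $u(\theta) \leq \theta a_\Gamma(\theta) + b(a_\Gamma(\theta)) \leq \bar{u}(\theta)$. Since $\delta_{\bar a(\ubar{\theta})}, \delta_{\bar a(\bar{\theta})} \in M_0$, the boundary function $u_0$ is tangent to $\bar u$ at both endpoints of $\Theta$, so $u_0(\ubar{\theta}) = \bar u(\ubar{\theta})$ and $u_0(\bar{\theta}) = \bar u(\bar{\theta})$; convexity of $u$ combined with these endpoint equalities confines $\partial u(\Theta)$ to $[\bar a(\ubar{\theta}), \bar a(\bar{\theta})]$.

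The sufficiency direction is the step I expect to require most care. Given $u \in \mathcal{U}_{\mathrm{D}}$, I would pick a measurable selection $a(\theta) \in \partial u(\theta)$ and construct a two-point lottery $\Gamma(\cdot \mid \theta)$ with mean $a(\theta)$ and expected $b$-value $u(\theta) - \theta a(\theta)$. Feasibility rests on the inequality $u(\theta) - \theta a(\theta) \leq b(a(\theta))$, which I would derive from the CFI constraints as follows: the slope inclusion yields a unique $\theta^{\star} \in \Theta$ with $\bar a(\theta^{\star}) = a(\theta)$, so $\bar u(\theta^{\star}) = \theta^{\star} a(\theta) + b(a(\theta))$; convexity of $u$ combined with $u \leq \bar u$ then gives $u(\theta) + a(\theta)(\theta^{\star} - \theta) \leq u(\theta^{\star}) \leq \bar u(\theta^{\star})$, which rearranges to the desired inequality. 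Continuity and strict concavity of $b$ then permit constructing a two-point lottery realizing the required moments (as in \citealt{Kleiner2022}), and the resulting mechanism satisfies \eqref{eqn:IC_delegation}--\eqref{eqn:IR_delegation} by the standard envelope argument.

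For the second part, on any such implementing $\Gamma$, $a_\Gamma(\theta) = u'(\theta)$ and $b_\Gamma(\theta) = u(\theta) - \theta u'(\theta)$ almost everywhere; substituting into the principal's objective and simplifying collapses it to
\begin{equation*}
    \int_{\ubar{\theta}}^{\bar{\theta}} \bigl\{\beta(\theta) u'(\theta) + u(\theta)\bigr\} f(\theta) \, \mathrm{d}\theta.
\end{equation*}
A single integration by parts on the first summand then produces boundary terms at $\ubar{\theta}$ and $\bar{\theta}$, absorbed into the atomic part of $\nu = \delta_{\bar{\theta}} - \delta_{\ubar{\theta}} + \lambda$ with mass $\beta(\cdot) f(\cdot)$, and an interior Lebesgue density $f(\theta) - \tfrac{\mathrm{d}}{\mathrm{d}\theta}[\beta(\theta) f(\theta)]$, which matches $\psi_{\mathrm{D}}$ on $(\ubar{\theta}, \bar{\theta})$. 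This yields $\int_\Theta u \, \mathrm{d}\mu_{\mathrm{D}}$ as claimed.
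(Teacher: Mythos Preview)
Your proposal is correct and follows essentially the same route as the paper. The paper's proof is more telegraphic---it cites Lemma~1 of \citet{Kleiner2022} directly for the equivalence between \eqref{eqn:IC_delegation} and $u\le\bar u$ (with your Jensen/two-point-lottery construction being precisely what that lemma packages), then adds $u\ge u_0$ from \eqref{eqn:IR_delegation} and notes the slope constraint is redundant via the tangency of $u_0$ to $\bar u$ at the endpoints; the second part is handled identically to yours, by substituting $a_\Gamma=u'$, $b_\Gamma=u-\theta u'$ and integrating $\int\beta u' f$ by parts once.
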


\paragraph{Extremal Delegation Mechanisms}

The characterization of extreme points in \cref{thm:ext_pt} allows us to describe qualitative properties of extremal delegation mechanisms. \cref{thm:ext_pt} implies that an extremal mechanism features at most a countable number of discontinuities in the \emph{mean action function} $\theta\mapsto a_{\Gamma}(\theta)$ and, thus, might only induce a countable number of stochastic actions in addition to the (potentially stochastic) outside options.\footnote{This extends an insight from the literature showing that, in the absence of outside options, it is often sufficient to restrict attention to delegation mechanisms that induce a countable \citep{Kleiner2021} or even finite number of discontinuities in the (mean-)action function \citep{Saran2024,Amador2025}.} For instance, \cref{fig:extremal_delegation_mechanisms} displays two different kinds of mechanisms: an \emph{interval} delegation mechanism with a \emph{floor action}, and an extremal mechanism with a \emph{stochastic} action. Stochastic actions that are offered in addition to $M_0$ correspond to affine pieces of the indirect utility function that lie strictly below $\bar{u}$ and are not tangent to it. Furthermore, there can be at most \emph{one} stochastic action that is chosen by neighboring types (in addition to the ones contained in $M_0$).

\begin{figure}[t]
        \centering
        \begin{subfigure}[t]{0.495\linewidth}
            \centering
            \includegraphics{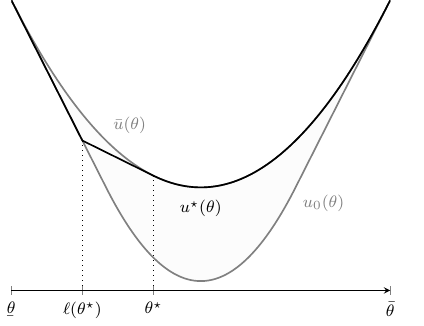}
        \caption{Action floor.}
        \label{fig:interval_delegation}
        \end{subfigure}
        \begin{subfigure}[t]{0.495\linewidth}
            \centering
            \includegraphics{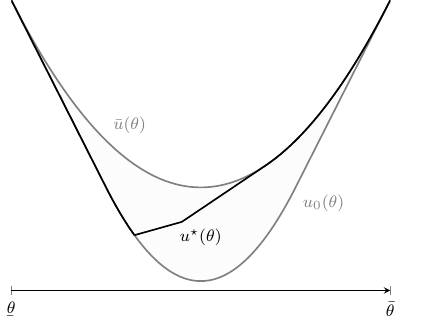}
          \caption{One stochastic action.}
        \end{subfigure}
    \caption{Extremal mechanisms.}
    \label{fig:extremal_delegation_mechanisms}
\end{figure}

\begin{remark}[Equivalence to Bayesian Persuasion]
\cite{Kolotilin2025} show that linear \emph{balanced delegation}---where $M_{0}=\bigl\{\delta_{\bar{a}(\ubar{\theta})},\delta_{\bar{a}(\bar{\theta})}\bigr\}$---and the standard mean-measurable Bayesian persuasion problem \citep{Gentzkow2016} are mathematically equivalent. The equivalence extends readily to delegation problem with type-dependent outside options and \emph{constrained persuasion} as introduced in \cref{sec:persuasion}. This becomes apparent in \cref{fig:equiv_deleg_pers}: the two CFIs are isomorphic. Moreover, both the principal's problem in delegation as well as the sender's problem in persuasion can be written as linear problems on these CFIs.
\begin{figure}[t]
\begin{subfigure}[t]{0.495\linewidth}
    \centering
    \includegraphics{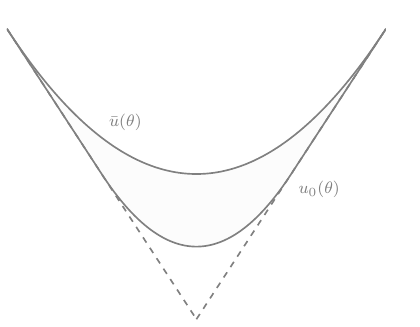}
    \caption{The delegation interval.}
\end{subfigure}
\begin{subfigure}[t]{0.495\linewidth}
    \centering
    \includegraphics{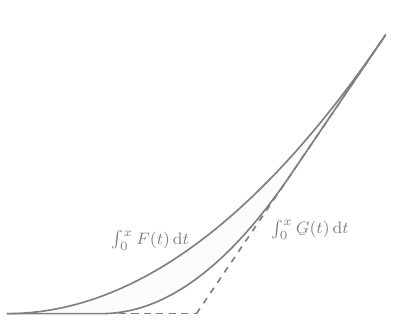}
    \caption{The persuasion interval (c.f.~\cref{sec:persuasion}).}
\end{subfigure}
\caption{The Persuasion-Delegation Equivalence.}
\label{fig:equiv_deleg_pers}
\end{figure}
\end{remark}

\paragraph{Constant Bias}

We make the following assumption for the rest of this section:
\begin{assumption}\label{ass:regular_delegation}
    The principal has a constant bias towards higher actions. Formally, there exists $\beta\in\mathbb{R}_{++}$ such that $\beta(\theta)=\beta$ for all $\theta\in \Theta$. Moreover, the distribution of the state admits a log-concave density function $f$.
\end{assumption}

Under \cref{ass:regular_delegation}, optimal mechanisms have a simple structure. For all $\theta^{\star} \in \Theta$, define
\begin{equation}
    u_{\theta^{\star}}(\theta) = \left\{\begin{array}{ll}
        \ubar{u}(\theta)  & \text{if $\theta \in \bigl[\ubar{\theta}, \ell(\theta^{\star})\bigr]$} \\[5pt]
        \bar{u}(\theta^{\star}) + \bar{u}'(\theta^{\star}) (\theta-\theta^{\star}) & \text{if $\theta\in \bigl(\ell(\theta^{\star}), \theta^{\star}\bigr)$} \\[5pt]
        \bar{u}(\theta) & \text{if $\theta\in [\theta^{\star},\bar{\theta}]$}
    \end{array}
    \right.,
\end{equation}
where $\ell(\theta^{\star})$ satisfies $\bar{u}(\theta^{\star}) + \bar{u}'(\theta^{\star}) \bigl(\ell(\theta^{\star})-\theta^{\star}\bigr) = \ubar{u}\bigl(\ell(\theta^{\star})\bigr)$.\footnote{By the properties of $\mathcal{U}^\mathrm{D}$, $\ell(\theta^{\star})\in \Theta$ exists and is unique for all $\theta^{\star}\in \Theta$. For a graphical illustration, see \cref{fig:interval_delegation}.\label{foot:exist_cutoff_floor}} By \cref{thm:ext_pt}, $u_{\theta^{\star}}$ is an extreme point of $\mathcal{U}^\mathrm{D}$ for all $\theta^{\star}\in \Theta$.

The mechanism that induces the indirect utility $u_{\theta^{\star}}$ has a simple indirect implementation: it corresponds to setting a lower bound, depending on $\theta^{\star}$, for the agent's action. The agent can either choose an action above the lower bound (his \emph{delegation set}) or an item from the menu of outside options. Under \cref{ass:regular_delegation}, optimal mechanisms take this form:

\begin{proposition}\label{prop:delegation_logconc}
    Suppose \cref{ass:regular_delegation} holds. Then there exists $\theta^{\star}$ such that the three following conditions hold:
    \begin{enumerate}[(i)]
    \setlength\itemsep{0.33cm}
        \item $\psi_{\mathrm{D}}(\theta) \leq 0$ for all $\theta \in \bigl[\ubar{\theta}, \ell(\theta^{\star})\bigr]$.
        \item $\displaystyle\int_{\ell(\theta^{\star})}^{\theta^{\star}} (t-\theta^{\star}) \psi_{\mathrm{D}}(t) \, \diff t=0$.
        \item $\psi_{\mathrm{D}}(\theta) \geq 0$ for all $\theta\in [\theta^{\star},\bar{\theta}]$.
    \end{enumerate}
    These conditions imply that $u_{\theta^{\star}}$ is optimal for \labelcref{eqn:delegation_LP}.
\end{proposition}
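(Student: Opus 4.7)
The strategy mirrors that of \cref{prop:opt_rev_cutoff}: I verify that, for an appropriately chosen $\theta^{\star}$, the extreme point $u_{\theta^{\star}}$ satisfies the sufficient conditions of \cref{thm:charac_opt}. The partition $\mathcal{Y}$ associated with $u_{\theta^{\star}}$ consists of lower-level saturation on $[\ubar{\theta},\ell(\theta^{\star})]$, a single tangential-saturation interval $Y \coloneqq [\ell(\theta^{\star}),\theta^{\star}]$, and upper-level saturation on $[\theta^{\star},\bar{\theta}]$; all other $\mathcal{Y}_i$ are empty.

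The key lever is a single-crossing property of $\psi_{\mathrm{D}}$. On the interior $(\ubar{\theta},\bar{\theta})$, $\psi_{\mathrm{D}}/f = 1 - \beta f'/f$ is non-decreasing by log-concavity of $f$, and since $f>0$, $\psi_{\mathrm{D}}$ crosses zero at most once from below at some $\theta_0 \in [\ubar{\theta},\bar{\theta}]$. Existence of $\theta^{\star}$ verifying (ii) then follows from a standard intermediate-value argument applied to the continuous map $\theta^{\star} \mapsto H(\theta^{\star}) \coloneqq \int_{\ell(\theta^{\star})}^{\theta^{\star}} (t-\theta^{\star}) \psi_{\mathrm{D}}(t) \, \diff t$: at $\theta^{\star} = \theta_0$, both factors are non-positive on $[\ell(\theta_0),\theta_0]$, giving $H(\theta_0) \geq 0$, while $H$ becomes non-positive as $\theta^{\star}$ is pushed further right. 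For any $\theta^{\star}$ verifying (ii) I must have $\ell(\theta^{\star}) \leq \theta_0 \leq \theta^{\star}$, for otherwise $\psi_{\mathrm{D}}$ would have a single sign on $Y$ and the integral in (ii) would be strictly of one sign. Conditions (i) and (iii) then follow immediately from single-crossing.

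To apply \cref{thm:charac_opt}, the lower-level and upper-level saturation regions are handled by (i), (iii), and the endpoint atoms $\mu_{\mathrm{D}}(\{\ubar{\theta}\}) = -\beta f(\ubar{\theta}) < 0$ and $\mu_{\mathrm{D}}(\{\bar{\theta}\}) = \beta f(\bar{\theta}) > 0$, which make $\mu_{\mathrm{D}}$ weakly negative (resp.\ positive) there. The main technical obstacle is the tangential interval $Y$, on which \cref{thm:charac_opt} requires $\mu_{\mathrm{D}}\vert_Y(Y) \geq 0$ and $\mu_{\mathrm{D}}\vert_Y \leq_{\mathrm{cx}} \delta_Y$, with $\delta_Y$ a point mass at $\theta^{\star}$ of matching total mass. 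Condition (ii) supplies the barycenter match. For the full convex order, I set $g(t) \coloneqq \int_{\ell(\theta^{\star})}^t \psi_{\mathrm{D}}$ and $G(t) \coloneqq \int_{\ell(\theta^{\star})}^t g$. Single-crossing makes $g$ first decreasing then increasing, and (ii) yields $G(\theta^{\star}) = G(\ell(\theta^{\star})) = 0$, from which I deduce $M \coloneqq g(\theta^{\star}) \geq 0$ and $G \leq 0$ on $Y$. Two integrations by parts then give, for any $u \in \mathcal{U}_{\mathrm{D}}$,
\begin{equation*}
    \int_Y (u - u_{\theta^{\star}}) \, \psi_{\mathrm{D}} \, \diff t \;=\; \bigl(u(\theta^{\star}) - \bar{u}(\theta^{\star})\bigr) M + \int_Y (u - u_{\theta^{\star}})'' \, G \, \diff t \;\leq\; 0,
\end{equation*}
since $u(\theta^{\star}) - \bar{u}(\theta^{\star}) \leq 0$, $M \geq 0$, $(u - u_{\theta^{\star}})'' \geq 0$ (as $u$ is convex and $u_{\theta^{\star}}$ is affine on $Y$), and $G \leq 0$; a standard smoothing argument takes care of the non-smoothness of $u$, interpreting $u''$ as a non-negative measure. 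This establishes the convex-order condition and, together with the other verifications, the optimality of $u_{\theta^{\star}}$.
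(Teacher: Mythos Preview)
Your overall approach is the same as the paper's: identify the partition induced by $u_{\theta^{\star}}$, exploit single-crossing of $\psi_{\mathrm{D}}$, and verify the conditions of \cref{thm:charac_opt}. Your direct verification of the tangential-interval condition via two integrations by parts is correct and more self-contained than the paper, which appeals to Theorem~3.A.44 in \cite{Shaked2007}.

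The genuine gap is in the existence step. You assert that ``$H$ becomes non-positive as $\theta^{\star}$ is pushed further right'' without justification, and this is precisely where the paper's proof spends almost all its effort: it introduces $A_{\theta^{\star}}(\theta)=\int_{\theta}^{\theta^{\star}}(t-\theta^{\star})\psi_{\mathrm{D}}(t)\,\diff t$, proves $\theta^{\star}\mapsto A_{\theta^{\star}}(\ubar{\theta})$ is concave, builds a decreasing zero-correspondence $k(\theta^{\star})$, and shows the increasing $\ell$ and $k$ must cross---with a separate case distinction when $A_{\theta}(\ubar{\theta})>0$ for all $\theta$. Your IVT claim can in fact be completed more simply than the paper does: pick $\theta_{1}$ with $\ell(\theta_{1})=\theta_{0}$ (which exists since $\ell$ is continuous and increasing with $\ell(\theta_{0})\leq\theta_{0}$ and $\ell(\bar{\theta})=\bar{\theta}$); on $[\theta_{0},\theta_{1}]$ one has $\psi_{\mathrm{D}}\geq 0$ and $t-\theta_{1}\leq 0$, whence $H(\theta_{1})\leq 0$, and any zero $\theta^{\star}\in[\theta_{0},\theta_{1}]$ then automatically satisfies $\ell(\theta^{\star})\leq\theta_{0}\leq\theta^{\star}$. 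But as written the step is missing, and your fallback deduction (``otherwise the integral in (ii) would be strictly of one sign'') does not rule out the trivial zero $H(\bar{\theta})=0$, where the tangential interval collapses and condition~(i) no longer follows.
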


The proof of \cref{prop:delegation_logconc} in \cref{secap:delegation_logconc_proof} provides a constructive method for identifying $\theta^{\star}$ and establishes its optimality using \cref{thm:charac_opt}. The intuition is as follows. Log-concavity of $f$ implies that the state is likely to be moderate. Moreover, the agent prefers lower actions than the principal for every state. As a result, the principal optimally mitigates the downward bias of the agent by imposing a an \emph{action floor}. Furthermore, the optimal mechanism gives the agent his preferred outside option in $M_{0}$ for all state realizations such that the action floor induces lower utility for the agent relative to that outside option.

We further show that the richness of the default menu of outside options affects the size of the optimal delegation set. There is a direct relationship between the value the menu delivers to the agent and the degree of discretion the principal grants in the optimal mechanism. Intuitively, a richer menu of outside options strengthens the agent's bargaining power, which forces the principal to expand the delegation set to preserve the agent's participation incentives. The following corollary formalizes this connection, and a short proof is provided in \cref{secap:delegation_compstat_proof}.

\begin{corollary}\label{cor:delegation_compstat}
    Suppose that \cref{ass:regular_delegation} holds. For any two menus of outside options that induce indirect utilities $\ubar{u}_1$,$\ubar{u}_2$ such that $\ubar{u}_1 \geq \ubar{u}_2$, the optimal delegation set is larger under $\ubar{u}_1$ than under $\ubar{u}_2$.
\end{corollary}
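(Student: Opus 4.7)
The plan is to combine revealed preference with a Topkis-style monotonicity argument. By Proposition~\ref{prop:delegation_logconc}, under Assumption~\ref{ass:regular_delegation} the optimum for each lower bound $\ubar{u}_i$ takes the form $u_{\theta_i^{\star}}$ for some cutoff $\theta_i^{\star}\in \Theta$, and its implemented delegation set equals $[\bar{a}(\theta_i^{\star}),\bar{a}(\bar{\theta})]$. Since $\bar{a}$ is strictly increasing, the claim reduces to showing $\theta_1^{\star}\leq \theta_2^{\star}$ (modulo the selection of optima). Let $V(\theta^{\star};\ubar{u})\coloneqq \int_{\Theta} u_{\theta^{\star}}\,\diff\mu_{\mathrm{D}}$ be the principal's value when using cutoff $\theta^{\star}$ and lower bound $\ubar{u}$, and define the cross-difference $\Delta V(\theta^{\star})\coloneqq V(\theta^{\star};\ubar{u}_1)-V(\theta^{\star};\ubar{u}_2)$. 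Adding the optimality inequalities $V(\theta_1^{\star};\ubar{u}_1)\geq V(\theta_2^{\star};\ubar{u}_1)$ and $V(\theta_2^{\star};\ubar{u}_2)\geq V(\theta_1^{\star};\ubar{u}_2)$ yields $\Delta V(\theta_1^{\star})\geq \Delta V(\theta_2^{\star})$. I will argue by contradiction: if $\theta_1^{\star}>\theta_2^{\star}$, then I will show $\Delta V$ is non-increasing on $[\theta_2^{\star},\theta_1^{\star}]$, so the inequalities collapse to equalities and $\theta_2^{\star}$ is itself optimal for $\ubar{u}_1$, yielding the desired selection.

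The main computation is $\tfrac{\diff \Delta V}{\diff \theta^{\star}}$. Let $T_{\theta^{\star}}$ be the tangent to $\bar{u}$ at $\theta^{\star}$, and $\ell_i(\theta^{\star})$ its unique intersection with $\ubar{u}_i$ in $[\ubar{\theta},\theta^{\star}]$ (uniqueness follows from convexity of $\ubar{u}_i-T_{\theta^{\star}}$ combined with $\ubar{u}_i(\ubar{\theta})=\bar{u}(\ubar{\theta})$). Since $\ubar{u}_1\geq \ubar{u}_2$, we have $\ell_2(\theta^{\star})\leq \ell_1(\theta^{\star})$, and both $\ell_i$ are non-decreasing in $\theta^{\star}$ by implicit differentiation of $T_{\theta^{\star}}(\ell_i)=\ubar{u}_i(\ell_i)$. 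Decomposing $u_{\theta^{\star}}^{\ubar{u}_i}$ into its three pieces (outside option, tangent line, full discretion) and observing that the atoms of $\mu_{\mathrm{D}}$ at $\ubar{\theta},\bar{\theta}$ contribute nothing (since $\ubar{u}_i$ coincides with $\bar{u}$ at both endpoints), I would obtain
\begin{equation*}
\Delta V(\theta^{\star}) = \int_{\ubar{\theta}}^{\ell_2(\theta^{\star})} (\ubar{u}_1-\ubar{u}_2)\psi_{\mathrm{D}}\,\diff\theta + \int_{\ell_2(\theta^{\star})}^{\ell_1(\theta^{\star})} (\ubar{u}_1-T_{\theta^{\star}})\psi_{\mathrm{D}}\,\diff\theta.
\end{equation*}
Differentiating through the moving endpoints, the boundary terms cancel: at $\ell_1$ because $\ubar{u}_1(\ell_1)=T_{\theta^{\star}}(\ell_1)$, and at $\ell_2$ because $T_{\theta^{\star}}(\ell_2)=\ubar{u}_2(\ell_2)$ makes the contributions from the two integrals offset. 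Using $-\partial_{\theta^{\star}} T_{\theta^{\star}}(\theta)=\bar{u}''(\theta^{\star})(\theta^{\star}-\theta)$, what remains is
\begin{equation*}
\frac{\diff \Delta V}{\diff \theta^{\star}}=\bar{u}''(\theta^{\star})\int_{\ell_2(\theta^{\star})}^{\ell_1(\theta^{\star})} (\theta^{\star}-\theta)\,\psi_{\mathrm{D}}(\theta)\,\diff\theta.
\end{equation*}

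The sign verification exploits Assumption~\ref{ass:regular_delegation}: log-concavity of $f$ makes $\psi_{\mathrm{D}}(\theta)=f(\theta)-\beta f'(\theta)$ single-crossing from non-positive to non-negative at a unique $\theta_0\in \Theta$, and Proposition~\ref{prop:delegation_logconc}(i)--(iii) pin down $\ell_i(\theta_i^{\star})\leq \theta_0\leq \theta_i^{\star}$. Under the contradiction hypothesis $\theta_1^{\star}>\theta_2^{\star}$, monotonicity of $\ell_1$ in $\theta^{\star}$ gives $\ell_1(\theta^{\star})\leq \ell_1(\theta_1^{\star})\leq \theta_0$, hence $\ell_2(\theta^{\star})\leq \ell_1(\theta^{\star})\leq \theta_0$ for every $\theta^{\star}\in [\theta_2^{\star},\theta_1^{\star}]$. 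On this interval $\bar{u}''(\theta^{\star})>0$, $(\theta^{\star}-\theta)\geq 0$ and $\psi_{\mathrm{D}}(\theta)\leq 0$, so $\tfrac{\diff \Delta V}{\diff \theta^{\star}}\leq 0$. Combined with $\Delta V(\theta_1^{\star})\geq \Delta V(\theta_2^{\star})$, this forces $\Delta V$ to be constant on $[\theta_2^{\star},\theta_1^{\star}]$, whence $V(\theta_2^{\star};\ubar{u}_1)=V(\theta_1^{\star};\ubar{u}_1)$: $\theta_2^{\star}$ is also optimal for $\ubar{u}_1$, delivering $\theta_1^{\star}\leq \theta_2^{\star}$ via re-selection. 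The principal obstacle will be carrying out cleanly the envelope-style differentiation through the moving boundaries $\ell_i(\theta^{\star})$; once the tangency identities $\ubar{u}_i(\ell_i)=T_{\theta^{\star}}(\ell_i)$ deliver the cancellation, the sign analysis is immediate from log-concavity.
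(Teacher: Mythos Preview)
Your revealed-preference/Topkis approach is correct, but the paper's proof is quite different and much shorter. The paper exploits a structural fact already established inside the proof of \cref{prop:delegation_logconc}: the optimal cutoff $\theta_i^{\star}$ is characterized as the point where the increasing function $\ell_i$ meets a \emph{decreasing} correspondence $k$, and $k$ is built entirely from $\psi_{\mathrm D}$---it does \emph{not} depend on the lower boundary $\ubar u_i$. Since $\ubar u_1\ge \ubar u_2$ implies $\ell_1\ge \ell_2$ pointwise, and both $\ell_i$ are increasing while $k$ is decreasing, the intersection of $\ell_1$ with $k$ occurs weakly to the left of that of $\ell_2$, i.e.\ $\theta_1^{\star}\le \theta_2^{\star}$. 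That is the entire argument.

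Your route has the virtue of being self-contained: it does not rely on the auxiliary correspondence $k$, and the cross-difference computation makes the supermodularity structure explicit. The cost is more calculation, plus one technical caveat you should flag: your derivative formula $\tfrac{\diff \Delta V}{\diff \theta^{\star}}=\bar u''(\theta^{\star})\int_{\ell_2}^{\ell_1}(\theta^{\star}-\theta)\psi_{\mathrm D}(\theta)\,\diff\theta$ presupposes that $\bar u$ is twice differentiable, whereas the paper only assumes $b$ is differentiable and strictly concave, which yields $\bar u$ once (continuously) differentiable. The fix is easy---replace the pointwise derivative by the monotonicity statement that $\theta^{\star}\mapsto T_{\theta^{\star}}(\theta)$ is non-increasing for each fixed $\theta<\theta^{\star}$ (a direct consequence of convexity of $\bar u$), then argue $\Delta V$ is non-increasing directly---but as written your formula is not literally justified under the standing hypotheses.
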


\subsection{Large Contest Design with Limited Disposal} \label{sec:contest}

We now consider large contest problems with limited disposal: in contrast to both free disposal and mandatory allocation of all goods, the principal is free to dispose of some goods in our framework. We show that an assignment is implementable if and only if it satisfies a two-sided majorization constraint. This two-sided majorization constraint depends on the distribution of available qualities and the limits imposed on disposing goods.

\subsubsection{Model}

\paragraph{Primitives}

A principal holds a continuum of indivisible prizes of unit mass with differentiated quality $x \in (0,1]$.\footnote{We normalize the highest quality to $1$ without loss of generality.} Prize qualities are distributed according to the cumulative distribution function $G$. The principal allocates these prizes to a continuum of agents of unit mass. She may also choose not to allocate any prize, which we model as assigning the ``null prize'' of quality $x=0$, available in infinite supply.  

Agents expend resources---effort, money, or time, all measured in monetary units---to obtain prizes. An agent of type $\theta \in \Theta \coloneqq [0,1]$ derives utility $\theta x - t$ from receiving a prize of quality $x$ at monetary cost $t$. Agent types are distributed according to the strictly increasing and absolutely continuous cumulative distribution function $F$ with continuous density $f$.  

\begin{remark}
    The literature has offered different interpretations of the variable $t$. \citet{Loertscher2022} and \citet{Bergemann2025} interpret the framework as a monopolistic screening problem in which the seller holds a fixed stock of quality-differentiated goods, so $t$ represents a monetary transfer to the seller. \citet{Akbarpour2024a} and \citet{Ashlagi2024} view it as the allocation of public resources, where a planner decides how to distribute scarce goods among agents with heterogeneous needs, so $t$ corresponds to the revenue collected by the planner. \citet{Olszewski2016,Olszewski2020} and \citet{Kleiner2021} cast it as a contest environment in which a large number of participants compete for prizes of varying qualities by sending costly signals or exerting effort, so $t$ denotes the signaling/effort cost.
\end{remark}

\paragraph{Assignment mechanisms and allocative constraints}

Let $\mathrm{CDF}[0,1]$ denote the set of all cumulative distribution functions with support included in $[0,1]$. The principal commits to a \emph{direct assignment mechanism} $(\Gamma,t)$, consisting of a (probabilistic) assignment rule $\Gamma\colon \Theta \to \mathrm{CDF}[0,1]$ and a monetary cost rule $t\colon \Theta \to \mathbb{R}_{+}$. Under mechanism $(\Gamma,t)$, an agent reporting type $\theta$ incurs a monetary cost $t(\theta)\geq 0$ to receive a prize drawn from the lottery characterized by the cumulative distribution function $\Gamma(\,\cdot\; \vert \; \theta)$.

Let $0 \leq m \leq \int_{0}^{1} x \, \diff G(x)$. An assignment $\Gamma$ is called \emph{feasible} if it satisfies

\begin{align}
    \forall x\in [0,1], \quad \int_{\Theta} \Gamma(x \, \vert \, \theta) \, \diff F(\theta) \geq G(x), \tag{$\mathrm{F}_1$} \label{eqn:feasibility_contest_capacity} \\[5pt]
    \int_{\Theta} \int_{0}^{1} x \, \diff \Gamma (x \, \vert \, \theta) \, \diff F (\theta) \geq m. \tag{$\mathrm{F}_2$} \label{eqn:feasibility_contest_mean}
\end{align}

Condition \labelcref{eqn:feasibility_contest_capacity} is a physical constraint that requires the distribution of the assigned prize qualities to be first-order-stochastically dominated by $G$.\footnote{Intuitively, the mass of goods of quality less than $x \in [0,1]$ that is assigned to agents under $\Gamma$ has to be higher than $G(x)$ since the principal can always achieve downward FOSD shifts by randomizing with the null prize.} Condition \labelcref{eqn:feasibility_contest_mean} is an allocative constraint that imposes a minimal average quality the principal has to allocate. This specification generalizes two standard benchmarks in the literature. Under \emph{mandatory allocation} \cite[][Section 4.2]{Kleiner2021}, the principal must distribute all available prizes (so $m= \int_{0}^{1} x \, \diff G(x)$). Under \emph{free disposal} \citep[][]{Loertscher2022,Akbarpour2024a,Ashlagi2024,Bergemann2025}, the principal can assign null prizes to all agents (so $m=0$). Our framework captures the intermediate case of \emph{limited disposal}, where the principal faces a lower limit on the average quality he must assign.

\paragraph{Expected assignments and incentive constraints}

For any assignment $\Gamma$ and type $\theta\in \Theta$, we denote as
\begin{equation*}
  x_{\Gamma}(\theta) \coloneqq \int_{0}^{1} x \, \diff \Gamma(x \, \vert \, \theta),
\end{equation*}
the corresponding \emph{expected assignment}.

Since agents' utilities are linear in prize quality, incentive-compatibility and individual-rationality of mechanism $(\Gamma,t)$ can be expressed solely in terms of expected assignments and monetary costs:
\begin{align}
    \label{eqn:IC_contest} \forall \theta,\theta'\in \Theta,& \quad \theta x_{\Gamma}(\theta) - t(\theta) \geq 
    \theta x_{\Gamma}(\theta') - t(\theta'), \tag{IC-C}  \\[5pt]
    \label{eqn:IR_contest} \forall \theta\in \Theta,& \quad \theta x_{\Gamma}(\theta) - t(\theta) \geq 0. \tag{IR-C}
\end{align}

\subsubsection{Feasible, Extreme and Optimal Assignments}

\paragraph{Expected quantile assignments}

For any mechanism $(\Gamma,t)$, it will be convenient to consider the corresponding expected \emph{quantile} assignment $\chi_{\Gamma}(q) \coloneqq x_{\Gamma}\bigl(F^{-1}(q)\bigr)$ for all $q\in [0,1]$. We say that an expected quantile assignment $\chi\colon[0,1]\to [0,1]$ is \emph{implementable} if there exists a mechanism $(\Gamma,t)$ satisfying \labelcref{eqn:feasibility_contest_capacity}, \labelcref{eqn:feasibility_contest_mean}, \labelcref{eqn:IC_contest,eqn:IR_contest} such that $\chi(q) = \chi_{\Gamma}(q)$ for all $q\in [0,1]$.

Characterizations of implementable assignments are well-known in the two benchmark cases of mandatory allocation and free disposal. In the mandatory allocation case ($m= \int_{0}^{1} x \, \diff G(x)$), an expected quantile assignment $\chi\colon [0,1]\to[0,1]$ is implementable if and only if it is non-decreasing and majorized by the positive assortative quantile assignment $q\in[0,1]\mapsto G^{-1}(q)$ \citep[][Proposition 4]{Kleiner2021}. Under free disposal ($m=0$), this characterization changes to weak majorization by the positive assortative quantile assignment (\citealp{Hart2015}, Theorem 1; \citealp{Kleiner2021}, Theorem 3).

We obtain a similar characterization in the case of limited disposal, including the two benchmarks as extremes:

\begin{proposition}\label{prop:imp_quantil_assign}
    An expected quantile assignment $\chi\colon[0,1]\to [0,1]$ is implementable if and only if it is non-decreasing and $G^{-1}\succsim_{\mathrm{w}} \chi \succsim_{\mathrm{w}} H_{m}^{-1}$ where $H_{m}(x)=\Ind_{x \geq m}$ for all $x \in [0,1]$.
\end{proposition}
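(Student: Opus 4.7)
The plan is to establish the equivalence in two steps, treating necessity and sufficiency separately. Necessity will verify each of the three stated conditions directly from a hypothesized implementing mechanism, while sufficiency will appeal to the existing implementability characterization for free-disposal contests and treat the additional allocative constraint as a trivial consequence.

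For necessity, I would fix an implementable $\chi$ together with a witness mechanism $(\Gamma,t)$ satisfying \labelcref{eqn:feasibility_contest_capacity}, \labelcref{eqn:feasibility_contest_mean}, \labelcref{eqn:IC_contest}, and \labelcref{eqn:IR_contest}. Monotonicity of $\chi$ is standard: the envelope argument applied to \labelcref{eqn:IC_contest} forces $x_\Gamma$ to be non-decreasing in $\theta$, so $\chi=x_\Gamma\circ F^{-1}$ is non-decreasing. The lower bound $\chi \succsim_{\mathrm{w}} H_m^{-1}$ simplifies dramatically: since $H_m^{-1}\equiv m$ is constant, this relation for a non-decreasing $\chi$ is equivalent to $m_\chi\geq m$, which is precisely \labelcref{eqn:feasibility_contest_mean} after the change of variables $\int_0^1 \chi(q)\,\mathrm{d}q=\int_\Theta x_\Gamma(\theta)\,\mathrm{d}F(\theta)$. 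The substantive step is the upper bound $G^{-1}\succsim_{\mathrm{w}} \chi$, which I would recast as $\int_q^1 \chi(s)\,\mathrm{d}s\leq \int_q^1 G^{-1}(s)\,\mathrm{d}s$ for every $q\in[0,1]$. Interpreting the left-hand side as the total expected quality allocated to the top $(1-q)$-mass of types, and letting $H(x)\coloneqq \int_\Theta \Gamma(x\mid\theta)\,\mathrm{d}F(\theta)$ denote the average CDF of assigned qualities, \labelcref{eqn:feasibility_contest_capacity} yields $H\geq G$ pointwise, and hence $H^{-1}\leq G^{-1}$. A Hardy--Littlewood rearrangement argument then produces $\int_q^1 \chi(s)\,\mathrm{d}s\leq\int_q^1 H^{-1}(s)\,\mathrm{d}s\leq \int_q^1 G^{-1}(s)\,\mathrm{d}s$.

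For sufficiency, I would take a non-decreasing $\chi$ satisfying $G^{-1}\succsim_{\mathrm{w}} \chi \succsim_{\mathrm{w}} H_m^{-1}$ and invoke the characterization of implementable quantile assignments for free-disposal contests (Theorem 3 in \citealp{Kleiner2021}, or Theorem 1 in \citealp{Hart2015}) to obtain a stochastic assignment $\Gamma$ with $x_\Gamma(\theta)=\chi(F(\theta))$ whose average CDF of assigned qualities weakly dominates $G$, thereby satisfying \labelcref{eqn:feasibility_contest_capacity}. Defining the cost rule through Myerson's formula $t(\theta)=\theta\chi(F(\theta))-\int_0^\theta \chi(F(s))\,\mathrm{d}s$ delivers both \labelcref{eqn:IC_contest} and \labelcref{eqn:IR_contest}. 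The remaining allocative constraint \labelcref{eqn:feasibility_contest_mean} is automatic, since $\int_\Theta x_\Gamma\,\mathrm{d}F=m_\chi\geq m$ is precisely the $x=0$ specialization of $\chi\succsim_{\mathrm{w}} H_m^{-1}$.

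The hardest part will be the rearrangement step in the necessity direction. Bridging the FOSD-type feasibility constraint \labelcref{eqn:feasibility_contest_capacity}, which is expressed on the average CDF of assignments, with the top-mass comparison needed for weak majorization of the expected quantile assignment requires a careful coupling argument showing that the assortative (co-monotone) rearrangement maximizes the expected quality delivered to any top-mass segment of types. The cleanest route is to express both sides through quantile functions and apply the classical Hardy--Littlewood--P\'olya rearrangement inequality to the coupling of types with their conditional quality distributions.
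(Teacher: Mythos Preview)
Your proposal is correct and follows essentially the same two-step strategy as the paper: reduce to the free-disposal characterization (the paper reproves it as a lemma via extreme points, you cite \citet{Kleiner2021}/\citet{Hart2015} directly) and then handle the mean constraint \labelcref{eqn:feasibility_contest_mean} separately as equivalent to $m_\chi\ge m$, using convexity of $I_\chi$ to obtain the full inequality $I_\chi\le I_{H_m^{-1}}$. The only cosmetic difference is in the necessity of the upper bound: you route the top-mass comparison through the average assigned-quality CDF $H$ and invoke Hardy--Littlewood explicitly, whereas the paper compresses this into the one-line observation that the positive assortative assignment maximizes average quality among top quantiles---same content, your version is just more unpacked.
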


We prove \cref{prop:imp_quantil_assign} in \cref{secap:imp_quantil_assign_proof}. It captures a salient intuition: expected assigned qualities across types have to be less equal than giving everyone the same expected quality. This expected quality is bounded below by \labelcref{eqn:feasibility_contest_mean}. Jointly, these two aspects induce the lower bound $H_{m}^{-1}$ in the weak majorization order. On the other hand, as in the benchmark cases, expected assigned qualities have to be more equal than the assortative quantile assignment $G^{-1}$, as captured by the upper bound in the weak majorization order.

\paragraph{Feasible assignments as a CFI}

\cref{prop:imp_quantil_assign} and \cref{lem:two-sided-weak-maj} jointly imply that the set of implementable expected quantile assignments can be represented by the CFI $\mathcal{I}^{\mathrm{w}}_{G^{-1},H^{-1}_{m}}$ (defined according to \labelcref{eqn:weak_maj_CFI}), which elements correspond to \emph{cumulative} expected quantile assignments:
\begin{equation}\label{eqn:cumulative_exp_assignment}
    I_{\chi}(q)\coloneqq \int_{0}^{q} \chi(s) \, \mathrm{d}s - m_{\chi},
\end{equation}
for each $q\in[0,1]$.\footnote{Recall from \cref{sec:majorization} that $m_{\chi}\coloneqq \int_{0}^{1} \chi(q) \, \mathrm{d}q$.}

\paragraph{Extremal prize assignments}

Following \cref{thm:ext_pt}, \cref{fig:contest_interval} illustrates the extreme points of $\mathcal{I}_{G^{-1},H^{-1}_{m}}$.
\begin{figure}[h!]
\centering
    \begin{subfigure}[b]{0.49\linewidth}
        \centering
        \includegraphics{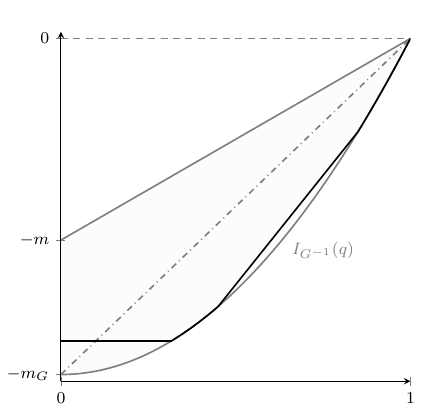}
        \subcaption{\labelcref{eqn:feasibility_contest_mean} is slack.}
        \label{fig:contest_interval_classic_ext_point}
    \end{subfigure}
    \begin{subfigure}[b]{0.49\linewidth}
    \centering
        \includegraphics{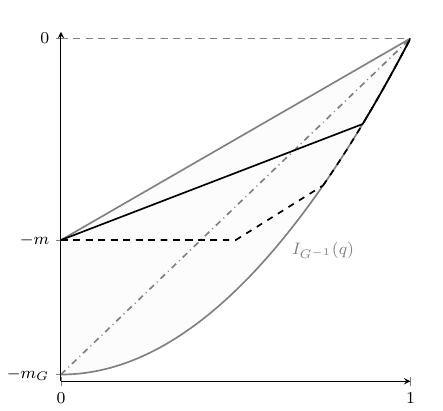}
        \subcaption{\labelcref{eqn:feasibility_contest_mean} binds.}
        \label{fig:contest_interval_new_ext_point}
    \end{subfigure}
    
    \caption{The contest interval $\mathcal{I}^{\mathrm{w}}_{G^{-1},H_{m}}$ is shown as the shaded gray area. The gray dashed-dotted line represents the extreme point corresponding to random assignment of all prizes. The black lines in the left and right panels (solid and dashed) depict other extreme points of $\mathcal{I}^{\mathrm{w}}_{G^{-1},H_{m}}$.}
    \label{fig:contest_interval}
\end{figure}

In \cref{fig:contest_interval_classic_ext_point}, the extremal assignment yields a mean quality $-I^{\star}(0)>m$, so \labelcref{eqn:feasibility_contest_mean} is slack. The lowest constant segment (hence lower-slope saturated) corresponds to the principal assigning an expected quality of zero to all type-quantiles in the interval. This segment is followed by alternating intervals where the designer either implements the positive assortative assignment, i.e., $I$ coincides with $I_{G^{-1}}$, or \emph{irons} the assortative allocation, thereby providing a \emph{lottery} that assigns the same expected quality to all type-quantiles in that interval. In contrast, \cref{fig:contest_interval_new_ext_point} shows extremal assignments that are caused by the limitation on disposal, i.e., \labelcref{eqn:feasibility_contest_mean} binds. Because of this constraint, extremal assignments can contain lotteries that consist of both ironing and disposal of qualities, as illustrated by the increasing affine pieces in \cref{fig:contest_interval_new_ext_point}.

\paragraph{Optimal prize assignments}

We now assume that the principal seeks to maximize aggregate effort:
\begin{equation}\label{eqn:contest_pb}
    \max_{(\Gamma,t)} \; \int_{\Theta} t(\theta) \, \diff F(\theta),
\end{equation}
subject to $(\Gamma,t)$ satisfying \labelcref{eqn:feasibility_contest_capacity}, \labelcref{eqn:feasibility_contest_mean}, \labelcref{eqn:IC_contest,eqn:IR_contest}.

We show in \cref{lem:contest_cfi_lp} below that problem \labelcref{eqn:contest_pb} can be represented as a linear program on the CFI $\mathcal{I}_{G^{-1},H^{-1}_{m}}$:

\begin{lemma}\label{lem:contest_cfi_lp}
    The contest design problem \labelcref{eqn:contest_pb} can be written as
    \begin{equation}\label{eqn:contest_LP}
        \max_{I\in\mathcal{I}_{G^{-1},H^{-1}_{m}}} \int_{[0,1]} I \, \mathrm{d}\mu_{\mathrm{C}},\tag{Cont}
    \end{equation}
    where, for all $A\in\mathcal{B}\bigl([0,1]\bigr)$,
    \begin{equation*}\label{eqn:contest_measure}
        \mu_{\mathrm{C}}(A) = \int_{A} \psi_{\mathrm{C}} \, \mathrm{d}\nu,
    \end{equation*}
    with $\nu\coloneqq \delta_1 - \delta_{0} + \lambda$ and
    \begin{equation*}
        \psi_{\mathrm{C}}(q) = \left\{\begin{array}{ll}
           v\bigl(F^{-1}(q)\bigr)  & \text{if $q\in\{0,1\}$,} \\[5pt]
           -v'\bigl(F^{-1}(q)\bigr)/f\bigl(F^{-1}(q)\bigr) & \text{if $q\in (0,1)$,}
        \end{array}
        \right.
    \end{equation*}
    where
    \begin{equation*}
        v(\theta) \coloneqq \theta - \frac{1-F(\theta)}{f(\theta)},
    \end{equation*}
    for each $\theta \in \Theta$.
\end{lemma}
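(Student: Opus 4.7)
The plan is to transform the contest design problem \labelcref{eqn:contest_pb} into a linear program on $\mathcal{I}^{\mathrm{w}}_{G^{-1},H^{-1}_{m}}$ in three steps: reduce to expected assignments via the envelope formula, change variables from types to quantiles, and integrate by parts to reveal the signed measure $\mu_{\mathrm{C}}$.

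For the envelope step, standard Myersonian arguments applied to \labelcref{eqn:IC_contest} imply that $x_{\Gamma}$ is non-decreasing and that the agents' indirect utility $u(\theta) \coloneqq \theta x_{\Gamma}(\theta) - t(\theta)$ satisfies $u(\theta) = u(0) + \int_{0}^{\theta} x_{\Gamma}(s) \, \diff s$. Because transfers are non-negative, $u(0) = -t(0) \leq 0$, while \labelcref{eqn:IR_contest} at $\theta = 0$ yields $u(0) \geq 0$; hence $u(0) = 0$. Substituting and applying Fubini gives the usual virtual-value representation
\begin{equation*}
    \int_{\Theta} t(\theta) \, \diff F(\theta) = \int_{\Theta} v(\theta) x_{\Gamma}(\theta) f(\theta) \, \diff \theta.
\end{equation*}

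Next, applying the change of variables $q = F(\theta)$---valid since $F$ is strictly increasing and absolutely continuous---the objective becomes $\int_{0}^{1} v\bigl(F^{-1}(q)\bigr) \chi_{\Gamma}(q) \, \diff q$, where $\chi_{\Gamma}(q) = x_{\Gamma}\bigl(F^{-1}(q)\bigr)$. By \cref{prop:imp_quantil_assign,lem:two-sided-weak-maj}, implementable expected quantile assignments are in bijection with $\mathcal{I}^{\mathrm{w}}_{G^{-1}, H^{-1}_{m}}$ through $\chi \mapsto I_{\chi}$ defined in \labelcref{eqn:cumulative_exp_assignment}, and any $I \in \mathcal{I}^{\mathrm{w}}_{G^{-1}, H^{-1}_{m}}$ is convex, hence Lipschitz, with $I' = \chi$ almost everywhere. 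Integrating by parts and using $F^{-1}(0) = 0$, $F^{-1}(1) = 1$ then yields
\begin{equation*}
    \int_{0}^{1} v\bigl(F^{-1}(q)\bigr) I'(q) \, \diff q = v(1) I(1) - v(0) I(0) - \int_{0}^{1} \frac{v'\bigl(F^{-1}(q)\bigr)}{f\bigl(F^{-1}(q)\bigr)} I(q) \, \diff q,
\end{equation*}
which is exactly $\int_{[0,1]} I \, \diff \mu_{\mathrm{C}}$ given the definitions of $\nu$ and $\psi_{\mathrm{C}}$.

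The main technical obstacle is ensuring the integration-by-parts step is justified. Since $f$ is continuous and strictly positive on the compact interval $\Theta$, it is bounded away from zero, so $F^{-1}$ is Lipschitz. Provided $f$ is in addition differentiable (as is implicitly required for $\psi_{\mathrm{C}}$ to be well-defined on $(0,1)$), the map $v \circ F^{-1}$ is absolutely continuous with derivative $v'(F^{-1}(q))/f(F^{-1}(q))$, and the integration by parts is standard. Alternatively, one can bypass any differentiability of $v$ by invoking Fubini directly on $u(\theta) = \int_{0}^{\theta} x_{\Gamma}(s) \, \diff s$ together with the identity $I_{\chi_{\Gamma}}(q) = \int_{0}^{q} \chi_{\Gamma}(s) \diff s - m_{\chi_{\Gamma}}$, which expresses the revenue directly as a linear functional of $I_{\chi_{\Gamma}}$ and delivers the same formula without ever differentiating $v$.
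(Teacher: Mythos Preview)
Your proof is correct and follows essentially the same route as the paper: envelope formula to obtain the virtual-value representation, change of variables to quantiles, and integration by parts to expose the signed measure $\mu_{\mathrm{C}}$, with \cref{prop:imp_quantil_assign,lem:two-sided-weak-maj} identifying the feasible set. The only difference is cosmetic---the paper outsources the envelope step to \citet{Akbarpour2024a} and frames it for a general welfare objective before specializing, whereas you derive $u(0)=0$ and the virtual-value formula directly.
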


We prove \cref{lem:contest_cfi_lp} in \cref{secap:contest_cfi_lp_proof}. The proof shows that optimal contest design with limited disposal reduces to a linear program on the CFI $\mathcal{I}_{G^{-1},H^{-1}_{m}}$. Similarly as in \cref{sec:screening_optimalallocations}, this is not limited to effort maximization: \cref{lem:contest_cfi_lp} holds in a similar form for more general welfare objectives as long as they are linear in the expected allocation.\footnote{We thus recover the setting of \cite{Akbarpour2024a}---in which $m=0$---as a special case.}

In particular, \cref{lem:contest_cfi_lp} allows us to derive effort-maximizing contests under \emph{limited disposal}:
\begin{proposition}\label{prop:contest_optimality_myerson}
    Assume that $F$ is Myerson-regular. For each $m\in[0,\int_{0}^{1} x \, \diff G(x)]$ there exists $\theta^{\star}_{m}$ such that the effort-maximizing assignment $x^{\star}_{m}$ features exclusion for all types below $\theta^{\star}_{m}$ and is positive assortative for higher types---i.e., $x^{\star}_{m}(\theta) = 0$ for all $\theta\in [0,\theta^{\star}_{m})$ and $x^{\star}_{m}(\theta)=G^{-1}\bigl(F(\theta)\bigr)$ for all $\theta \in [\theta^{\star}_{m},1]$. In particular, if $m=\int_{0}^{1} x \, \diff G(x)$, then $x_{m}^{\star}$ is positive assortative for all types---i.e., $\theta_{m}=0$.

    Furthermore, $\theta^{\star}_{m}$ is decreasing in $m$, and the constraint on disposal \labelcref{eqn:feasibility_contest_mean} binds whenever $I_{G^{-1}}\bigl(F(\theta^\star_0)\bigr) > -m$.
\end{proposition}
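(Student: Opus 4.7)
The plan is to reformulate the principal's problem via \cref{lem:contest_cfi_lp} as the linear program \labelcref{eqn:contest_LP} on $\mathcal{I}^{\mathrm{w}}_{G^{-1},H_m^{-1}}$, exhibit a candidate extreme point, and verify its optimality through a Lagrangian adjustment that handles the slack and binding cases uniformly. Let $\theta^\star_0 \in (0,1)$ denote the standard Myerson cutoff, i.e., the unique solution of $v(\theta^\star_0)=0$. I set $\theta^\star_m \coloneqq \theta^\star_0$ whenever the free-disposal mean $\int_{F(\theta^\star_0)}^1 G^{-1}(q)\,\diff q \geq m$ (slack case), and otherwise define $\theta^\star_m \in (0,\theta^\star_0)$ as the unique solution of $\int_{F(\theta^\star_m)}^1 G^{-1}(q)\,\diff q = m$ (binding case); existence and uniqueness follow from continuity and strict monotonicity of $q \mapsto \int_q^1 G^{-1}(s)\,\diff s$. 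Writing $q^\star \coloneqq F(\theta^\star_m)$, the candidate expected quantile assignment is $\chi^\star(q)=G^{-1}(q)\Ind_{q\geq q^\star}$, with cumulative $I^\star(q)=\int_0^q \chi^\star(s)\,\diff s - m_{\chi^\star}$. The characterization in \cref{thm:ext_pt} (in its weak-majorization adaptation) confirms $I^\star \in \ex(\mathcal{I}^{\mathrm{w}}_{G^{-1},H_m^{-1}})$: on $[0, q^\star]$, $I^\star$ is affine with slope $\ubar{s}=0$ and satisfies lower slope saturation because $I^\star(q^\star)=I_{G^{-1}}(q^\star)$; on $[q^\star, 1]$, $I^\star$ coincides with the lower level boundary $I_{G^{-1}}$.

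To verify optimality, I introduce the Lagrange multiplier $\lambda \coloneqq -v(\theta^\star_m) \geq 0$ (zero in the slack case, strictly positive in the binding one), and the adjusted signed measure $\tilde\mu_{\mathrm{C}} \coloneqq \mu_{\mathrm{C}} - \lambda \delta_0$. I apply \cref{thm:charac_opt} to show that $I^\star$ maximizes $\int I\,\diff\tilde\mu_{\mathrm{C}}$ over $\mathcal{I}^{\mathrm{w}}_{G^{-1},H_m^{-1}}$. The partition induced by $I^\star$ is $\{1\} \in \mathcal{Y}_0$, the singletons $\{q\}$ for $q \in (q^\star, 1)$ in $\mathcal{Y}_2$, and $[0, q^\star] \in \mathcal{Y}_4$. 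Condition (ii) reduces to $\psi_{\mathrm{C}}(q) \leq 0$ on $(q^\star, 1)$, which holds under Myerson regularity since $\psi_{\mathrm{C}}(q)=-v'(F^{-1}(q))/f(F^{-1}(q))$. For condition (iv), the Lagrangian shift is precisely what rebalances the two parts on $[0, q^\star]$: the atom of $\tilde\mu_{\mathrm{C}}$ at $0$ equals $1/f(0) + v(\theta^\star_m)\geq 0$ (nonnegative by Myerson regularity, as $v(\theta^\star_m) \geq v(0) = -1/f(0)$), so $\tilde\mu_{\mathrm{C}}^+\vert_{[0,q^\star]} = (1/f(0) + v(\theta^\star_m))\delta_0$, while $\tilde\mu_{\mathrm{C}}^-\vert_{[0,q^\star]}$ has density $-\psi_{\mathrm{C}}$ on $(0, q^\star]$ with total mass $v(\theta^\star_m) - v(0) = 1/f(0) + v(\theta^\star_m)$. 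The two have equal total mass, and since the positive part sits at the leftmost point $0$, every non-decreasing $u$ satisfies $\int u\,\diff \tilde\mu^+ = (1/f(0)+v(\theta^\star_m)) u(0) \leq \int u\,\diff \tilde\mu^-$, yielding the $\mathrm{icx}$ ordering.

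Weak duality then transfers optimality back to the original problem. For any feasible $I$, $I(0) \leq I_{H_m^{-1}}(0) = -m$ and $\lambda \geq 0$ give
\begin{equation*}
    \int I\,\diff\mu_{\mathrm{C}} = \int I\,\diff\tilde\mu_{\mathrm{C}} + \lambda I(0) \leq \int I^\star\,\diff\tilde\mu_{\mathrm{C}} - \lambda m = \int I^\star\,\diff\mu_{\mathrm{C}},
\end{equation*}
where the last equality uses $\lambda I^\star(0) = -\lambda m$ (by construction when $\lambda>0$, or trivially when $\lambda=0$). Monotonicity of $m \mapsto \theta^\star_m$ follows immediately: $\theta^\star_m \equiv \theta^\star_0$ on the slack region, while on the binding region implicit differentiation of the defining equation gives $\diff \theta^\star_m/\diff m = -1/[G^{-1}(F(\theta^\star_m)) f(\theta^\star_m)] < 0$. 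The binding characterization is read off the dichotomy in the definition of $\theta^\star_m$: constraint \labelcref{eqn:feasibility_contest_mean} binds iff $\int_{F(\theta^\star_0)}^1 G^{-1}(q)\,\diff q < m$, equivalently $I_{G^{-1}}(F(\theta^\star_0)) > -m$.

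The main obstacle is precisely the binding case. With the unadjusted measure $\mu_{\mathrm{C}}$, the positive atom at $0$ has mass $1/f(0)$ while the negative mass on $[0, q^\star]$ equals $1/f(0) + v(\theta^\star_m) < 1/f(0)$, so condition (iv) of \cref{thm:charac_opt} on $[0, q^\star] \in \mathcal{Y}_4$ fails already for the test function $u\equiv 1$. The Lagrangian subtraction of a Dirac of weight $-v(\theta^\star_m)$ at $0$ exactly cancels the excess positive mass and restores the equal-mass balance needed for the $\mathrm{icx}$ order; the verification theorem then certifies optimality for the adjusted measure, and weak duality closes the loop back to the original problem.
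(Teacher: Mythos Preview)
Your proof is correct and takes a somewhat different route from the paper's. Both proofs treat the slack case identically (direct application of \cref{thm:charac_opt} with $\mathcal{Y}_4=\{[0,q^\star_0]\}$ and $\mathcal{Y}_2$ covering $(q^\star_0,1)$). For the binding case, however, the paper does not apply \cref{thm:charac_opt} directly; instead it reconstructs the duality argument from scratch, defining explicit multipliers $\gamma_1,\gamma_2,\gamma_4$ where $\gamma_1$ carries an atom of mass $-(1-\kappa)v(0)$ at $0$ (with $\kappa$ chosen so that $\kappa v(0)=v(0)-v(\theta^\star_m)$), and then verifies feasibility and zero duality gap by hand. Your Lagrangian relaxation does the same work more transparently: the multiplier $\lambda=-v(\theta^\star_m)$ is exactly the paper's $-(1-\kappa)v(0)$, but by subtracting $\lambda\delta_0$ from $\mu_{\mathrm{C}}$ first you restore the equal-mass condition on $[0,q^\star]$ and can invoke \cref{thm:charac_opt} as a black box for $\tilde\mu_{\mathrm{C}}$, then close with a one-line weak-duality inequality using $I(0)\le -m$. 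This is cleaner and highlights the role of the upper level-boundary constraint at $0$ as the sole obstruction, whereas the paper's explicit construction makes the connection to the general dual \labelcref{eqn:Dual} more visible. One small point worth noting: in the binding case $I^\star(0)=\bar u(0)$, so strictly speaking the partition of \cref{thm:charac_opt} would place $\{0\}$ in $\mathcal{Y}_1$; your verification on $[0,q^\star]$ still goes through because the local optimality argument for $\tilde\mu_{\mathrm{C}}$ is unaffected by whether $0$ is treated as a $\mathcal{Y}_1$-singleton or absorbed into the $\mathcal{Y}_4$-interval.
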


\cref{prop:contest_optimality_myerson} shows that with a sufficiently regular type distribution, more stringent standards on disposal lead to fewer types being excluded from the mechanism. It is also a generalization of Proposition 4.4 in \cite{Kleiner2021}, showing that the assortative assignment is effort-maximizing for a Myerson-regular type distribution. The proof can be found in \cref{secap:contest_optimality_myerson_proof}.

\subsection{Mean-Based Bayesian Persuasion with Informativeness Constraints}\label{sec:persuasion}

We now consider the Bayesian persuasion problem of \cite{Kamenica2011} with exogenous informativeness constraints. The standard Bayesian persuasion framework usually retains the assumption that all experiments are
feasible.\footnote{We refer to \cite{Kamenica2019} and \cite{Bergemann2019} for reviews of the Bayesian persuasion and information design literature.} In practice, however, information may be costly to provide, inducing an upper bound on informativeness. Moreover, legal or regulatory requirements may impose a minimum level of disclosure, inducing a lower bound on informativeness. This section shows how such information-constrained persuasion problems can be represented and solved using our tools.

\subsubsection{Model}

\paragraph{Primitives}

A state of the world $\omega\in\Omega\coloneqq[0,1]$ is drawn according to a prior distribution with continuous cumulative distribution function $F$. Before the state is realized, the sender commits to a Blackwell experiment $\sigma\colon \Omega \to \Delta(S)$. We assume the set of signal realization $S$ to be rich enough (in particular, we assume $\Omega\subseteq S$). The receiver observes the experiment chosen by the sender, updates the prior belief $F$ via Bayes' rule upon receiving any signal $s\in S$, and acts optimally given the resulting posterior belief. Following much of the recent literature after \cite{Gentzkow2016}, \cite{Kolotilin2018} and \cite{Dworczak2019}, we restrict attention to environments in which the receiver's optimal behavior and the sender's indirect utility $v$ depend only on the \emph{posterior mean} of the state, denoted $x\in X=[0,1]$.\footnote{In terms of primitives, this is equivalent to requiring that both the receiver's and the sender's preferences are affine in the state for every action. This assumption holds in various focal economic environments \citep[see][Remark 1]{Curello2024}.} The sender's value function $v$ is assumed to satisfy a minimal degree of regularity: (i) $v$ is absolutely continuous; and (ii) its almost-everywhere derivative $v'$ has bounded variation. We comment on these assumptions in \cref{rmk:regularity-v} below.

\paragraph{Distributions of posterior means}

By Bayes' rule, any Blackwell experiment induces a cumulative distribution function $G$ over posterior means. It is well known that, when the sender can freely access all possible Blackwell experiments, a distribution $G$ of posterior means can be induced if and only if $G$ \emph{majorizes} $F$ \citep[see][Proposition 2]{Kolotilin2018}.\footnote{This result can be attributed to \citet{Blackwell1951,Blackwell1953} and was later proven in greater generality by \citet{Strassen1965}.}

\paragraph{Informativeness}

The majorization order also provides a ranking of distributions of posterior means in terms of their \emph{informativeness}. Specifically, a feasible distribution $G_{1}$ is more informative than $G_{2}$ if and only if $G_{2}$ majorizes $G_{1}$. This informativeness order is in the spirit of Blackwell's: $G_{1}$ is more informative than $G_{2}$ precisely when it is preferred ex-ante by \emph{every} expected-utility decision maker whose preferences depend \emph{linearly} on the state.\footnote{Note, however, that this order is \emph{not} equivalent to Blackwell's, since Blackwell's order ranks an experiment higher whenever it yields a higher expected utility for \emph{any} decision problem.}

\subsubsection{Feasible, Extreme and Optimal Distribution of Posterior Means}

\paragraph{Constrained persuasion}

We thus incorporate informativeness constraints into the mean-based Bayesian persuasion problem by imposing bounds in the majorization order. Specifically, we assume that the sender can induce only those distributions of posterior means $G$ such that $\ubar{G} \succsim G \succsim \bar{G}$, where $\ubar{G} \succsim \bar{G} \succsim F$. This captures the idea that the sender must reveal at least as much information as in $\ubar{G}$, but no more than in $\bar{G}$.

Optimization problems involving such two-sided majorization constraints are usually intractable. We address this challenge by adopting the approach of \cite{Gentzkow2016}, reformulating the persuasion problem~\labelcref{eqn:cons_pers_pb} as a linear program over the convex function interval $\mathcal{I}_{\ubar{G},\bar{G}}$ (defined according to \labelcref{eqn:maj_CFI}).

\begin{lemma}\label{lem:baypers_infoconstraints}
    Assume that $\bar{G}$ is continuous so \cref{assu:diff} holds. The sender's problem can be written as
    \begin{equation}\label{eqn:cons_pers_pb_CFI}
    \max_{I \in \mathcal{I}_{\ubar{G},\bar{G}}} \int_{X} I \, \mathrm{d}\mu_v, \tag{Pers}
\end{equation}
where $\mu_v$ is the finite signed Radon measure defined by
\begin{equation}
    \mu_v\bigl([0,x]\bigr) = v'(x),
\end{equation}
for every $x \in [0,1]$.
\end{lemma}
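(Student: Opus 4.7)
The plan is to rewrite the sender's objective $\int v\,\diff G$ as a linear functional of the cumulative integral $I_G(x) = \int_0^x G(s)\,\diff s - m_G$, and then to invoke \cref{lem:two-sided-maj} to identify the feasible set $\{G : \ubar{G} \succsim G \succsim \bar{G}\}$ with $\mathcal{I}_{\ubar{G},\bar{G}}$ through the bijection $G \mapsto I_G$. The continuity of $\bar{G}$ guarantees that \cref{assu:diff} holds, so \cref{lem:two-sided-maj} applies directly.

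Next, I would rewrite the sender's payoff via two integration-by-parts arguments. Using absolute continuity, $v(x) = v(0) + \int_0^x v'(t)\,\diff t$; substituting into $\int v \, \diff G$ and applying Fubini (justified because $v' \in L^1$ by its bounded variation) gives
\begin{equation*}
    \int_0^1 v(x)\,\diff G(x) = v(1) - \int_0^1 v'(x) G(x)\,\diff x.
\end{equation*}

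Then, I would convert $-\int v'G$ into $\int I_G \, \diff\mu_v$. Since $v'$ has bounded variation, $\mu_v$ is a well-defined finite signed Radon measure on $[0,1]$, decomposing as $\mu_v = v'(0)\delta_0 + \tilde{\mu}_{v'}$, where $\tilde{\mu}_{v'}$ is the Lebesgue-Stieltjes signed measure on $(0,1]$ associated with $v'$. Applying integration by parts on $(0,1]$ to the continuous function $I_G$ (which satisfies $I_G'(x)=G(x)$ almost everywhere, $I_G(0) = -m_G$, and $I_G(1) = m_G - m_G = 0$) and the BV function $v'$, then adding the atomic contribution $I_G(0)v'(0)$ at $0$, the boundary term $-I_G(0)v'(0)$ from the Stieltjes formula cancels against the atom, leaving
\begin{equation*}
    \int_{[0,1]} I_G\,\diff\mu_v = I_G(1) v'(1) - \int_0^1 v'(x) G(x)\,\diff x = -\int_0^1 v'(x) G(x)\,\diff x.
\end{equation*}
Combining the two displays yields $\int v\,\diff G = v(1) + \int_{[0,1]} I_G\,\diff\mu_v$. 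Since $v(1)$ does not depend on $G$, maximizing the sender's payoff over feasible distributions is equivalent to \labelcref{eqn:cons_pers_pb_CFI}.

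The main delicate point is the Lebesgue-Stieltjes integration by parts in the third step. Because $\mu_v$ is a signed measure with possibly atomic parts at the jumps of $v'$, one must work with an appropriate right-continuous representative of $v'$ and explicitly track the atom at $0$. The continuity of $I_G$ avoids left/right-continuity ambiguities at atoms in $(0,1]$, so the boundary contributions cancel cleanly as claimed.
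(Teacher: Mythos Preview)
Your proposal is correct and follows essentially the same route as the paper: two integrations by parts (first exploiting absolute continuity of $v$, then bounded variation of $v'$) to rewrite $\int v\,\diff G$ as $v(1)+\int_{[0,1]} I_G\,\diff\mu_v$, followed by \cref{lem:two-sided-maj} to identify the feasible set with $\mathcal{I}_{\ubar{G},\bar{G}}$. Your treatment of the atom of $\mu_v$ at $0$ is in fact more explicit than the paper's, which absorbs the boundary term $v'(0)I_G(0)$ into the Stieltjes integral without comment.
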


The proof can be found in \cref{secap:baypers_infoconstraints_proof}. The measure $\mu_{v}$ can be understood as the ``second derivative'' of the sender's value function $v$. Intuitively, regions where $\mu_{v}$ assigns positive mass correspond to the sender preferring information revelation, while regions where $\mu_{v}$ assigns negative mass correspond to the sender preferring information concealment. 

\begin{remark}[Regularity of $v$]\label{rmk:regularity-v}
    The assumption that $v'$ has bounded variation is precisely the \emph{minimal} condition required for the second derivative of $v$ to exist as a finite signed Radon measure \citep[][Chapter~3]{Folland1999}. This condition is satisfied, in particular, under a regularity assumption commonly encountered in the literature: namely, when $v$ is continuous and piecewise convex-concave (see, e.g., \citealp{Dworczak2019}, Definition 1; or \citealp{Curello2024}, Definition 3). In this case, it follows from \cite{Dudley1977} that the intervals where $v$ is convex (resp.~concave) correspond precisely to the regions where the measure $\mu_{v}$ assigns positive (resp.~negative) mass.\footnote{Note also that $v'$ being of bounded variation is equivalent to $v$ being of ``\emph{bounded curvature}'' in the following sense: there exist two convex functions $f, g \colon X \to \mathbb{R}$ such that $v = f - g$ if and only if there exists a function of bounded variation $\phi \colon X \to \mathbb{R}$ such that $v(b) - v(a) = \int_{a}^{b} \phi(x) \, \mathrm{d}x$ for all $a, b \in X$ \citep[see][Theorem~A, p.~23]{Roberts1974}.} When, furthermore, $v'$ is absolutely continuous, the measure $\mu_{v}$ admits the classical second derivative $v''$ as its (signed) density function \citep[see, e.g.,][]{Lyu2025}.
\end{remark}

\paragraph{S-shaped value function}

We now apply our results to the canonical environment in which the sender has an S-shaped utility function \citep{Kolotilin2022}. Formally, there exists $\hat{x}\in X$ such that $v$ is convex on $[0,\hat{x}]$ and concave on $[\hat{x},1]$. Absent informativeness constraints, upper censorship---revealing the state fully below some cutoff and pooling all states above the cutoff---is therefore optimal following \cite{Kolotilin2022}, Theorem 1. We further impose that $v$ is differentiable, and thus say that $v$ is \emph{smoothly S-shaped}. This assumption allows us to derive results on optimality and comparative statics with respect to the informativeness constraint. For this define,
\begin{equation*}
    I_{x^\star}(x) \coloneqq \left\{\begin{array}{ll}
        I_{\bar{G}}(x)  & \text{if $x \in [0,x^{\star}]$,} \\[5pt]
        I_{\bar{G}}(x^{\star}) + \bar{G}(x^{\star})(x-x^{\star}) & \text{if $x\in \bigl(x^{\star},h(x^{\star})\bigr)$,} \\[5pt]
        I_{\ubar{G}}(x) & \text{if $x \in \bigl[h(x^{\star}), 1\bigr]$,}
    \end{array}
    \right.
\end{equation*}
for any $x^{\star} \in [0,1]$, where $h(x^{\star})$ satisfies $I_{\bar{G}}(x^{\star}) + \bar{G}(x^{\star})\bigl(h(x^{\star})-x^{\star}\bigr) = I_{\ubar{G}}\bigl(h(x^{\star})\bigr)$.\footnote{For any $x^{\star}\in X$, $h(x^{\star})$ exists and is unique. See \cref{secap:infodesign_sshaped_proof}.}

\begin{proposition}\label{lem:infodesign_sshaped}
    Suppose $v$ is \emph{smoothly S-shaped}. Then there exists $x^{\star}\in X$ such that the three following conditions hold:
        \vskip0.33cm
        \begin{enumerate}[(i)]
        \setlength\itemsep{0.33cm}
            \item $v''(x) \geq 0$ for all $x \in [0,x^{\star}]$.
            \item $\displaystyle\int_{x^{\star}}^{h(x^{\star})} (x-x^{\star})v''(x) \, \diff x = 0$.
            \item $v''(x) \leq 0$ for all $x \in \bigl[h(x^{\star}),1\bigr]$.
    \end{enumerate}
    
    These conditions imply that $I_{x^{\star}}$ is optimal for \labelcref{eqn:cons_pers_pb_CFI}. Furthermore, when the lower bound on informativeness becomes \emph{tighter} (i.e., if $\ubar{G}$ is replaced by $\tilde{\ubar{G}} \prec \ubar{G}$), the full revelation region becomes \emph{larger} (i.e., $x^{\star}$ becomes greater).
\end{proposition}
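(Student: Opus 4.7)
The plan is to derive optimality from \cref{thm:charac_opt}, establish existence of $x^{\star}$ by an intermediate value argument on the moment condition~(ii), and conclude the comparative static from the monotonicity of the implicit equation defining $x^{\star}$.

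For optimality, I would first identify the partition $\mathcal{Y}$ associated with the candidate $I_{x^{\star}}$: the points of $[0,x^{\star}]$ lie in $\mathcal{Y}_{1}$ as singletons (upper level saturation by $I_{\bar{G}}$); the interval $Y^{\star}\coloneqq[x^{\star},h(x^{\star})]$ belongs to $\mathcal{Y}_{1}$ as a tangential-saturation element of $\mathcal{T}$ tangent to $I_{\bar{G}}$ at the left endpoint $x^{\star}$; the points of $[h(x^{\star}),1]$ lie in $\mathcal{Y}_{2}$; and $\mathcal{Y}_{3},\mathcal{Y}_{4},\mathcal{Y}_{5}$ are empty. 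Since $v$ is smooth, $\mu_{v}$ is absolutely continuous with density $v''$ and atomless, so every singleton condition of \cref{thm:charac_opt} is trivially satisfied. The sole substantive requirement is the convex-order condition $\mu_{v}|_{Y^{\star}}\leq_{\mathrm{cx}} m\,\delta_{x^{\star}}$ with $m=v'(h(x^{\star}))-v'(x^{\star})$. Integrating by parts twice against an arbitrary convex test function, this reduces to (a) the barycenter identity $\int_{x^{\star}}^{h(x^{\star})}(t-x^{\star})v''(t)\,\diff t=0$, which is exactly condition~(ii), and (b) the pointwise inequality $v(t)\leq v(h(x^{\star}))+v'(h(x^{\star}))(t-h(x^{\star}))$ for every $t\in Y^{\star}$, i.e., the tangent to $v$ at $h(x^{\star})$ dominates $v$ on the censorship interval. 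Claim~(b) is immediate on $[\hat{x},h(x^{\star})]$ by concavity of $v$; on $[x^{\star},\hat{x}]$, concavity on $[\hat{x},h(x^{\star})]$ combined with~(ii) yields $(v(\hat{x})-v(x^{\star}))/(\hat{x}-x^{\star})\leq v'(h(x^{\star}))$, and convexity of $v$ on $[x^{\star},\hat{x}]$ then delivers the required dominance for all $t$ in that sub-interval.

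For existence, I would define $\Phi(x)\coloneqq\int_{x}^{h(x)}(t-x)v''(t)\,\diff t$, which is continuous since $h$ is continuous in $x$ (by its defining equation $I_{\bar{G}}(x)+\bar{G}(x)(h(x)-x)=I_{\ubar{G}}(h(x))$ together with convexity of $I_{\bar{G}}$ and $I_{\ubar{G}}$). At $x=\hat{x}$, $\Phi(\hat{x})\leq 0$ because the integrand is nonpositive on $[\hat{x},1]$; at $x=0$, $\Phi(0)\geq 0$ using that $h(0)\geq\hat{x}$ under the majorization constraint $\ubar{G}\succsim\bar{G}$ together with the S-shape of $v$. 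The intermediate value theorem then yields $x^{\star}\in[0,\hat{x}]$ with $\Phi(x^{\star})=0$; conditions~(i) and~(iii) follow from the S-shape of $v$ and the inclusion $x^{\star}\leq\hat{x}\leq h(x^{\star})$.

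For the comparative static, differentiating $\Phi$ at a zero gives $\Phi'(x^{\star})=-m+(h(x^{\star})-x^{\star})v''(h(x^{\star}))h'(x^{\star})\leq 0$ (strict under mild non-degeneracy), so $\Phi$ is strictly decreasing at $x^{\star}$. Replacing $\ubar{G}$ by $\tilde{\ubar{G}}\prec\ubar{G}$ shifts the lower boundary upward ($I_{\tilde{\ubar{G}}}\geq I_{\ubar{G}}$), so the tangent to $I_{\bar{G}}$ at $x^{\star}$ now meets the new boundary at some $\tilde{h}(x^{\star})<h(x^{\star})$. Because $(t-x^{\star})v''(t)\leq 0$ on $[\tilde{h}(x^{\star}),h(x^{\star})]\subseteq[\hat{x},1]$, removing this sub-interval from the integration window makes $\Phi_{\tilde{\ubar{G}}}(x^{\star})>0$; combined with strict monotonicity this forces the new zero to lie strictly to the right of $x^{\star}$. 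The main technical obstacle is part~(b) of the convex-order reduction, whose proof hinges on a chord-slope argument combining convexity of $v$ on $[x^{\star},\hat{x}]$, concavity of $v$ on $[\hat{x},h(x^{\star})]$, and condition~(ii) simultaneously; a secondary subtlety is ensuring $\Phi$ is strictly monotone at its zero without regularity beyond smoothness.
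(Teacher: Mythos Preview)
Your verification of optimality via \cref{thm:charac_opt} is correct and in fact more explicit than the paper's, which simply defers to the delegation argument (\cref{prop:delegation_logconc}) and there cites Theorem~3.A.44 of \cite{Shaked2007} for the convex-order step. Your reduction to the tangent-dominance inequality~(b) is a clean equivalent: for $c\in Y^{\star}$, integration by parts gives $\int_{c}^{h(x^{\star})}(t-c)v''(t)\,\diff t=v(c)-L(c)$, where $L$ is the tangent to $v$ at $h(x^{\star})$, so testing $\mu_v\vert_{Y^{\star}}\leq_{\mathrm{cx}} m\,\delta_{x^{\star}}$ against the functions $t\mapsto(t-c)_+$ reduces exactly to $v\leq L$ on $Y^{\star}$; your chord-slope argument for~(b) is valid and, as a byproduct, also yields $m\geq 0$.

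The existence step has a genuine gap. Your claim $h(0)\geq\hat{x}$ is not justified and is typically false: since $\bar{G}$ is continuous, $\bar{G}(0)=0$, so the tangent to $I_{\bar{G}}$ at $0$ is the constant line at level $I_{\bar{G}}(0)=I_{\ubar{G}}(0)$, and hence $h(0)=\sup\{x:I_{\ubar{G}}(x)=I_{\ubar{G}}(0)\}=0$ whenever $\ubar{G}$ has full support on $[0,1]$. Then $\Phi(0)=0$ is a degenerate root violating condition~(iii). The paper's route (via the delegation proof) avoids this by decoupling the two conditions: let $h^{\ast}(x)$ denote the nontrivial zero of $y\mapsto\int_{x}^{y}(t-x)v''(t)\,\diff t$, which lies in $[\hat{x},1]$, is decreasing in $x$, and does not depend on $\ubar{G}$; the tangent function $h$ is increasing in $x$. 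An intersection argument on $[0,\hat{x}]$ then produces $x^{\star}$ with $h(x^{\star})=h^{\ast}(x^{\star})$, with the boundary cases handled separately. This decoupling also delivers the comparative static directly---tightening $\ubar{G}$ shifts $h$ downward while $h^{\ast}$ is unchanged, so the intersection moves right---without relying on your inclusion $[\tilde{h}(x^{\star}),h(x^{\star})]\subseteq[\hat{x},1]$, which can fail (although your conclusion $\Phi_{\tilde{\ubar{G}}}(x^{\star})\geq 0$ does survive, by splitting on whether $\tilde{h}(x^{\star})$ lies above or below $\hat{x}$).
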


The proof mirrors the arguments in the proofs of \cref{prop:delegation_logconc,cor:delegation_compstat} in \cref{sec:delegation} on the delegation problem. We provide some details on this in \cref{secap:infodesign_sshaped_proof}.

The comparative statics result in \cref{lem:infodesign_sshaped} implies that more stringent informativeness constraints influence the amount of information that is revealed at the optimum in a non-trivial way. Crucially, a more stringent informativeness constraint not only influences how much information is revealed above the optimal censorship cutoff $x^{\star}$, but also influences its \emph{location}. In particular, higher informativeness standards lead to a \emph{larger} full-revelation region.

\section{Concluding remarks}

We studied \emph{convex function intervals} (CFIs), sets of one-dimensional convex functions that satisfy slope constraints and lie between two boundary functions. We characterized the extreme points of CFIs and provided sufficient optimality conditions for linear programming problems defined over them. These abstract results yield concrete insights across a range of economic design problems. In particular, we recover classical results as special cases and extend them to environments with additional constraints, including participation constraints in adverse selection, allocative constraints in contests, and informativeness restrictions in Bayesian persuasion.  

A broad class of fundamental economic problems, such as optimal income taxation \citep{Mirrlees1971}, monopolistic insurance \citep{Stiglitz1977}, or monopolistic screening with convex production costs \citep{Mussa1978}, can be formulated as \emph{convex programming problems} over CFIs, i.e., maximizing a \emph{concave} objective functional over a (compact and convex) CFI of indirect utility functions. Yet little is known about the solutions to variational problems subject to convexity constraint beyond their existence and regularity \citep{Lions1998,Carlier2001a,Carlier2001b,Carlier2008}. The only general methodology is that of \citet{Rochet1998}, which considers menus containing a unique outside option. Extending our analysis of type-dependent participation constraints to convex variational problems---where interior solutions may be optimal---appears to be a promising avenue for future research.

\bibliographystyle{ecta}
\bibliography{ref}
\clearpage

\appendix\label{secap:appendix}

\begin{center}
    \huge \textbf{Mathematical Appendix}
\end{center}

\section{Representation of CFIs and Main Theorems}

\subsection{Proof of \cref{prop:representation}}\label{secap:proof_representation}

Let $\mathcal{U}$ be a CFI. It is non-empty since $\ubar{u},\bar{u}\in\mathcal{U}$. We begin by showing the convexity of $\mathcal{U}$. Let $u_{1},u_{2}\in\mathcal{U}$ and let $\alpha\in[0,1]$. Therefore, $\alpha u_{1} + (1-\alpha) u_{2}$ is a convex function. Moreover, $\ubar{u}\leq \alpha u_{1} + (1-\alpha) u_{2} \leq \bar{u}$ and, by linearity of the subdifferential \citep[][Theorem 4.1.1]{Hiriart2001}, $\partial(\alpha u_{1} + (1-\alpha) u_{2})(x) = \alpha \partial u_{1}(x) + (1-\alpha) \partial u_{2}(x) \subseteq S$ for any $x\in X$. This implies that $\mathcal{U}$ is convex.

We now turn to compactness. By definition $\mathcal{U}$ is a subset of $\mathcal{C}$. We endow $\mathcal{U}$ with the supremum-norm $\lVert\cdot \rVert_{\infty}$. Any $u\in \mathcal{U}$ is bounded by $\max \{ \lVert \ubar{u} \rVert_{\infty}, \lVert \bar{u} \rVert_{\infty} \} $ (which implies that $\mathcal{U}$ is uniformly bounded), and is $K$-Lipschitz continuous since $\lvert u' \rvert \leq K \coloneqq \max\{\lvert\ubar{s}\rvert,\lvert\bar{s}\rvert\}$ (which implies that $\mathcal{U}$ is uniformly equicontinuous). Therefore, the Arzelà–Ascoli Theorem \citep[][Theorem 3]{Royden2010} implies that $\mathcal{U}$ is compact in the supremum norm. 

Since $\mathcal{U}$ is a compact and convex subset of the Banach space $(\mathcal{C},\lVert\cdot\rVert_{\infty})$, it is hence metrizable. Choquet's Theorem \citep[see][p.~14]{Phelps2001}  thus implies that any element of $\mathcal{U}$ can be represented by a probability measure supported in $\ex(\mathcal{U})$, and Proposition 1.2 in \cite{Phelps2001} implies that $\mathcal{U}$ is equal to the closed convex hull of its extreme points. \qed

\subsection{Proof of \cref{thm:ext_pt}}\label{secap:ext_pt_proof}

\subsubsection{Preliminaries}\label{secap:Bregman_perturb}

Let $u\in\mathcal{K}$. For any $x,y\in X$, we let
\begin{equation*}
  t_{u}(x ; y^{-}) = u(y) + \partial_{-}u(y)(x-y),
\end{equation*}
and
\begin{equation*}
    t_{u}(x ; y^{+}) = u(y) + \partial_{+}u(y)(x-y),
\end{equation*}
be the left (resp.~right) \emph{tangent segment} to $u$ at $y$ evaluated at $x$. Whenever $\partial_{-}u(y) = \partial_{+}u(y) = u'(y)$ we let $t_{u}(\, \cdot \, ; y)$ denote the unique tangent line to $u$ at $y$.

Moreover, for any $a,b\in X$ and $x\in [a,b]$, we let
\begin{equation*}
    c_{u}(x ; a, b) = u(a) + \Biggl(\frac{u(b)-u(a)}{b-a}\Biggr)(x-a),
\end{equation*}
be the \emph{chord segment} linking $u(a)$ to $u(b)$ evaluated at $x$.

Fix any compact interval $I=[a,b]\subseteq X$. We introduce the function $g_{a,b}\colon X \to \mathbb{R}$ defined by
\begin{equation}\label{eqn:Bregman_perturb}
    g_{a,b}(x)=\left\{
    \begin{array}{cc}
        0 & \text{if $x\notin I$} \\[5pt]
        u(x) - [t_{u}(x;a^{+}) \vee t_{u}(x;b^{-})] & \text{if $x\in I$}
    \end{array}
    \right..
\end{equation}
for each $x\in X$, where the symbol $\vee$ stands for the pointwise maximum operator.\footnote{When $u$ is a differentiable and strictly convex function on $\mathbb{R}$, the function $(x,y)\in \mathbb{R}^2 \mapsto u(x)-t_{u}(x;y)$ is known as the Bregman divergence of $u$. This function measures how much $u$ diverges from its first-order Taylor expansion around $y$ at any $x\in X$.}

Following Proposition B.1.2.1 in \cite{Hiriart2001}, for any $x,y\in X$ and $s\in \partial u(y)$, $u(x)\geq u(y)+s(x-y)$, with equality at $x=y$. Hence, $g_{a,b}\geq 0$. Moreover, $g_{a,b}(a)=g_{a,b}(b)=0$ which implies that $g_{a,b}\in\mathcal{C}$. Furthermore, for any $x\in X$,
\begin{equation*}
    (u-g_{a,b})(x)=\left\{
    \begin{array}{cc}
        u(x) & \text{if $x\notin I$} \\[5pt]
        t_{u}(x;a^{+}) \vee t_{u}(x;b^{-}) & \text{if $x\in I$}
    \end{array}
    \right.,
\end{equation*}
and,
\begin{equation*}
    (u+g_{a,b})(x)=\left\{
    \begin{array}{cc}
        u(x) & \text{if $x\notin I$} \\[5pt]
        2u(x) - [t_{u}(x;a^{+}) \vee t_{u}(x;b^{-})] & \text{if $x\in I$}
    \end{array}
    \right..
\end{equation*}

Since $u$ and $g_{a,b}$ are both continuous on $X$, $u\pm g_{a,b}$ are also continuous on $X$. Moreover, $u-g_{a,b}$ is convex on $X$, and thus belongs to $\mathcal{K}$. Note, however, that $u+g_{a,b}$ might fail to be convex.

For any real-valued function $f$ defined on $X$, let $\vex[f]$ denote its \emph{convexification}\footnote{The notion of convexification is also known in convex analysis as the \emph{closed-convex hull} of a function \cite[see][Definition B.2.5.3]{Hiriart2001}.} which, for each $x\in X$, is defined by
\begin{equation}\label{eqn:vex_def}
    \vex[f](x) = \sup \bigl\{ \, g(x) \; \vert \; \text{$g\colon X\to\mathbb{R}$ convex, $g\leq f$} \bigr\},
\end{equation}

For any $f\in \mathcal{C}$, $\vex[f]$ also belongs to $\mathcal{C}$ \citep[][Proposition B.2.5.2]{Hiriart2001} and, hence, belongs to $\mathcal{K}$. 

Finally, let $h_{a,b}\colon X \to \mathbb{R}$ be the function defined by
\begin{equation}\label{eqn:perturbation}
    h_{a,b}(x)=\vex[u+g_{a,b}](x) - u(x),
\end{equation}
for every $x\in X$. The function $h_{a,b}$ is continuous on $X$ (since $u+g_{a,b}$ is itself continuous). Moreover, $h_{a,b}\geq 0$: Since $u$ is a convex and $u\leq u+g_{a,b}$ we must have $\vex[u+g_{a,b}]\geq u$ by \labelcref{eqn:vex_def}.

We illustrate the construction of \labelcref{eqn:perturbation} on \cref{fig:Bregman_perturb} for a strictly convex and differentiable convex function $u$.
\begin{figure}[h]
    \centering
    \begin{subfigure}[t]{0.495\linewidth}
      \centering
      \includegraphics{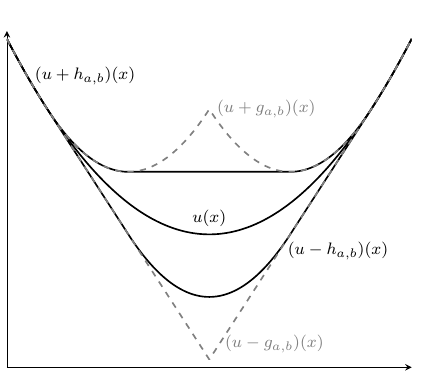}
      \caption{The functions $u$ and $u \pm h_{a,b}$.}
      \label{fig:convex_comb}
    \end{subfigure}
    \begin{subfigure}[t]{0.495\linewidth}
      \centering
      \includegraphics{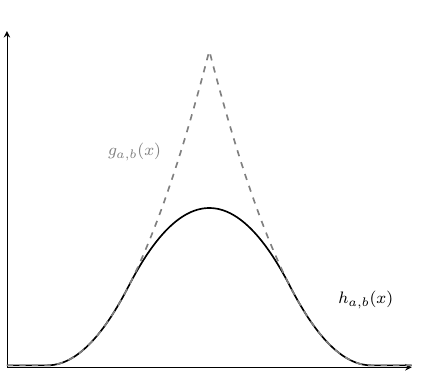}
      \caption{The perturbation $h_{a,b}$.}
      \label{fig:perturb}
    \end{subfigure}
    \caption{The perturbation $h_{a,b}$ for $u(x)=(x-\frac{1}{2})^2$ for all $x\in[0,1]$, and $a=0.1$ and $b=0.9$.}
    \label{fig:Bregman_perturb}
\end{figure}

We now prove that $h_{a,b}$ is not identically zero if and only if $u$ is not piecewise affine with at most two kinks.
\begin{lemma}\label{lem:perturbation}
  Let $u\in \mathcal{U}$. Assume there exists $I=[a,b]\subseteq X$ such that $\ubar{u}<u\vert_{I}<\bar{u}$, and let $h_{a,b}$ be defined according to \labelcref{eqn:perturbation}. Then, for any $\varepsilon\in[0,1]$, $u\pm \varepsilon h_{a,b}\in \mathcal{K}$ and $\partial(u\pm \varepsilon h_{a,b})(X)\subseteq S$. Furthermore, $h_{a,b}$ is identically zero on $X$ if and only if there exists three (not necessarily distinct) affine functions $\ell_{0}$, $\ell_{1}$ and $\ell_{2}$ defined on $I$ such that $u\vert_{I}=\ell_{0}\vee \ell_{1}\vee \ell_{2}$.
\end{lemma}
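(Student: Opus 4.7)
The plan is to introduce $\phi \coloneqq \vex[u+g_{a,b}]$ so that $h_{a,b} = \phi - u$, and to exploit the standard structure of the convex envelope: $X$ decomposes into the contact set $K \coloneqq \{\phi = u + g_{a,b}\}$, where $\phi$ agrees with $u+g_{a,b}$, and its complement, on each connected component of which $\phi$ is an affine chord tangent to $u+g_{a,b}$ at both endpoints. Since $g_{a,b}$ vanishes off $I$, one has $X\setminus I\subseteq K$ and $\phi=u$ outside $I$.

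For the convexity claim, $u+\varepsilon h_{a,b}=(1-\varepsilon)u+\varepsilon\phi$ is an immediate convex combination of convex functions. Rewriting $u-\varepsilon h_{a,b}=(1-\varepsilon)u+\varepsilon(2u-\phi)$ reduces the remaining case to showing that $2u-\phi$ is convex. On each affine piece of $\phi$, $2u-\phi$ equals $2u$ minus an affine function and is therefore convex; on $K$, $2u-\phi=u-g_{a,b}$ coincides with $t_{u}(\cdot;a^{+})\vee t_{u}(\cdot;b^{-})$ on $I$ and with $u$ outside $I$, hence is convex as well. At every transition point $x_0$ between these two types of region, I would check $\partial_{-}(2u-\phi)(x_0)\leq\partial_{+}(2u-\phi)(x_0)$ by combining the tangency of $\phi$ at the boundary of $K$ (which sandwiches the slope of the adjacent affine piece of $\phi$ between the one-sided derivatives of $u+g_{a,b}$ at $x_0$) with the identity $\partial_{+}(u+g_{a,b})(x)-\partial_{-}(u+g_{a,b})(x)=2[\partial_{+}u(x)-\partial_{-}u(x)]-[\partial_{+}(t_{u}(\cdot;a^{+})\vee t_{u}(\cdot;b^{-}))(x)-\partial_{-}(t_{u}(\cdot;a^{+})\vee t_{u}(\cdot;b^{-}))(x)]$, in which the last bracket is non-negative by convexity of $t_{u}(\cdot;a^{+})\vee t_{u}(\cdot;b^{-})$.

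For the slope bound, I would first show $\partial\phi(X)\subseteq S$: outside $I$ this is immediate, and at $a$ the convexity of $\phi$, the inequality $\phi\leq u+g_{a,b}$, and the identity $\partial_{+}g_{a,b}(a)=0$ together give $\partial_{+}\phi(a)\in[\partial_{-}u(a),\partial_{+}u(a)]\subseteq S$; symmetrically at $b$, and convexity of $\phi$ confines every interior subgradient of $\phi$ between these two values. The bound for $u+\varepsilon h_{a,b}$ then follows by convex combination; for $u-\varepsilon h_{a,b}$, the convexity of $2u-\phi$ makes its derivative non-decreasing on $I$, with endpoint values $\partial_{+}(u-\varepsilon h_{a,b})(a)\geq\partial_{+}u(a)\geq\ubar{s}$ and $\partial_{-}(u-\varepsilon h_{a,b})(b)\leq\partial_{-}u(b)\leq\bar{s}$ obtained by direct computation.

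For the characterization, sufficiency proceeds by identifying $\ell_0=t_{u}(\cdot;a^{+})$ and $\ell_2=t_{u}(\cdot;b^{-})$ as the extreme affine pieces of $u\vert_{I}$, so that $g_{a,b}$ vanishes where $u$ equals $\ell_0$ or $\ell_2$, while on the middle subinterval where $u=\ell_1$, $u$ itself is the chord of $u+g_{a,b}$; hence $\phi=u$. For necessity, assume $\phi=u$ on $I$: the convex envelope structure forces $u$ to be affine on each connected component of $\{g_{a,b}>0\}\cap I$, and convexity of $u$ makes $\{u=t_{u}(\cdot;a^{+})\}\cap I$ and $\{u=t_{u}(\cdot;b^{-})\}\cap I$ closed intervals $[a,\alpha]$ and $[\beta,b]$ respectively, so $\{g_{a,b}>0\}\cap I$ is either empty (and then $u\vert_{I}=t_{u}(\cdot;a^{+})\vee t_{u}(\cdot;b^{-})$) or the single interval $(\alpha,\beta)$ on which $u$ equals some affine $\ell_m$, yielding $u\vert_{I}=t_{u}(\cdot;a^{+})\vee\ell_m\vee t_{u}(\cdot;b^{-})$. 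The main obstacle is the transition analysis for $2u-\phi$: establishing the one-sided derivative inequality requires careful bookkeeping of the jumps of $\partial_{\pm}u$, $\partial_{\pm}g_{a,b}$, and $\partial_{\pm}(t_{u}(\cdot;a^{+})\vee t_{u}(\cdot;b^{-}))$, where the concrete geometry of the Bregman perturbation really matters.
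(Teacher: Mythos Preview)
Your approach is correct and takes a genuinely different route from the paper. The paper's proof hinges on the explicit decomposition $(u+g_{a,b})\vert_{I}=\varphi_a\wedge\varphi_b$ with $\varphi_a=2u-t_u(\cdot;a^+)$ and $\varphi_b=2u-t_u(\cdot;b^-)$ both convex; it then invokes the inf--chord formula for the convexification of a minimum of two convex functions to obtain a closed form for $\vex[u+g_{a,b}]$ with \emph{exactly one} chord segment $[x_0,x_1]$. This yields a fully explicit piecewise expression for $u-h_{a,b}$ (five pieces), and the junction inequalities are verified using the tangency $m\in\partial\varphi_a(x_0)$. You instead bypass the $\varphi_a\wedge\varphi_b$ observation and work directly with the general contact-set/chord decomposition of $\phi=\vex[u+g_{a,b}]$, proving convexity of $2u-\phi$ by gluing: convex on each chord (since $2u-\text{affine}$), equal to $t_u(\cdot;a^+)\vee t_u(\cdot;b^-)$ on $K\cap I$, and checking $\partial_-\leq\partial_+$ at transitions via the sandwich $\partial_-(u+g_{a,b})(x_0)\leq m\leq\partial_+(u+g_{a,b})(x_0)$. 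Your argument is more conceptual and does not require knowing a priori that there is a single chord; the paper's argument buys an explicit formula that also streamlines the ``if and only if'' characterization (it reads it off directly from the tripartite form), whereas you recover the same characterization by identifying $K\cap I$ with $\{g_{a,b}=0\}\cap I=[a,\alpha]\cup[\beta,b]$. Both routes ultimately verify the same transition inequalities, just parametrized differently.
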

\begin{proof}[Proof of \cref{lem:perturbation}]
  Let $u\in \mathcal{U}$. Assume there exists $I=[a,b]\subseteq X$ such that $\ubar{u}<u\vert_{I}<\bar{u}$, and let $g_{a,b}$ and $h_{a,b}$ be respectively defined according to \labelcref{eqn:Bregman_perturb,eqn:perturbation}. For any $\varepsilon\in[0,1]$,
\begin{equation*}
  u+\varepsilon h_{a,b}=\varepsilon \vex[u+g_{a,b}]+(1-\varepsilon) u,
\end{equation*}
hence, $u+\varepsilon h_{a,b}\in \mathcal{K}$ as the convex combination between two continuous convex functions.

What remains to be proven is that, for any $\varepsilon\in [0,1]$, $(u-\varepsilon h_{a,b})\in\mathcal{K}$ and $\partial(u\pm \varepsilon h_{a,b})(X)\subseteq S$. We will prove simultaneously that $h_{a,b}$ is identically zero on $I$ if and only if there exists three (not necessarily distinct) affine functions $\ell_{0}$, $\ell_{1}$ and $\ell_{2}$ defined on $I$ such that, for any $x\in I$, $u\vert_{I}=\ell_{0}\vee \ell_{1}\vee \ell_{2}$. We proceed in two steps.

\textbf{Step 1:} We prove that $g_{a,b}$ is identically zero on $X$ if and only if $u$ can be written as the pointwise supremum of at most two affine functions on $I$. 

Assume that there exist (not necessarily distinct) affine functions defined on $I$ such that $u\vert_{I}=\ell_{0}\vee \ell_{1}$. If $\ell_{0}=\ell_{1}$ then $u\vert_{I}$ is affine and thus $u\vert_{I}=t_{u}(\cdot,a^{+})\vert_{I}=t_{u}(\cdot,b^{-})\vert_{I}$. Hence, $g_{a,b}(x)=0$ for all $x\in X$. If $\ell_{0}\neq \ell_{1}$, there exists a unique $\hat{x}\in \Int(I)$ such that $\ell_{0}(\hat{x})=\ell_{1}(\hat{x})$. Then, for all $x\in[a,\hat{x}]$, $\ell_{0}(x) = t_{u}(x;a^{+})$ and, for all $x\in[\hat{x},b]$, $\ell_{1}(x) = t_{u}(x;b^{-})$. Hence $g_{a,b}(x)=0$ for all $x\in X$.

Conversely, assume that $g_{a,b}(x)=0$ for all $x\in X$. Then, $u(x)=t_{u}(x; a) \vee t_{u}(x;b^{-})$ for all $x\in I$. Therefore, $u$ is the pointwise supremum of two affine functions, with $t_{u}(\, \cdot \, ; a^{+}) = t_{u}(\, \cdot \, ; b^{-})$ if and only if $u$ is affine on $X$.

\textbf{Step 2:} Step 1 implies that if $u$ can be written as the supremum of at most two affine functions on $I$, then $\vex[u+g_{a,b}] = \vex[u] = u$ so $h_{a,b}$ is identically null. Assume, therefore, that $g_{a,b}$ cannot be written as the pointwise supremum of at most two affine functions on $I$. We prove that there exists $[x_{0},x_{1}] \subseteq I$ (possibly equal to a singleton) such that (see \cref{fig:perturb} for a visualization):
\begin{equation}\label{eqn:perturb_tripartition}
    h_{a,b}(x) = \left\{
    \begin{array}{ll}
        0 & \text{if $x\in[0,a)$} \\[5pt]
        g_{a,b}(x) & \text{if $x\in[a,x_{0})$} \\[5pt]
        c_{u+g_{a,b}}(x; x_{0},x_{1}) - u(x) & \text{if $x\in[x_{0},x_{1}]$} \\[5pt]
        g_{a,b}(x) & \text{if $x\in(x_{1},b]$} \\[5pt]
        0 & \text{if $x\in(b,1]$}
    \end{array}
    \right..
\end{equation}

To do so we define $\phi_{a,b}=(u+g_{a,b})\vert_{I}$, and we let $\varphi_{a}(x)=2u(x) - t_{u}(x;a^{+})$ and $\varphi_{b}(x)=2u(x) - t_{u}(x;b^{-})$ for all $x\in I$. Since $u$ is convex and cannot be written as the maximum of two affine functions on $I$, the functions $\varphi_{a}$ and $\varphi_{b}$ are convex and non-affine functions on $I$. Furthermore, there must exist a unique $\hat{x}_{a,b}\in \Int(I)$ such that $t(\hat{x}_{a,b}; a^{+}) = t(\hat{x}_{a,b}; b^{-})$, and, for any $x\in I$,
\begin{equation*}
  \phi_{a,b}(x)=\left\{
        \begin{array}{ll}
            \varphi_{a}(x) & \text{if $x\in[a,\hat{x}_{a,b}]$,} \\
            \varphi_{b}(x) & \text{if $x\in[\hat{x}_{a,b},b]$.}
        \end{array}
    \right.
\end{equation*}

Furthermore, note that $\phi_{a,b}$ is convex if and only if $\partial_{+}u(\hat{x}_{a,b})-\partial_{-}u(\hat{x}_{a,b})\geq \bigl(\partial_{-}u(b)-\partial_{+}u(a)\bigr)/2$.

Importantly, for any $x\in I$, $\phi_{a,b}(x)=\varphi_{a}(x)\wedge \varphi_{b}(x)$, where $\wedge$ denotes the pointwise minimization operator. Hence, for any $x\in I$,
\begin{alignat}{2}\label{eq:vex_inf_chord}
    \vex[\phi_{a,b}](x) &= \vex[\varphi_{a}\wedge \varphi_{b}](x) \notag \\[5pt] 
    &=\inf \Bigl\{ \alpha \varphi_{a}(x_{0}) + (1-\alpha) \varphi_{b}(x_{1}) \; \big\vert \;&&\alpha\in[0,1], \; x_{0}, x_{1}\in I, \notag \\ 
    & && \alpha x_{0} + (1-\alpha) x_{1} = x
    \Bigr\},
\end{alignat}
where the second equality follows from Proposition B.2.5.4 in \cite{Hiriart2001}.

The convexity of $\varphi_a$ and $\varphi_b$ together with \labelcref{eq:vex_inf_chord} imply that $\vex[\phi_{a,b}]$ must have the following structure (see \cref{fig:convex_comb} for a visualization): there must exist an interval $[x_{0},x_{1}]\subseteq [a,b]$ containing $\hat{x}_{a,b}$---possibly equal to $\{\hat{x}_{a,b}\}$ in case $\phi_{a,b}$ is already convex\footnote{Which, we recall, is true if and only if $\partial_{+}u(\hat{x}_{a,b})-\partial_{-}u(\hat{x}_{a,b})\geq \bigl(\partial_{-}u(b)-\partial_{+}u(a)\bigr)/2$.}---such that
\begin{equation}\label{eqn:vex_phiab}
  \vex[\phi_{a,b}](x) = \left\{
  \begin{array}{ll}
      \varphi_{a}(x)  & \text{if $x\in[a,x_{0})$,} \\[5pt]
      \varphi_{a}(x_{0}) + \Biggl(\dfrac{\varphi_{b}(x_{1}) - \varphi_{a}(x_{0}) }{x_{1}-x_{0}}\Biggr) (x-x_{0})  & \text{if $x\in[x_{0},x_{1}]$,} \\[5pt]
      \varphi_{b}(x)  & \text{if $x\in(x_{1},b]$.}
  \end{array}
  \right.
\end{equation}

Note that $\varphi_{a}(a)=u(a)$ and $\varphi_{a}(b)=u(b)$. We can thus extend $\vex[\phi_{a,b}]$ by continuity to the whole domain $X$ by letting
\begin{equation*}
  \psi_{a,b}(x)= \left\{
    \begin{array}{ll}
        \vex[\phi_{a,b}](x)  & \text{if $x\in I$,} \\[5pt]
        u(x) & \text{if $x\notin I$.}
    \end{array}
    \right.
\end{equation*}

Since $\vex[\phi_{a,b}]$ is convex on $I$, this implies that $\partial_{+}\psi_{a,b}$ exists and is increasing on $I$. Remark now that $\partial_{+}\psi_{a,b}(a)=\partial_{+}\varphi_{a}(a)=\partial_{+}u(a)$ and $\partial_{-}\psi_{a,b}(b)=\partial_{-}\varphi_{b}(b)=\partial_{-}u(b)$. As $\partial_{+}\psi_{a,b}$ coincides with $\partial_{+}u$ outside of $I$, $\partial_{+}\psi_{a,b}$ is increasing on the whole domain $X$. Hence, $\psi_{a,b}$ is convex on $X$ \cite[this follows from][Theorem 6.4]{Hiriart2001} and, therefore, $\psi_{a,b}=\vex[u+g_{a,b}]$. This implies that $h_{a,b}$ takes the desired form \labelcref{eqn:perturb_tripartition}.

Moreover, by monotonicity of the left and right derivatives, we also have
\begin{equation*}
  \ubar{s}\leq \partial_{-}u(0) \leq \partial_{-}\vex[u+g_{a,b}]\leq \partial_{-}u(1)\leq  \bar{s},
\end{equation*}
and
\begin{equation*}
  \ubar{s}\leq \partial_{+}u(0) \leq \partial_{+}\vex[u+g_{a,b}]\leq \partial_{+}u(1)\leq \bar{s},
\end{equation*}
which implies that $\partial\vex[u+g_{a,b}](X) \subseteq \partial u(X) \subseteq S$. Now, take $\varepsilon\in [0,1]$. By linearity of the subdifferential
\begin{equation*}
  \partial(u+\varepsilon h_{a,b})(x) = \varepsilon \partial\vex[u+g_{a,b}](x) + (1-\varepsilon)\partial u(x) \subseteq S 
\end{equation*}
for all $x\in X$.

Next, note that given that $h_{a,b}$ has the form \labelcref{eqn:perturb_tripartition}, there must exist an interval $[x_{0},x_{1}]\subseteq I$ (possibly equal to a singleton) such that
\begin{equation}\label{eqn:piecewise_uminush}
    \begin{aligned}
      (u-h_{a,b})(x) = \left\{
      \begin{array}{ll}
          u(x) & \text{if $x\in [0,a)$,} \\[5pt]
          t_{u}(x;a^{+}) & \text{if $x\in[a,x_{0})$,} \\[5pt]
          2 u(x) - c_{u+g_{a,b}}(x; x_{0},x_{1}) & \text{if $x\in[x_{0},x_{1}]$,} \\[5pt]
          t_{u}(x;b^{-}) & \text{if $x\in(x_{1},b]$,} \\[5pt]
          u(x) & \text{if $x\in(b,1]$.}
      \end{array}
      \right.
    \end{aligned}
\end{equation}

The closed-form expression \labelcref{eqn:piecewise_uminush} shows that $u-h_{a,b}$ is piecewise convex: $u$ is convex by definition; $t_{u}(\,\cdot\,;a^{+})$ and $t_{u}(\,\cdot\,;b^{-})$ are convex because they are affine; and $2u-c_{u+g_{a,b}}(\,\cdot\,;x_{0},x_{1})$ is convex as the sum of a convex and an affine function. To verify global convexity of $u-h_{a,b}$ on $X$, it suffices to check that $\partial_{-}(u-h_{a,b}) \le \partial_{+}(u-h_{a,b})$ at the junctions between the pieces. This is immediate at $x\in\{a,b\}$. At $x_{0}$, the inequality $\partial_{-}(u-h_{a,b})(x_{0}) \le \partial_{+}(u-h_{a,b})(x_{0})$ is equivalent to
\begin{equation}\label{eqn:ineq_deriv_uminush_x0}
    \partial_{+}u(x_{0}) \ge \frac{\partial_{+}u(a)+m}{2},
\end{equation}
where
\begin{equation*}
    m \coloneqq \frac{\varphi_{b}(x_{1})-\varphi_{a}(x_{0})}{x_{1}-x_{0}}.
\end{equation*}

By construction of $h_{a,b}$ (see \labelcref{eqn:vex_phiab}), the tangency condition $m\in \partial \varphi_{a}(x_{0})$ holds, which implies the desired inequality \labelcref{eqn:ineq_deriv_uminush_x0}. A symmetric argument yields $\partial_{-}(u-h_{a,b})(x_{1}) \le \partial_{+}(u-h_{a,b})(x_{1})$. Therefore $u-h_{a,b}$ is convex on $X$. Moreover, \labelcref{eqn:piecewise_uminush} implies $\partial(u-h_{a,b})(x)=\partial u(x)$ for $x\in\{0,1\}$, hence $\partial(u-h_{a,b})(X)\subseteq \partial u(X)\subseteq S$. Consequently, for any $\varepsilon\in[0,1]$, the function $u-\varepsilon h_{a,b}=(1-\varepsilon)u+\varepsilon(u-h_{a,b})$ is convex (a convex combination of convex functions) and satisfies $\partial(u-\varepsilon h_{a,b})(X)\subseteq S$ (by linearity of the subdifferential).

We end the proof by showing that $h_{a,b}$ is identically zero if and only if $u$ can be written as the pointwise maximum of at most three affine curves on $I$. We have already done the cases where $u$ can be written as the pointwise supremum of one or two affine pieces in Step 1. Therefore, assume that there exists three distinct affine functions $\ell_{0}$, $\ell_{1}$, $\ell_{2}$ defined on $X$ such that $u\vert_{I}=\ell_{0}\vee \ell_{1} \vee \ell_{2}$. Hence, there exists $[x_{0},x_{1}]\subset I$ such that, for all $x\in [x_{0},x_{1}]$, $\ell_{1}(x)= c_{\ell_{0}\vee\ell_{2}}(x; x_{0},x_{1})$. Let $\hat{x}$ be the unique solution to $\ell_{0}(x)=\ell_{2}(x)$. We have that, for all $x\in[a,\hat{x}]$, $\ell_{0}(x)=t_{u}(x;a^{+})$ and that, for all $x\in[\hat{x},1]$, $\ell_{2}(x)=t_{u}(x;b^{-})$. \cref{eqn:perturb_tripartition} therefore implies that, for any $x\in X$,
\begin{equation*}
  h_{a,b}(x) = \left\{
  \begin{array}{ll}
      0 & \text{if $x\in[0,x_{0})$} \\[5pt]
      \ell_{1}(x)-\ell_{0}(x) & \text{if $x\in[x_{0},\hat{x}]$} \\[5pt]
      \ell_{1}(x)-\ell_{2}(x) & \text{if $x\in[\hat{x},x_{1}]$} \\[5pt]
      0 & \text{if $x\in(x_{1},1]$}
  \end{array}
  \right..
\end{equation*}
and, hence,
\begin{align*}
  (u+h_{a,b})(x) &= \left\{
  \begin{array}{ll}
      \ell_{0}(x) & \text{if $x\in[0,x_{0})$} \\[5pt]
      \ell_{1}(x) & \text{if $x\in[x_{0},\hat{x}]$} \\[5pt]
      \ell_{1}(x) & \text{if $x\in[\hat{x},x_{1}]$} \\[5pt]
      \ell_{2}(x) & \text{if $x\in(x_{1},1]$}
  \end{array}
  \right. \\[5pt]
  &= u(x),
\end{align*}
which implies that $h_{a,b}=0$. 

Conversely, assume now that $h_{a,b}=0$. \cref{eqn:perturb_tripartition} implies that $u(x)=t_{u}(x;a^{+})$ for all $x \in[a,x_{0})$, that $u(x)=c_{u+g_{a,b}}(x; x_{0},x_{1})$ for all $x \in[x_{0},x_{1}]$, and that $u(x)=t_{u}(x;b^{-})$ for all $x \in(x_{0},b]$. Hence, $u$ is a continuous piecewise affine function with at most three pieces. This ends the proof of \cref{lem:perturbation}.

\end{proof}

\subsubsection{The proof}

\begin{proof}[Proof of \cref{thm:ext_pt}]
\framebox{$\Rightarrow$}: Let $\mathcal{U}$ be a CFI. By \cref{prop:representation}, there exists $u\in\ex(\mathcal{U})$. The proof of necessity proceeds in two steps: in Step 1, we show that there exists a countable collection $\mathcal{X}=\{X_{n}\}_{n\in\mathbb{N}}$ of non-degenerate disjoint intervals $X_{n}=[a_{n},b_{n}]\subseteq X$ such that $u\vert_I$ is affine for all $I \in \mathcal{X}$. Step 2 shows that, for any $I\in\mathcal{X}$, $u\vert_{I}$ satisfies at least one of the Conditions \labelcref{cond:tangent_bounds}-\labelcref{cond:boundary_cases}.

\textbf{Step 1:} Fix an arbitrary $x\in X$, and suppose that $\ubar{u}(x)<u(x)<\bar{u}(x)$. By continuity of $u$, there must exist a non-degenerate compact interval $[a,b]\subseteq X$ such that $\ubar{u}<u\vert_{[a,b]}<\bar{u}$. Moreover, assume that $u\vert_{[a,b]}$ is non-affine, i.e., that $\partial_+u(a) < \partial_-u(b)$. 

Let $h_{a,b}$ be defined according to \labelcref{eqn:Bregman_perturb}. It follows from \cref{lem:perturbation} that, for any $\varepsilon\in(0,1]$, $u\pm \varepsilon h_{a,b}\in \mathcal{K}$, $\partial(u\pm \varepsilon h_{a,b})(X)\subseteq S$ and that $\varepsilon h_{a,b}\neq 0$ if and only if $u\vert_{[a,b]}$ cannot be written as the maximum of at most three affine functions. Note also that $(u\pm \varepsilon h_{a,b})_{\varepsilon\in(0,1]}$ converges uniformly to $u\in\mathcal{U}$ as $\varepsilon\to 0^{+}$. Hence, there must exist $\varepsilon>0$ such that $\ubar{u}<u\pm \varepsilon h_{a,b}<\bar{u}$, so $u\pm \varepsilon h_{a,b}\in \mathcal{U}$. Since $\varepsilon h_{a,b}\neq 0$, $u$ cannot be an extreme point. 

Therefore, for any $x\in X$, either $u(x)\in \{\ubar{u}(x),\bar{u}(x)\}$, or $u(x)\notin \{\ubar{u}(x),\bar{u}(x)\}$ and there exists a non-degenerate maximal interval $I$ containing $x$ such that $\ubar{u}<u\vert_{\Int(I)}<\bar{u}$ and $u\vert_{I}$ can be written as the supremum of at most three affine functions with slopes in $S$.\footnote{The interval $I$ being maximal means that there exists no larger interval $J\supset I$ where $u$ satisfies the same properties.} The set of kinks\footnote{A convex function $u$ is said to admit a kink at $x$ if $\partial u(x)$ admits more than one element \cite[][Definition D.2.1.6]{Hiriart2001}.} of a convex function being countable \citep[see][p.~28]{Niculescu2025}, this implies the existence of a countable collection $\mathcal{X}=\{X_{n}\}_{n\in\mathbb{N}}$ of non-degenerate disjoint intervals $X_{n}=[a_{n},b_{n}]$ where, for each $n\in\mathbb{N}$, $u\vert_{X_{n}}$ is affine and $\ubar{u}<u\vert_{\Int(X_{n})}<\bar{u}$. Moreover, if there exists $n\in\mathbb{N}$ such that $u(a_n) \notin \{\ubar{u}(a_n),\bar{u}(a_n)\}$, then there exists $m\in\mathbb{N}$ such that $b_m=a_n$ (and symmetrically for $u(b_n) \notin \{\ubar{u}(b_n),\bar{u}(b_n)\}$).

\textbf{Step 2:} We are now going to show that if there exists $n\in\mathbb{N}$ such that $u\vert_{X_{n}}$ satisfies none of the conditions \labelcref{cond:tangent_bounds,cond:chord_satur,cond:matching_grad,cond:boundary_cases} in \cref{thm:ext_pt}, then $u\notin\ex(\mathcal{U})$. Let $X_{n} \in \mathcal{X}$ and assume that $u\vert_{X_{n}}$ does not satisfy conditions \labelcref{cond:tangent_bounds,cond:chord_satur,cond:matching_grad,cond:boundary_cases}. There are two cases.

\begin{enumerate}[label=\itshape Case \arabic*,leftmargin=*]
    \item There exists $n\in\mathbb{N}$ such that $a_{n}>0$ and $b_{n}<1$. This and the assumption that $u\vert_{X_{n}}$ does not satisfy conditions \labelcref{cond:tangent_bounds,cond:chord_satur,cond:matching_grad,cond:boundary_cases} implies that  $\ubar{s}<u'\vert_{X_{n}}<\bar{s}$ and that for at least one $x\in \{a_n,b_n\}$, it holds that $u(x) \notin \{\ubar{u}(x),\bar{u}(x)\}$. Without loss of generality, assume that $u(b_n) \notin \{\ubar{u}(b_n), \bar{u}(b_n)\}$. Hence, there exists $m\in\mathbb{N}$ such that $b_n=a_m$ and $u\vert_{X_{m}}$ is affine by Step 1. However, since $u\vert_{X_{n}}$ fails to satisfy condition \labelcref{cond:chord_satur}, $u\vert_{X_{m}}$ does not satisfy \labelcref{cond:tangent_bounds}. We distinguish three subcases: (1) $b_m<1$, (2) $b_m=1$ and $u'\vert_{X_{m}}<\bar{s}$, and (3) $b_m=1$ and $u'\vert_{X_{m}}=\bar{s}$.     

    \begin{enumerate}[label=\itshape Subcase 1.\arabic*]
    
        \item First, assume that $b_m<1$. Define the continuous perturbation $h_1$ as
        \begin{equation}\label{eqn:lin_perturb_right}
            h_1(x)=\frac{x-a_n}{b_n-a_n}\Ind_{X_{n}}(x) + \frac{b_m-x}{b_m-a_m}\Ind_{X_{m}}(x),
        \end{equation}
        for each $x\in X$. For any $\varepsilon>0$, $(u\pm\varepsilon h_1)\in \mathcal{C}$ by continuity of $u$ and $h_1$. Furthermore, the maximality of $X_{n}$ and $X_{m}$, convexity of $\ubar{u}$ and the fact that both $u\vert_{X_{n}}$ and $u\vert_{X_{m}}$ fail to satisfy condition \labelcref{cond:tangent_bounds}, imply that $\ubar{u}<u\vert_{(a_n, b_m]}<\bar{u}$ and $\partial_{-}u(a_n)<u'\vert_{X_{n}}<u'\vert_{X_{m}}<\partial_{+}u(b_m)$. Therefore, if $h_1$ is defined by \labelcref{eqn:lin_perturb_right}, one can always find $\varepsilon>0$ sufficiently small such that $\ubar{u}<(u\pm\varepsilon h_1)\vert_{X_{n}\cup X_{m}}<\bar{u}$, and that
         \begin{equation*}
           \partial_{-}u(a_n) \leq  u'\vert_{X_{n}} \pm \frac{\varepsilon}{b_n-a_n} \leq u'\vert_{X_{m}} \pm \frac{\varepsilon}{b_m-a_m} \leq \partial_{+} u(b_m),
        \end{equation*}
        which implies $u\pm\varepsilon h_1 \in \mathcal{U}$, contradicting $u\in\ex(\mathcal{U})$. \label{cond:necessity_subcase11}
        
        \item Second, consider the case $b_m=1$ and $u'\vert_{X_{m}}<\bar{s}$. In this case, take again the continuous perturbation $h_1$ as defined in \labelcref{eqn:lin_perturb_right}. Again, for any $\varepsilon>0$, $(u\pm\varepsilon h_1)\in \mathcal{C}$ by continuity of $u$ and $h_1$. Since $u\vert_{X_{m}}$ does not satisfy condition \labelcref{cond:tangent_bounds}, if $u(b_m)=\bar{u}(b_m)$, then $u'\vert_{X_{m}} > \partial_{-}\bar{u}(b_m)$ (and $u'\vert_{X_{m}}<\bar{s}$ by assumption). This implies that $\ubar{u}<u\vert_{(a_n, b_m)}<\bar{u}$ and $\partial_{-}u(a_n)<u'\vert_{X_{n}}<u'\vert_{X_{m}}<\partial_{-}\bar{u}(b_m)$. Therefore, if $h_1$ is defined by \labelcref{eqn:lin_perturb_right}, one can always find $\varepsilon>0$ sufficiently small such that $\ubar{u}<(u\pm\varepsilon h_1)\vert_{(a_n, b_m)}<\bar{u}$, $u'\vert_{X_{m}} \pm \frac{\varepsilon}{b_m-a_m} \leq \partial_{-} \bar{u}(b_m)$ and that
         \begin{equation*}
          \partial_{-}u(a_n) \leq  u'\vert_{X_{n}} \pm \frac{\varepsilon}{b_n-a_n} \leq u'\vert_{X_{m}} \pm    \frac{\varepsilon}{b_m-a_m} <\bar{s},
         \end{equation*}
        which implies $u\pm\varepsilon h_1 \in \mathcal{U}$, contradicting $u\in\ex(\mathcal{U})$. \label{cond:necessity_subcase12}

        \item Last, assume $b_m=1$ and $u'\vert_{X_{m}}=\bar{s}$. Since $u\vert_{X_{m}}$ does not satisfy condition \labelcref{cond:tangent_bounds}, this means that $u(b_m) < \bar{u}(b_m)$. In that case, we use the continuous perturbation
         \begin{equation}\label{eqn:lin_perturb_right_max_slope}
            h_2(x)=\frac{x-a_n}{b_n-a_n}\Ind_{X_{n}}(x)+\Ind_{X_{m}}(x),
         \end{equation}
         for any $x\in X$. For any $\varepsilon>0$, $(u\pm\varepsilon h_2)\in \mathcal{C}$ by continuity of $u$ and $h_2$. Furthermore, the maximality of $X_{n}$ and $X_{m}$ and the fact that $u\vert_{X_{m}}$ fails to satisfy condition \labelcref{cond:tangent_bounds} imply that $\ubar{u}<u\vert_{(a_n, b_m]}<\bar{u}$ and $\partial_{-}u(a_n)<u'\vert_{X_{n}}<u'\vert_{X_{m}}=\bar{s}$. Therefore, if $h_2$ satisfies \labelcref{eqn:lin_perturb_right_max_slope}, one can always find $\varepsilon>0$ sufficiently small such that $\ubar{u}<(u\pm\varepsilon h_2)\vert_{(a_n, b_m]}<\bar{u}$, and that
        \begin{equation*}
          \partial_{-}u(a_n) \leq  u'\vert_{X_{n}} \pm \frac{\varepsilon}{b_n-a_n} \leq \bar{s},
         \end{equation*}
        again a contradiction to $u\in\ex(\mathcal{U})$. \label{cond:necessity_subcase13}
        
    \end{enumerate}

     Therefore, we can conclude that if there exists $n\in\mathbb{N}$ such that $X_{n}\in\mathcal{X}$ with $0<a_n<b_n<1$, then $u\vert_{X_{n}}$ necessarily satisfies conditions \labelcref{cond:tangent_bounds} or \labelcref{cond:chord_satur}. \label{cond:necessity_case1}

    \item Second, $a_{n}=0$ or $b_n=1$. Consider the case where $b_n=1$. The case $a_{n}=0$ is symmetric and therefore omitted. We have assumed that $u\vert_{X_{n}}$ satisfies none of the conditions \labelcref{cond:tangent_bounds,cond:chord_satur,cond:matching_grad,cond:boundary_cases}. We again distinguish multiple subcases. The arguments are similar to those presented in \labelcref{cond:necessity_case1} above. We therefore indicate the perturbations used below and refer to \labelcref{cond:necessity_case1} for the exact arguments. 
    
    \begin{itemize}
    
        \item If $u'\vert_{X_{n}}=\bar{s}$, then, since $u\vert_{X_{n}}$ does not satisfy conditions \labelcref{cond:tangent_bounds} and \labelcref{cond:matching_grad}, it holds that $u(a_n)\notin \{\ubar{u}(a_n), \bar{u}(a_n)\}$. Thus, by Step 1, there exists an adjacent interval $X_{m}$ with $b_m=a_n$ if $a_n>0$. Again, since $u\vert_{X_{n}}$ does not satisfy condition \labelcref{cond:matching_grad}, $u\vert_{X_{m}}$ does not satisfy condition \labelcref{cond:tangent_bounds}. Using the same perturbation $h_2$ as defined in \labelcref{eqn:lin_perturb_right_max_slope} (where we flip the labels $m$ and $n$) and applying the same arguments as in \labelcref{cond:necessity_subcase13}, this contradicts $u\in \ex(\mathcal{U})$.

        If $a_n=0$, then the perturbation $h_3(x)= \Ind_{X_{n}}(x)$, again applying the arguments from \labelcref{cond:necessity_case1}, contradicts $u\in \ex(\mathcal{U})$.
        
        \item This works symmetrically for $u'\vert_{X_{n}}=\ubar{s}$. Therefore, assume $u'\vert_{X_{n}} \in (\ubar{s}, \bar{s})$ from now on.
        
        \item If $u(b_n)=\bar{u}(b_n)$, then since $u\vert_{X_{n}}$ does not satisfy \labelcref{cond:tangent_bounds,cond:boundary_cases}, $u(a_n) \notin \{\ubar{u}(a_n), \bar{u}(a_n)\}$ has to hold. By Step 1, there is an interval $X_{m}\in\mathcal{X}$ such that $b_m=a_n$. Again, since $u\vert_{X_{n}}$ does not satisfy \labelcref{cond:boundary_cases}, $u\vert_{X_{m}}$ does not satisfy \labelcref{cond:tangent_bounds}. Using the same perturbation $h_1$ as defined in \labelcref{eqn:lin_perturb_right} (where we flip the labels $m$ and $n$) and applying the same arguments as in \labelcref{cond:necessity_subcase12}, this contradicts $u\in \ex(\mathcal{U})$.
        
        \item If $u(b_n) \notin \{\ubar{u}(b_n), \bar{u}(b_n) \}$, then, since $u\vert_{X_{n}}$ does not satisfy \labelcref{cond:tangent_bounds,cond:boundary_cases}, $u(a_n) \notin \{\ubar{u}(a_n), \bar{u}(a_n)\}$ has to hold. Therefore, the perturbation $h_3(x)= \Ind_{X_{n}}(x)$, again applying the arguments from \labelcref{cond:necessity_case1}, shows that $u\notin \ex(\mathcal{U})$.
        
        \item If $u(b_n)=\ubar{u}(b_n)$, then, since $u\vert_{X_{n}}$ does not satisfy \labelcref{cond:tangent_bounds,cond:chord_satur}, $u(a_n) \notin \{\ubar{u}(a_n), \bar{u}(a_n)\}$ has to hold. By Step 1, there is an interval $X_{m}\in\mathcal{X}$ such that $b_m=a_n$. Again, since $u\vert_{X_{n}}$ does not satisfy \labelcref{cond:boundary_cases}, $u\vert_{X_{m}}$ does not satisfy \labelcref{cond:tangent_bounds}. Using the same perturbation $h_1$ as defined in \labelcref{eqn:lin_perturb_right} (where we flip the labels $m$ and $n$) and applying the same arguments as in \labelcref{cond:necessity_subcase11}, this contradicts $u\in \ex(\mathcal{U})$.
        
    \end{itemize}

     Therefore, if $u\in\ex(\mathcal{U})$ and there exists $n\in\mathbb{N}$ such that $X_{n}\in\mathcal{X}$ with $a_n=0$ or $b_n=1$ (or both) then $u\vert_{X_{n}}$ must satisfy one of the conditions \labelcref{cond:tangent_bounds,cond:chord_satur,cond:matching_grad,cond:boundary_cases} in \cref{thm:ext_pt}. This concludes the proof of necessity.
\end{enumerate}

\framebox{$\Leftarrow$}: Let $u\colon X\to\mathbb{R}$ satisfy conditions \labelcref{cond:level_satur,cond:not_level_satur} in \cref{thm:ext_pt}, and let $\mathcal{X}=\{X_{n}\}_{n\in\mathbb{N}}$ be the corresponding collection of intervals defined in \cref{thm:ext_pt}. It is easily seen that $u\in\mathcal{U}$. Therefore, it is legitimate to prove that $u\in\ex(\mathcal{U})$. Take $h\in\mathcal{C}$, $h\neq 0$, and suppose, by way of contradiction, that $u\pm h\in\mathcal{U}$.

As a preliminary, we prove some regularity properties that $h$ must satisfy. Since $u\pm h\in\mathcal{U}$, $h=\hat{u}-u$ for some $\hat{u}\in\mathcal{U}$. Therefore, $h$ must be $K$-Lipschitz continuous (hence, absolutely continuous) as the difference between two $K$-Lipschitz continuous functions, with $K=\max\{\lvert\ubar{s}\rvert,\vert\bar{s}\rvert\}$. Hence, $h$ admits a derivative $h'$ almost everywhere in $X$ with $\lvert h'\rvert \leq K$, and $h(x)=h(a)+\int_{a}^{x} h'(s) \, \diff s$ for any $a,x\in X$ \citep[see][Theorems C.2 and C.3]{Yeh2014}.

Assume first that $h(x)>0$ for some $x\notin \bigcup_{n\in\mathbb{N}}X_{n}$. Then, either $u(x)-h(x)<\ubar{u}(x)$ or $u(x)+h(x)>\bar{u}(x)$, contradicting $u\pm h\in\mathcal{U}$. Hence, $h(x)=0$ at all $x\notin\bigcup_{n\in\mathbb{N}}X_{n}$.

Suppose now that $h(x_0)>0$ for some $x_0\in X_{n} \in \mathcal{X}$. Let us consider first the case where $u$ satisfies condition \labelcref{cond:tangent_bounds} on $X_{n}$. Consider the case where $u(a_n)=\bar{u}(a_n)$. By the above, $h(a_n)=0$. Since $h(x_0)=\int_{a_n}^{x_0} h'(s) \, \diff s>0$ there must exist $\hat{x}\in [a_n,x_{0}]$ such that $h'(\hat{x})>0$. If $u'\vert_{X_{n}} \notin \{\ubar{s}, \bar{s}\}$, this implies that $(u-h)'(\hat{x})<\bar{u}'(a_n)=(u-h)'(a_n)$, which contradicts the convexity of $u-h$ since $\hat{x}>a_n$. If $u'\vert_{X_{n}}=\ubar{s}$, then $(u-h)'(\hat{x})<\ubar{s}$, contradicting $u-h \in \mathcal{U}$ because of the slope constraints. The argument is symmetric if $u(b_n)=\bar{u}(b_n)$. Hence, $h\vert_{X_{n}}=0$ for all $X_{n}\in \mathcal{X}$ where $u$ satisfies condition \labelcref{cond:tangent_bounds}. 

Next, assume that $u$ satisfies condition \labelcref{cond:chord_satur} on $X_{n}$. Hence, for each $x\in\{a_n,b_n\}$, either there exists an interval adjacent to $X_{n}$ at $x$ such that $u$ satisfies condition \labelcref{cond:tangent_bounds} on that interval, or $u(x)=\ubar{u}(x)$. Note that the previous arguments imply that, if $x\in\{a_n,b_n\}$, then $h(x)=0$, since $h(x)=0$ for all $x\notin\bigcup_{n\in\mathbb{N}}X_{n}$ and $h\vert_{I}=0$ for all $I\in\mathcal{X}$ where $u$ satisfies condition \labelcref{cond:tangent_bounds}. Moreover, observe that, as an affine function, $u\vert_{X_{n}}$ is the pointwise largest convex function connecting $u(a_n)$ and $u(b_n)$. Since $h(x_0)>0$, this implies that $(u+h)\vert_{I}$ is not convex, a contradiction. Hence, $h\vert_{I}=0$ for all $I\in\mathcal{X}$ such that $u\vert_{I}$ satisfies condition \labelcref{cond:chord_satur}. The argument is analogous if $u\vert_{X_{n}}$ satisfies condition \labelcref{cond:boundary_cases}.

It therefore remains to treat the case that $u\vert_{X_{n}}$ satisfies condition \labelcref{cond:matching_grad}. Without loss of generality assume $u'\vert_{X_{n}}=\bar{s}$. By the previous arguments, $h(a_n)=0$. Since $h(x_0)=\int_{a_n}^{x_0} h'(s) \, \diff s>0$ there must exist $\hat{x}\in [a_n,x_{0}]$ such that $h'(\hat{x})>0$. Thus $(u+h)'(\hat{x})=\bar{s}+h'(\hat{x})>\bar{s}$ which implies that $u+h\notin \mathcal{U}$, a contradiction.
\end{proof}

\subsection{Proof of \cref{thm:charac_opt}} \label{secap:proof_charac_opt}

Let $\mathcal{U}$ be a CFI with domain $X\coloneqq[0,1]$, let $u^{\star}\in \ex(\mathcal{U})$, and let $\mu\in\mathcal{M}(X)$. To prove \cref{thm:charac_opt}, we begin by finding an equivalent primal problem to \labelcref{eqn:Lin_Pb}. Then we set up its dual problem, applying results from infinite dimensional conic linear programming \citep{Shapiro2001}. Next, we use conditions (i)-(v) to construct Lagrange multipliers that certify the optimality of $u^{\star}$.

\subsubsection{Primal Problem}

We refer to \labelcref{eqn:Primal} as the primal problem:
\begin{equation}\label{eqn:Primal}
    \begin{aligned}
        \max_{u \in \mathcal{K}} & \int_{X} u \,  \diff \mu \\
        \text{s.t.} & u \leq \bar{u}\\
        & u \geq \ubar{u}\\
        & u \geq u(1)-\bar{s}(1-\Id)\\
        & u \geq u(0)+\ubar{s}\Id
    \end{aligned}
    \tag{$\mathrm{P}_{\mathcal{U}, \mu}$}
\end{equation}

The next lemma shows that \labelcref{eqn:Primal} and \labelcref{eqn:Lin_Pb} are equivalent.

\begin{lemma}\label{lem:equiv_primal_lin}
    A function $u$ is feasible for \labelcref{eqn:Primal} if and only if $u\in \mathcal{U}$.
\end{lemma}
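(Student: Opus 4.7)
The plan is to observe that the constraint $u\in\mathcal{K}$ and the level bounds $\ubar{u}\leq u\leq \bar{u}$ appear identically in both \labelcref{eqn:Primal} and the definition of $\mathcal{U}$. So the lemma reduces to the purely convex-analytic claim that, for $u\in\mathcal{K}$, the slope constraint $\partial u(X)\subseteq[\ubar{s},\bar{s}]$ is equivalent to the two affine inequalities
\begin{equation*}
    u \geq u(0)+\ubar{s}\Id \quad \text{and} \quad u \geq u(1)-\bar{s}(1-\Id).
\end{equation*}
I would establish this equivalence through the two fundamental facts about convex functions on $[0,1]$ recalled in the preliminaries: the difference quotient $(u(x)-u(0))/x$ is non-decreasing on $(0,1]$ with limit $\partial_{+}u(0)$ at $0^{+}$, and symmetrically at the right endpoint; and $\partial_{-}u$ and $\partial_{+}u$ are non-decreasing with $\partial_{+}u(y)\leq \partial_{-}u(z)$ whenever $y<z$.

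For the direction $\mathcal{U}\Rightarrow$ feasible: assuming $u\in\mathcal{U}$, the monotonicity properties above imply $\partial_{+}u(0)\leq \partial_{-}u(x)\leq \partial_{+}u(x)\leq \partial_{-}u(1)$ for all $x\in(0,1)$, and since $\partial u(X)\subseteq[\ubar{s},\bar{s}]$ this forces $\partial_{+}u(0)\geq \ubar{s}$ and $\partial_{-}u(1)\leq \bar{s}$. Combining $\partial_{+}u(0)\geq\ubar{s}$ with the tangent inequality $u(x)\geq u(0)+\partial_{+}u(0)\cdot x$ (valid for $x\geq 0$) yields $u(x)\geq u(0)+\ubar{s}x$; the second inequality follows symmetrically. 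For the converse, dividing $u(x)-u(0)\geq \ubar{s}x$ by $x>0$ and letting $x\to 0^{+}$ gives $\partial_{+}u(0)\geq \ubar{s}$. Using $\partial_{+}u(0)\leq \partial_{-}u(x)\leq \partial_{+}u(x)$ for $x\in(0,1)$, we propagate this lower bound across the entire interior. The upper slope bound follows analogously from the second affine inequality, and together they give $\partial u(x)=[\partial_{-}u(x),\partial_{+}u(x)]\subseteq[\ubar{s},\bar{s}]$ for every $x\in\Int(X)$, i.e., $\partial u(X)\subseteq S$.

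I do not anticipate a significant obstacle. The only subtlety worth flagging explicitly is the convention adopted in the paper: $\partial u(X)$ is defined as $\bigcup_{x\in\Int(X)}\partial u(x)$, so the endpoint one-sided derivatives $\partial_{+}u(0)$ and $\partial_{-}u(1)$ are not themselves constrained and serve only as bookkeeping devices to transfer the affine inequalities into interior slope bounds (and conversely). Beyond keeping track of this interior/boundary distinction, the argument is a short chain of standard convex-analysis manipulations.
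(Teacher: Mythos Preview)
Your proposal is correct and follows essentially the same approach as the paper: both directions hinge on translating the two affine inequalities into the endpoint bounds $\partial_{+}u(0)\geq\ubar{s}$ and $\partial_{-}u(1)\leq\bar{s}$, and then propagating via monotonicity of the subdifferential. The paper packages the converse slightly differently---it extends $u$ linearly to all of $\mathbb{R}$ so that the endpoint one-sided derivatives become interior ones and the chain $\ubar{s}\leq\partial_{+}u(0)\leq\partial_{-}u(x)\leq\partial_{+}u(x)\leq\partial_{-}u(1)\leq\bar{s}$ is immediate---whereas you work directly on $[0,1]$; the content is identical. One small point: in your forward direction, the sentence ``since $\partial u(X)\subseteq[\ubar{s},\bar{s}]$ this forces $\partial_{+}u(0)\geq\ubar{s}$'' does not follow from the displayed chain of inequalities alone (that chain only gives $\partial_{+}u(0)\leq\bar{s}$); you need the limit fact you stated earlier, namely $\partial_{+}u(0)=\lim_{x\to 0^{+}}\partial_{+}u(x)\geq\ubar{s}$, or equivalently the integral representation $u(x)-u(0)=\int_{0}^{x}u'(t)\,\diff t$ with $u'\geq\ubar{s}$ a.e. The paper simply calls this direction ``immediate,'' so this is not a gap relative to the paper's standard of detail.
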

\begin{proof}
    Feasibility of $u\in \mathcal{U}$ for \labelcref{eqn:Primal} is immediate. For the other direction, let $u$ be a function that is feasible for \labelcref{eqn:Primal}. The only aspect that has to be shown explicitly is that the subgradient constraint is satisfied, i.e., that $\partial u(x) \subseteq [\ubar{s}, \bar{s}]$ for all $x \in [0,1]$. Since $u$ is convex, $\partial_{+}u(0)$ is bounded below by $\ubar{s}$ and $\partial_{-}u(1)$ is bounded above by $\bar{s}$. We can therefore always extend $u$ linearly to the whole $\mathbb{R}$. We denote this linear extension by $\tilde{u}$. In particular, $\tilde{u}$ is differentiable at $0$ and $1$ and $\partial\tilde{u}(x)=\partial u(x)$ for all $x \in (0,1)$. This implies the following inequalities: 
    \begin{equation*}
        \ubar{s} \leq \partial_{+}u(0) = \partial_{+}\tilde{u}(0) \leq \partial_{-}\tilde{u}(x) = \partial_{-}u(x) \leq \partial_{+}u(x) = \partial_{+}\tilde{u}(x) \leq \partial_{+}\tilde{u}(1)=\partial_{-}u(1)\leq\bar{s},
    \end{equation*}
    for all $x\in [0,1]$. Because $\partial_{-}u(x)$ and $\partial_{+}u(x)$ are non-decreasing, this implies that $\partial u(X) \subseteq S$ by Theorem 2.1.2 in \cite{Niculescu2025}.
\end{proof}

\subsubsection{Dual Problem and Weak Duality}\label{secap:dual_weakduality}

We now state the dual problem \labelcref{eqn:Dual} to \labelcref{eqn:Primal} and show that weak duality holds. For the construction of the dual we refer to \cref{secap:dual_problem}. Let $\mathcal{M}_{+}(X)$ denote the set of finite positive Radon measures on $X=[0,1]$.

The dual problem \labelcref{eqn:Dual} is given by
\begin{equation}\label{eqn:Dual}
    \begin{aligned}
        \min_{(\gamma_i)_{i\in\{1,\dots,4\}}\in (\mathcal{M}_{+}(X))^{4}} & \; \int_{X} \bar{u} \, \diff \gamma_{1} - \int_{X} \ubar{u} \, \diff \gamma_{2} + \bar{s}\int_{X} (1-\Id) \, \diff \gamma_{3} - \ubar{s}\int_{X} \Id \, \diff \gamma_{4} \\
        \text{s.t.} & \; \gamma_{1}-\gamma_{2} - \gamma_{3} + \delta^{\gamma_{3}}_1 - \gamma_{4} + \delta^{\gamma_{4}}_0 \geq_{\mathrm{cx}} \mu
    \end{aligned}
     \tag{$\mathrm{D}_{\mathcal{U}, \mu}$}
\end{equation}
where $\delta_1^{\gamma_{3}}$ (resp.~$\delta_0^{\gamma_{4}}$) denotes a point mass at $1$ of mass $\gamma_{3}(X)$ (resp.~at $0$ of mass $\gamma_{4}(X)$).

Weak duality between \labelcref{eqn:Primal} and \labelcref{eqn:Dual} is easy to establish: Let $u\in \mathcal{K}$ be feasible for \labelcref{eqn:Primal} and $(\gamma_{1}, \gamma_{2}, \gamma_{3}, \gamma_{4}) \in \mathcal{M}_+(X)^4$ be feasible for \labelcref{eqn:Dual}. Then
\begin{align*}
	\int_{X} u \, \diff \mu & \leq \int_{X} u \, \diff \bigl[\gamma_{1} - \gamma_{2} - \gamma_{3} + \delta_1^{ \gamma_{3} }-\gamma_{4} + \delta_0^{ \gamma_{4} }\bigr] \\
	& \leq  \int_{X} \bar{u} \, \diff \gamma_{1} - \int_{X} \ubar{u} \, \diff \gamma_{2}+ \bar{s} \int_{X} (1-\Id) \, \diff \gamma_{3}- \ubar{s}\int_{X} \Id \, \diff \gamma_{4},
\end{align*}
where the first inequality follows because $ \gamma_{1}-\gamma_{2} - \gamma_{3} + \delta^{\gamma_{3}}_1 - \gamma_{4} + \delta^{\gamma_{4}}_0 \geq_{\mathrm{cx}} \mu$ by assumption, and the second inequality by feasibility of $u$ for \labelcref{eqn:Primal}. For completeness, we show in \cref{secap:strongduality} that strong duality holds as well, but this is not needed for the proof.

\subsubsection{Construction of Multipliers} \label{sec:proof_thm_opt_suff}

We can now prove that conditions (i)-(v) given in \cref{thm:charac_opt} are  sufficient for optimality. For convenience, we use the shorthand notation $\cup\mathcal{Y}'\coloneqq\bigcup_{Y \in \mathcal{Y}'} Y$ for all $\mathcal{Y}'\subset \mathcal{Y}$.

We define four dual Lagrange multipliers $(\gamma_i)_{i\in\{1,\dots,4\}} \in \mathcal{M}_+(X)^{4}$ such that, for each $i\in \{ 1, \dots, 4 \} $, $\supp(\gamma_{i})=\cup \mathcal{Y}_i$.

For each $x \in X$, we let $Y(x)$ be the element of $\mathcal{Y}$ such that $x \in Y(x)$. Moreover, let $\lvert\mu\rvert\coloneqq \mu^{+}+\mu^{-}$ be the total variation of $\mu$. We split $\mathcal{Y}_0$ into $\mathcal{Y}_0^+ \coloneqq\Bigl\{ Y \in \mathcal{Y}_0 \; \big\vert \; \mu|_Y(Y) \geq 0 \Bigr\}$ and $\mathcal{Y}_0^- \coloneqq\Bigl\{ Y \in \mathcal{Y}_0 \; \big\vert \; \mu|_Y(Y) < 0 \Bigr\}$.

We first observe that $\int_{\cup \mathcal{Y}_{5}}\int_{X} u^{\star}(s) \, \diff \mu\vert_{Y(x)}(s) \, \diff \lvert\mu\rvert(x)=0$ by $\mu\vert_{Y} \leq_{\mathrm{cx}}\mathbf{0}$ and $u\vert_{Y}$ being affine for $Y \in \mathcal{Y}_{5}$.

\paragraph{Construction of $\bm{\gamma_1}$}

We construct the multiplier $\gamma_{1}$ by aggregating the point masses $\delta_{Y}$ for $Y \in \mathcal{Y}_1\cup \mathcal{Y}_0^+$ according to $\lvert\mu\rvert$. That is, for any Borel measurable $A\subseteq X$,
\begin{equation*}
    \gamma_{1}(A)\coloneqq \int_{\cup \mathcal{Y}_{1} \bigcup \cup \mathcal{Y}_0^+}\delta_{Y(x)}(A) \, \diff \lvert\mu\rvert(x).
\end{equation*}

By assumption, $\mu\vert_{Y}(Y) \geq 0$ for $Y \in \mathcal{Y}_{1}$, thus $\gamma_{1} \in \mathcal{M}_+(X)$. Furthermore,
\begin{align*}
    \int_{ \cup \mathcal{Y}_{1} \bigcup \cup \mathcal{Y}_0^+ }u^{\star} \, \diff \mu &= \int_{ \cup \mathcal{Y}_{1} \bigcup \cup \mathcal{Y}_0^+ }\int_{X} u^{\star}(s) \, \diff \mu\vert_{Y(x)}(s) \, \diff \lvert\mu\rvert(x) \\
    &=  \int_{ \cup \mathcal{Y}_{1} \bigcup \cup \mathcal{Y}_0^+ }\int_{X} u^{\star}(s) \, \diff \delta_{R(x)}(s) \, \diff \lvert\mu\rvert(x) \\
    &= \int_{X} u^{\star} \, \diff \gamma_{1} \\
    &= \int_{X} \bar{u} \, \diff \gamma_{1},
\end{align*}
where the second equality follows from $\mu\vert_{Y} \leq_{ \mathrm{cx} }\delta_{Y}$ and $u^{\star}\vert_{Y}$ being affine for $Y \in \mathcal{Y}_{1}$, the third equality follows from the definition of $\gamma_{1}$, and the last equality follows from $u^{\star}$ and $\bar{u}$ coinciding on the support of $\gamma_{1}$.

\paragraph{Construction of $\bm{\gamma_2}$}

Next, we construct $\gamma_{2}$ by aggregating $\lvert\mu\rvert\bigr\vert_{Y}$ on $\mathcal{Y}_{2} \cup \mathcal{Y}_0^-$ according to $\lvert\mu\rvert$. For any Borel measurable $A\subseteq X$, let
\begin{equation*}
    \gamma_{2}(A)\coloneqq \int_{\cup \mathcal{Y}_{2} \bigcup \cup \mathcal{Y}_0^-} \lvert\mu\rvert\bigr\vert_{Y(x)}(A)\, \diff \lvert\mu\rvert(x).
\end{equation*}

By definition, $\gamma_{2} \in \mathcal{M}_+(X)$. Additionally,
\begin{align*}
    \int_{ \cup \mathcal{Y}_{2} \bigcup \cup \mathcal{Y}_0^- }u^{{\footnotesize\star}} \, \diff \mu &= \int_{ \cup \mathcal{Y}_{2} \bigcup \cup \mathcal{Y}_0^- }\int_{X} u^{\star}(s) \, \diff \mu\vert_{Y(x)}(s) \, \diff \lvert\mu\rvert(x) \\
    &= - \int_{ \cup \mathcal{Y}_{2} \bigcup \cup \mathcal{Y}_0^- }\int_{X} u^{\star}(s) \, \diff \mu\vert_{Y(x)}(s) \, \diff \lvert\mu\rvert(x) \\
    &= - \int_{X} \ubar{u} \, \diff \gamma_{2}
\end{align*}
where the second equality follows from $\mu\vert_{Y}(Y) \leq 0$ for $Y \in \mathcal{Y}_{2}$, and the last equality follows from the definition of $\gamma_{2}$ and $u^{\star}$ and $\ubar{u}$ coinciding in its support.

\paragraph{Construction of $\bm{\gamma_3}$}

We now turn to $\gamma_{3}$. Let $Y$ be the unique element of $\mathcal{Y}_{3}$ if it exists.\footnote{Note that $\lvert\mathcal{Y}_{3}\rvert \leq 1$ because $u^{\star}$ is convex.} We apply different versions of Strassen's Theorem to construct the dual multiplier $\gamma_{3}$ from $\mu\vert_{Y}^{+}$

Let $W$, $Z$ be two random variables distributed according to $\mu\vert_{Y}^{+}$, $\mu\vert_{Y}^{-}$ respectively. Then, since $\mu\vert_{Y}^{+} \leq_{\mathrm{icx}}\mu\vert_{Y}^{-}$, Theorem 4.A.5 in \cite{Shaked2007} implies the existence of two random variables $\hat{W}$, $\hat{Z}$ that have the same distributions than $W$ and $Z$, respectively, and that satisfy $\mathbb{E}[\hat{Z} \mid \hat{W}]\leq \hat{W}$ almost surely.

Let $\alpha$ be the probability measure according to which the conditional expectation $\mathbb{E}[\hat{Z} \mid \hat{W}]$ is distributed. The law of iterated expectations implies that $\mathbb{E}\bigl[\hat{Z}\mid \mathbb{E}[\hat{Z} \mid \hat{W}]\bigr]=\mathbb{E}[\hat{Z} \mid \hat{W}]$ almost surely. Applying Theorem 3.A.4 in \cite{Shaked2007}, we obtain that $\alpha\leq_{\mathrm{cx}}\mu\vert_{Y}^{-}$. Furthermore, $\mu\vert_{Y}^{+} \leq_{\mathrm{dcx}} \alpha$ must hold, since $\mathbb{E}\bigl[\mathbb{E}[\hat{Z}\mid\hat{W}] \mid  \hat{W}\bigr]=\mathbb{E}[\hat{Z} \mid \hat{W}]\leq \hat{W}$ almost surely.

Therefore, we can apply Theorem 2.6.1 in \cite{Muller2002}, which implies the existence of a transition kernel $\kappa\colon \mathcal{B}(Y)\times Y \to [0,1]$ such that, for all $A \in \mathcal{B}(Y)$, $\alpha(A) = \int_{Y} \kappa(A,x) \, \diff \mu\vert_{Y}^{+} (x)$, and $\int_{Y} s \, \kappa(\diff s,x) \leq x$ for all $x \in Y$. This last property implies that
\begin{equation*}
\forall x \in Y, \quad \beta(x) \coloneqq \frac{1-x}{1-\int_{Y} s \, \kappa(\diff s,x)} \leq 1.
\end{equation*} 

We now define a new transition kernel $\tau\colon \mathcal{B}(Y)\times Y\to [0,1]$ that, for each $x \in Y$, places a weight of $1-\beta(x)$ on a point mass at $1$ and a weight of $\beta(x)$ on the probability measure $\kappa(\, \cdot \,, x)$. Formally,
\begin{equation*}
\forall A \in \mathcal{B}(Y), \; \forall x \in Y, \quad \tau(A, x) \coloneqq \bigl(1-\beta(x)\bigr) \delta_1(A) + \beta(x) \kappa(A,x).
\end{equation*}

Note that, for all $x \in Y$, $\tau(\, \cdot \,, x)$ is a probability measure on $(Y, \mathcal{B}(Y))$ and that, for all $A\in \mathcal{B}(Y)$, $\tau(A, \, \cdot \,)$ is a measurable function. Hence, $\tau$ is indeed a transition kernel. Furthermore, we show that the transition kernel $\tau$ is a \emph{dilation} \cite[for a definition, see][Section 15]{Phelps2001}:
\begin{align*}
\forall x \in Y, \quad\int_{Y} s \, \tau(\diff s, x) = \bigl(1-\beta(x)\bigr)\times 1 + \beta(x) \int_{Y} s \, \kappa(\diff s, x) = x.
\end{align*}

Let $\eta (A) \coloneqq \int_{Y} \tau(A, x) \, \diff \mu\vert_{Y}^{+} (x)$. Jensen's inequality implies that 
\begin{align*}
\forall u\in \mathcal{K}, \; \forall x \in Y, \quad \int_{Y} u(s) \, \tau(\diff s,x) \geq u\biggl(\int_{Y} s \, \tau(\diff s, x)\biggr)=u(x).
\end{align*}

Applying Theorem 2.6.1 from \cite{Muller2002} again implies that
$\eta \geq_{\mathrm{cx}} \mu\vert_{Y}^{+}$. Combining this observation with the relationship $\alpha \leq_{\mathrm{cx}}\mu\vert_{Y}^{-}$ from above, we obtain:
\begin{equation*}
    -\alpha + \eta \geq_{\mathrm{cx}} - \alpha + \mu\vert_{Y}^{+} \geq_{\mathrm{cx}} -\mu\vert_{Y}^{-} + \mu\vert_{Y}^{+} = \mu\vert_{Y}.
\end{equation*}

Let us define the multiplier $\gamma_{3}$ as follows:
\begin{equation*}
    \forall A\in\mathcal{B}(Y), \quad \gamma_{3}(A)\coloneqq \int_{A} \int_{Y} \bigl(1-\beta(x)\bigr) \kappa(A,x) \, \diff \mu\vert_{Y(s)}^{+}(x) \, \diff \lvert\mu\rvert(s)
\end{equation*}
where, for all $x \in Y$, we extend $\kappa(\cdot, x)$ to the measurable space $(X, \mathcal{B}(X))$. For each $A \in \mathcal{B}(Y)$, we thus have
\begin{align*}
    -\alpha(A) + \eta(A) &= \int_{Y} \bigl[\tau(A,x)-\kappa(A,x)\bigr] \, \diff \mu\vert_{Y}^{+} (x) \\
    &= \delta_1(A)\int_{Y} \bigl(1-\beta(x)\bigr) \, \diff \mu\vert_{Y}^{+}(x) - \int_{Y} \bigl(1-\beta(x)\bigr) \kappa(A,x) \, \diff \mu\vert_{Y}^{+} (x) \\
    &= \delta_1^{ \gamma_{3} }\vert_{Y}(A)-\gamma_{3}\vert_{Y}(A).
\end{align*}

This implies that $\gamma_{3}\in \mathcal{M}_+(X)$, and that
\begin{align*}
    \delta_1^{ \gamma_{3} }\vert_{Y}-\gamma_{3}\vert_{Y} \geq_{\mathrm{cx}} \mu\vert_{Y}.
\end{align*}

Therefore,
\begin{align*}
    \int_{ \cup \mathcal{Y}_{3} }u^{\star} \, \diff \mu &= \int_{ \cup \mathcal{Y}_{3} }\int_{X} u^{\star}(s) \, \diff \mu\vert_{Y(x)}(s) \, \diff \lvert\mu\rvert(x) \\
    &=  \int_{ \cup \mathcal{Y}_{3} }\int_{X} u^{\star}(s) \, \diff \bigl[\delta_1^{ \gamma_{3} }\vert_{Y}-\gamma_{3}\vert_{Y}\bigl](s) \, \diff \lvert\mu\rvert(x) \\
    &= \int_{X} u^{\star} \, \diff \bigl[\delta_1^{ \gamma_{3} }-\gamma_{3}\bigl] \\
    &= \int_{X} \bigl(u^{\star} - u^{\star}(1)\bigl) \, \diff \bigl[\delta_1^{ \gamma_{3} }-\gamma_{3}\bigl] \\
    &= \bar{s}\int_{X} (1-\Id) \, \diff \gamma_{3}
\end{align*}
where the second equality follows from $u^{\star}\vert_{Y}$ being affine and $\delta_1^{ \gamma_{3} }\vert_{Y}-\gamma_{3}\vert_{Y} \geq_{\mathrm{cx}} \mu\vert_{Y}$  and the last equality from the slope constraint binding for $Y \in \mathcal{Y}_{3}$.

\paragraph{Construction of $\bm{\gamma_4}$}

The construction of $\gamma_{4}$ is similar to that of $\gamma_3$ and thus omitted for brevity.

\subsubsection{Certification of optimality}

Having defined the Lagrange multipliers $(\gamma_i)_{i\in\{1,\dots,4\}} \in \mathcal{M}_+(X)^{4}$, it remains to show that they are feasible for \labelcref{eqn:Dual}, and that they certify the optimality of $u^{\star}$. Feasibility follows from the respective property on every element of the partition. We show optimality of $u^{\star}$ by showing that the dual multipliers achieve the value of \labelcref{eqn:Primal}, using the properties derived above:
\begin{align*}
	\int_{X} u^{\star} \, \diff \mu  = & \int_{\cup \mathcal{Y}_{1} \bigcup \cup \mathcal{Y}_0^+} u^{\star} \, \diff \mu + \int_{\cup \mathcal{Y}_{2} \bigcup \cup \mathcal{Y}_0^-} u^{\star} \, \diff \mu + \int_{\cup \mathcal{Y}_{3}} u^{\star} \, \diff \mu + \int_{\cup \mathcal{Y}_{4}} u^{\star} \, \diff \mu + \int_{\cup \mathcal{Y}_{5}} u^{\star} \, \diff \mu \\
	 =  &  \int_{X} \bar{u} \, \diff \gamma_{1} - \int_{X} \ubar{u} \, \diff \gamma_{2}+ \bar{s} \int_{X} (1-\Id) \, \diff \gamma_{3}- \ubar{s}\int_{X} \Id \, \diff \gamma_{4}
\end{align*}
This completes the proof that $u^{\star}$ is optimal.

\subsection{Proof of \cref{thm:opt_lower_bd}} \label{proof:thm:opt_lower_bd}

\paragraph{\labelcref{cond:opt_lb_afffinebound_0}}
Let $\ubar{u}^{\star} \in \mathcal{K}_{\bar{u}, S}^0$ be strictly convex. Since $\mu$ is well-behaved for $\mathcal{K}_{\bar{u}, S}^0$, there is an extreme point $u^\star \neq \bar{u}$ that achieves $\max_{u \in \mathcal{U}_{\ubar{u}^\star}} \int_X u \, \diff \mu$ (i.e., that is optimal for $\mathrm{LP}_{\mathcal{U}_{\ubar{u}^\star},\mu}$). 

By the arguments in \cref{secap:concavification}, there exists a partition $\mathcal{Y}=\bigcup_{i=1}^{5}\mathcal{Y}_i$ with the properties described in \cref{thm:charac_opt}. Since the CFI $\mathcal{U}_{\ubar{u}^\star}$ is affinely bounded with $\bar{u}'=\bar{s}$ and $u^\star \neq \bar{u}$, $\mathcal{Y}_1=\mathcal{Y}_4=\varnothing$ (i.e., $u_{\ubar{u}}$ does not coincide with $\bar{u}$ nor does it satisfy lower slope saturation). Therefore, we have $\mathcal{Y}=\mathcal{Y}_0 \cup \mathcal{Y}_2 \cup \mathcal{Y}_3 \cup \mathcal{Y}_5$. We first collect some properties of the partition $\mathcal{Y}$:
\begin{itemize}
    \item By affine boundedness and $\bar{u}'=\bar{s}$, $\mathcal{Y}_0= \bigl\{\{0\}\bigl\}$.
    \item By definition, the elements of $\mathcal{Y}_2$ are singletons.
    \item If $\mathcal{Y}_3$ is nonempty, it contains only a single partition element (by convexity of $u^\star$). In this case, write $\mathcal{Y}_3 = \bigl\{ [a^*, 1]\bigl\}$.
    \item By a slight abuse of notation, write $Y=[a_{Y}, b_{Y}]$ for $Y \in \mathcal{Y}_5$. Since $\mathcal{U}_{\ubar{u}^\star}$ is affinely bounded, $Y \in \mathcal{Y}_5$ means that $u^\star\vert_{Y}(x)=\ubar{u}^\star(a_{Y}) + \frac{x-a_{Y}}{b_{Y}-a_{Y}}(\ubar{u}^\star(b_{Y})-\ubar{u}^\star(a_{Y}))$.
\end{itemize}

For any $\ubar{u}\in \mathcal{K}_{\bar{u}, S}^0$, we now define a function $u_{\ubar{u}}$ as follows:
\begin{equation*}
    u_{\ubar{u}}\vert_{Y}(x)=\left\{\begin{matrix*}[l] 
    \ubar{u}(x) & & R= \bigl\{\{x\}\bigl\} \in \mathcal{Y}_0 \cup \mathcal{Y}_2 \\
    \ubar{u}(a^\star)+\bar{s}(x-a^\star) & & Y \in \mathcal{Y}_3  \\
    \ubar{u}(a_{Y}) + \frac{x-a_{Y}}{b_{Y}-a_{Y}}(\ubar{u}(b_{Y})-\ubar{u}(a_{Y})) & & Y \in \mathcal{Y}_5
    \end{matrix*} \right.
\end{equation*}

By the joint properties of $\mathcal{Y}$ and $\mu$, ensured by optimality of $u^\star$ for $\mathrm{LP}_{\mathcal{U}_{\ubar{u}^\star},\mu}$, $u_{\ubar{u}}$ is optimal for $\mathrm{LP}_{\mathcal{U}_{\ubar{u}},\mu}$ by \cref{thm:charac_opt} (i.e., $u_{\ubar{u}} \in \underset{u \in \mathcal{U}_{\ubar{u}}}{\argmax} \; \int_{X} u \, \diff  \mu$)\footnote{See also \cref{secap:concavification}}.

Let $\alpha \in (0,1)$ and $\ubar{u}_1, \ubar{u}_2 \in \mathcal{K}_{\bar{u}, S}^0$. By definition of $u_{\ubar{u}}$,
\begin{equation*}
    u_{\alpha \ubar{u}_1 + (1-\alpha) \ubar{u}_2} = \alpha u_{\ubar{u}_1} + (1-\alpha) u_{\ubar{u}_2}.
\end{equation*}

Optimality of $u_{\ubar{u}}$ for $\mathrm{LP}_{\mathcal{U}_{\ubar{u}},\mu}$ for any $\ubar{u}\in \mathcal{K}_{\bar{u}, S}^0$ therefore ensures that $\ubar{u} \in \mathcal{K}_{\bar{u}, S}^0 \mapsto u_{\ubar{u}}$ is linear. Combining this with linearity of $V$ in $u_{\ubar{u}}$ shows that $V\vert_{\mathcal{K}_{\bar{u}, S}^0}$ is linear as well.

\paragraph{\labelcref{cond:opt_lb_afffinebound_1}}
This works analogously to the previous case.

\paragraph{\labelcref{cond:opt_lb_concave}}
Let $\alpha \in (0,1)$ and $\ubar{u}_1, \ubar{u}_2 \in \mathcal{K}_{\bar{u}, S}$. Then there exist extreme points $u_{\ubar{u}_i} \in \underset{u \in \mathcal{U}_{\ubar{u}_i}}{\argmax} \; \int_{X} u \, \diff  \mu$ for $i \in \{1,2\}$. Since $\nu=\mu$, we get that
\begin{align*}
    \alpha V(\ubar{u}_1) + (1-\alpha) V(\ubar{u}_2) &= \alpha \int_X u_{\ubar{u}_1} \, \diff \mu + (1-\alpha) \int_X u_{\ubar{u}_2} \, \diff \mu \\
    &= \int_X \bigl[ \alpha u_{\ubar{u}_1} + (1-\alpha) u_{\ubar{u}_2} \bigl] \, \diff \mu \\
    &\leq \int_X u_{\alpha \ubar{u}_1 + (1-\alpha) \ubar{u}_2} \, \diff \mu \\
    &= V\bigl(\alpha \ubar{u}_1 + (1-\alpha) \ubar{u}_2\bigr),
\end{align*}
where the inequality follows because $\alpha u_{\ubar{u}_1} + (1-\alpha) u_{\ubar{u}_2}$ is not necessarily an extreme point. Therefore, $V$ is concave.

\section{Other Omitted Proofs in the Main Text}

\subsection{Proof of \cref{prop:opt_rev_cutoff}}\label{secap:opt_screening_proofs}

Suppose that $F$ is Myerson-regular. That is, the virtual value function defined by
\begin{equation*}
    v(\theta) = \theta - \frac{1-F(\theta)}{f(\theta)},
\end{equation*}
for every $\theta\in \Theta$, is non-decreasing. For each $\theta\in \Theta$, let
\begin{align*}
    \Psi_{\mathrm{R}}(\theta)
    &= \mu_{\mathrm{R}}\bigl([\theta,1]\bigr) \\[5pt]
    &= \psi_{\mathrm{R}}(1)+\int_{\theta}^{1} \psi_{\mathrm{R}}(t) \, \diff t \\[5pt]
    &= \theta f(\theta) - \bigl(1-F(\theta)\bigr) \\[5pt]
    &= f(\theta) v(\theta),
\end{align*}
for every $\theta\in \Theta$. Hence, $\Psi_{\mathrm{R}}$ has the same sign as $v$. We have $v(0)=-1/f(0)<0$ and $v(1)=1$. Moreover, since $v$ is continuous and nondecreasing, the Intermediate Value Theorem implies that $v$ crosses zero from below at a unique $\theta^{\star}\in (0,1)$. Hence $\Psi_{\mathrm{R}}$ is single-crossing from below at $\theta^{\star}$.

Let $\mathcal{Y}$ be the partition of $\Theta$ induced by the extreme point $u_{\theta^{\star}}$ of $\mathcal{U}_{\mathrm{S}}$ defined according to \labelcref{eqn:cutoff_utility}. We have $\mathcal{Y}_{0}=\{0\}$, $\cup_{Y\in \mathcal{Y}_{2}} Y=(0,\theta^{\star})$, and $\mathcal{Y}_{3}=[\theta^{\star},1]$, while $\mathcal{Y}_{i}=\varnothing$ for all $i\in\{1,4,5\}$. By \cref{thm:charac_opt}, it thus suffices to show that $\psi_{\mathrm{R}}(\theta)\leq 0$ for all $\theta \in (0,\theta^{\star})$ and that $\mu_{\mathrm{R}}\vert_{[\theta^{\star},1]}^{+} \leq_{\mathrm{dcx}}\mu_{\mathrm{R}}\vert_{[\theta^{\star},1]}^{-}$ to verify the optimality of $u_{\theta^{\star}}$.

Note that $\Psi_{\mathrm{R}}(\theta^{\star})=0$ is equivalent to
\begin{equation}\label{eqn:mass_equality_rev}
    \mu_{\mathrm{R}}\bigl([\theta^{\star},1]\bigr)=0.
\end{equation}

Next, for every $\theta\in \Theta$,
\begin{align*}
    \int_{[\theta,1]} t \, \diff\mu_{\mathrm{R}}(t)
    &= \psi_{\mathrm{R}}(1)+\int_{\theta}^{1} t\psi_{\mathrm{R}}(t) \, \diff t \\[5pt]
    &= \theta^2 f(\theta) \geq 0,
\end{align*}
where the second equality follows by integration by parts.  
In particular,
\begin{equation}\label{eqn:mean_inequality_rev}
    \int_{[\theta^{\star},1]} \theta \, \diff\mu_{\mathrm{R}}(\theta) \geq 0.
\end{equation}

By the Hahn–Jordan Theorem \citep[][Theorem 10.21]{Yeh2014}, there exists a unique decomposition $\mu_{\mathrm{R}}=\mu_{\mathrm{R}}^{+}-\mu_{\mathrm{R}}^{-}$. The equality \labelcref{eqn:mass_equality_rev} is thus equivalent to  
\begin{equation*}
    \mu_{\mathrm{R}}^{+}\bigl([\theta^{\star},1]\bigr)=\mu_{\mathrm{R}}^{-}\bigl([\theta^{\star},1]\bigr)\neq 0,
\end{equation*}
so $\mu_{\mathrm{R}}\vert_{[\theta^{\star},1]}^{+}$ and $\mu_{\mathrm{R}}\vert_{[\theta^{\star},1]}^{-}$ have equal nonzero mass and can thus be normalized to probability measures. Moreover, \labelcref{eqn:mean_inequality_rev} shows that the barycenter of $\mu_{\mathrm{R}}\vert^{+}_{[\theta^{\star},1]}$ is weakly larger than that of $\mu_{\mathrm{R}}\vert^{-}_{[\theta^{\star},1]}$:
\begin{equation*}
    \int_{[\theta^{\star},1]} \theta \, \diff\mu_{\mathrm{R}}\vert^{+}_{[\theta^{\star},1]}(\theta)
    \geq
    \int_{[\theta^{\star},1]} \theta \, \diff\mu_{\mathrm{R}}\vert^{-}_{[\theta^{\star},1]}(\theta).
\end{equation*}

By Theorem~4.A.2 of \citet{Shaked2007},  
$\mu_{\mathrm{R}}\vert_{[\theta^{\star},1]}^{+} \leq_{\mathrm{dcx}}\mu_{\mathrm{R}}\vert_{[\theta^{\star},1]}^{-}$  
is then equivalent to the following \emph{weak majorization} condition:
\begin{align*}
    \forall \theta \in [\theta^{\star},1], &\quad 
    \int_{\theta}^{1} \max\{0,\Psi_{\mathrm{R}}(t)\} \, \diff t
    \;\geq\; \int_{\theta}^{1} \max\{0,-\Psi_{\mathrm{R}}(t)\} \, \diff t , \\[5pt]
    \iff \forall \theta \in [\theta^{\star},1], &\quad 
    \int_{\theta}^{1} \Psi_{\mathrm{R}}(t) \, \diff t \;\geq\; 0,
\end{align*}
which holds because $\Psi_{\mathrm{R}}(\theta) \geq 0$ for all $\theta \in [\theta^{\star},1]$.

Finally, we let
\begin{equation*}
    \theta_{0}\coloneqq \inf\bigl\{\theta\in (0,1) \; \vert \; \psi_{\mathrm{R}}(\theta)>0\bigr\},
\end{equation*}
and adopt the convention that $\theta_{0}\coloneqq1$ if $\psi_{\mathrm{R}}(\theta)\leq 0$ for all $\theta\in \Theta$. Since $\lim_{\theta \to 0^{+}} \psi_{\mathrm{R}}(\theta) =-2f(0)<0$, we have $\theta_{0}>0$ and $\psi_{\mathrm{R}}(\theta) \leq 0$ for all $\theta\in [0, \theta_{0}]$. Therefore, if $\theta_{0}=1$ it is immediate that $\theta^{\star}<\theta_{0}$ and $\psi_{\mathrm{R}}(\theta) < 0$ for all $\theta \in (0,\theta^{\star})$. Suppose that $\theta_{0}<1$. Note that
\begin{equation*}
    \psi_{\mathrm{R}}(\theta) = -f'(\theta) v(\theta) - f(\theta) v'(\theta),
\end{equation*}
for every $\theta\in \Theta$. Therefore, $\psi_{\mathrm{R}}(\theta^{\star}) = - f(\theta^{\star}) v'(\theta^{\star}) \leq 0$. This implies $\theta^{\star}<\theta_{0}$ and, therefore, $\psi_{\mathrm{R}}(\theta) \leq  0$ for all $\theta \in (0,\theta^{\star})$. \qed

\subsection{Omitted proofs in \cref{sec:majorization}}

\subsubsection{Proof of \cref{lem:two-sided-maj}}\label{secap:two_sided_maj_proof}

Throughout the proof, we fix $f,g\in\mathcal{F}$ satisfying $f\succsim g$. Let $I_{f}$ and $I_{g}$ be defined according to \labelcref{eqn:integral_maj}, and let $\mathcal{I}_{f,g}$ be defined according to \labelcref{eqn:maj_CFI}. 

For any $\varphi\in\mathcal{F}$, the function $I_{\varphi}$ is convex and continuous \citep[][Section 1.5]{Niculescu2025}, and satisfies $I_{\varphi}(0)=-m_{\varphi}$, $I_{\varphi}(1)=0$, and $\partial I_{\varphi} \subseteq \mathrm{Im}(\varphi)\subseteq\bigl[\varphi(0),\varphi(1)\bigr]$ since $\varphi$ is non-decreasing. In particular, $\partial I_{f} \subseteq \bigl[f(0),f(1)\bigr]$.

\framebox{$\Rightarrow$} Take $\varphi \in \mathcal{F}$ and suppose that $f\succsim \varphi \succsim g$. We show that $I_{\varphi}\in\mathcal{I}_{f,g}$, where $I_{\varphi}$ is defined according to \labelcref{eqn:integral_maj}. 

First, we have already established that $I_{\varphi}\in \mathcal{K}$. Furthermore, the majorization condition $f\succsim \varphi \succsim g$ implies $I_{f}(x) \leq I_{\varphi}(x) \leq I_{g}(x)$ for every $x\in (0,1)$, and $I_{f}(x) = I_{\varphi}(x) = I_{g}(x)$ for $x\in\{0,1\}$. Since $I_{f}(0) = I_{\varphi}(0) = I_{g}(0) = 0$, we have $m_f=m_\varphi=m_g$, which gives us
\begin{equation}\label{eqn:mean_zero_x}
    \frac{1}{x}\int_{0}^{x} f(s) \, \mathrm{d}s \leq \frac{1}{x}\int_{0}^{x} \varphi(s) \, \mathrm{d}s \leq \frac{1}{x}\int_{0}^{x} g(s) \, \mathrm{d}s,
\end{equation}
for any $x\in(0,1]$.

By Lebesgue's differentiation theorem for increasing functions \citep[][Theorem 12.10]{Yeh2014}, taking the limit as $x \to 0^{+}$ in \labelcref{eqn:mean_zero_x} yields $f(0) \leq f(0^{+}) \leq \varphi(0^{+}) \leq g(0^{+})$. Moreover, for every $x\in [0,1)$, we have
\begin{equation}\label{eqn:mean_x_one}
    \frac{1}{1-x}\int_{x}^{1} f(s) \, \mathrm{d}s \geq \frac{1}{1-x}\int_{x}^{1} \varphi(s) \, \mathrm{d}s \geq \frac{1}{1-x}\int_{x}^{1} g(s) \, \mathrm{d}s. 
\end{equation}

Taking the limit as $x \to 1^{-}$ in \labelcref{eqn:mean_x_one}, we obtain $g(1^{-}) \leq \varphi(1^{-}) \leq f(1^{-})\leq f(1)$, which implies $\partial I_{\varphi} \subseteq \bigl[f(0),f(1)\bigr]$. This also shows that $\partial I_{g} \subseteq \bigl[f(0),f(1)\bigr]$, confirming that $\mathcal{I}_{f,g}$ satisfies \cref{def:CFI}.

\framebox{$\Leftarrow$} Take $I\in \mathcal{I}_{f,g}$. We show that there exists $\varphi\in \mathcal{F}$ such that $f\succsim \varphi \succsim g$ and $I=I_{\varphi}$, where $I_{\varphi}$ is defined according to \labelcref{eqn:integral_maj}. 

Let $\varphi$ be any selection of the subdifferential of $u$. Since $u$ is convex, such a selection always exists and is a non-decreasing function on $X$ \citep[][Corollary 2.1.3]{Niculescu2025}. Moreover, by \citep[][Theorem 1.5.2]{Niculescu2025}, we have $\varphi\in L^{1}(X)$ (hence $\varphi\in \mathcal{F}$) and
\begin{equation}\label{eqn:int_rep_cvx}
    \forall x\in X, \quad I(x)=I(0)+\int_{0}^{x} \varphi(s) \, \mathrm{d}s.
\end{equation}

Since $I(1) = 0$, evaluating \labelcref{eqn:int_rep_cvx} at $x=1$ yields $I(0)=-m_{\varphi}$. Therefore, using the definitions of $I_{f}$ and $I_{g}$, the representation of $I$ given by \labelcref{eqn:int_rep_cvx}, and the fact that $I\in \mathcal{I}_{f,g}$, we obtain
\begin{equation*}
    \forall x\in X, \quad \int_{0}^{x}f(s) \, \diff s - m_{f} \leq \int_{0}^{x} \varphi(s) \, \diff s - m_{\varphi} \leq \int_{0}^{x}g(s) \, \diff s - m_{g}.
\end{equation*}

Since $I_{f}(0)=I(0)=I_{g}(0)$, we obtain $m_{f}=m_{\varphi}=m_{g}$. Therefore, $f\succsim \varphi \succsim g$.
    
\qed

\subsubsection{Formalization of the claim in \cref{rmk:isomorphism}}\label{secap:bijection_ext_points}

Let $\Phi_{f,g} \coloneqq \bigl\{\varphi \in \mathcal{F} \; \vert \; \text{$\varphi$ right-continuous and $f\succsim \varphi \succsim g$}\bigr\}$. Since all functions in $\Phi_{f,g}$ share the same mean value $m\in\mathbb{R}$, we define the linear transformation $T\colon \Phi_{f,g}  \to \mathcal{I}_{f,g}$ by
\begin{equation*}
    T[\varphi](x) = \int_{0}^{x} \varphi(s)  \, \mathrm{d}s - m,
\end{equation*}
for all $\varphi\in \Phi_{f,g}$ and $x\in [0,1]$.

The mapping $T$ is an order-isomorphism between the partially ordered sets $(\Phi_{f,g}, \succsim)$ and $(\mathcal{I}_{f,g}, \leq)$, with (linear) inverse $T^{-1}$ defined by $T^{-1}[I](x)=\partial_{+} I(x)$ for all $I\in \mathcal{I}_{f,g}$ and all $x\in X$.\footnote{Lemma 2 in \cite{Curello2024} establishes this result for mean-preserving contractions. Their argument extends naturally to the setting of two-sided (weak) majorization. The restriction to right-continuous functions simply disciplines what can happen at discontinuities of $I$, and ensures uniqueness of the inverse mapping.} Since it is linear, the transformation $T$ preserves the convex structure of $\Phi_{f,g}$. Furthemore, when we endow $\Phi_{f,g}$ with the $L^1$ norm and $\mathcal{I}_{f,g}$ with the supremum norm, both sets become compact and both $T$ and $T^{-1}$ are continuous.\footnote{Compactness of $\Phi_{f,g}$ follows from Helly's selection theorem \citep[see][proof of Proposition 1, p.~1579]{Kleiner2021}. Continuity of $T^{-1}$ follows from the fact that if $(I_{n})_{n\in\mathbb{N}}$ is a sequence of convex functions converging uniformly to some continuous function $I$ on a compact interval, then $I$ is convex and the sequence $(\partial_{+}I_{n})_{n\in\mathbb{N}}$ converges pointwise to $\partial_{+}I$ \citep[see][Theorem 1.4.4]{Niculescu2025}.} Hence, $T$ is a homeomorphism, and thus also preserves compactness.

As a result, there is a one-to-one correspondence between the \emph{extreme points} of $\Phi_{f,g}$ and those of $\mathcal{I}_{f,g}$:
\begin{lemma}\label{lem:bij_ext_pt}
    $\ex(\mathcal{I}_{f,g}) = T\bigl(\ex(\Phi_{f,g})\bigr)$.
\end{lemma}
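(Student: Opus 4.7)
The plan is to exploit the fact, already established in the paragraph preceding the lemma, that $T$ is a linear bijection between the convex sets $\Phi_{f,g}$ and $\mathcal{I}_{f,g}$ with linear inverse $T^{-1}$. The statement then reduces to the standard fact that linear bijections between convex sets preserve extreme points in both directions, so no topology or compactness is actually required for the \emph{algebraic} claim of the lemma.

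To establish $T\bigl(\ex(\Phi_{f,g})\bigr) \subseteq \ex(\mathcal{I}_{f,g})$, I would fix $\varphi \in \ex(\Phi_{f,g})$ and suppose, toward a contradiction, that $T[\varphi] = \lambda I_{1} + (1-\lambda) I_{2}$ for some $I_{1}, I_{2} \in \mathcal{I}_{f,g}$ and $\lambda \in (0,1)$. Setting $\varphi_{i} = T^{-1}[I_{i}] \in \Phi_{f,g}$ and applying the linearity of $T^{-1}$ yields
\begin{equation*}
    \varphi = T^{-1}\bigl[T[\varphi]\bigr] = T^{-1}\bigl[\lambda I_{1} + (1-\lambda) I_{2}\bigr] = \lambda \varphi_{1} + (1-\lambda) \varphi_{2}.
\end{equation*}
Since $\varphi \in \ex(\Phi_{f,g})$, this forces $\varphi_{1} = \varphi_{2} = \varphi$, and thus $I_{1} = I_{2} = T[\varphi]$ by injectivity of $T$, proving that $T[\varphi]$ is extreme.

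The reverse inclusion $\ex(\mathcal{I}_{f,g}) \subseteq T\bigl(\ex(\Phi_{f,g})\bigr)$ follows by the exactly symmetric argument applied to $T^{-1}$: take $I \in \ex(\mathcal{I}_{f,g})$, write $\varphi \coloneqq T^{-1}[I]$, suppose $\varphi = \lambda \varphi_{1} + (1-\lambda) \varphi_{2}$ for some $\varphi_{1}, \varphi_{2} \in \Phi_{f,g}$ and $\lambda \in (0,1)$, then apply the linearity of $T$ to obtain $I = \lambda T[\varphi_{1}] + (1-\lambda) T[\varphi_{2}]$, whence $T[\varphi_{1}] = T[\varphi_{2}] = I$ by extremality of $I$, and hence $\varphi_{1} = \varphi_{2} = \varphi$ by injectivity of $T$. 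This shows $\varphi \in \ex(\Phi_{f,g})$, so $I = T[\varphi] \in T\bigl(\ex(\Phi_{f,g})\bigr)$, completing the proof.

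There is essentially no obstacle here: the substantive work, namely verifying that $T$ is a well-defined linear bijection with linear inverse between the two convex sets, has already been done in the preceding paragraph via the representation of convex functions as integrals of their right-derivatives. The lemma itself is a one-line consequence of linearity plus bijectivity.
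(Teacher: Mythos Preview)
Your proof is correct and takes a more elementary route than the paper. The paper invokes an ``Affine mapping lemma'' (attributed to Yang, 2025), which states that for a continuous affine map $L\colon K \to E$ from a compact convex set $K$, one has $\ex\bigl(L(K)\bigr) \subseteq L\bigl(\ex(K)\bigr)$; it then applies this lemma once to $T$ and once to $T^{-1}$, relying on the compactness of both $\Phi_{f,g}$ and $\mathcal{I}_{f,g}$ and the continuity of both maps. Your argument dispenses with topology altogether: it uses only that $T$ is a bijection and that both $T$ and $T^{-1}$ preserve convex combinations, which is exactly what the paragraph preceding the lemma already provides. This is the cleaner path here, since the algebraic fact that affine bijections between convex sets preserve extreme points is entirely elementary and self-contained, whereas the Affine mapping lemma is designed for the harder situation where the map is merely a continuous affine \emph{surjection} (not necessarily injective), at the cost of requiring compactness.
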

\begin{proof}[Proof of \cref{lem:bij_ext_pt}]
    The proof relies on the following result:
    \begin{lemma}[Affine mapping lemma, \citealp{Yang2025}] \label{lem:affine_mapping}
        Let $K$ be a compact convex subset of a locally convex topological vector space, and let $E$ be a topological vector space. For any continuous affine map $L\colon K \to E$, we have
        \begin{equation*}
            \ex\bigl(L(K)\bigr) \subseteq L\bigl(\ex(K)\bigr).
        \end{equation*}
    \end{lemma}
    
    Applying this lemma to our setting, we obtain
    \begin{align*}
       \ex(\mathcal{I}_{f,g}) &= \ex\bigl(T(\Phi_{f,g})\bigr) \\
       & \subseteq T\bigl(\ex(\Phi_{f,g})\bigr),
    \end{align*}
    where the first equality follows from $T$ being a bijection and the second inclusion from \cref{lem:affine_mapping}. 
    
    For the reverse inclusion, we apply the same argument to the inverse mapping:
    \begin{align*}
       \ex(\Phi_{f,g}) &= \ex\bigl(T^{-1}(\mathcal{I}_{f,g})\bigr) \\
       & \subseteq T^{-1}\bigl(\ex(\mathcal{I}_{f,g})\bigr),
    \end{align*}
    which implies $T\bigl(\ex(\Phi_{f,g})\bigr) \subseteq \ex(\mathcal{I}_{f,g})$. Combining both inclusions yields the desired equality.
\end{proof}

\subsection{Omitted Proofs in \cref{sec:delegation}}

\subsubsection{Proof of \cref{lem:delegation_cfi_lp}}\label{secap:delegation_cfi_lp_proof}

Lemma 1 in \cite{Kleiner2022} shows that in the delegation problem without outside options, an indirect utility $u$ satisfies \labelcref{eqn:IC_delegation} if and only if $u\leq \bar{u}$. By \labelcref{eqn:IR_delegation}, any implementable indirect utility $u$ in the delegation problem with type-dependent outside options has the property $u \geq u_0$. Thus, an indirect utility $u$ is implementable if and only if $u_0 \leq u \leq \bar{u}$. This constitutes a CFI: as in \cite{Kleiner2022}, the upper bound $\bar{u}$ is differentiable by the assumptions on $b$. Moreover, because $u_0$ is tangent to $\bar{u}$ at $\ubar{\theta}$ and $\bar{\theta}$, this fixes the set of possible slopes (redundantly) to be $[\bar{a}(\ubar{\theta}), \bar{a}(\bar{\theta})]$.

The principal's problem consists in maximizing 
\begin{equation*}
    \int_{\ubar{\theta}}^{\bar{\theta}} \Bigl\{\bigl(\theta + \beta(\theta)\bigr) a_{\Gamma}(\theta)+ b_{\Gamma}(\theta)\Bigr\} \, \diff F(\theta),
\end{equation*}
over mechanisms $\Gamma$ that satisfy (IC) and (IR). Note that under any such mechanism, $\bigl(\theta + \beta(\theta)\bigr) a_{\Gamma}(\theta)+ b_{\Gamma}(\theta) = \beta(\theta) a_{\Gamma}(\theta) + u(\theta)$ for all $\theta\in \Theta$. Moreover, \cref{lem:delegation_interval} implies that $a_{\Gamma}(\theta) = u'(\theta)$ almost everywhere on $\Theta$. Therefore,
\begin{equation}\label{eqn:obj_del_ut}
    \int_{\ubar{\theta}}^{\bar{\theta}} \Bigl\{\bigl(\theta + \beta(\theta)\bigr) a_{\Gamma}(\theta)+ b_{\Gamma}(\theta)\Bigr\} \, \diff F(\theta) = \int_{\ubar{\theta}}^{\bar{\theta}} \bigl\{\beta(\theta) u'(\theta) + u(\theta)\bigr\} f(\theta) \, \diff \theta.
\end{equation}

Integrating the right-hand side of \labelcref{eqn:obj_del_ut} by parts shows that the principal's problem is equivalent to \labelcref{eqn:delegation_LP}.

\subsubsection{Proof of \cref{prop:delegation_logconc}}\label{secap:delegation_logconc_proof}

Under \cref{ass:regular_delegation}, the density $\psi_{\mathrm{D}}$ on $(\ubar{\theta}, \bar{\theta})$ is given by $\psi_{\mathrm{D}}(\theta)=f(\theta) - \beta f'(\theta)$. By logconcavity of $f$, there exists $\theta_0\in \Theta$ such that $\psi_{\mathrm{D}}$ is single-crossing from below at $\theta_0$:\footnote{Corollary 3 in \cite{Kleiner2022} makes use of the same fact.}
\begin{equation}\label{eqn:sing_cross_dens}
    \psi_{\mathrm{D}}(\theta) \; \left\{\begin{array}{cc}
        \leq 0  & \text{if $\theta\leq \theta_0$} \\
        \geq 0 & \text{if $\theta\geq \theta_0$}
    \end{array}
    \right.
\end{equation}

If $\theta_0 = \ubar{\theta}$, then \cref{thm:charac_opt} immediately implies that $\bar{u}$ is an optimal solution to \labelcref{eqn:delegation_LP}. The indirect implementation of $\bar{u}$ corresponds to giving the agent full discretion.

If $\theta_0 = \bar{\theta}$, then \cref{thm:charac_opt} immediately implies that $\ubar{u}$ is an optimal solution to \labelcref{eqn:delegation_LP}. This corresponds to letting the agent choose an item from his menu of outside options.

If $\theta_0 \in ( \ubar{\theta}, \bar{\theta})$, define, for each $\theta \in [\ubar{\theta}, \theta_0]$ and $\theta^{\star} \in \Theta$,
\begin{equation*}
    A_{\theta^{\star}}(\theta) = \int_\theta^{\theta^{\star}} (t-\theta^{\star}) \psi_{\mathrm{D}}(t) \, \diff t.
\end{equation*}

The proof will proceed by finding a $\theta^{\star} \in [\theta_0, \bar{\theta}]$ with $\ell(\theta^{\star}) \in [\ubar{\theta}, \theta_0]$ such that $A_{\theta^{\star}}\bigl(\ell(\theta^{\star})\bigr)=0$. Before proving this, we show that this implies the statement of the proposition. In particular, since $\ell(\theta^{\star}) \in [\ubar{\theta}, \theta_0]$, we have $\psi_{\mathrm{D}}(\theta) \leq 0$ for all $\theta \leq \ell(\theta^{\star})$ following \labelcref{eqn:sing_cross_dens}. Likewise, since $\theta^{\star} \in [\theta_0, \bar{\theta}]$, \labelcref{eqn:sing_cross_dens} implies $\psi_{\mathrm{D}}(\theta) \geq 0$ for all $\theta \geq \theta^{\star}$. Furthermore, by Theorem 3.A.44 in \cite{Shaked2007}, $A_{\theta^{\star}}\bigl(\ell(\theta^{\star})\bigr)=0$ implies that $\mu\vert_{[\ell(\theta^{\star}), \theta^{\star}]} \leq_{\mathrm{cx}} \delta_{[\ell(\theta^{\star}), \theta^{\star}]}$. The condition $A_{\theta^{\star}}\bigl(\ell(\theta^{\star})\bigr)=0$ also implies that $\mu\vert_{[\ell(\theta^{\star}), \theta^{\star}]}\geq0$:
\begin{equation*}
    0 = A_{\theta^{\star}}\bigl(\ell(\theta^{\star})\bigr) \geq -(\theta^{\star}-\theta_0) \int_{\ell(\theta^{\star})}^{\theta^{\star}} \psi_{\mathrm{D}}(t) \, \diff t
\end{equation*}
Therefore, \cref{thm:charac_opt} implies that $u_{\theta^{\star}}$ is optimal for \labelcref{eqn:delegation_LP}.

We will now prove the existence of $\theta^{\star} \in [\theta_0, \bar{\theta}]$ with $\ell(\theta^{\star}) \in [\ubar{\theta}, \theta_0]$ such that $A_{\theta^{\star}}\bigl(\ell(\theta^{\star})\bigr)=0$. Assume first that $\bar{\theta}^\star \coloneqq \max \bigl\{\theta \in \Theta \, \vert \, A_\theta(\ubar{\theta})= 0 \bigl\}$ is well defined. We treat the case $A_\theta(\ubar{\theta})> 0$ for all $\theta \in \Theta$ below. If $\bar{\theta}^\star$ is well defined, then, for all $\theta^{\star} \in (\theta_0, \bar{\theta}^\star)$, $A_{\theta^{\star}}(\ubar{\theta}) \geq 0$. It is thus clear that $A_{\theta_0}(\ubar{\theta}) \geq 0$. Furthermore, $A_{\theta^{\star}}(\ubar{\theta})$ is concave in $\theta^{\star}$:
\begin{equation*}
    \frac{\partial}{\partial \theta^{\star}}A_{\theta^{\star}}(\ubar{\theta})=-\int_{\ubar{\theta}}^{\theta^{\star}} \psi_{\mathrm{D}}(t) \, \diff t \qquad \text{and} \qquad \frac{\partial^2}{\partial (\theta^{\star})^2} A_{\theta^{\star}}(\ubar{\theta})=-\psi_{\mathrm{D}}(\theta^{\star}) \leq 0
\end{equation*}

Therefore, $A_{\theta^{\star}}(\ubar{\theta}) \geq 0$ for all $\theta^{\star} \in (\theta_0, \bar{\theta}^\star)$.

From now on, let $\theta^{\star} \in (\theta_0, \bar{\theta}^\star)$. Observe that $A_{\theta^{\star}}$ is continuous and decreasing in $\theta$ with $A_{\theta^{\star}}(\theta_0) \leq 0$. By assumption, $A_{\theta^{\star}}(\ubar{\theta}) \geq 0$. Therefore, by the Intermediate Value Theorem, we can define a correspondence $k(\theta^{\star})$ such that $A_{\theta^{\star}}(\theta)=0$ for all $\theta \in k(\theta^{\star})$ and $\theta^{\star} \in [\theta_0, \bar{\theta}^\star]$. By the concavity of $A_{\theta^{\star}}(\theta)$ in $\theta^{\star}$ for each fixed value of $\theta$, $k$ is decreasing (in the sense that $\min k(\theta) \geq \max k(\theta')$ for $\theta' > \theta$). Furthermore, $k$ is upper hemicontinuous and has image $[k(\bar{\theta}^\star), \theta_0]$.

Recall that for each $\theta^{\star} \in \Theta$, $\ell(\theta^{\star})$ is such that $\ubar{u}$ and $\bar{u}(b) - (b-\theta)\bar{u}'(\theta)$ intersect at $\ell(\theta^{\star})$. By the convexity and differentiability of $\bar{u}$, $\ell$ is continuous and increasing in $\theta^{\star}$. Restricted to $[\theta_0, \bar{\theta}^\star]$, the function $\ell$ therefore has image $[\ell(\theta_0), \ell(\bar{\theta}^\star)]$. Since $k(\bar{\theta}^\star)=\ubar{\theta}$ and $\ell(\bar{\theta}^\star) \geq \ubar{\theta}$, there exists $\theta^{\star}$ such that $\ell(\theta^{\star})\in k(\theta^{\star})$. Therefore, we have found $\theta^{\star} \in [\theta_0, \bar{\theta}]$ with $\ell(\theta^{\star}) \in [\ubar{\theta}, \theta_0]$ such that $A_{\theta^{\star}}\bigl(\ell(\theta^{\star})\bigr)=0$. By the above, this implies optimality of $u_{\theta^{\star}}$.

Now for the case that $A_\theta(\ubar{\theta})> 0$. Let $\ubar{\theta}^{\star}$ be such that $ A_{\bar{\theta}}(\ubar{\theta}^{\star})=0$. This is well defined because $A_{\bar{\theta}}(\theta)$ is decreasing and continuous in $\theta$, $A_{\bar{\theta}}(\ubar{\theta})>0$ by assumption, and $A_{\bar{\theta}}(\theta_0)<0$. The proof now works analogously as above, replacing $\ubar{\theta}$ by $\ubar{\theta}^{\star}$ and observing that $\ubar{\theta}^{\star} = k(\bar{\theta}) \leq \ell(\bar{\theta}) = \bar{\theta}$. This again ensures the existence of $\theta^{\star}$ such that $\ell(\theta^{\star})\in k(\theta^{\star})$, completing the proof. \qed

\subsubsection{Proof of \cref{cor:delegation_compstat}}\label{secap:delegation_compstat_proof}

Denote by $\ell_1$, $\ell_2$ the version of the function $\ell$ as defined in the text corresponding to $\ubar{u}_1$, $\ubar{u}_2$ respectively. Since $\ubar{u}_1 \geq \ubar{u}_2$, $\ell_1(\theta) \geq \ell_2(\theta)$ for all $\theta \in \Theta$. The correspondence $k$ as defined in the proof of \cref{prop:delegation_logconc} in \cref{secap:delegation_logconc_proof} is independent of the lower level boundary function. This means that $\theta_{1}^{\star} \leq \theta_{2}^{\star}$ where $\theta_{i}^{\star}$ is such that $\ell(\theta_{i}^{\star}) \in k(\theta_{i}^{\star})$ for $i \in \{1,2\}$. \qed

\subsection{Omitted Proofs in \cref{sec:contest}}

\subsubsection{Proof of \cref{prop:imp_quantil_assign}}\label{secap:imp_quantil_assign_proof}

Observe that for a feasible assignment $\Gamma$ with expected quantile assignment $\chi$ there exists transfers $t$ such that $\Gamma$ is implementable (i.e., \labelcref{eqn:IC_contest,eqn:IR_contest} hold) if and only if $\chi$ is increasing \citep{Rochet1987}. Therefore, our proof is only concerned with \labelcref{eqn:feasibility_contest_capacity,eqn:feasibility_contest_mean}.

For completeness and to introduce notation, we include the proof for the free disposal case (i.e., $m=0$).

\begin{lemma}\label{lem:contestdesign_freedisposal}
    Under free disposal ($m=0$), an expected quantile assignment $\chi\colon[0,1]\to [0,1]$ is implementable if and only if it is non-decreasing and $G^{-1}\succsim_{\mathrm{w}} \chi $.
\end{lemma}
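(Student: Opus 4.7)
The plan is to prove both directions by linking $\chi$ to the aggregate distribution of realized qualities induced by $\Gamma$, and to reduce the weak‑majorization case to the (easier) strong‑majorization case already available in the literature.

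For necessity, suppose $\chi=\chi_\Gamma$ for some feasible $\Gamma$. Monotonicity of $\chi$ follows from \labelcref{eqn:IC_contest} via standard Rochet‑type arguments \citep{Rochet1987}. Let $H(x)\coloneqq \int_\Theta \Gamma(x\mid \theta)\, \diff F(\theta)$ be the aggregate distribution of realized qualities. The feasibility constraint \labelcref{eqn:feasibility_contest_capacity} is exactly $H(x)\geq G(x)$ for all $x$, equivalently $H^{-1}(q)\leq G^{-1}(q)$ for all $q\in[0,1]$. Next, interpret $\chi$ as the quantile function of the random variable $\chi(Q)$ with $Q\sim\mathrm{Unif}[0,1]$. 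Since $\chi(q)$ is the conditional mean of the realized quality given type‑quantile $q$, the tower property implies that $\chi(Q)$ is a mean‑preserving contraction of the realized quality $X\sim H$, i.e.\ $\chi(Q)\leq_{\mathrm{cx}}X$. In quantile terms this reads $\int_x^1 \chi(s)\, \diff s\leq \int_x^1 H^{-1}(s)\, \diff s$ for every $x\in[0,1]$. Chaining with $H^{-1}\leq G^{-1}$ yields $\int_x^1 \chi(s)\, \diff s\leq \int_x^1 G^{-1}(s)\, \diff s$, which is exactly $G^{-1}\succsim_{\mathrm{w}}\chi$ (the weak majorization inequality already incorporates the mean ordering $m_\chi \leq m_{G^{-1}}$ by taking $x=0$).

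For sufficiency, take any non‑decreasing $\chi$ with $G^{-1}\succsim_{\mathrm{w}}\chi$; in particular $m_\chi\leq m_{G^{-1}}$. The idea is to first replace $G$ by an auxiliary distribution $\tilde G$ that (i) is FOSD‑dominated by $G$ and (ii) satisfies $\tilde G^{-1}\succsim \chi$ in the \emph{strong} majorization sense. By continuity of $q_0\mapsto\int_{q_0}^1 G^{-1}(s)\, \diff s$, which ranges from $m_{G^{-1}}$ down to $0$ and thus passes through $m_\chi$, we may choose $q_0\in[0,1]$ such that $\int_{q_0}^1 G^{-1}(s)\, \diff s=m_\chi$. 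Set $\tilde G^{-1}(q)\coloneqq G^{-1}(q)\,\Ind_{q\geq q_0}$ and let $\tilde G$ be the corresponding CDF (which places mass $q_0$ at $0$ and coincides with $G$ above $G^{-1}(q_0)$). By construction $\tilde G(x)\geq G(x)$ for all $x$, and the strong majorization $\tilde G^{-1}\succsim \chi$ is verified by splitting at $q_0$: on $[q_0,1]$ the inequality $\int_x^1 \tilde G^{-1}\geq\int_x^1 \chi$ follows from $G^{-1}\succsim_{\mathrm{w}}\chi$, and on $[0,q_0)$ it reduces to $m_\chi\geq \int_x^1\chi$, which holds because $\chi\geq 0$ and $m_\chi=\int_0^1\chi$. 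Applying the mandatory‑allocation characterization (\citealp{Kleiner2021}, Proposition 4) to $\tilde G$ produces a feasible assignment $\tilde\Gamma$ with $\chi_{\tilde\Gamma}=\chi$ and aggregate distribution exactly $\tilde G$. Since $\tilde G\geq G$, this $\tilde\Gamma$ a fortiori satisfies the free‑disposal feasibility \labelcref{eqn:feasibility_contest_capacity} for $G$, completing the construction.

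The main obstacle is the necessity step, where one must carefully justify the mean‑preserving contraction relation $\chi(Q)\leq_{\mathrm{cx}} X$ for $X\sim H$. This amounts to a conditional‑expectation (Rao–Blackwell) argument: conditioning the realized quality on the type‑quantile shrinks the distribution in convex order. Once this relation and its quantile‑integral reformulation are in hand, the rest of the argument reduces to chaining inequalities and, for sufficiency, a one‑parameter truncation of $G$ that converts the weak majorization constraint into a strong one so that existing results apply.
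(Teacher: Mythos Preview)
Your argument is correct but proceeds differently from the paper on both directions. For necessity, the paper bounds $\int_q^1\chi$ directly by the tail sum of the positive assortative assignment via a one-line rearrangement claim, whereas you factor through the aggregate realized-quality distribution $H$: a conditional-expectation (Jensen) step gives $\int_x^1\chi\leq\int_x^1 H^{-1}$, and FOSD $H^{-1}\leq G^{-1}$ finishes. Your decomposition is more explicit about \emph{why} the inequality holds. For sufficiency, the difference is more substantial. The paper first shows implementability only for \emph{extreme points} of the weak-majorization set---which, by \citet[Corollary~2]{Kleiner2021}, have the form $\tilde\chi\,\Ind_{q\geq c}$---by excluding low \emph{types} from a mandatory-allocation mechanism, and then invokes the Choquet representation (\cref{prop:representation}) together with convexity of the implementable set to reach all $\chi$. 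You instead reduce an arbitrary weakly majorized $\chi$ to a strongly majorized one in a single step, by truncating the \emph{prize} distribution ($\tilde G^{-1}=G^{-1}\Ind_{q\geq q_0}$ with $q_0$ chosen so that $m_{\tilde G^{-1}}=m_\chi$), and then apply \citet[Proposition~4]{Kleiner2021} once to $\tilde G$. Your route is more elementary---no extreme-point theory, no integral representation---while the paper's route meshes with its broader extreme-point framework and makes the structure of extremal assignments visible along the way.
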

\begin{proof}
\framebox{$\Leftarrow$} Let $I_{\chi} \in \ex(\mathcal{I}^\mathrm{w}_{0,G^{-1}})$ where $\mathcal{I}^\mathrm{w}_{0,G^{-1}}$ is defined according to \labelcref{eqn:weak_maj_CFI}. Therefore, $\partial_{-}I_{\chi}=\chi$ is non-decreasing and left-continuous (hence a valid quantile function), and is an extreme point of the set of mean-decreasing spreads of $G^{-1}$ following \cref{lem:bij_ext_pt} (see \cref{rmk:isomorphism}). Therefore, by \cite{Kleiner2021} Corollary 2, there exists an extreme point $\tilde{\chi}$ of the set of mean-preserving spreads of $G^{-1}$ and $c\in [0,1]$ such that $\chi(q)=\tilde{\chi}(q)\Ind_{q \geq c}$ for all $q\in [0,1]$. By Proposition 4 in \cite{Kleiner2021}, there exists an assignment $\tilde{\Gamma}$ such that $\chi(q)=x_{\tilde{\Gamma}}\bigl(F^{-1}(q)\bigr)$. Hence, $\int_{\Theta} \tilde{\Gamma}(x \, \vert \,  \theta) \, \diff F(\theta)=G(x)$ for all $x \in [0,1]$ and $\int_{\Theta} \int_{0}^{1} x \, \diff \tilde{\Gamma} (x \, \vert \, \theta) \, \diff F (\theta) =\int_{0}^{1} x \, \diff G(x)$. Now define
\begin{equation*}
    \Gamma (x \, \vert \,  \theta) \coloneqq \left\{\begin{array}{cl}
    1 & \text{if $\theta < F^{-1}(c)$} \\
    \tilde{\Gamma}(x \, \vert \,  \theta) & \text{if $\theta \geq F^{-1}(c)$.}
    \end{array}
    \right.
\end{equation*}

Then we have $\int_{\ubar{\theta}}^{\bar{\theta}} \Gamma(x \, \vert \,  \theta) \, \diff F(\theta) \geq \int_{\ubar{\theta}}^{\bar{\theta}} \tilde{\Gamma}(x \, \vert \,  \theta) =G(x)$ for all $x \in [0,1]$, so \labelcref{eqn:feasibility_contest_capacity} holds.
Furthermore, observe that $\Gamma$ induces $\chi$:
\begin{align*}
    x_\Gamma\bigl(F^{-1}(q)\bigr) &= \left\{\begin{array}{cl}
    0 & \text{if $q < c$} \\[5pt]
    \displaystyle\int_{0}^{1} x \, \diff \tilde{\Gamma}\bigl(x \, \vert \, F^{-1}(q)\bigr) & \text{if $q \geq c$}
    \end{array}
    \right. \\
    &= \left\{\begin{array}{cl}
    0 & \text{if $q < c$} \\
    \tilde{\chi}(q) & \text{if $q \geq c$}
    \end{array}
    \right. \\
    &= \chi(q)
\end{align*}
where the second equality follows from $\tilde{\Gamma}$ inducing $\tilde{\chi}$. Since \labelcref{eqn:feasibility_contest_mean} holds trivially, this shows that $\chi$ is implementable.

Thus, all extreme points of $\mathcal{I}_{0,G^{-1}}^\mathrm{w}$ are implementable. Applying \cref{prop:representation}, we can therefore conclude that all $\chi \prec_{\mathrm{w}} G^{-1}$ are implementable.

\framebox{$\Rightarrow$} Let the expected quantile assignment $\chi$ be implementable. Then, by \labelcref{eqn:IC_contest}, it is non-decreasing. It also satisfies the majorization constraint:
\begin{align*}
    \int_q^1 \chi (t) \, \diff t &= \int_q^1 \int_{0}^{1} x \, \diff \Gamma \bigl(x \, \vert \, F^{-1}(t)\bigr) \, \diff t \\
    &\leq \int_q^1 \int_{0}^{1} x \, \diff \Gamma_{\mathrm{PAM}} \bigl(x \, \vert \, F^{-1}(t)\bigr) \, \diff t \\
    &= \int_q^1 G^{-1}(t) \, \diff t,
\end{align*}
for all $q \in [0,1]$, where $\Gamma_{\mathrm{PAM}}$ denotes the positive assortative assignment. The inequality is a consequence of the fact that no assignment induces a higher average quality among types with quantile $q$ or higher than the positive assortative assignment.
\end{proof}

Let now $\chi$ be an expected quantile assignment such that $G^{-1}\succsim_{\mathrm{w}} \chi \succsim_{\mathrm{w}} H_{m}^{-1}$. By \cref{lem:contestdesign_freedisposal}, there exists an assignment $\Gamma$ that induces it and that satisfies \labelcref{eqn:feasibility_contest_capacity}. Therefore, it is left to show that \labelcref{eqn:feasibility_contest_mean} holds. This is a direct consequence of the weak majorization constraint:
\begin{equation}\label{eqn:mean_constraint_sat_contest}
    \int_{\Theta} \int_{0}^{1} x \, \diff \Gamma (x \, \vert \, \theta) \, \diff F(\theta) = \int_{0}^{1} \int_{0}^{1} x \, \diff \Gamma \bigl(x \, \vert \, F^{-1}(t)\bigr) \, \diff t = \int_{0}^{1} \chi(t) \, \diff t = - I_{\chi}(0) \geq m.
\end{equation}

For the other direction, suppose that $\chi$ is an implementable expected quantile assignment. By the same argument as in \cref{lem:contestdesign_freedisposal}, $\chi \prec_{\mathrm{w}} G^{-1}$ and $\chi$ non-decreasing. The same reasoning that lies behind inequality \labelcref{eqn:mean_constraint_sat_contest} also shows that $I_{\chi}(0)\leq - m$. Since $I_{H_{m}^{-1}}$ is affine, this means that $I_{\chi}$ is a non-decreasing convex function bounded by $I_{G^{-1}}$ and $I_{H_{m}^{-1}}$. Therefore, $G^{-1}\succsim_{\mathrm{w}} \chi \succsim_{\mathrm{w}} H_{m}^{-1}$ by \cref{lem:two-sided-weak-maj}. This completes the proof.

\subsubsection{Proof of \cref{lem:contest_cfi_lp}}\label{secap:contest_cfi_lp_proof}

Consider the following problem:
\begin{equation}\label{eqn:contest_pb_welf}
    \max_{(\Gamma,t)} \; \int_{\Theta} \pi(\theta)\bigl(\theta x_{\Gamma}(\theta) - t(\theta)\bigr) \, \diff F(\theta)+ \alpha\int_{\Theta} t(\theta) \, \diff F(\theta),
\end{equation}
subject to $(\Gamma,t)$ satisfying \labelcref{eqn:feasibility_contest_capacity}, \labelcref{eqn:feasibility_contest_mean}, \labelcref{eqn:IC_contest,eqn:IR_contest}, where $\pi(\theta)$ is the designer's Pareto welfare weight on agents of type $\theta$, and $\alpha>0$ is the weight on the designer's revenue/agents' aggregate effort.

As shown in \cite{Akbarpour2024a} Claim~3, standard envelope and integration-by-parts arguments imply that the designer's problem \labelcref{eqn:contest_pb_welf} can be rewritten as  
\begin{equation}\label{eqn:contest_pb_virtual}
    \max_{\Gamma} \; \int_{\Theta} v(\theta) \, x_{\Gamma}(\theta) \, \diff F(\theta),
\end{equation}
where
\begin{equation*}
    v(\theta) \coloneqq \frac{1}{f(\theta)}\left[\int_{\theta}^{1} \pi(t) \, \diff F(t) + \alpha\Bigl\{\theta f(\theta) - \bigl(1-F(\theta)\bigr)\Bigr\}\right],
\end{equation*}
is the designer's virtual value from allocating a good of expected quality to an agent of type $\theta \in \Theta$.\footnote{Since we normalize the lowest type to zero and restrict transfers to be weakly positive, the lowest type's utility in any mechanism $(\Gamma,t)$ satisfying \labelcref{eqn:IC_contest,eqn:IR_contest} must be zero. This is why it does not appear in \labelcref{eqn:contest_pb_virtual}.} Note that if $\pi(\theta)=0$ for all $\theta\in \Theta$ and $\alpha=1$, we have $v(\theta)=\theta-\frac{1-F(\theta)}{f(\theta)}$ for all $\theta\in \Theta$.

Letting $V(q)\coloneqq v\bigl(F^{-1}(q)\bigr)$ for all $q\in [0,1]$ and making the change of variable $\theta=F^{-1}(q)$ shows that problem \labelcref{eqn:contest_pb_virtual} is equivalent to
\begin{equation}\label{eqn:contest_pb_virtual_quantile}
    \max_{\Gamma} \; \int_{0}^{1} V(q) \, \chi_{\Gamma}(q) \, \diff q,
\end{equation}

Let $I_{\Gamma}(q) = \int_{0}^{q} \chi_{\Gamma}(s) \, \diff s - \int_{0}^{1} \chi_{\Gamma}(s) \, \diff s$ for each $q\in [0,1]$. Hence, since $I_{\Gamma}$ is continuous and $V$ is differentiable, we obtain
\begin{align*}
    \int_{0}^{1} V(q) \, \chi_{\Gamma}(q) \, \diff q &= \int_{0}^{1} V(q) \, \diff I_{\Gamma}(q) \\[5pt]
    &= V(1)I_{\Gamma}(1)  - V(0) I_{\Gamma}(0) - \int_{0}^{1} V'(q) I_{\Gamma}(q) \, \diff q\\[5pt]
    &= \int_{[0,1]} I_{\Gamma} \, \diff \mu_{\mathrm{C}},
\end{align*}
where the second equality follows from (Lebesgue–Stieltjes) integration by parts. 

By \cref{prop:imp_quantil_assign}, an expected quantile assignment $\chi$ is implementable if and only if $G^{-1}\succsim_{\mathrm{w}} \chi \succsim_{\mathrm{w}} H_{m}^{-1}$. By \cref{lem:two-sided-weak-maj}, we can therefore write the designer's problem as \labelcref{eqn:contest_LP}.
\qed

\subsubsection{Proof of \cref{prop:contest_optimality_myerson}}\label{secap:contest_optimality_myerson_proof}

Since the principal is maximizing aggregate effort, $v(\theta)=\theta-\frac{1-F(\theta)}{f(\theta)}$ for all $\theta \in \Theta$. Therefore, we get the following formula for $\psi_\mathrm{C}$:
\begin{equation*}
    \psi_{\mathrm{C}}(q)=\left\{\begin{array}{ll}
        v(0) & \text{if $q=0$} \\[10pt]
        \displaystyle-\frac{v'\bigl(F^{-1}(q)\bigr)}{f\bigl(F^{-1}(q)\bigr)} & \text{if $q \in (0,1)$} \\[10pt]
        v(1) & \text{if $q=1$}
    \end{array}\right.
    \quad=\left\{\begin{array}{ll}
        \displaystyle-\frac{1}{f(0)} & \text{if $q=0$} \\[10pt]
        \displaystyle-\frac{v'\bigl(F^{-1}(q)\bigr)}{f\bigl(F^{-1}(q)\bigr)} & \text{if $q \in (0,1)$} \\[10pt]
        1 & \text{if $q=1$}
    \end{array}\right.
\end{equation*}

Therefore, $\mu_\mathrm{C}$ has positive atoms at $0$ and $1$ and otherwise a non-positive density, because $v'$ is weakly positive by Myerson-regularity.

Let
\begin{equation*}
    q^\star_0\coloneq \min \bigl\{q \in [0,1] \, \bigm\vert \, v\bigl(F^{-1}(q)\bigr)=0\bigr\},
\end{equation*}
which exists by Myerson-regularity. 

Then, $\chi^\star_0(q)\coloneqq G^{-1}(q) \Ind_{q\geq q^\star_0}$ is an effort-maximizing expected assignment. To see this, observe that the partition $\mathcal{Y}$ from \cref{thm:charac_opt} corresponding to the extreme point $I_{\chi^\star_0}$ consists of $\mathcal{Y}_4=\bigl\{[0, q^\star_0]\bigl\}$, $\mathcal{Y}_2=\bigl\{ \{q\} \, \vert \, q \in (q^\star_0,1) \bigl\}$ and $\mathcal{Y}_0 \coloneqq \bigl\{ \{1\}\bigl\}$. For $q \in \cup\mathcal{Y}_2$, we have $\psi_{\mathrm{C}}(q) \leq 0$ as noted above. Additionally, $\mu\vert^+_{[0,q^\star_0]} \leq_{\mathrm{icx}} \mu\vert^-_{[0,q^\star_0]}$ because $\mu\vert^+_{[0,q^\star_0]}$ is a point mass at $0$ and
\begin{align*}
    \mu\bigl([0, q^\star_0]\bigr)&=-v(0)+\int_0^{q^\star_0}\frac{-v'\bigl(F^{-1}(q)\bigr)}{f\bigl(F^{-1}(q)\bigr)} \, \diff q \\[5pt]
    &=-v(0)-v(F^{-1}(q^\star_0))+v(0) \\[5pt]
    &=0.
\end{align*}

Therefore, by \cref{thm:charac_opt} and \cref{lem:affine_mapping}, $\chi^\star_0$ is an effort-maximizing assignment.

Now, for $m >0$, observe that if $I_{G^{-1}}(q^\star_0) \leq -m$, the same argument as for $m=0$ is applicable to show that $\chi^\star_0$ remains optimal.

Suppose for the rest of the proof that $I_{G^{-1}}(q^\star_0) > -m$. Then \labelcref{eqn:feasibility_contest_mean} is binding. Let
\begin{equation*}
    q^\star_m \coloneqq \min \bigl\{q \in [0,1] \, \bigm\vert \, I_{G^{-1}}(q)=-m \bigr\},
\end{equation*}
which exists, since $I_{G^{-1}}$ is continuous and its range includes $m$.

Note that $I_{G^{-1}}(q^\star_0) > -m$ and $I_{G^{-1}}(q^\star_m) = -m$ imply $v(q^\star_m) \leq 0$ by Myerson-regularity. Therefore, there exists $\kappa \in [0,1]$ such that
\begin{equation*}
    \kappa \int_0^1 \psi_{\mathrm{C}}(q) \, \diff \delta_0= \int_0^{q^\star_m} \psi_{\mathrm{C}}(q) \, \diff q \iff \kappa v(0)=-v\bigl(F^{-1}(q^\star_m)\bigr)+v(0).
\end{equation*}

Let $\mathcal{Y}$ be the partition of $[0,1]$ consisting of the three collections $\mathcal{Y}_0 \coloneqq \bigl\{ \{1\}\bigl\}$, $\mathcal{Y}_2\coloneqq \bigl\{ \{q\} \, \vert \, q \in (q^\star_m,1) \bigl\}$ and $\mathcal{Y}_4=\bigl\{[0, q^\star_m]\bigl\}$. We apply a similar technique as in the proof of \cref{thm:charac_opt} to prove optimality of $\chi^\star_m (q) \coloneqq G^{-1}(q) \Ind_{q\geq q^\star_m}$. We define multipliers $\gamma_1, \gamma_2, \gamma_3, \gamma_4 \in \mathcal{M}_+(X)$ and show optimality of $I_{\chi^\star_m}$ for \labelcref{eqn:contest_LP} by showing that
\begin{enumerate}
    \item $\gamma_1, \gamma_2, \gamma_3, \gamma_4 \in \mathcal{M}_+(X)$ are feasible for the dual problem, i.e., $\gamma_{1}-\gamma_{2} - \gamma_{3} + \delta^{\gamma_{3}}_1 - \gamma_{4} + \delta^{\gamma_{4}}_0 \geq_{\mathrm{cx}} \mu_\mathrm{C}$, and;
    \item duality is achieved, i.e.,
    \begin{equation*}
        \int_{[0,1]} I_{\chi^\star_m} \, \diff \mu_\mathrm{C}  =   \int_{[0,1]} I_{H_m^{-1}} \, \diff \gamma_{1} - \int_{[0,1]} I_{G^{-1}} \, \diff \gamma_{2}+ \bar{s} \int_{[0,1]} (1-\Id) \, \diff \gamma_{3}- \ubar{s}\int_{[0,1]} \Id \, \diff \gamma_{4},
    \end{equation*}
    with $\ubar{s}=0$ and $\bar{s}=1$.
\end{enumerate}

First, since $\mathcal{Y}_3$ is empty, let $\gamma_3 \coloneqq \mathbf{0}$. Let $\gamma_1=-(1-\kappa)v(0)\delta_0+v(1)\delta_1$. Since $v(0) <0$ and $v(1)>0$, $\gamma_1 \in \mathcal{M}_+(X)$. 

Let
\begin{equation*}
    \gamma_2(A) \coloneqq -\int_{A\cap [q^\star_m,1)} \psi_{\mathrm{C}}(q) \, \diff q,
\end{equation*}
for any $A\in\mathcal{B}\bigl([0,1]\bigr)$. Since $\psi_{\mathrm{C}}(q)\leq 0 $ for all $q\in(0,1)$ by Myerson-regularity, we have $\gamma_2\in \mathcal{M}_+(X)$.

Furthermore, we have
\begin{equation*}
    \mu_\mathrm{C}\bigl(\, \cdot \, \cap [0, q^\star_m]\bigr) - \gamma_1\bigl(\, \cdot \, \cap [0, q^\star_m]\bigr)\leq_{\mathrm{icx}} \mathbf{0},
\end{equation*}
by the definition of $\kappa$ and its positive part being a point mass at $0$. We can therefore use the same approach as in the proof of \cref{thm:charac_opt} to construct $\gamma_4 \in \mathcal{M}_+(X)$ such that $-\gamma_4+\delta_0^{\gamma_4} \geq_{\mathrm{cx}} \mu_\mathrm{C}(\, \cdot \, \cap [0, q^\star_m])$ and
\begin{equation}\label{eqn:contest_gamma4}
    \int_{[0, q^\star_m]} I_{\chi^\star_m} \, \diff [\mu-\gamma_1] = \ubar{s} \int_{[0,1]} \Id \, \diff \gamma_4=0.
\end{equation}

Overall, we obtain that
\begin{align*}
    \int_{[0,1]} I_{\chi^\star_m} \, \diff \mu_\mathrm{C} &= \int_{[0,1]} I_{\chi^\star_m} \, \diff \gamma_1 + \int_{[0,1]} I_{\chi^\star_m} \, \diff [\mu_\mathrm{C}-\gamma_1] \\
    &= \int_{[0,1]} I_{\chi^\star_m} \, \diff \gamma_1 + \int_0^{q^\star_m} I_{\chi^\star_m} \, \diff [\mu_\mathrm{C}-\gamma_1] + \int_{q^\star}^1 I_{\chi^\star_m} \, \diff \mu_\mathrm{C} \\
    &= \int_{[0,1]} I_{H_m^{-1}} \, \diff \gamma_{1} - \int_{[0,1]} I_{G^{-1}} \, \diff \gamma_{2},
\end{align*}
where the last equality follows from $I_{H_m^{-1}}=I_{\chi^\star_m}$ on $\supp (\gamma_1)=\{0,1\}$, \labelcref{eqn:contest_gamma4} and 
$\int_{q^\star}^1 I_{\chi^\star_m} \, \diff \mu_\mathrm{C}=-\int_{[0,1]} I_{G^{-1}} \, \diff \gamma_{2}$ by the definition of $\gamma_2$ and $I_{\chi^\star_m}(1)=0$.

It is left to show that $\gamma_i$ satisfy the convex order condition to ensure feasibility for the dual. This is satisfied by the convex order condition for each of the multipliers:
\begin{align*}
    \int_{[0,1]} u \, \diff \mu_\mathrm{C} &= \int_{[0,1]} u \, \diff \gamma_1 + \int_0^{q^\star_m} u \, \diff [\mu_{\mathrm{C}}-\gamma_1] + \int_{q^\star_m}^1 u(q) \psi_\mathrm{C}(q) \, \diff q \\
    &\leq \int_{[0,1]} u \, \diff \gamma_1 + \int_0^{q^\star_m} u \, \diff [-\gamma_4+\delta_0^{\gamma_4}] + \int_{q^\star_m}^1 u(q) \psi_\mathrm{C}(q) \, \diff q
\end{align*}
for each $u \in \mathcal{K}$, where the inequality follows from the construction of $\gamma_4$.
\qed

\subsection{Omitted Proofs in \cref{sec:persuasion}}

\subsubsection{Proof of \cref{lem:baypers_infoconstraints}}\label{secap:baypers_infoconstraints_proof}

Let $\ubar{G} \succsim \bar{G} \succsim F$. It is a direct implication of \cite{Strassen1965} that the sender can induce any distribution of posterior means $G$ that satisfies $\ubar{G} \succsim G \succsim \bar{G}$. The sender's constrained persuasion problem can therefore be written with a two-sided majorization constraint: 
\begin{equation}\label{eqn:cons_pers_pb}
    \max_{\ubar{G} \succsim G \succsim \bar{G}} \; \int_{0}^{1} v(x) \, \mathrm{d}G(x)
\end{equation}

Since $G$ is a cumulative distribution function, it is of bounded variation. This allows us to apply (Lebesgue–Stieltjes) integration by parts twice to the objective function:
\begin{align*}
    \int_{0}^{1} v(x) \, \mathrm{d}G(x) &= v(1) \overbrace{G(1^{+})}^{=1} - v(0) \overbrace{G(0^{-})}^{=0} - \int_{0}^{1} G(x) v'(x) \, \mathrm{d}x \\[5pt]
    &= v(1) - v'(1^{+}) \underbrace{I_{G}(1)}_{=0} + \int_{0}^{1} I_{G}(x) \, \mathrm{d}v'(x) \\[5pt]
    &= v(1) + \int_{[0,1]} I_{G} \, \mathrm{d}\mu_{v},
\end{align*}
where $I_{G}(x) \coloneqq \int_{0}^{x} G(s) \, \mathrm{d}s - m_G$ for all $x \in [0,1]$.

Since $v(1)$ is constant, the previous computation, together with \cref{lem:two-sided-maj}, shows that problem~\labelcref{eqn:cons_pers_pb} is equivalent to \labelcref{eqn:cons_pers_pb_CFI}.

\subsubsection{Proof of \cref{lem:infodesign_sshaped}}\label{secap:infodesign_sshaped_proof}

The proof follows from the same arguments as the proofs \cref{secap:delegation_logconc_proof,secap:delegation_compstat_proof}. To see this, observe that $\mu_v$ has density $v''$ because $v$ is assumed to be smoothly S-shaped. The density $v''$ is first positive, then negative. The problem now is the same as \labelcref{eqn:delegation_LP} with flipped signs, allowing us to draw the same conclusions.

\clearpage

\begin{center}
    \huge \textbf{Additional Technical Appendix}
\end{center}

\section{Construction of the Dual Problem}\label{secap:dual_problem}

The goal of this section is to formulate the primal problem \labelcref{eqn:Primal} as a conic linear problem and to derive its dual using \cite{Shapiro2001}. We start by introducing some notation and basic properties before proceeding to reformulate the primal problem as a conic linear problem.

The primal problem is set on the space $\mathcal{C}$ paired with its dual space $\mathcal{M}(X)$ \citep[Riesz-Markov representation theorem,][Theorems 19.54 and 19.55]{Yeh2014} and bilinear form $\langle \cdot , \cdot \rangle \colon \mathcal{M}(X) \times \mathcal{C} \to \mathbb{R}$, $\langle \mu, f \rangle \mapsto \int_{X} f \, \diff \mu$. The dual problem is set on $\mathcal{C}^4$ paired with $\mathcal{M}(X)^4$, endowed with the bilinear form $\langle \cdot , \cdot \rangle \colon \mathcal{M}(X)^4 \times \mathcal{C}^4 \to \mathbb{R}$, $\langle \mu, f \rangle \mapsto \sum_{i=1}^4 \int_{X} f_i \, \diff \mu_i$.

Let $\mathcal{P}\coloneqq\{ u \in \mathcal{C} \; \vert \; u\geq 0\}$. Note that both $\mathcal{K}$ and $\mathcal{P}$, as well as $\mathcal{K}^4$ and $\mathcal{P}^4$ are convex cones in $\mathcal{C}$ or $\mathcal{C}^4$ respectively. The polar cone of $\mathcal{P}$ is
\begin{equation*}
    \mathcal{P}^{*} \coloneqq \bigl\{ \mu \in \mathcal{M}(X) \; \vert \; \forall \, f \in \mathcal{P}, \; \langle \mu, f \rangle \geq 0 \bigr\}.
\end{equation*}

By the Riesz-Markov representation theorem \citep[][Theorem 19.55]{Yeh2014}, $\mathcal{P}^{*}$ equals the set of finite positive Radon measures on $X$, denoted as $\mathcal{M}_{+}(X)$. Therefore, the polar cone $(\mathcal{P}^*)^4$ of $\mathcal{P}^4$ equals $\mathcal{M}_{+}(X)^4$. The polar cone of $\mathcal{K}$ is
\begin{equation*}
    \mathcal{K}^* \coloneqq \bigl\{ \mu \in \mathcal{M}(X) \; \vert \; \forall  u \in \mathcal{K}, \; \langle \mu, u \rangle \geq 0\bigr\}.
\end{equation*}

Let the linear mapping $A \colon \mathcal{C} \to \mathcal{C}^4$ and $b \in \mathcal{C}^4$ be defined by
\begin{equation*}
    Af(x) = \begin{pmatrix}
        -f(x) \\ f(x) \\ f(x)-u(1) \\ f(x)-u(0)
    \end{pmatrix}
    \qquad \text{and} \qquad b(x)=\begin{pmatrix}
        \bar{u}(x) \\ -\ubar{u}(x) \\ \bar{s}(1-x) \\ -\ubar{s}x
    \end{pmatrix}.
\end{equation*}

We can now write the primal problem as 
\begin{align*}\label{eqn:Primal_conic}
    \min_{u \in \mathcal{K}} \qquad & \langle - \mu, u \rangle \\
    \text{s.t.} \qquad & Au + b \in \mathcal{P}^4
\end{align*}
which is the required form in \cite{Shapiro2001}. To use the dual formulation stated there, we have to ensure that for each $\gamma \in \mathcal{M}(X)^4$ there is a unique $\eta \in \mathcal{M}(X)$ such that $\langle \gamma, Af\rangle = \langle \eta, f\rangle$ for all $f\in \mathcal{C}$ \citep[(A1) in][]{Shapiro2001}. Let $\gamma \in \mathcal{M}(X)^4$ and $f \in \mathcal{C}$. We perform the following computation:
\begin{align*}
    \langle\gamma, Af\rangle &= \langle \gamma_1, -f\rangle + \langle \gamma_2, f\rangle + \langle \gamma_3, f-f(1) \rangle + \langle \gamma_4, f-f(0)\rangle \\
    &= \langle -\gamma_1 + \gamma_2 + \gamma_3 - \delta_1^{\gamma_3} + \gamma_4 - \delta_0^{\gamma_4}, f\rangle
\end{align*}
By the Riesz-Markov representation theorem and linearity of $\langle \gamma, A \, \cdot \, \rangle$, $\eta \coloneqq -\gamma_1 + \gamma_2 + \gamma_3 - \delta_1^{\gamma_3} + \gamma_4 - \delta_0^{\gamma_4}$ is the unique element of $\mathcal{M}(X)$ that satisfies this. We can therefore define the adjoint mapping $A^* \colon \mathcal{M}(X)^4 \to \mathcal{M}(X)$ by $\gamma \mapsto -\gamma_1 + \gamma_2 + \gamma_3 - \delta_1^{\gamma_3} + \gamma_4 - \delta_0^{\gamma_4}$.

By equation (2.4) in \cite{Shapiro2001}, the dual \labelcref{eqn:Dual} is therefore given by
\begin{align*}
    \max_{\gamma \in -(\mathcal{P}^{*})^4} \qquad & \langle \gamma, b \rangle \\
    \text{s.t.} \qquad & A^* \gamma - \mu \in \mathcal{K}^*.
\end{align*}
Substituting $-\gamma$ for $\gamma$, using $(\mathcal{P}^{*})^4=\mathcal{M}_+(X)^4$ and observing that $-A^* \gamma - \mu \in \mathcal{K}^*$ if and only if $\gamma_{1}-\gamma_{2} - \gamma_{3} + \delta^{\gamma_{3}}_1 - \gamma_{4} + \delta^{\gamma_{4}}_0 \geq_{\mathrm{cx}} \mu$ yields the formulation of \labelcref{eqn:Dual} stated in \cref{secap:dual_weakduality}.

\section{Strong Duality}\label{secap:strongduality}

For completeness, we show that strong duality between \labelcref{eqn:Primal} and \labelcref{eqn:Dual} holds:

\begin{lemma}\label{lem:strong_duality}
    $u^{\star}$ is optimal for \labelcref{eqn:Primal} if and only if there exists measures $(\gamma_{1}, \gamma_{2}, \gamma_{3}, \gamma_{4}) \in \mathcal{M}_+(X)^4$ such that
    \begin{enumerate}[(i)]
        \item $\displaystyle\gamma_{1}-\gamma_{2} - \gamma_{3} + \delta^{\gamma_{3}}_1 - \gamma_{4} + \delta^{\gamma_{4}}_0 \geq_{\mathrm{cx}} \mu$
        \item $\displaystyle\int_{X} u^{\star}(x) \, \diff \mu = \int_{X} \bar{u} \, \diff \gamma_{1} - \int_{X} \ubar{u} \, \diff \gamma_{2}+ \bar{s} \int_{X} (1-\Id) \, \diff \gamma_{3}- \ubar{s}\int_{X} \Id \, \diff \gamma_{4}$
    \end{enumerate}
\end{lemma}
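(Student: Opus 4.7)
The plan is to establish the two implications separately, leveraging the weak duality chain derived in Section~\ref{secap:dual_weakduality} together with a standard strong duality theorem for infinite-dimensional conic linear programs.

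The \emph{if} direction is routine. Given multipliers $(\gamma_{i})_{i=1}^{4} \in \mathcal{M}_{+}(X)^{4}$ satisfying (i) and (ii), weak duality yields
\[
\int_{X} u \, \diff\mu \;\leq\; \int_{X} \bar{u}\, \diff\gamma_{1} - \int_{X} \ubar{u}\, \diff\gamma_{2} + \bar{s}\int_{X} (1-\Id)\, \diff\gamma_{3} - \ubar{s}\int_{X} \Id\, \diff\gamma_{4}
\]
for any $u \in \mathcal{U}$. Condition (ii) asserts that $u^{\star}$ attains this common upper bound, which proves its optimality.

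For the \emph{only if} direction, I would invoke the abstract strong duality theorem for conic linear programs in Banach spaces developed in \cite{Shapiro2001}. In the reformulation of Appendix~\ref{secap:dual_problem}, absence of a duality gap together with dual attainment is ensured once the primal \emph{value function}
\[
V(h) \;\coloneqq\; \sup\Bigl\{\,\langle \mu, u\rangle \;\big\vert\; u \in \mathcal{K},\; Au + b + h \in \mathcal{P}^{4}\,\Bigr\}, \qquad h \in \mathcal{C}^{4},
\]
is shown to be concave and upper semicontinuous at $h=0$. Concavity is immediate from the linearity of the data. Once the theorem applies, any primal optimizer $u^{\star}$ admits multipliers $(\gamma_{i})$ satisfying condition (i) (dual feasibility) and delivering the common primal--dual value, which is condition (ii).

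The main obstacle is therefore the verification of upper semicontinuity of $V$ at the origin. I would proceed via compactness. For any sequence $h_{n} \to 0$ in $\mathcal{C}^{4}$ and associated near-optimizers $u_{n}$ of the perturbed problems, the uniform bounds on level and slope implicit in the perturbed constraints eventually force the sequence $(u_{n})_{n \in \mathbb{N}}$ to lie in a compact subset of $(\mathcal{C},\lVert\cdot\rVert_{\infty})$, by a minor adaptation of the Arzel\`a--Ascoli argument in the proof of \cref{prop:representation}. Extracting a uniformly convergent subsequence, and using that uniform limits of convex functions are convex together with the fact that the constraint inequalities pass to the limit, then produces a feasible primal point $u^{\infty} \in \mathcal{U}$ with $\langle \mu, u^{\infty}\rangle \geq \limsup_{n} V(h_{n})$, so that $V(0) \geq \limsup_{n} V(h_{n})$. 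The most delicate aspect of this compactness step is the handling of the third and fourth coordinates of $h$, which perturb the constraints tying $u$'s boundary values to its permitted slope range: one must carefully track how the associated implicit slope bounds vary with $h$ to secure uniform equicontinuity of the $u_{n}$.
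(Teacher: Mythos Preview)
Your \emph{if} direction is fine and matches the paper. The problem is the \emph{only if} direction.

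The claim that ``absence of a duality gap together with dual attainment is ensured once the primal value function $V$ is shown to be concave and upper semicontinuous at $h=0$'' is not correct. In the conic duality framework (Shapiro or Rockafellar), upper semicontinuity of the concave perturbation function at the origin yields only the absence of a duality gap; the existence of optimal dual multipliers is equivalent to the \emph{superdifferential} of $V$ at $0$ being nonempty, which is a strictly stronger requirement. A concave function can be upper semicontinuous and finite at a boundary point of its effective domain while having empty superdifferential there. Your sequential compactness argument, as written, only targets $\limsup_{n} V(h_{n}) \leq V(0)$, i.e., upper semicontinuity, so even if the Arzel\`a--Ascoli step went through you would not yet have dual attainment. (You also correctly flag, without resolving, that the perturbed third and fourth constraints do not give uniform slope bounds: $u(x) \geq u(1) - \bar{s}(1-x) - h_{3}(x)$ fails to control $\partial_{-}u(1)$ once $h_{3}$ is a generic continuous function.)

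The paper avoids this by verifying a strictly stronger property: a calmness (Lipschitz-type) bound on the value function, using Theorem~1 and Condition~3 of \cite{Gretsky2002}. Concretely, it parametrizes the feasible set by the CFI data $(\bar{u},\ubar{u},\bar{s},\ubar{s})$ and shows that for every $\hat{u}$ feasible in a perturbed CFI $\hat{\mathcal{U}}$ one can construct an element $u \in \mathcal{U}$ with $\lVert \hat{u} - u \rVert_{\infty} \leq \lVert(\hat{\bar{u}},\hat{\ubar{u}},\hat{\bar{s}},\hat{\ubar{s}}) - (\bar{u},\ubar{u},\bar{s},\ubar{s})\rVert$. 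The construction is explicit: truncate $\hat{u}$ to $[\ubar{u},\bar{u}]$, take its convex envelope, and then apply a slope-clipping operator $\Psi_{s}$ that replaces any portion where the subgradient lies outside $[\ubar{s},\bar{s}]$ by an affine segment of extremal slope. This yields the ratio bound required by \cite{Gretsky2002}, which delivers both zero duality gap and dual attainment in one stroke. If you want to rescue your compactness route, you would need to upgrade it to show $V$ is bounded above on a full neighbourhood of $0$ (not just along sequences), which for a concave function forces continuity and hence a nonempty superdifferential; but at that point you are essentially re-deriving the calmness bound, and the explicit construction is cleaner.
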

\begin{proof}
Since the feasible set of \labelcref{eqn:Primal} may have empty interior, we cannot apply Slater's condition. Therefore, we parametrize the problem and a apply a result from \cite{Gretsky2002}.\footnote{In doing this, we follow a similar approach as Lemma 6 in \cite{Kleiner2022}.}

Fix a finite signed Radon measure $\mu$. For any convex function interval $\hat{\mathcal{U}}$, let $V(\hat{\mathcal{U}})$ be the value of $\mathrm{P}_{\hat{\mathcal{U}}, \mu}$, the linear problem with feasible set $\hat{\mathcal{U}}$. Theorem 1 and Condition 3 on page 266 in \cite{Gretsky2002} state that to establish strong duality, it is sufficient to show that
\begin{equation}\label{eqn:nonemptysubgradient}
    \frac{V(\hat{\mathcal{U}})- V(\mathcal{U})}{\lVert (\hat{\bar{u}}, \hat{\ubar{u}}, \hat{\bar{s}}, \hat{\ubar{s}})-(\bar{u}, \ubar{u}, \bar{s}, \ubar{s})\rVert}
\end{equation}
is bounded above. Here, we endow the product space $\mathcal{K}\times \mathcal{K}\times \mathbb{R} \times \mathbb{R}$ with the norm given by the sum of the norms on the individual spaces.\footnote{That is, $\lVert (\hat{\ubar{u}}, \hat{\bar{u}}, \hat{\ubar{s}}, \hat{\bar{s}})-(\ubar{u}, \bar{u}, \ubar{s}, \bar{s})\rVert = \Vert\hat{\bar{u}}-\bar{u}\rVert+\lVert\hat{\ubar{u}}-\ubar{u}\rVert + \lVert\hat{\bar{s}}-\bar{s}\rVert + \lVert\hat{\ubar{s}}-\ubar{s}\rVert$.}

To show that \labelcref{eqn:nonemptysubgradient} is bounded above, we will construct some $u\in \mathcal{U}$ for each $\hat{u}\in \hat{\mathcal{U}}$ such that 
\begin{equation*}
    \lVert\hat{u}-u\rVert \leq \lVert(\hat{\bar{u}}, \hat{\ubar{u}}, \hat{\bar{s}}, \hat{\ubar{s}})-(\bar{u}, \ubar{u}, \bar{s}, \ubar{s})\rVert.
\end{equation*}

This will yield the desired result. Since $\mu$ is bounded, we get
\begin{align*}
    \int_X \hat{u} \, \diff \mu - \int_X u \, \diff \mu &\leq \lVert\mu\rVert_{\mathrm{TV}} \lVert\hat{u}-u\rVert \\
    &\leq \lVert\mu\rVert_{\mathrm{TV}} \, \lVert(\hat{\bar{u}}, \hat{\ubar{u}}, \hat{\bar{s}}, \hat{\ubar{s}})-(\bar{u}, \ubar{u}, \bar{s}, \ubar{s})\lVert,
\end{align*}
which shows that \labelcref{eqn:nonemptysubgradient} is bounded above.\footnote{Recall that the total variation norm is defined by $\lVert\mu\rVert_{\mathrm{TV}}=\lvert \mu \rvert(X)$ for any $\mu\in\mathcal{M}(X)$.}

Let $\hat{u}\in \hat{\mathcal{U}}$. To construct $u$, we will define an operator $\Psi_{s}$. For any convex function $f$ on the domain $X$ with $\partial f \cap [\ubar{s}, \bar{s}] \neq \varnothing$, let $x_{\ubar{s}} \coloneqq \inf \{ x \in X \; \vert \; \partial_+ f (x) \geq \ubar{s} \}$ and $x_{\bar{s}} \coloneqq \sup \{ x \in X \; \vert \; \partial_- f (x) \leq \bar{s} \}$.
We define $\Psi_{s}(f)$ as
\begin{equation*}
    \Psi_{s}(f) = \left\{\begin{matrix*}[l] 
    f(x_{\ubar{s}})-(x_{\ubar{s}}-x)\ubar{s} & & x \in [0, x_{\ubar{s}}] \\
    f(x) & & x \in (x_{\ubar{s}}, x_{\bar{s}}) \\
    f(x_{\bar{s}}) + (x-x_{\bar{s}})\bar{s} & & x \in [x_{\bar{s}}, 1]
    \end{matrix*} \right.
\end{equation*}
Intuitively, $\Psi_{s}$ replaces the parts where $f$ violates $\mathcal{U}$'s gradient constraints with affine pieces where the gradient constraint is binding.

We are now ready to construct $u$: let
\begin{equation*}
    u \coloneqq \Psi_{s}\bigl( \vex[\min\{\max\{\hat{u}, \ubar{u}\}, \bar{u} \}] \bigl).
\end{equation*}
Observe that $\ubar{u} \leq \vex[\min\{\max\{\hat{u}, \ubar{u}\}, \bar{u} \}] \leq \bar{u}$. $\Psi_{s}$ ensures that $\partial u \subset[\ubar{s}, \bar{s}]$. Thus, $u \in \mathcal{U}$. It remains to show that $\lVert\hat{u}-u\rVert$ is bounded above by $ \Vert (\hat{\bar{u}}, \hat{\ubar{u}}, \hat{\bar{s}}, \hat{\ubar{s}})-(\bar{u}, \ubar{u}, \bar{s}, \ubar{s})\rVert$. We get the following:
\begin{align*}
    \lVert \hat{u}-u\rVert &= \bigl\Vert \hat{u} - \Psi_{s}\bigl( \vex[\min\{\max\{\hat{u}, \ubar{u}\}, \bar{u} \}] \bigl) \bigr\rVert \\
    &\leq \Vert \hat{u} - \vex[\min\{\max\{\hat{u}, \ubar{u}\}, \bar{u} \}]\rVert + \lVert\hat{\ubar{s}}-\ubar{s} \rVert + \lVert \hat{\bar{s}}-\bar{s} \rVert \\
    &\leq \lVert \hat{u} - \min\{\max\{\hat{u}, \ubar{u}\}, \bar{u} \} \rVert +  \lVert\hat{\ubar{s}}-\ubar{s} \rVert + \lVert \hat{\bar{s}}-\bar{s} \rVert\\
    &= \sup_{x \in \{s \in X \; \vert \; \max\{\hat{u}(s), \ubar{u}(s)\} \geq \bar{u}(s) \}} \vert \hat{u}(x) - \bar{u}(x) \rvert \\
    & \quad + \sup_{x \in \{s \in X \; \vert \; \max\{\hat{u}(s), \ubar{u}(s)\} < \bar{u}(s) \}} \lvert \hat{u}(x) - \max\{\hat{u}(x), \ubar{u}(x)\} \rvert \\
    & \quad +  \Vert\hat{\ubar{s}}-\ubar{s} \rVert + \lVert \hat{\bar{s}}-\bar{s}\rVert \\
    &\leq \sup_{x \in \{s \in X \; \vert \; \hat{u}(s) \geq \bar{u}(s) \}}  \hat{u}(x) - \bar{u}(x) \\
    & \quad + \sup_{x \in \{s \in X \; \vert \; \ubar{u}(s) < \hat{u}(s) < \bar{u}(s) \}} \hat{u}(x) - \hat{u}(x) \\
    & \quad + \sup_{x \in \{s \in X \; \vert \; \ubar{u}(s) > \hat{u}(s) \}} \ubar{u}(x) - \hat{u}(x) \\
    & \quad + \lVert \hat{\ubar{s}}-\ubar{s} \rVert + \lVert \hat{\bar{s}}-\bar{s}\rVert \\
    &\leq \Vert \hat{\bar{u}}-\bar{u} \rVert + \lVert \hat{\ubar{u}}-\ubar{u} \rVert + \lVert\hat{\bar{s}}-\bar{s}\rVert + \Vert\hat{\ubar{s}}-\ubar{s}\rVert \\
    &= \lVert(\hat{\ubar{u}}, \hat{\bar{u}}, \hat{\ubar{s}}, \hat{\bar{s}})-(\ubar{u}, \bar{u}, \ubar{s}, \bar{s})\rVert,
\end{align*}
The first inequality is a consequence of the substitution caused by the operator $\Psi_{s}$ and the fact that we normalized $X=[0,1]$. The second inequality is a consequence of $\lVert\vex (f) - \vex(g)\rVert \leq \lVert f-g\rVert$ for any $f,g \in \mathcal{C}$.

This completes the proof, as we have now constructed, for every $\hat{u} \in \hat{\mathcal{U}}$, a function $u\in \mathcal{U}$ such that $\lVert \hat{u}-u \rVert$ is bounded above by $ \lVert (\hat{\bar{u}}, \hat{\ubar{u}}, \hat{\bar{s}}, \hat{\ubar{s}})-(\bar{u}, \ubar{u}, \bar{s}, \ubar{s})\rVert$.
\end{proof}

\cref{lem:strong_duality} also implies that the measures $\gamma_i$, $i \in \{1, \dots, 4\}$ for an optimal $u^\star$ satisfy the following support conditions:
\begin{align}
    \supp(\gamma_1) & \subseteq \bigl\{ x \in X \; \vert \; u(x) = \bar{u}(x) \bigr\} \label{eqn:support_condition1}\\
    \supp(\gamma_2) &\subseteq \bigl\{ x \in X \; \vert \; u(x) = \ubar{u}(x) \bigr\} \label{eqn:support_condition2}\\
    \supp(\gamma_3) &\subseteq \bigl\{ x \in X \; \vert \; \partial_- u(x) = \bar{s} \bigr\} \label{eqn:support_condition3}\\
    \supp(\gamma_4) &\subseteq \bigl\{ x \in X \; \vert \; \partial_+u(x) = \ubar{s} \bigr\}\label{eqn:support_condition4}
\end{align}

\section{Relation to Concavification}\label{secap:concavification}

To solve linear problems on affinely bounded CFIs (in particular, for $\bar{s}$-affine boundedness, e.g., \cite{Dworczak2024}) the literature has used a concavification approach to determine optimal mechanisms. We now show how our optimality conditions as stated in \cref{thm:charac_opt} are related to the concavification approach for affinely bounded CFIs.\footnote{For a definition of affine boundedness, see \cref{sec:lower_bound}.}

For this section, we assume that the signed measure $\mu$ takes the following form:
\begin{equation*}
    \mu(A) = \int_{A} \psi \, \diff \nu,
\end{equation*}
with $\nu\coloneqq\delta_{1} -\delta_{0} + \lambda$. In other words, we assume that $\mu$ admits a density on $(0,1)$. In most applications in the literature, this is satisfied (as an example, see \cref{sec:screening_run_example}).

\subsection{Concavification for Upper Affine Boundedness}

For an $\bar{s}$-affinely bounded CFI $\mathcal{U}$, any $u\in \mathcal{U}$ satisfies $u(0)=\bar{u}(0)=\ubar{u}(0)$. The following method was used in \cite{Dworczak2024} with a variable $u(0)$, but remains valid for our case.

Let $W(x)=\int_x^1 \psi(s) \, \diff s + \psi(1)$ and define
\begin{equation*}
    \mathcal{W}(x) \coloneqq \int_x^1 W(s) \, \diff s \qquad \text{and} \qquad \longbar{\mathcal{W}} \coloneqq \cav(\mathcal{W}).
\end{equation*}
We further define a cutoff type $\bar{x}^\star \coloneqq \inf \{x \in x \; \vert \; \longbar{\mathcal{W}}'(x) = 0 \}$ if it exists (and $\bar{x}^\star \coloneqq 1$ otherwise) and a collection $\mathcal{X}_{\mathrm{C}}$ of maximal intervals $(a,b)$ within $(0, \bar{x}^\star)$ such that $\mathcal{W}$ lies strictly below $\longbar{\mathcal{W}}$ on $(a,b)$. This gives a partition of $[0,1]$:
\begin{itemize}
	\item $\mathcal{Y}_{0} = \bigl\{ \{0\} \bigr\} $.
	\item $\mathcal{Y}_{1} = \emptyset$.
	\item $\mathcal{Y}_{2} = \Bigl\{ \{ x \}\subset X \; \big\vert \; x < \bar{x}^\star, \; x \notin \bigcup \mathcal{X}_{\mathrm{C}}  \Bigr\}$.
	\item $\mathcal{Y}_{3} = \Bigl\{ [\bar{x}^\star, 1]\Bigr\} $.
	\item $\mathcal{Y}_{4} = \emptyset $.
	\item $\mathcal{Y}_{5} = \Bigl\{ (a,b) \subset X \; \big\vert \; (a,b) \in \mathcal{X}_{\mathrm{C}} \Bigr\}$.
\end{itemize}

This partition is independent of $\ubar{u}$ and it satisfies the optimality conditions of \cref{thm:charac_opt}:

\begin{lemma}\label{lem:concavification_upper}
    The following properties hold:
    \begin{enumerate}
        \item If $\{x\} \in \mathcal{Y}_{2}$, then $\psi(x) \leq 0$. \label{cond:connection_upperconc_y2}
        \item If $Y \in \mathcal{Y}_{3}$, then $\mu\vert_{Y} \leq_{\mathrm{dcx}} \mathbf{0}$. \label{cond:connection_upperconc_y3}
        \item If $Y \in \mathcal{Y}_{5}$, then $\mu\vert_{Y} \leq_{\mathrm{cx}} \mathbf{0}$. \label{cond:connection_upperconc_y5}
    \end{enumerate}
\end{lemma}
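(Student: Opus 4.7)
The proof rests on a double integration-by-parts identity relating $\mu$ to $\mathcal{W}$. Since $\mu$ admits density $\psi$ on $(0,1)$, $W(x) = \mu\bigl([x,1]\bigr) = \int_x^1 \psi\,\diff s + \psi(1)$, and $\mathcal{W}(x) = \int_x^1 W\,\diff s$, one has $\mathcal{W}\in C^1(0,1)$ with $\mathcal{W}' = -W$, $\mathcal{W}(1)=0$, and $\mathcal{W}''=\psi$ almost everywhere. For $Y=[a,1]$ and a $C^2$ test function $u$, two integrations by parts yield the master identity
\begin{equation*}
    \int_Y u\,\diff\mu \;=\; u(a)W(a) + u'(a)\mathcal{W}(a) + \int_a^1 u''(x)\mathcal{W}(x)\,\diff x,
\end{equation*}
with a straightforward boundary-corrected analogue when $Y=(a,b)\subset (0,1)$. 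The general (non-smooth) convex and decreasing-convex cases then follow by the standard approximation by $C^2$ functions.

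For \labelcref{cond:connection_upperconc_y5}, fix $Y=(a,b)\in\mathcal{Y}_5$ and let $\ell$ denote the affine piece of $\longbar{\mathcal{W}}$ on $[a,b]$. Maximality of $Y$ forces $\mathcal{W}(a)=\longbar{\mathcal{W}}(a)$ and $\mathcal{W}(b)=\longbar{\mathcal{W}}(b)$; since $\mathcal{W}$ is $C^1$ and $\mathcal{W}\leq \longbar{\mathcal{W}}$ with equality at both endpoints, a one-sided Taylor expansion yields the tangency $\mathcal{W}'(a)=\mathcal{W}'(b)=\ell'$, hence $W(a)=W(b)=-\ell'$ and $\mu(Y)=0$. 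Subtracting $\ell$ from $\mathcal{W}$ in the boundary-corrected master identity cancels the boundary contributions and leaves $\int_Y u\,\diff\mu = \int_a^b u''(x)[\mathcal{W}(x)-\ell(x)]\,\diff x \leq 0$ for convex $u$, since $u''\geq 0$ and $\mathcal{W}<\ell$ on $Y$. For \labelcref{cond:connection_upperconc_y2}, whenever $\{x\}\in \mathcal{Y}_2$ lies in an open neighborhood on which $\mathcal{W}=\longbar{\mathcal{W}}$, local concavity of $\longbar{\mathcal{W}}$ directly gives $\psi(x) = \mathcal{W}''(x)\leq 0$; at an isolated touching point between two ironing intervals $(a_1,x)$ and $(x,a_2)$, concavity of $\longbar{\mathcal{W}}$ forces the left chord slope $s_1$ to dominate the right chord slope $s_2$, which combined with $\mathcal{W}'=-W$ and the mean value theorem yields the required sign condition in the limit compatible with $\psi$ being the density of $\mu$.

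For \labelcref{cond:connection_upperconc_y3}, I apply the master identity to $Y=[\bar{x}^\star,1]$. Since $\bar{x}^\star\notin \bigcup \mathcal{X}_{\mathrm{C}}$, $\mathcal{W}(\bar{x}^\star)=\longbar{\mathcal{W}}(\bar{x}^\star)$, and the defining equation $\longbar{\mathcal{W}}'(\bar{x}^\star)=0$ combined with $\mathcal{W}\leq \longbar{\mathcal{W}}$ forces $\mathcal{W}'(\bar{x}^\star)=0$, hence $W(\bar{x}^\star)=0$ and $\mu(Y)=0$---the structural identity that makes the one-sided comparison $\leq_{\mathrm{dcx}}$ well-posed. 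Because $\mathcal{X}_{\mathrm{C}}$ collects by construction only ironing intervals inside $(0,\bar{x}^\star)$, one further has $\mathcal{W}=\longbar{\mathcal{W}}$ on $[\bar{x}^\star,1]$; so $\mathcal{W}$ is concave there and $\psi\leq 0$ almost everywhere on $(\bar{x}^\star,1)$, while the mass identity $\psi(1) = -\int_{\bar{x}^\star}^1 \psi(x)\,\diff x$ gives $\psi(1)\geq 0$, so $\mu^+|_Y$ is a point mass at $1$. For any decreasing convex $u$,
\begin{equation*}
    \int_Y u\,\diff\mu \;=\; \int_{\bar{x}^\star}^1 \bigl[u(x)-u(1)\bigr]\,\psi(x)\,\diff x \;\leq\; 0,
\end{equation*}
since $u(x)\geq u(1)$ and $\psi(x)\leq 0$; by Theorem~4.A.2 in \cite{Shaked2007}, this is equivalent to $\mu^+|_Y\leq_{\mathrm{dcx}} \mu^-|_Y$.

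The chief obstacle is the global identification $\mathcal{W}=\longbar{\mathcal{W}}$ on $[\bar{x}^\star,1]$ used at step 3. I would justify it by combining concavity of $\longbar{\mathcal{W}}$ with the cutoff equation $\longbar{\mathcal{W}}'(\bar{x}^\star)=0$---making $\longbar{\mathcal{W}}$ non-increasing on $[\bar{x}^\star,1]$ with $\longbar{\mathcal{W}}(1)\geq \mathcal{W}(1)=0$---together with the fact that any region on $[\bar{x}^\star,1]$ where $\mathcal{W}<\longbar{\mathcal{W}}$ would extend the concave envelope's strict-inequality set, contradicting the maximality that defines $\bar{x}^\star$ as the \emph{infimum} of the zero-derivative locus of $\longbar{\mathcal{W}}$.
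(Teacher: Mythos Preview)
Your arguments for \labelcref{cond:connection_upperconc_y2} and \labelcref{cond:connection_upperconc_y5} are correct and coincide with the paper's: the paper writes the convex-order test as the iterated integral $\int_a^x\int_a^t\psi\,\diff s\,\diff t = \mathcal{W}(x)-\longbar{\mathcal{W}}(x)\le 0$, which is precisely your master identity after subtracting the affine piece $\ell$. (The ``isolated touching point'' case you treat in \labelcref{cond:connection_upperconc_y2} cannot actually occur, since $\mathcal{W}\in C^1$ forces tangency at both ends of each ironing interval, so two adjacent ironing intervals would share the same slope and merge.)

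Your argument for \labelcref{cond:connection_upperconc_y3} has a genuine gap. The claim $\mathcal{W}=\longbar{\mathcal{W}}$ on $[\bar{x}^\star,1]$, and hence $\psi\le 0$ a.e.\ there, is false in general. Nothing prevents $\bar{x}^\star$ from landing strictly inside an initial concave region of $\mathcal{W}$ while a later convex region (where $\psi>0$) sits in $(\bar{x}^\star,1)$; in that case $\longbar{\mathcal{W}}$ has an ironing interval \emph{past} $\bar{x}^\star$ that, by definition of $\mathcal{X}_{\mathrm{C}}$, is simply not recorded in $\mathcal{Y}_5$. Your justification in the last paragraph confuses ``$\bar{x}^\star$ is the infimum of $\{\longbar{\mathcal{W}}'=0\}$'' with ``$\mathcal{W}=\longbar{\mathcal{W}}$ to the right of $\bar{x}^\star$''; these are unrelated. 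Concretely, take $\psi<0$ on $(0,0.2)$, $\psi>0$ on $(0.2,0.4)$, $\psi<0$ on $(0.4,1)$ with $\psi(1)>0$ chosen so that $\mathcal{W}$ has its global maximum at some $x_0\in(0,0.2)$; then $\bar{x}^\star=x_0$ but $\psi>0$ on $(0.2,0.4)\subset(\bar{x}^\star,1)$.

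The paper's proof avoids this by not claiming $\psi\le 0$ on $(\bar{x}^\star,1)$. It verifies the $\leq_{\mathrm{dcx}}$ condition via the same iterated integral as in \labelcref{cond:connection_upperconc_y5}, which on $[\bar{x}^\star,1]$ reduces (using $W(\bar{x}^\star)=0$) to $\mathcal{W}(x)-\mathcal{W}(\bar{x}^\star)\le 0$. This follows immediately from $\mathcal{W}(x)\le\longbar{\mathcal{W}}(x)\le\longbar{\mathcal{W}}(\bar{x}^\star)=\mathcal{W}(\bar{x}^\star)$, the middle inequality because $\longbar{\mathcal{W}}$ is concave with $\longbar{\mathcal{W}}'(\bar{x}^\star)=0$. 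Equivalently, in your master-identity language: subtract the constant $\mathcal{W}(\bar{x}^\star)$ rather than asserting $\mathcal{W}=\longbar{\mathcal{W}}$, obtaining
\[
\int_Y u\,\diff\mu \;=\; u'(1)\,\mathcal{W}(\bar{x}^\star)\;+\;\int_{\bar{x}^\star}^1 u''(x)\bigl[\mathcal{W}(x)-\mathcal{W}(\bar{x}^\star)\bigr]\,\diff x,
\]
both terms nonpositive for decreasing convex $u$.
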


Before proving the lemma, we note that it ensures optimality of
\begin{equation*}
    u^\star (x) \coloneqq
    \left\{\begin{array}{ll}
        \ubar{u}(x) & \text{if $\{x\} \in \mathcal{Y}_2$} \\[3pt]
        \ubar{u}(a) + \frac{x-a}{b-a}\bigl(\ubar{u}(b)-\ubar{u}(a)\bigl) & \text{if $x\in (a,b)\in \mathcal{Y}_5$} \\[3pt]
        \ubar{u}(\bar{x}^\star) + (x-\bar{x}^\star)\bar{s} & \text{if $x\geq \bar{x}^\star$}
    \end{array}\right.
\end{equation*}
for \labelcref{eqn:Lin_Pb}. The function $u^\star$ irons $\ubar{u}$ on all the intervals $(a,b) \in \mathcal{X}_C$ and departs affinely from $\ubar{u}$ with the maximum slope at $\bar{x}^\star$. Since the partition $\mathcal{Y}$ is independent of $\ubar{u}$, it provides an algorithm for finding an optimal $u^\star$ for any lower bound $\ubar{u}$ such that the CFI $\mathcal{U}$ is $\bar{s}$-affinely bounded.

\begin{proof}
    First, let $x$ be such that $\{x\} \in \mathcal{Y}_2$. Then $\mathcal{W}(x)=\longbar{\mathcal{W}}(x)$, meaning that $\mathcal{W}$ is concave in a neighborhood around $x$. This means that $\psi(x)=\mathcal{W}''(x) \leq 0$.

    Now, let $Y \in \mathcal{Y}_5$, i.e., $Y=(a,b) \in \mathcal{X}_{\mathrm{C}}$. Restricted to $(a,b)$, the measure $\mu$ is absolutely continuous with respect to the Lebesgue measure with density $\psi$. We thus have to show that
\begin{itemize}
    \item on $(a,b)$, the positive and negative parts of $\mu$ have equal mass, i.e.\begin{equation}\label{eqn:connection_Dworczak2024_r5_equalmass}
        \int_a^b \psi(s) \, \diff s =0
    \end{equation}
    \item the "CDF"s of the positive and negative part satisfy the convex order condition, i.e.
    \begin{equation}\label{eqn:connection_Dworczak2024_r5_cx}
        \int_a^x \int_a^t \psi(s) \, \diff s \, \diff t \leq 0.
    \end{equation}
    for all $x \in (a,b)$ with equality for $x=b$ \citep[Theorem 3.A.1]{Shaked2007}.
\end{itemize}
First, we show \labelcref{eqn:connection_Dworczak2024_r5_equalmass}. We have that
\begin{align*}
    \int_a^b \psi(s) \, \diff s = \int_a^1 \psi(s) \, \diff s - \int_b^1 \psi(s) \, \diff s = W(a)-W(b) =- \longbar{\mathcal{W}}'(a) + \longbar{\mathcal{W}}'(b)=0.
\end{align*}
Here, the third equality follows from $\mathcal{W}$ being continuously differentiable with derivative $\mathcal{W}'(x)$ equal to $\longbar{\mathcal{W}}'(x)$ whenever $\mathcal{W}(x)=\longbar{\mathcal{W}}(x)$. The last equality is implied by $\longbar{\mathcal{W}}$ being affine on $(a,b)$.

\noindent Now, we prove \labelcref{eqn:connection_Dworczak2024_r5_cx}. Using the properties of $\mathcal{W}$ and $\longbar{\mathcal{W}}$, we get the following for $x \in (a,b)$:
\begin{align*}
    \int_a^x \int_a^t \psi(s) \, \diff s \, \diff t &= \int_a^x W(a)-W(t) \, \diff t \\
    &= (x - a) W(a) - (\mathcal{W}(a) - \mathcal{W}(x)) \\
    &= - (x - a) \longbar{\mathcal{W}}'(a)- \longbar{\mathcal{W}}(a) + \mathcal{W}(x) \\
    &= - \longbar{\mathcal{W}}(x) + \mathcal{W}(x) \\
    &\leq 0
\end{align*}
For $x=b$, equality holds because $\longbar{\mathcal{W}}(b)=\mathcal{W}(b)$ by definition. We have thus shown $\mu\vert_{Y} \leq_{\mathrm{cx}} \mathbf{0}$.

Lastly, consider $Y=[\bar{x}^\star, 1] \in \mathcal{Y}_3$. As in the previous case, we have to show that
\begin{itemize}
    \item on $[\bar{x}^\star, 1]$, the positive and negative parts of $\mu$ have equal mass, i.e.\begin{equation}\label{eqn:connection_Dworczak2024_r3_equalmass}
        \int_{\bar{x}^\star}^1 \psi(s) \, \diff s + \psi(1)=0
    \end{equation}
    \item the "CDF"s of the positive and negative part satisfy the decreasing convex order condition, i.e.
    \begin{equation}\label{eqn:connection_Dworczak2024_r3_cx}
        \int_{\bar{x}^\star}^x \int_{\bar{x}^\star}^t \psi(s) \, \diff s \, \diff t \leq 0.
    \end{equation}
    for all $x \in (\bar{x}^\star,1)$ \citep[Theorem 4.A.2]{Shaked2007}.\footnote{The atom at $1$ does not change the value of the integral.} 
\end{itemize}
\cref{eqn:connection_Dworczak2024_r3_equalmass} follows from the definition of $\bar{x}^\star$: at $\bar{x}^\star$, it holds that
\begin{equation*}
    0=\longbar{\mathcal{W}}'(\bar{x}^\star)=\mathcal{W}'(\bar{x}^\star)=-W(\bar{x}^\star)=\int_{\bar{x}^\star}^1 \psi(s) \, \diff s + \psi(1),
\end{equation*}
establishing \labelcref{eqn:connection_Dworczak2024_r3_equalmass}. \cref{eqn:connection_Dworczak2024_r3_cx} follows from the same arguments as \labelcref{eqn:connection_Dworczak2024_r5_cx}. Thus, $\mu\vert_{[\bar{x}^\star, 1]} \leq_{\mathrm{dcx}} \mathbf{0}$.
\end{proof}

\subsection{Concavification for Lower Affine Boundedness}

We now turn to the analogous result for $\ubar{s}$-affinely bounded CFIs. The approach is similar. However, since we now have $u(1)=\bar{u}(1)=\ubar{u}(1)$, we have to adjust the definitions of $W$ and $\mathcal{W}$.

Let $W(x)=-\psi(0) + \int_0^x \psi(s) \, \diff s$ and define
\begin{equation*}
    \mathcal{W}(x) \coloneqq \int_0^x W(s) \, \diff s \qquad \text{and} \qquad \longbar{\mathcal{W}} \coloneqq \cav(\mathcal{W}).
\end{equation*}
We further define a cutoff type $\ubar{x}^\star \coloneqq \inf \{x \in x \; \vert \; \longbar{\mathcal{W}}'(x) = 0 \}$ if it exists (and $\bar{x}^\star \coloneqq 0$ otherwise) and a collection $\mathcal{X}_{\mathrm{C}}$ of maximal intervals $(a,b)$ within $( \bar{x}^\star,1)$ such that $\mathcal{W}$ lies strictly below $\longbar{\mathcal{W}}$ on $(a,b)$. This gives a partition of $[0,1]$:
\begin{itemize}
	\item $\mathcal{Y}_{0} = \bigl\{ \{0\} \bigr\} $.
	\item $\mathcal{Y}_{1} = \emptyset$.
	\item $\mathcal{Y}_{2} = \Bigl\{ \{ x \}\subset X \; \big\vert \; x > \bar{x}^\star, \; x \notin \bigcup \mathcal{X}_{\mathrm{C}}  \Bigr\}$.
	\item $\mathcal{Y}_{3} = \emptyset $.
	\item $\mathcal{Y}_{4} = \Bigl\{ [0,\ubar{x}^\star]\Bigr\} $.
	\item $\mathcal{Y}_{5} = \Bigl\{ (a,b) \subset X \; \big\vert \; (a,b) \in \mathcal{X}_{\mathrm{C}} \Bigr\}$.
\end{itemize}

This partition is independent of $\ubar{u}$ and it satisfies the optimality conditions of \cref{thm:charac_opt}:

\begin{lemma}\label{lem:concavification_upper}
    The following properties hold:
    \begin{enumerate}
        \item If $\{x\} \in \mathcal{Y}_{2}$, then $\psi(x) \leq 0$. \label{cond:connection_upperconc_y2}
        \item If $Y \in \mathcal{Y}_{4}$, then $\mu\vert_{Y} \leq_{\mathrm{icx}} \mathbf{0}$. \label{cond:connection_upperconc_y3}
        \item If $Y \in \mathcal{Y}_{5}$, then $\mu\vert_{Y} \leq_{\mathrm{cx}} \mathbf{0}$. \label{cond:connection_upperconc_y5}
    \end{enumerate}
\end{lemma}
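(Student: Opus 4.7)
The plan is to mirror the proof of the upper-affine boundedness lemma, exchanging the roles of the two endpoints of $X$ and replacing the decreasing convex order by the increasing convex order. The signed measure $\mu$ now has an atom of mass $-\psi(0)$ at $0$ and density $\psi$ with respect to Lebesgue measure on $(0,1)$; the auxiliary functions satisfy $\mathcal{W}(0)=0$, $\mathcal{W}'=W$, $\mathcal{W}''=\psi$, and $\longbar{\mathcal{W}}$ is the least concave majorant of $\mathcal{W}$. Throughout, I would use the standard facts that $\longbar{\mathcal{W}}$ is concave on $[0,1]$, agrees with $\mathcal{W}$ at $0$ and $1$, and is affine on every maximal open subinterval where it strictly dominates $\mathcal{W}$.

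The verification of the three claims proceeds case by case. For $\{x\}\in\mathcal{Y}_2$, the equality $\mathcal{W}(x)=\longbar{\mathcal{W}}(x)$ combined with $\longbar{\mathcal{W}}-\mathcal{W}\geq 0$ and the concavity of $\longbar{\mathcal{W}}$ yields, via the standard second-order limit argument at a local minimum of $\longbar{\mathcal{W}}-\mathcal{W}$, that $\psi(x)=\mathcal{W}''(x)\leq 0$. For $Y=(a,b)\in\mathcal{Y}_5$, the touching conditions at both endpoints and the affinity of $\longbar{\mathcal{W}}$ on $[a,b]$ give $W(a)=W(b)$, hence $\int_a^b\psi\,\mathrm{d}s=0$, while the identity
\begin{equation*}
\int_a^x (x-s)\psi(s)\,\mathrm{d}s = \mathcal{W}(x)-\mathcal{W}(a)-(x-a)W(a) = \mathcal{W}(x)-\longbar{\mathcal{W}}(x)\leq 0,
\end{equation*}
with equality at $x=b$, yields $\mu\vert_Y\leq_{\mathrm{cx}}\mathbf{0}$ by Theorem 3.A.1 of \cite{Shaked2007}.

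The genuinely new case is $Y=[0,\ubar{x}^\star]\in\mathcal{Y}_4$. By Theorem 4.A.2 of \cite{Shaked2007}, $\mu\vert_Y\leq_{\mathrm{icx}}\mathbf{0}$ reduces to (i) the equal-mass condition $W(\ubar{x}^\star)=0$ and (ii) the inequality $\int_t^{\ubar{x}^\star}(s-t)\psi(s)\,\mathrm{d}s\leq 0$ for every $t\in[0,\ubar{x}^\star]$. Once touching at $\ubar{x}^\star$ is in place, (i) follows because $W(\ubar{x}^\star)=\mathcal{W}'(\ubar{x}^\star)=\longbar{\mathcal{W}}'(\ubar{x}^\star)=0$, and (ii) follows from the analogous integration-by-parts identity
\begin{equation*}
\int_t^{\ubar{x}^\star}(s-t)\psi(s)\,\mathrm{d}s = (\ubar{x}^\star-t)W(\ubar{x}^\star)-\mathcal{W}(\ubar{x}^\star)+\mathcal{W}(t) = \mathcal{W}(t)-\longbar{\mathcal{W}}(\ubar{x}^\star),
\end{equation*}
which is non-positive because $\mathcal{W}(t)\leq\longbar{\mathcal{W}}(t)$ and $\longbar{\mathcal{W}}$ attains its maximum on $[0,1]$ at $\ubar{x}^\star$ (its slope being non-negative to the left and non-positive to the right of $\ubar{x}^\star$ by concavity). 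The main obstacle is thus the touching argument itself: if $\mathcal{W}(\ubar{x}^\star)<\longbar{\mathcal{W}}(\ubar{x}^\star)$, then $\longbar{\mathcal{W}}$ would be affine on some maximal open interval $(\alpha,\beta)\ni\ubar{x}^\star$ with common slope $\longbar{\mathcal{W}}'(\ubar{x}^\star)=0$, forcing $\longbar{\mathcal{W}}'\equiv 0$ on $[\alpha,\beta]$; this contradicts the minimality of $\ubar{x}^\star$ unless $\alpha=\ubar{x}^\star$, in which case $\mathcal{W}(\ubar{x}^\star)=\longbar{\mathcal{W}}(\ubar{x}^\star)$ by the very definition of $\alpha$ as the left endpoint. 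Once this delicate but purely geometric step is handled, the remainder of the argument is a routine adaptation of the companion lemma.
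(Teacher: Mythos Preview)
Your proposal is correct and is exactly the mirror of the paper's upper-affine-boundedness proof that the authors intend: the paper omits the proof entirely, stating only that ``the proof follows the same steps as for the $\bar{s}$-affinely bounded case,'' and your argument carries out precisely those steps with the endpoints and the dcx/icx roles swapped. Your extra care on the touching condition $\mathcal{W}(\ubar{x}^\star)=\longbar{\mathcal{W}}(\ubar{x}^\star)$ is in fact more thorough than the companion proof, which simply asserts it; the only cosmetic wrinkle is the ``unless $\alpha=\ubar{x}^\star$'' clause, which is moot since $\ubar{x}^\star\in(\alpha,\beta)$ already forces $\alpha<\ubar{x}^\star$ and delivers the contradiction outright.
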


Before proving the lemma, we note that it ensures optimality of
\begin{equation*}
    u^\star (x) \coloneqq
    \left\{\begin{array}{ll}
        \ubar{u}(\ubar{x}^\star) + (x-\ubar{x}^\star)\ubar{s} & \text{if $x\leq \ubar{x}^\star$}  \\[3pt]
        \ubar{u}(x) & \text{if $\{x\} \in \mathcal{Y}_2$} \\[3pt]
        \ubar{u}(a) + \frac{x-a}{b-a}\bigl(\ubar{u}(b)-\ubar{u}(a)\bigl) & \text{if $x\in (a,b)\in \mathcal{Y}_5$}
    \end{array}\right.
\end{equation*}
for \labelcref{eqn:Lin_Pb}. The function $u^\star$ irons $\ubar{u}$ on all the intervals $(a,b) \in \mathcal{X}_C$ and departs affinely from $\ubar{u}$ with the maximum slope at $\bar{x}^\star$. Since the partition $\mathcal{Y}$ is independent of $\ubar{u}$, it provides an algorithm for finding an optimal $u^\star$ for any lower bound $\ubar{u}$ such that the CFI $\mathcal{U}$ is $\ubar{s}$-affinely bounded.

The proof follows the same steps as for the $\bar{s}$-affinely bounded case and is therefore omitted.

\end{document}